\definecolor{cobaltblue}{RGB}{0,60,170}
\crefname{suppfigure}{Supplementary Figure}{Supplementary Figures}
\crefname{supptable}{Supplementary Table}{Supplementary Tables}
\Crefname{supptable}{Supplementary Table}{Supplementary Tables}
\crefname{equation}{Eq.}{Eqs.} 
\crefname{section}{Section}{Sections} 
\crefname{figure}{Figure}{Figures} 
\crefname{table}{Table}{Tables} 
\crefname{appendix}{Supplementary Section}{Supplementary Sections} 
\crefname{theorem}{Theorem}{Theorems}
\crefname{conjecture}{Conjecture}{Conjectures}
\crefname{proposition}{Prop.}{Props.}
\crefname{lemma}{Lemma}{Lemmas}
\crefname{corollary}{Corollary}{Corollaries}
\newcommand{\R}{\mathbb{R}}
\newcommand{\C}{\mathbb{C}}
\renewcommand{\H}{\mathcal{H}}
\newcommand\bmt{\bm{\theta}}
\newcommand\mfg{\mathfrak{g}}
\newcommand\mfh{\mathfrak{h}}
\newcommand\mcg{\mathcal{G}}
\newcommand{\cc}[1]{\ensuremath{\mathsf{#1}}}
\DeclareRobustCommand{\HL}[1]{\ifthenelse{\boolean{showcomments}}{{\color{SeaGreen}{#1}}}{}}
\newcommand{\SK}[1]{\ifthenelse{\boolean{showcomments}}{{\color{blue}{SK: #1}}}{}}
\newcommand{\SKtodo}[1]{
    \ifthenelse{\boolean{showcomments}}
    {{\color{red}{SK: #1}}}
    {}
}
\newcommand{\mg}{\mathfrak{g}}
\newcommand{\so}[1]{\mathfrak{so}\hspace{-0.15em}\left(#1\right)}
\newcommand{\su}[1]{\mathfrak{su}\hspace{-0.15em}\left(#1\right)}
\renewcommand{\u}[1]{\mathfrak{u}\hspace{-0.15em}\left( #1 \right)}
\newcommand{\gl}[1]{\mathfrak{gl}\hspace{-0.15em}\left( #1 \right)}
\newcommand{\lie}[1]{\left\langle #1 \right\rangle_{Lie}}
\newcommand{\Gxyc}[1]{\mathcal{G}_{XY}^C(#1)}
\newcommand{\Gxyp}[1]{\mathcal{G}_{XY}^P(#1)}
\newcommand{\Gxyk}[1]{\mathcal{G}_{XY}^K(#1)}
\newcommand{\Gxykz}[1]{\mathcal{G}_{XY,Z}^K(#1)}
\newcommand{\Gxycz}[1]{\mathcal{G}_{XY,Z}^C(#1)}
\newcommand{\Gxypz}[1]{\mathcal{G}_{XY,Z}^P(#1)}
\newcommand{\Gxyczzz}[1]{\mathcal{G}_{XY,Z,ZZ}^C(#1)}
\newcommand{\Gz}[1]{\mathcal{G}_{Z}(#1)}
\newcommand{\Gzz}[1]{\mathcal{G}_{ZZ}(#1)}
\newcommand{\Gzzz}[1]{\mathcal{G}_{Z,ZZ}(#1)}
\newcommand{\gxyp}[1]{
    \ifthenelse{\equal{#1}{}}
    {\mfg_{XY}^{P}}
    {\mfg_{XY}^{P}(#1)}
}
\newcommand{\gxyk}[1]{
    \ifthenelse{\equal{#1}{}}
    {\mfg_{XY}^{K}}
    {\mfg_{XY}^{K}(#1)}
}
\newcommand{\gxyc}[1]{
    \ifthenelse{\equal{#1}{}}
    {\mfg_{XY}^{C}}
    {\mfg_{XY}^{C}(#1)}
}
\newcommand{\gxypz}[1]{
    \ifthenelse{\equal{#1}{}}
    {\mfg_{XY,Z}^{P}}
    {\mfg_{XY,Z}^{P}(#1)}
}
\newcommand{\gxypzz}[1]{
    \ifthenelse{\equal{#1}{}}
    {\mfg_{XY,ZZ}^{P}}
    {\mfg_{XY,ZZ}^{P}(#1)}
}
\newcommand{\gxycz}[1]{
    \ifthenelse{\equal{#1}{}}
    {\mfg_{XY,Z}^{C}}
    {\mfg_{XY,Z}^{C}(#1)}
}
\newcommand{\gxyczzz}[1]{
    \ifthenelse{\equal{#1}{}}
    {\mfg_{XY,Z,ZZ}^{C}}
    {\mfg_{XY,Z,ZZ}^{C}(#1)}
}
\newcommand{\gxykz}[1]{
    \ifthenelse{\equal{#1}{}}
    {\mfg_{XY,Z}^{K}}
    {\mfg_{XY,Z}^{K}(#1)}
}
\newcommand{\gxyce}[1]{
    \ifthenelse{\equal{#1}{}}
    {\mfg_{XY}^{C,E}}
    {\mfg_{XY}^{C,E}(#1)}
}
\newcommand{\gxyco}[1]{
    \ifthenelse{\equal{#1}{}}
    {\mfg_{XY}^{C,O}}
    {\mfg_{XY}^{C,O}(#1)}
}
\newcommand{\gxyzzzk}[1]{
    \ifthenelse{\equal{#1}{}}
    {\mfg_{XY,Z,ZZ}^{K}}
    {\mfg_{XY,Z,ZZ}^{K}(#1)}
}
\newcommand{\AB}[2]{\widehat{#1}_{#2}}
\newcommand{\ABopp}[2]{\widehat{#1}_{#2}^-}
\newcommand{\Popp}[1]{P_{#1}^-}
\newcommand{\Qopp}[1]{Q_{#1}^-}
\newcommand{\Dopp}[1]{D_{#1}^-}
\newcommand{\copp}[1]{c_{#1}^-}
\newcommand{\bopp}[1]{b_{#1}^-}
\newcommand{\Zbar}[1]{Z_{\overline{#1}}}
\newcommand{\Zn}{Z^{\otimes n}}
\newcommand{\bigP}[1]{\left(#1\right)}
\newcommand{\bigB}[1]{\left\{#1\right\}}
\newcommand{\com}[1]{\left[ #1 \right]}
\newcommand{\Span}[1]{ \text{span}\left( #1 \right) }
\newcommand{\hlincon}{\mathfrak{h}_{\text{ker}\left( \text{ad}_{Z^{+}} \right)}}
\newcommand{\glincon}{\mathfrak{g}_{\text{ker}\left( \text{ad}_{Z^+} \right)}}
\newcommand{\redglincon}{\tilde{\mathfrak{g}}_{\text{ker}\left( \text{ad}_{Z^+} \right)}}
\newcommand{\gxykdecompodd}{\bigoplus\limits_{k=1}^{\lfloor n/2 \rfloor} \su{\binom{n}{k}}}
\newcommand{\gxykdecompeven}{\su{\frac{1}{2}\binom{n}{n/2}}^{\oplus 2}\oplus \bigoplus\limits_{k=1}^{n/2 - 1} \su{\binom{n}{k}}}
\newcommand{\gxyke}[1]{\mfg_{XY}^{K,E}(#1)}
\newcommand{\gxyko}[1]{\mfg_{XY}^{K,O}(#1)}
\DeclarePairedDelimiter\ceil{\lceil}{\rceil}
\newcommand{\floor}[1]{\left\lfloor #1 \right\rfloor}
\newcommand\Tstrut{\rule{0pt}{2.6ex}}         %
\newcommand\Bstrut{\rule[-1.5ex]{0pt}{0pt}}   %
\theoremstyle{definition}
\newtheorem{definition}{Definition}[section]
\newtheorem{lemma}{Lemma}
\newtheorem{proposition}{Proposition}
\newtheorem{corollary}{Corollary}
\newtheorem{remark}{Remark}
\newtheorem{conjecture}{Conjecture}
\newtheorem{theorem}{Theorem}
\crefname{manualtheoreminner}{Theorem}{Theorems}
\newcommand{\numstarts}{10}
\newcommand{\numinsts}{84}
\newcommand{\numsteps}{100}
\definecolor{cobaltblue}{RGB}{78,126,192}
\definecolor{cobalttint}{RGB}{230,240,255}
\newtcolorbox[auto counter,
crefname={mybluebox}{blueboxes}]%
{mybluebox}[2][]{colback=cobalttint,colframe=cobaltblue,fonttitle=\bfseries,
title=Procedure 1: #2,#1}
\titleformat*{\section}{\raggedright\Large\bfseries\sffamily}
\titleformat*{\subsection}{\raggedright\large\bfseries\sffamily}
\titleformat*{\subsubsection}{\raggedright\bfseries\sffamily}
\begin{document}
\raggedbottom 

\title{The Lie Algebra of XY-mixer Topologies and Warm Starting QAOA for Constrained Optimization}

\author[,1,2]{Steven Kordonowy\thanks{Email: \texttt{skordono@ucsc.edu}}} \author[,1]{Hannes Leipold\thanks{Email: \texttt{hleipold@fujitsu.com}}}

\date{December, 2025}
\affil[1]{Fujitsu Research of America, Santa Clara, CA, USA}
\affil[2]{University of California at Santa Cruz, CA, USA}

\maketitle

\begin{strip}
\vspace{-8em}
\begin{abstract}
The XY-mixer is widely used in quantum computing, particularly in variational quantum algorithms like the Quantum Alternating Operator Ansatz (QAOA). It is effective for solving Cardinality Constrained Optimization problems, a broad class of NP-hard tasks. We provide explicit decompositions of the dynamical Lie algebras (DLAs) for various XY-mixer topologies. When these DLAs decompose into polynomial sized Lie algebras, they are efficiently trainable, such as in cycle XY-mixers with arbitrary RZ. In contrast, all-to-all XY-mixers or including arbitrary RZZ gates leads to exponentially large DLAs, making training intractable. We warm-start optimization by pre-training on smaller, polynomially-sized DLAs via gate generator restriction, improving convergence and yielding better solution quality, for both shared-angle and multi-angle QAOA instances. Multi-angle QAOA is also shown to be more performant than shared-angle QAOA on the instances considered.
\end{abstract}

\vspace{1em}
\textbf{Keywords:} Variational Quantum Algorithms, Quantum Machine Learning, Quantum Optimization, Quantum for Finance, Portfolio Optimization, Sparsest k-Subgraph, Graph Partitioning.
\end{strip}

\section*{Introduction}\label{sec:intro}

    \begin{figure*}[t]
    \centering 
    \includegraphics[width=0.9\textwidth]{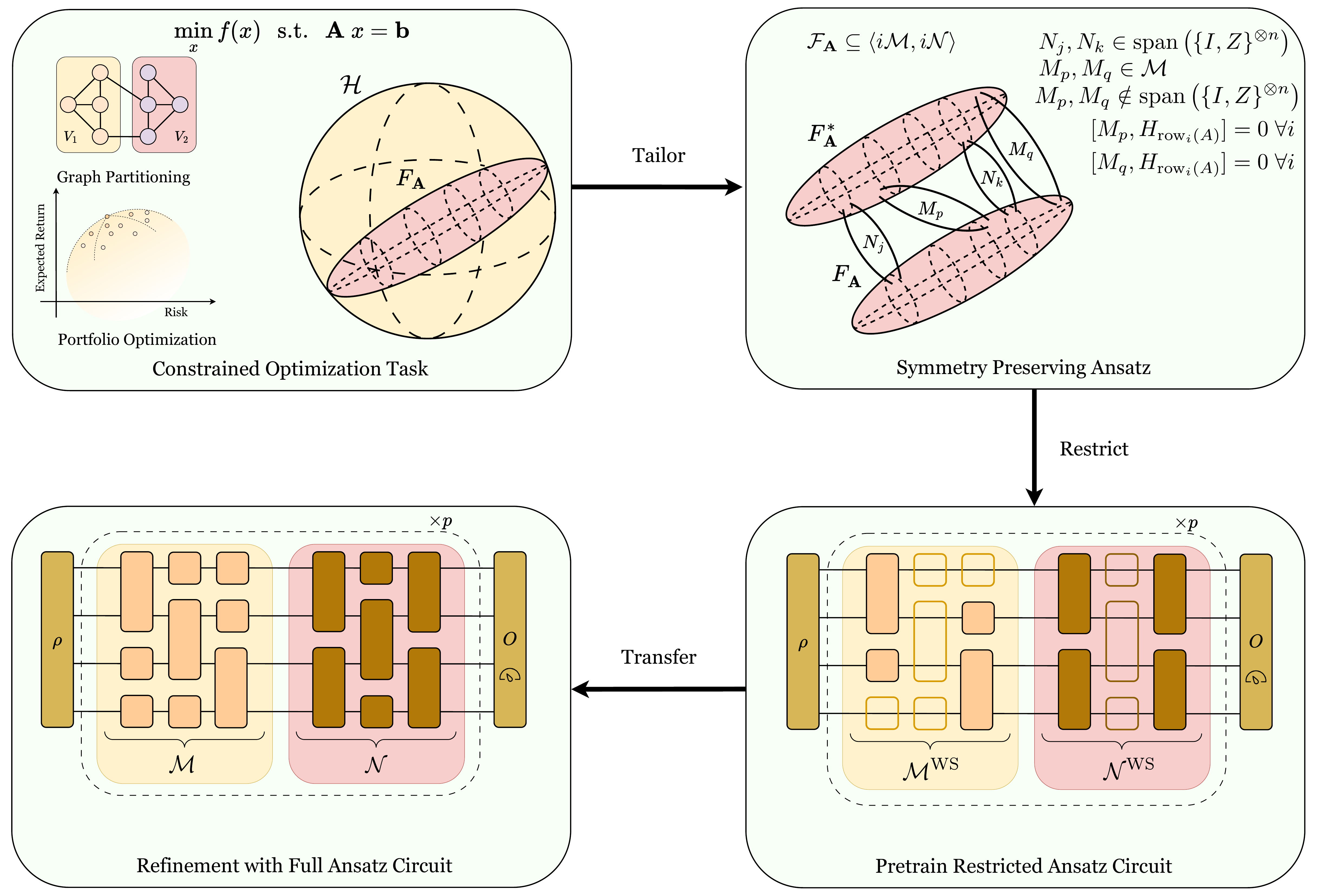}
    \caption{\textbf{Warm starting QAOA for Constrained Optimization.} Given a constrained optimization problem, such as \texttt{Portfolio Optimization}, we (1) tailor a collection of mixing and phase-separating operators, (2) restrict the ansatz and pretrain, (3) transfer the learned parameters for the full circuit, and (4) refine through training with the full parameter set.}
    \label{fig:flowchart}
    \end{figure*}

    Variational quantum algorithms (VQAs)~\cite{cerezo_variational_2021,bharti2022noisy} utilize digital quantum computers to solve a large range of tasks, from simulating fermionic systems~\cite{stanisic2022observing} to solving hard optimization problems~\cite{farhi_quantum_2014,khairy_learning_2020,rieffel2020xymixers,niroula2022constrained}. Parameterized XY operators (XY-mixers) are an important tool for modern quantum algorithms, particularly for constrained optimization problems~\cite{divincenzo2000physical,knill2005quantum,preskill_quantum_2018}. The XY-mixer is generated by the two-body interaction $\ketbra{01}{10} + \ketbra{10}{01}$~\cite{crooks2020gates}, thereby preserving Hamming weight between basis states. This is particularly useful for cardinality constrained optimization problems~\cite{cai2008parameterized}, in which feasible states are known to have a fixed Hamming weight and are friendly to implement on near term devices~\cite{blumel2021power,abrams2019implementation}. When encoding optimization problems into \textit{parameterized quantum circuits} (PQCs), it is important to account for the allowed node interactions, which are often restricted to simple topologies such as one-dimensional paths or cycles, or two-dimensional all-to-all cliques.

    To solve optimization tasks with VQAs, Ref.~\cite{farhi_quantum_2014} introduced the Quantum Approximate Optimization Algorithm (QAOA1), inspired by quantum annealing~\cite{kadowaki_quantum_1998,farhi_quantum_2001} for solving Quadratic Unconstrained Binary Optimization (QUBO)~\cite{kochenberger_unconstrained_2014} problems. Through the introduction of penalty terms, constrained optimization problems can be reformulated as QUBOs and solved by VQAs~\cite{cerezo_variational_2021} or quantum annealing~\cite{farhi_quantum_2000,albash_adiabatic_2018,ronnow_defining_2014}. Inspired by constrained quantum annealing ~\cite{hen_quantum_2016,hen_driver_2016,kudo2018constrained,torggler2019quantum,chancellor2019domain,leipold_constructing_2021,leipold_quantum_2022}, Ref.~\cite{hadfield_quantum_2019} introduced the Quantum Alternating Operator Ansatz (QAOA) as a VQA. The original QAOA1 proposal utilizes the X-mixer, which maintains no constraint symmetry and imposes a list of constraints through designs such as penalty terms, while QAOA may utilize \textit{specialized mixers} to maintain the symmetry imposed by a list of constraints~\cite{leipold2024imposing}. Any linear constraint optimization task can be written:
    \begin{align}\label{eq:op_lincon}
    \min_{x} \; f(x) \,\text{ such that }\, \bm{A} x = \bm{b}, \; x \in \{ 0, 1 \}^n
    \end{align}
    
    \noindent for matrix $ \bm{A} $, vector $ \bm{b} $, and cost function $ f(x) $. The QAOA attempts to minimize the cost $ f(x) $ within the constraint space $ \bm{A} x = \bm{b} $ by (1) encoding the problem into the language of Hamiltonians and (2) applying parameterized mixers and phase-separating operators to approximate the ground state, ensuring the exploration remains within the associated constraint space.
    
    We analyze the \textit{dynamical Lie algebra} (DLA)~\cite{larocca2024review} of these XY-mixer topologies as well as those that include $R_{Z} $ and $ R_{ZZ} $ rotation gates, since these can be utilized by QAOA for constrained optimization. The DLA is the real vector space spanned by the nested commutation of the generators of the parameterized gates in the circuit~\cite{larocca2022diagnosing}. The associated DLA of a PQC, $ \mfg $, has been shown to play a central role in determining the trainability~\cite{holmes_connecting_2022}, simulatability~\cite{cerezo_does_2024}, and expressibility of the PQC. The variance of the cost $\mathcal{L}$ with respect to parameters $\bmt$ of the PQC scales inversely with the dimension of the DLA, roughly $ \text{Var}_{\bmt} \left( \mathcal{L}(\bmt) \right) \sim \mathcal{O}(\dim^{-1} 
    (\mfg)) $~\cite{ragone2024lie, fontana2024characterizing}. A PQC with an exponential DLA, such as $t$-design circuits~\cite{harrow2009random,dankert2009exact,hunter2019unitary}, faces the Barren Plateau phenomenon (BP)~\cite{mcclean2018barren,holmes_connecting_2022,larocca2024review}, which is defined by having exponentially small gradients. This makes training VQAs computationally inefficient as the flatness requires exponentially many samples to determine the direction for the gradient updates (see Ref.~\cite{cerezo_does_2024} for caveats). 

    A simple way for a VQA to be free of the BP issue is when the DLA has polynomial size \cite{cong2019quantum,ragone_representation_2022,meyer2023exploiting,goh_lie-algebraic_2023,nguyen2024theory}. In these instances, the dynamics of the system can be simulated classically in polynomial time given an initial density operator or an observable in the DLA \cite{somma2005quantum,somma2006efficient,goh_lie-algebraic_2023}. In particular, when a DLA is isomorphic to a polynomially sized Lie algebra, certain calculations, such as expectation values, can be vastly simplified from the original operator space over $\C^{2^n \times 2^n}$ to the relevant space over $\C^{poly(n) \times poly(n)}$. This leads to a central dilemma of training VQAs: the generators of a VQA with superpolynomial DLAs often face BPs while generators with polynomial DLAs are often simulable by classical devices~\cite{cerezo_does_2024}.

    Warm-starting circuits is a viable middle ground strategy between the inefficiency of directly training highly expressive circuits and the classical simulability of simpler circuit classes. By identity initializing - and \textit{freezing} - gates that lead to high expressivity, we can train a restricted set of gates inside a polynomial DLA. Specifically, we train the XY-mixer cycle with $R_{Z}$ (but no $R_{ZZ}$), since we show this has a polynomially sized DLA. After optimizing the angles in the restricted circuit, we then use them as initial angles for the full circuit and ``turn on'' the $ R_{ZZ} $ gates. This is particularly fruitful for gate sets where a reasonable restriction leads to a meaningful yet polynomially sized DLA. We showcase that this warm-starting approach outperforms random initialization on three important optimization problems: \texttt{Portfolio Optimization}, \texttt{Sparsest k-Subgraph}, and \texttt{Graph Partitioning}. See \cref{fig:flowchart} as an overview of our strategy and the Methods section for an overview implementation. As we show in the Results section, constrained optimization are fruitful targets for such warm starting. \cref{fig:venndlas} shows our DLA classification and how it informs warm starting QAOA.

    Due to challenges in training large quantum circuits, warm-starting is an important and expanding field of inquiry for VQAs~\cite{cerezo_variational_2021,cerezo_does_2024,mhiri2025unifying}, including for QAOA. Interpolative parameter tuning~\cite{crooks_performance_2018,zhou_quantum_2020,liu2022layer,lee2023depth,leipold2024train} can be seen as repeated warm starting. Through it's connection to adiabatic evolution, initial parameters for QAOA can be selected based on standard Hamiltonian interpolation functions, such as the linear ramp~\cite{mbeng2019quantum,wurtz2021maxcut}. Good initial states can be used to alter the QAOA protocol to act as a warm starting~\cite{egger_warm-starting_2021,tate2023bridging,tate2023warm}. Our approach is related to restricted DLA approaches to warm starting, such as for \texttt{MaxCut}~\cite{goh_lie-algebraic_2023}; by identifying algebraic substructures of the full-circuit DLA that are efficiently trainable, we find optimal parameters inside the restricted manifold that then inform the initial parameters of the full gate set. This strategy scales efficiently in system size.

    \begin{figure}[!t]
        \centering 
        \includegraphics[width=1.0\columnwidth]{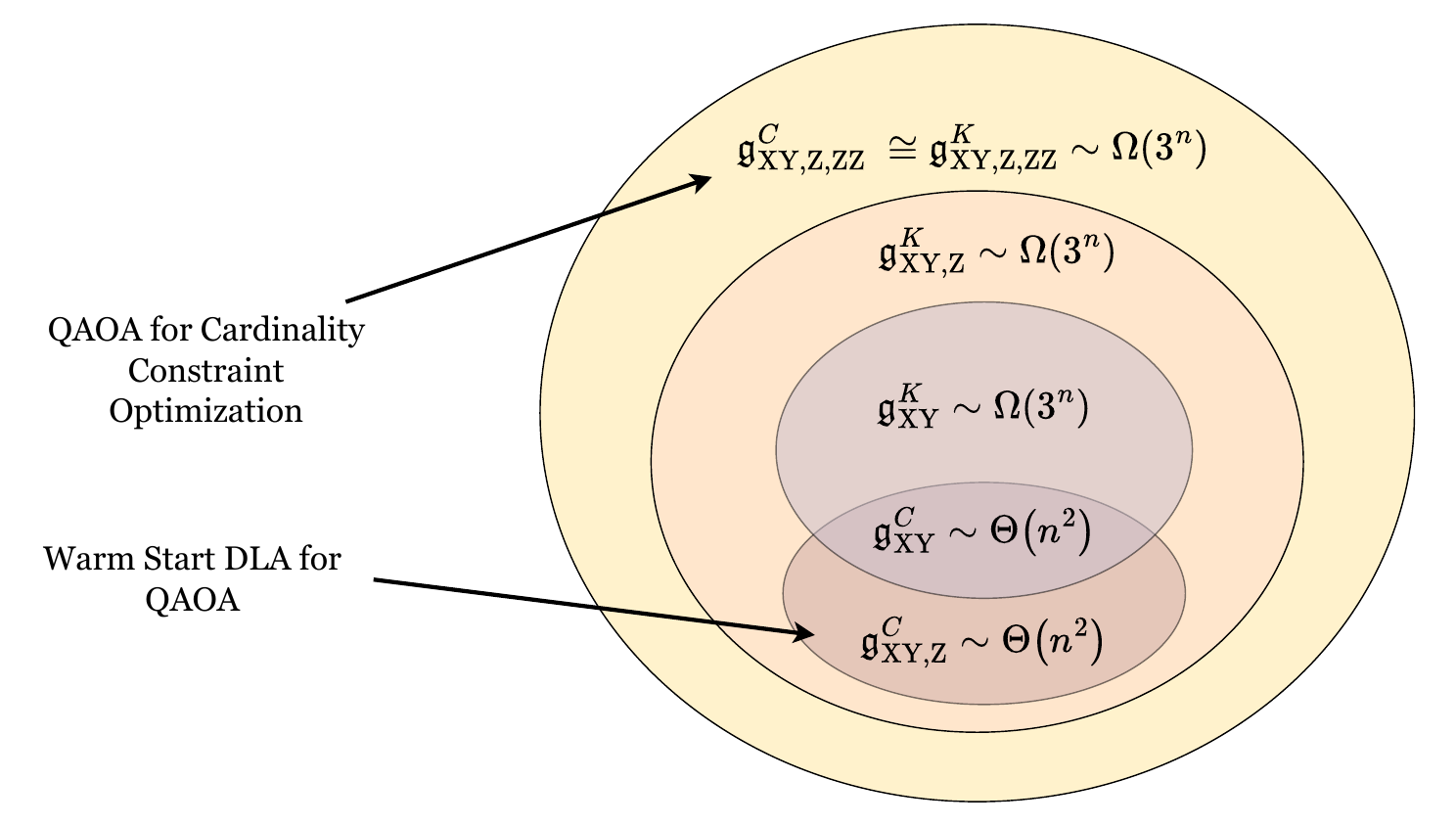}
        \caption{\textbf{Sub-algebras of XY Topologies.} A Venn Diagram comparing the space of linear operators of different XY topologies studied in this manuscript. }
        \label{fig:venndlas}
    \end{figure}

\section*{Results}\label{sec:results}

    \subsection*{Model}\label{subsec:model}
        A parameterized quantum circuit (PQC) on a Hilbert space $\H$ over $n$ qubits is defined by three components: the circuit, the observable measured, and the starting state. We denote the circuit's \textit{gate generators} as $\mcg = \set{G_1, \dots, G_r}$ which are a set of Hermitian operators over $\H$. A depth-$P$ layered circuit is defined as 
        \begin{equation}\label{eq:pqc}
            U(\bmt) = \prod_{p = 1}^P \prod_{k = 1}^K e^{i \,  \theta_{p,k} G_{p,k}},
        \end{equation}
        \noindent where each $G_{p,k} \in \mcg$ is a gate generator and $\bmt \in \R^{P \times K}$ are the trainable parameters. Each $p$ corresponds to a layer in the circuit. For periodic circuits, each layer has a repeating structure. 
        The operator $U(\bmt)$ determines the possible dynamics of the system where each layer can be viewed as one time step. 
        The initial state $\rho_0$ is a density operator on $\H$, such as a pure state $ \rho_0 = \ketbra{\psi_{0}}{\psi_{0}} $, and should be efficiently preparable. Using these components, the parameterized circuit generates the state $ U(\bmt) \, \rho_0 \, U(\bmt)^\dagger $. We are interested in the expectation value of this state with respect to the observable $O$ which in turn defines the parameterized loss:
        \begin{equation} \label{eq:loss}
            \mathcal{L}(\bmt) := \text{Tr}\left( O\, U(\bmt) \, \rho_0 \, U(\bmt)^\dagger \right). 
        \end{equation}

        \noindent Parameters of the gate generators are then trained through gradient or gradient-free methods~\cite{biamonte2017quantum,mitarai2018quantum,bonet2023performance,arrasmith_effect_2021, abbas2023quantum}.
        
        An important collection of operators to utilize as part of $\mathcal{G}$ are the 1- and 2-term Pauli-$Z$ interactions
        \begin{align}
            \Gz{n} &:= \bigB{Z_j : j \in [n]},  \\
            \Gzz{n} &:= \bigB{Z_jZ_k  : 1 \leq, j < k \leq n}, \\ 
            \Gzzz{n} &:= \Gz{n} \cup \Gzz{n}.
        \end{align}

        We refer to shared angle QAOA as SA-QAOA due to parameter sharing between generators. We focus on multi-angle QAOA (MA-QAOA) due its outperformance of SA-QAOA, see \cref{app:warmstart_details} for a comparison. MA-QAOA~\cite{herrman2022multi} is a VQA that stipulates the following structure for \cref{eq:loss}:
        \begin{itemize}[itemsep=1pt, topsep=2pt]
            
            \item[] \textit{Cost Hamiltonian}: the observable $H_{f}$ encodes the cost function $f(x)$ from our optimization task \cref{eq:op_lincon} in the computational basis, such that $ H_{f} \ket{x} = f(x) \ket{x} $ for any $x \in \{0,1\}^n$. Typically, $ H_{f} \in \text{span}\left(\mathcal{G}_{Z,ZZ}\right) $~\cite{lucas_ising_2014}.
            
            \item[] \textit{QAOA Ansatz}: the gate generators of the PQC are split into two groups.
            \begin{itemize}[itemsep=1pt, topsep=2pt]
                \item[] Phase separating generators: a collection $\mathcal{N} $ such that each generator $ N \in \mathcal{N} $ is diagonal in the computational basis. Typically, these operators are selected from one- and two-body interactions $ \mathcal{N} \subseteq \mathcal{G}_{Z,ZZ} $.
                \item[] Mixer generators: a collection $\mathcal{M}$ such that each generator $ M \in \mathcal{M} $ commutes with the \textit{embedded constraint operators}. Given a constraint row $A_{j} $, $ H_{A_{j}} \ket{x} = (A_{j} x) \ket{x} $ and $ [ M, H_{A_{j}} ] = 0 $ for $ M \in \mathcal{M} $. Then any application $ e^{i \, \sum_{j} \gamma_{j} M_{j} } $ evolves the quantum state in the constraint space and so we do not need to mix with the $\{ H_{A_j}\}_j$ directly~\cite{hen_quantum_2016,hadfield_quantum_2017,shaydulin_classical_2020,leipold2024imposing}.
            \end{itemize}

            \item[] \textit{Initial State}: An initial state $\ket{\psi_0}$ for the quantum system within the constraint space~\cite{leipold2024imposing}, such that $H_{A_{j}} \ket{\psi_{0}} = b_{j} \ket{\psi_{0}} $ for every constraint row $A_{j}$ and corresponding constraint value $b_{j}$. 
            
        \end{itemize}

        \noindent Since $ \ket{\psi_{0}} $ is in the \textit{embedded feasible subspace} of each $H_{A_{j}}$ and both generators of the QAOA ansatz \textit{commute} with each $H_{A_{j}}$, the evolution of the wavefunction is restricted to be within the embedded feasible subspace. See \cref{app:warmstart_details} for further exposition. Given the QAOA ansatz, \cref{eq:pqc} can be rewritten:
        \begin{align}\label{eq:ma_qaoa}
            U(\bm{a}, \bm{\beta}) = \prod_{p=1}^{P} \prod_{M_{k} \in \mathcal{M}} e^{i \, \alpha_{pk} \, M_{k}} \prod_{N_{l} \in \mathcal{N}} e^{i \, \beta_{pl} N_{l}} ,
        \end{align}
        with $ \bm{\alpha} \in \R^{|\mathcal{N}| \, P}, \bm{\beta} \in \R^{|\mathcal{M}| \, P} $ such that the PQC is parameterized by $ \bm{\theta} = (\bm{\alpha}, \bm{\beta}) $.

        The canonical example of QAOA is given by phase separators in $\mathcal{G}_{Z,ZZ}$ and the $X$-mixer $ \{  X_{j} \}_{j=1}^{n} $ with \textit{no constraint} ($\bm{A} = 0$), as introduced in Ref.~\cite{farhi_quantum_2014}, although any constrained optimization task can be converted into an unconstrained task through penalty terms (see \cref{app:warmstart_details}). Tailoring QAOA to constrained optimization has shown significant practical improvement over using the X-mixer with penalty terms~\cite{hodson2019portfolio,cook2020quantum,leipold_tailored_2022,golden2023numerical,boulebnane2023peptide} as well as noise resistance~\cite{streif_quantum_2021}.

        In the case of \textit{cardinality constraint optimization}, $\textbf{A} = [1,\ldots,1]$ has a single row constraint, the global linear constraint. The associated embedded constraint operator is $ H_{A} = \frac{1}{2} \left( (n - 2 \, k) \, I + \sum_{j=1}^{n} Z_{j} \right) $. An operator that irreducibly commutes with $H_A$ is the \textit{XY-mixer}:
        \begin{equation}\label{eq:XY_jk}
            XY_{j,k} = \frac{1}{2}(X_j X_k + Y_j Y_k) = \ketbra{10}{01}_{j,k} + \ketbra{01}{10}_{j,k}.
        \end{equation}

        \begin{figure*}[!t]
            \centering
            \begin{tabular}{c c c c c c c c}
            \includegraphics[trim=0.5cm 0 0.5cm 0, clip, width=0.09\textwidth]{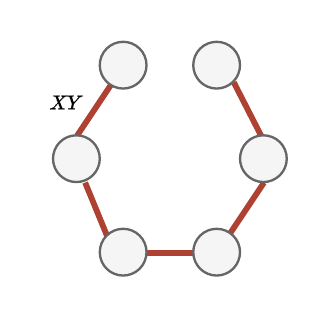} & 
            \includegraphics[trim=0.35cm 0 0.35cm 0, clip, width=0.09\textwidth]{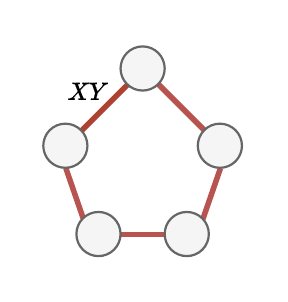} & 
            \includegraphics[trim=0.5cm 0 0.5cm 0, clip, width=0.09\textwidth]{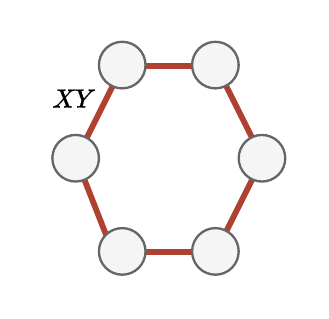} &
            \includegraphics[trim=0.6cm 0 0.6cm 0, clip, width=0.09\textwidth]{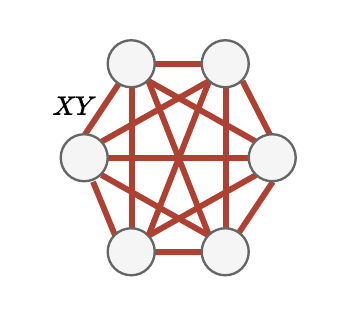} &
            \includegraphics[trim=0.5cm 0 0.5cm 0, clip, width=0.09\textwidth]{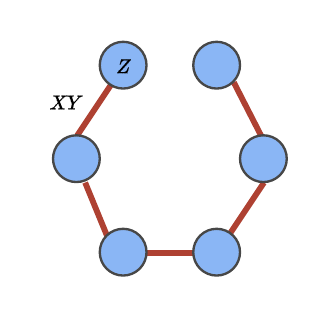} & 
            \includegraphics[trim=0.6cm 0 0.6cm 0, clip, width=0.09\textwidth]{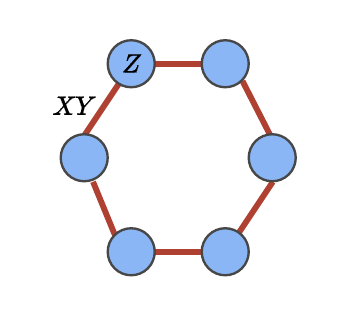} & 
            \includegraphics[trim=0.6cm 0 0.6cm 0, clip, width=0.09\textwidth]{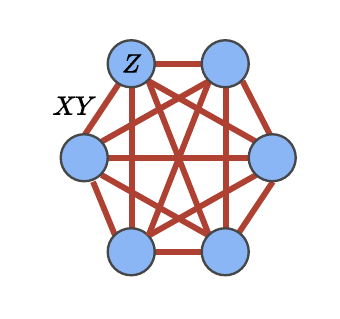} &
            \raisebox{-0.25em}{
            \includegraphics[trim=1cm 0 0.30cm 0, clip, width=0.115\textwidth]{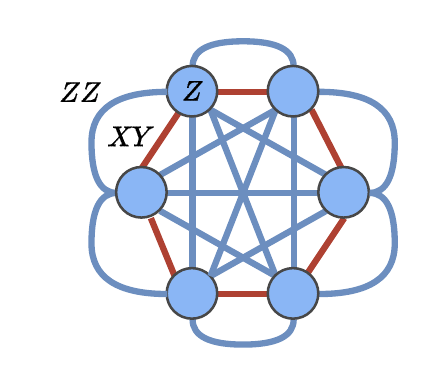}} \\ 
            $ \gxyp{} $ & $ \gxyco{} $ & $ \gxyce{} $ & $ \gxyk{} $ & $\gxypz{} $ & $ \gxycz{} $ & $ \gxykz{}$ & $\gxyczzz{} $ \Bstrut\\
            (a) & (b) & (c) & (d) & (e) & (f) & (g) & (h) \Bstrut
            \end{tabular}
            \caption{\textbf{Depictions of XY Topologies.} Diagrammatic representations of different XY-mixer associated generators and their associated Lie algebra. 
            }
            \label{fig:xytopdiags}
        \end{figure*} 

        \begin{figure}[!t]
            \centering
               \begin{tikzpicture}[node distance = 1.5cm, thick]%
                \node (1) {$\gxyp{n}$};
                \node (2) [below left =1cm and 1cm of 1] {$\gxyce{n}$};
                \node (3) [below=1cm of 1] {$\gxyco{n}$};
                \node (4) [below right =1cm and 1cm of 1] {$\gxypz{n}$};
                \node (5) [below=1cm of 3] {$\gxycz{n}$};
                \draw[->] (1) -- (2);
                \draw[->] (1) -- (3);
                \draw[->] (1) -- (4);
                \draw[->, dashed] (3) -- (4);
                \draw[->] (2) -- (5);
                \draw[->] (3) -- (5);
                \draw[->] (4) -- (5);
            \end{tikzpicture}%
            \caption{\textbf{Containment diagram for the polynomial-sized dynamical XY Lie algebras.} An arrow indicates a Lie sub-algebra. The dashed arrow indicates that $\mfg_{XY}^{C,O}(n)$ is isomorphic to the $\su{n}$ component of $\mfg_{XY,Z}^{P}(n)$ but the basis elements and therefore associated subspaces in $\su{2^n}$ are different.}
            \label{fig:dla-containment}
        \end{figure}
        \noindent The operator $XY_{j,k}$ preserves Hamming weight basis states; over qubits $j$ and $k$, it acts like identity on the states $\ket{00}_{j,k}, \ket{11}_{j,k}$ and bit-swaps the states $\ket{10}_{j,k}, \ket{01}_{j,k}$, thereby preserving of the bit count.

        Note that the canonical starting state for the QAOA, the uniform superposition state $\frac{1}{\sqrt{2^n}} \sum_{x\in\{0,1\}^n} \ket{x} $, is distributed across several constraint spaces and therefore fails to satisfy the constraint condition. Instead, we restrict to a uniform superposition over basis states of fixed Hamming weight $w$, known as the Dicke state:
        \begin{equation}
            \ket{D^n_w} = \frac{1}{\sqrt{\binom{n}{w}}}\sum_{x : \abs{x} = w} \ket{x},
        \end{equation}
        
        \noindent Typically, $w = \Theta(n)$, such as $w = n/2$. Dicke states can be prepared efficiently \cite{dicke_prep}. Then the loss function in \cref{eq:loss} for our VQA, which we wish to minimize to find (approximate) solutions to our optimization task, is given by:
        \begin{align}\label{eq:loss_qaoa}
            \mathcal{L}(\bm{\theta}) &= \bra{D^n_{n/2}} U(\bmt)^{\dagger} \, H_{f} \, U(\bmt) \ket{D^n_{n/2}}. \nonumber
        \end{align}

        The \textit{dynamical Lie algebra} (DLA) of a set of Hermitian operators $\mcg$, denoted $\mfg := \text{span}_{\R} \langle i \mathcal{G} \rangle_{Lie}$, is the linear span of nested commutations of elements from $\mcg$. Understanding $\mfg$ is useful as it encapsulates all of the unitaries that a quantum circuit can realize. Consider any gate generators $G_1, G_2 \in \mcg$ corresponding to a PQC. There exists some element $i G \in \mfg$ such that
        \begin{equation}
            e^{iG_1}e^{iG_2} = e^{i G}.
        \end{equation}

        The dimension of the DLA of QAOA with the X-mixer has been shown to typically be exponential~\cite{larocca2022diagnosing}, although highly structured instances may not~\cite{allcock_dynamical_2024}. For one- and two-dimensional Pauli spin topologies, The dimension of the DLA was found to grow $\mathcal{O}(n), \mathcal{O}(n^2),$ or $ \mathcal{O}(4^{n})$~\cite{wiersema_classification_2023}.

    \subsection*{XY-Mixer Topologies}\label{subsec:def_XY_mixer_sets}    
    
        The following generator sets are studied in this manuscript:
        \begin{enumerate}
            \item $\Gxyp{n} := \bigB{XY_{1,2}, \dots, XY_{n-1,n}}$.
            \item $\Gxyc{n} := \Gxyp{n} \cup \bigB{XY_{n,1}}$.
            \item $\Gxypz{n} := \Gxyp{n}  \cup \Gz{n}$.
            \item $\Gxycz{n} := \Gxyc{n} \cup \Gz{n}$.
            \item $\Gxyk{n} := \bigB{XY_{j,k} : 1 \leq j < k \leq n}$.
            \item $\Gxykz{n} := \Gxyk{n} \cup \Gz{n} $.
            \item $\Gxyczzz{n} := \Gxyc{n} \cup \Gz{n} \cup \Gzz{n} $.
        \end{enumerate}
    
         The super-script denotes the topology of the $XY$-terms: a path (P), a cycle (C), or a clique (K). The subscript corresponds to the type of operators used. Generators $(1)-(4), (6)$ correspond to 1-dimensional interaction graphs of $XY$ interactions. Note that in $(7)$, we have all-to-all $ZZ$ connectedness so the full interaction graph is technically not 1-dimensional though the $XY$-connectivity is. An example of each is depicted in \cref{fig:xytopdiags}.

    \subsection*{DLA Decompositions}\label{subsec:def_dlas_decomps}
    
        DLAs generated from gate generators in the previous section are spans of nested commutations of ($i$ times) these elements, such as $[ i XY_{3,4}, i XY_{4,5}]$ or $[ i XY_{9,10}, [ i Z_{10}, i XY_{10,11} ] ]$. If there is no overlap of indices, then the operators commute. The notation $\mfg_\alpha^\Gamma = \lie{i \mcg_\alpha^\Gamma}$ denotes the DLA resulting from the nesting commutation procedure of a given generator set.
        
        We begin by describing the DLAs with polynomial dimension.
    
        \begin{theorem}\label{thm:poly_dlas_decomp}
                    \begin{align}
                        &\gxyp{n} \cong \so{n}\label{eq:gxyp_decomp}\\
                        &\gxyc{n} \cong \begin{cases}
                            \so{n} \oplus \so{n} &\text{even } $n$\\
                            \su{n} &\text{odd } $n$
                        \end{cases}  \label{eq:gxyc_decomp}\\
                        &\gxypz{n}  \cong \u{1} \oplus \su{n} \label{eq:gxypz_decomp}\\
                        &\gxycz{n}  \cong \u{1} \oplus \su{n} \oplus \su{n}\label{eq:gxycz_decomp}
                    \end{align}
    
                 \noindent In particular, for one-dimensional systems with 2-body $XY$ interactions and single $Z$ terms, the DLA is of size $\mathcal{O}(n^2)$. 
             \end{theorem}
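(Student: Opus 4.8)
The plan is to exploit the fact that every generator appearing here preserves the Hamming-weight grading $\H=\bigoplus_{w=0}^{n}\H_w$ with $\H_w\cong\Lambda^w\C^n$, so each DLA is a subalgebra of $\bigoplus_w\mathfrak{u}(\H_w)$, and to reduce the whole problem to computing matrix Lie algebras generated on the single-particle sector $\H_1\cong\C^n$. On $\H_1$ the generator $XY_{j,j+1}$ acts as the symmetric hopping matrix $A_j=E_{j,j+1}+E_{j+1,j}$, the generator $Z_j$ acts as $I-2E_{jj}$, and the wrap-around generator $XY_{n,1}$ acts as $A_n=E_{1,n}+E_{n,1}$. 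Conjugating by the diagonal unitary $D=\mathrm{diag}(1,i,i^2,\dots,i^{n-1})$ sends each $iA_j$ with $j<n$ to the standard rotation generator $R_j=E_{j,j+1}-E_{j+1,j}$, so the chain $\{iA_1,\dots,iA_{n-1}\}$ generates a copy of $\so{n}$ acting irreducibly on $\C^n$.

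For the path this is essentially the whole story: every element of $\gxyp{n}$ is a ``single-particle'' operator (the span of $\{d\Lambda^\bullet(M):M\in\mathfrak{gl}(n)\}$, diagonally over all $w$, is a Lie subalgebra isomorphic to $\mathfrak{gl}(n)$ that contains all path generators), hence it acts on each $\H_w$ as $d\Lambda^w$ of its $\H_1$-block and the restriction $\gxyp{n}\to\mathfrak{u}(\H_1)$ is a faithful homomorphism, giving $\gxyp{n}\cong\so{n}$. The remaining matrix computations on $\C^n$ rely on one enlargement lemma: the complement $\mathfrak p=i\,\mathrm{Sym}_0$ of $\so{n}$ inside $\su{n}$ is an irreducible $\so{n}$-module for $n\ge 3$, so adjoining to $\so{n}$ any nonzero element of $\mathfrak p$ generates $\su{n}$. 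Under $D$-conjugation $iA_n$ lands in $\mathfrak p$ when $n$ is odd but back inside $\so{n}$ when $n$ is even, and the $iZ_j$ always span, together with $iI$, the full diagonal Cartan, whose traceless part lies in $\mathfrak p$. Hence on $\C^n$: the cycle generators produce $\su{n}$ for odd $n$ and $\so{n}$ for even $n$, and adjoining the $Z_j$ upgrades either the path or cycle case to $\mathfrak{u}(n)$.

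To globalize I would group the Hamming-weight sectors by parity of $w$. The crucial point is the Jordan--Wigner string on the wrap-around term: $XY_{n,1}$ acts on $\H_w$ as $(-1)^{w+1}d\Lambda^w(A_n)$, not $d\Lambda^w(A_n)$ (this sign is visible directly from the computational-basis action, since $XY_{n,1}$ has all $+1$ entries while $d\Lambda^w$ carries reordering signs). Consequently the cycle DLA is a subalgebra of $\mathfrak g_{\mathrm{odd}}\oplus\mathfrak g_{\mathrm{even}}$, where $\mathfrak g_{\mathrm{odd}}$ (resp.\ $\mathfrak g_{\mathrm{even}}$) is the algebra generated on $\bigoplus_{w\text{ odd}}\H_w$ (resp.\ even $w$), and each block is — once one checks faithfulness of the relevant $d\Lambda^w$ (e.g. $d\Lambda^1$ on the odd side, the adjoint-type $d\Lambda^2$ on the even side) — a copy of the matrix algebra $\mathfrak k$ computed above. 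Whether the DLA is the full sum or a twisted diagonal is decided by whether there is an automorphism of $\mathfrak k$ fixing all non-wrap generators pointwise while negating $iA_n$; complex conjugation in the $D$-rotated basis is such an automorphism exactly when the non-wrap generators fail to generate all of $\mathfrak k$. This fails only in the cycle-without-$Z$, $n$ odd case (there $\mathfrak k=\su{n}$ but the chain gives only $\so{n}$), which therefore stays a single $\su{n}$; in every other case the non-wrap part already exhausts $\mathfrak k$, forcing $\mathfrak g_{\mathrm{odd}}$ and $\mathfrak g_{\mathrm{even}}$ to split off as separate summands — yielding $\so{n}\oplus\so{n}$ for the even cycle and a doubled $\su{n}$ for cycle-plus-$Z$. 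The lone $\u{1}$ in the $+Z$ cases comes from a single central block-scalar element $\zeta=\bigoplus_w i\,\tfrac{n-2w}{n}\,\mathrm{id}_{\H_w}$, which lives across both parity sectors and so contributes only once.

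I expect this last step to be the main obstacle: pinning down the $(-1)^{w+1}$ sign from the Jordan--Wigner transformation, dealing with the low-rank coincidences ($\so{2}$, the non-simple $\so{4}$, and the half-filling sector $w=n/2$ where the $Z_j$ become traceless so the $\zeta$-block degenerates), and turning the ``no compatible automorphism'' observation into a clean proof that the DLA really is the whole direct sum. The cleanest route for the latter is to produce one ``anti-diagonal'' element $(0,\rho_{\mathrm{even}}(iA_n))$ (as the difference of the wrap generator and the diagonally embedded $iA_n$) and then note that the ideal generated by $iA_n$ inside $\mathfrak k$ is all of $\mathfrak k$ — true even for $\so{4}$, since the image of $iA_n$ has nonzero component in each simple factor — so the entire second copy is forced into the DLA.
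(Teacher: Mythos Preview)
Your proposal is correct and takes a genuinely different route from the paper's proof. The paper works entirely at the level of Pauli strings: it defines explicit operators $P_{j,k}$, $P^-_{j,k}$, $Q_{j,k}$, $Q^-_{j,k}$, $D_{j,k}$, $D^-_{j,k}$ by nested commutators, tracks all the $i^{c_{jk}}$ phase bookkeeping, and then verifies the thirteen $\su{n}$ structure relations case by case; the doubling for cycles is obtained by multiplying by $\hat Z=(-1)^{\lfloor n/2\rfloor}Z^{\otimes n}$ and checking that $\frac{1}{2}(I\pm\hat Z)$ project onto commuting copies. Your approach instead recognises the generators as Jordan--Wigner images of one-body fermion operators, reduces the whole problem to computing the Lie algebra generated by $\{iA_j\}$ (and optionally the diagonal traceless matrices) inside $\su{n}$, and then lifts back via $d\Gamma$. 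Your parity-of-$w$ splitting and the paper's $\hat Z$ splitting are literally the same decomposition, since $Z^{\otimes n}$ acts as $(-1)^w$ on $\H_w$; but you explain \emph{why} it appears (the boundary Jordan--Wigner string) rather than discovering it by inspection of the commutators.

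What each approach buys: the paper's method is self-contained, needs no representation theory, and produces explicit Pauli-string bases that are directly usable for the circuit constructions later in the paper. Your method is far shorter, makes the $\so{n}$ vs.\ $\su{n}$ dichotomy transparent (it is exactly the question of whether $D\,iA_n\,D^{-1}$ lands in $\so{n}$ or in $i\,\mathrm{Sym}_0$), and replaces dozens of commutator checks by two structural facts --- irreducibility of $i\,\mathrm{Sym}_0$ under $\so{n}$, and the ideal-generated-by-$iA_n$ argument. The points you flag as obstacles are real but routine: the $(-1)^{w+1}$ sign is exactly as you say (and your $\H_1$/$\H_2$ spot-check already pins it down); the $\so{4}$ case works because $L_{14}=E_{14}-E_{41}$ has nonzero projection onto both $\su{2}$ factors; and the ``no compatible automorphism $\Rightarrow$ full direct sum'' step is cleanly handled by the anti-diagonal element you describe, since once $(0,\rho_-(iA_n))$ is in the DLA, bracketing with the diagonal copy of $\mathfrak k$ generates the whole second summand. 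One small addendum worth making explicit: to see that the $\u{1}$ really sits outside $\rho_+(\su{n})\oplus\rho_-(\su{n})$, note that $d\Lambda^w$ of a traceless matrix is traceless on each $\H_w$, whereas $Z^+$ restricts to the nonzero scalar $i(n-2w)$ there.
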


        All of these DLAs are decomposed into one or more copies of $\so{n}$ or $\su{n}$ (possibly along with a one-dimensional center). Here, $\su{n}$ ($\so{n}$) is the Lie algebra of skew-Hermitian (real antisymmetric) operators. 
        
        Once we consider either all-to-all XY-mixer connectivity or allow $ZZ$ interactions, the DLAs grow exponentially in the number of qubits.
        
        \begin{theorem}\label{thm:dlasummmary-big}
            The following gate generators have a DLA is of size $\Omega(3^n)$.
            \begin{enumerate}
                \item[(a)] 1-dimensional systems with single Pauli $Z$ terms, neighboring $XY$ interactions, and neighboring $ZZ$ interactions. This is true for open and closed boundary conditions.
                \item[(b)] Fully-connected systems (i.e. the interaction graph is complete) with neighboring $XY$ terms. Adding single qubit $Z$ or neighboring $ZZ$ interactions only serves to increase the DLA.
            \end{enumerate} 
        \end{theorem}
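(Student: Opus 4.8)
The structural fact I would start from is that every generator appearing in $\Gxyczzz{n}$, in $\Gxyk{n}$, and in all of their $Z$/$ZZ$-enhanced variants commutes with the total magnetization $M := \sum_{j=1}^{n} Z_j$: the $Z_j$ and $Z_jZ_k$ are diagonal, and a one-line check gives $[XY_{j,k},\, Z_j + Z_k] = 0$. Hence all of these DLAs are contained in the centralizer of $M$ in $\su{2^n}$, which is block diagonal with respect to the Hamming-weight decomposition $\C^{2^n} = \bigoplus_{w=0}^{n}\H_w$, $\dim\H_w = \binom{n}{w}$, and whose semisimple part is $\bigoplus_{w=0}^{n}\su{\binom{n}{w}}$, of dimension $\binom{2n}{n}-(n+1)$. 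The plan is to prove the matching lower bound: that each DLA in the statement already contains $\su{\binom{n}{w}}$ on the middle sectors $w = \Theta(n)$, since a single such summand with $w = \lfloor n/2\rfloor$ has dimension $\binom{n}{\lfloor n/2\rfloor}^{2} - 1 = \Omega(4^n/n) = \Omega(3^n)$ (indeed $\binom{n}{\lfloor n/2\rfloor}$ dominates $(\sqrt 3)^{\,n}$). The monotonicity clause --- ``adding $Z$ or $ZZ$ generators only increases the DLA'' --- is immediate, since enlarging a generating set enlarges the Lie algebra it generates; and the open- versus closed-boundary distinction will only affect the connectivity graph used below, which is connected in both cases.

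For part~(a) I would fix $w$ with $1 \le w \le n-1$ and study the restriction $\mfg|_{\H_w}$, where $\mfg := \gxyczzz{n}$. On $\H_w$ the generators $XY_{j,j+1}$ act as (signed) adjacency operators of the graph on $w$-subsets of $[n]$ whose edges are single-element hops between neighbouring sites; for a path or a cycle this graph is connected, and together with the diagonal operators from $\Gz{n}$ (and $\Gzz{n}$) --- whose associative closure is the full diagonal subalgebra, hence separates distinct $w$-subsets, since $Z_j$ acts as $-1$ on $\ket{S}$ exactly when $j\in S$ --- the associative algebra generated by $\mfg|_{\H_w}$ is all of $M_{\binom{n}{w}}(\C)$, so $\mfg|_{\H_w}$ acts irreducibly. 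I would then exhibit a genuinely interacting (non-quadratic) element, most cleanly $[XY_{j,j+1},\, Z_{j+1}Z_{j+2}] \propto \widetilde{XY}_{j,j+1}\,Z_{j+2}$ with $\widetilde{XY}_{j,k} := X_jY_k - Y_jX_k$ (a density-assisted hopping), and bootstrap: commuting the adjacency operators with diagonal operators that resolve individual subsets yields localized ``matrix-unit'' generators $\ketbra{S}{S'} - \ketbra{S'}{S}$ together with their $i$-analogues, and these generate $\su{\binom{n}{w}}$. An alternative, less computational route is to invoke the trichotomy for one-dimensional $2$-local spin-chain DLAs ($\mathcal{O}(n)$, $\mathcal{O}(n^2)$, or $\mathcal{O}(4^n)$): since \cref{thm:poly_dlas_decomp} places the $ZZ$-free relatives in the polynomial classes while the $ZZ$ term demonstrably produces operators of Pauli weight $\ge 3$ incompatible with those classes, $\gxyczzz{n}$ must land in the exponential class.

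For part~(b) the obstruction is that $\Gxyk{n}$ carries no diagonal generators, so the ``localize with $Z_j$'' step is not directly available; here I would use the longer-range $XY_{j,k}$, which are already interacting --- in the Jordan-Wigner picture a nearest-neighbour hop dressed by the $Z$-parity $\prod_{j<l<k}Z_l$ of the intervening sites. Concretely, $[XY_{1,2},\,XY_{1,3}] \propto Z_1\,\widetilde{XY}_{2,3}$ and then $[XY_{2,3},\, Z_1\,\widetilde{XY}_{2,3}] \propto Z_1(Z_2 - Z_3)$, so the all-to-all $XY$ DLA already generates $ZZ$-type diagonal operators; iterating produces a family of diagonal operators rich enough to separate subsets inside each $\H_w$, after which the argument of part~(a) runs verbatim on the middle sectors, and the monotone half of~(b) follows as before.

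The step I expect to be the real work is the bootstrapping from ``$\mfg$ acts irreducibly on $\H_w$ and contains one interacting element'' to ``$\mfg|_{\H_w} = \su{\binom{n}{w}}$'' --- equivalently, ruling out that $\mfg|_{\H_w}$ is a proper irreducible subalgebra, e.g.\ an orthogonal algebra $\so{\binom{n}{w}}$, a symplectic algebra, or a smaller semisimple algebra. Irreducibility (graph connectivity) and the presence of an interacting element are both cheap; certifying that the operators one actually produces span a space of dimension $\binom{n}{w}^{2} - 1$ requires either the explicit matrix-unit accounting sketched above or an invariant-bilinear-form obstruction (exhibiting $A,B \in \mfg|_{\H_w}$ whose anticommutator excludes every invariant form). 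If a fully explicit count proves unwieldy, I would fall back on the $2$-local classification for~(a) and on the derived-$ZZ$ reduction for~(b), which still yields the stated $\Omega(3^n)$ even if not the sharp $\Theta\!\big(\binom{2n}{n}\big)$.
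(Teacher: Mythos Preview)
Your approach is genuinely different from the paper's, and the gap you flag is exactly the one the paper sidesteps. The paper never argues via irreducibility on weight sectors. Instead, for part~(b) it builds an explicit $\Omega(3^n)$-size family of linearly independent operators inside $\gxyk{n}$ by direct construction: the clique connectivity lets one nest $XY$ terms along an arbitrary path to produce $P_{\sigma_{jk}} \propto XY_{j,k}\prod_{\ell\neq j,k} Z_\ell^{\sigma_{jk}(\ell)}$ for every $\sigma_{jk}\in\{0,1\}^{n-2}$, and then commuting in a further $XY_{p,q}$ with $\sigma_{jk}(p)=1$, $\sigma_{jk}(q)=0$ replaces that $Z_p$ by a $YX_{p,q}$ factor. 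Counting the choices of $(j,k)$, the disjoint $YX$ pairs, and the remaining $\{I,Z\}$ pattern gives $\sum_{l}\binom{n}{2l}2^{n-2l} = (3^n+1)/2$ linearly independent elements. Part~(a) is then \emph{reduced to}~(b): commuting $P_{j,k}$ alternately with $Z_jZ_{j+1}$ and $Z_j$ strips the intermediate $Z$'s one at a time until a bare $XY_{j,k}$ remains, so $\Gxyk{n}\subset\gxyczzz{n}$ and the clique bound transfers. This is more elementary than your route and needs no classification of irreducible subalgebras.

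Two cautions about your plan as written. First, your irreducibility claim for the pure clique on the middle sector is likely too strong: the paper's numerics (\cref{conj:expliealg}) indicate that for even $n$ the restriction of $\gxyk{n}$ to Hamming weight $n/2$ splits as $\su{\tfrac{1}{2}\binom{n}{n/2}}^{\oplus 2}$, so the action there is reducible and you would have to work one sector off the middle. Second, your fallback to the one-dimensional $2$-local trichotomy handles~(a) but not~(b), since the all-to-all $XY$ system is not a $2$-local chain; for~(b) you would still need either the paper's explicit counting or a completed matrix-unit bootstrap. The paper's construction-and-count avoids both issues at the cost of giving only $\Omega(3^n)$ rather than the conjectured sharp $\binom{2n}{n}-\Theta(n)$.
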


        In the polynomially sized DLAs, we rigorously understand the Lie algebra decompositions. In the remaining cases, we have formal arguments for the $\Omega(3^n)$ lower bound but only provide conjectures for the explicit decomposition. We need more sophisticated tools when dealing with the larger DLAs than simply directly computing the basis elements. In particular, from computational verification for small $n$ and based on the stability of classification with increasing $n$, we have the following central conjecture.  
        \begin{conjecture}\label{conj:expliealg}
            \begin{align}
                &\gxyk{n} \cong \begin{cases}
                    \gxykdecompeven &\text{even } n \vspace{0.5em}\\
                    \gxykdecompodd & \text{odd } n \vspace{0.5em}
                \end{cases}\\
                &\gxykz{n} \cong \u{1} \oplus  \bigoplus_{k=1}^{n-1} \su{\binom{n}{k}}  \\ 
                &\gxyczzz{n} \cong \u{1}^{\oplus 2} \oplus \bigoplus_{k=1}^{n-1} \su{\binom{n}{k}}  
            \end{align} 

            \noindent These DLAs have dimension $ 4^{n - \Theta\left( \log(n) \right)} $. 
        \end{conjecture}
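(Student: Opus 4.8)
Every generator in $\Gxyk{n}$, $\Gxykz{n}$ and $\Gxyczzz{n}$ commutes with the total Hamming-weight operator $\tfrac12\bigl(nI+\sum_j Z_j\bigr)$, so each DLA is contained in $\bigoplus_{k=0}^{n}\u{\binom{n}{k}}$, where the $k$-th block acts on the weight-$k$ subspace $\H_k$ of dimension $\binom{n}{k}$. The decompositions in the conjecture assert that, up to a small center, each DLA is \emph{as large as the residual symmetries permit}: within each Hamming sector the generated Lie algebra is a full special unitary algebra, and the only couplings between sectors are (i) the bit-flip symmetry $X^{\otimes n}$, which commutes with every $XY$ and $ZZ$ term but anticommutes with each $Z_j$, and (ii) a low-dimensional center coming from the trace components of the diagonal generators. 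The plan is to make each half of this precise in four steps.

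\textbf{Step 1 and 2 (per-sector fullness; the main obstacle).} Fix a sector $\H_k$ (for $\gxyk{}$ one works on $\H_k\oplus\H_{n-k}$, and on the even middle sector one first splits off the $\pm1$ eigenspaces of $X^{\otimes n}$). First I would show the restricted generators act irreducibly, i.e.\ have trivial commutant. For $\gxyczzz{}$ this is clean: the nearest-neighbour transpositions are $\tfrac12 I+XY_{i,i+1}+\tfrac12 Z_iZ_{i+1}$ and generate $S_n$, which acts transitively on weight-$k$ strings, so an operator commuting with every generator is diagonal in the computational basis and $S_n$-invariant, hence scalar on $\H_k$; for $\gxyk{}$ and $\gxykz{}$ one argues instead that the relevant hopping graph on $\H_k$ (a Johnson-type graph) is connected. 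Given irreducibility, the real Lie algebra generated by a set of Hermitian operators is $\su{d}$, $\so{d}$, $\mathfrak{sp}(d/2)$, or one of these plus a $\u{1}$ when the traces do not all vanish; since $\H_k$ carries a genuine $S_n$-action decomposing into \emph{distinct} real irreducibles, any invariant bilinear form is symmetric (excluding $\mathfrak{sp}$), and $\so{d}$ is excluded because $iZ_j$ (or, in the $Z$-free case, an explicit double commutator of $XY$'s such as $\tfrac i2 Z_2(X_1Y_3-Y_1X_3)$) is incompatible with antisymmetry in any algebra-invariant real structure. Carrying this out \emph{uniformly in $n$} — rather than via the finite computer checks the authors report — is the crux of the difficulty and, I expect, the main obstacle: one needs an $n$-independent reason that the generated algebra in each sector is all of $\su{\binom{n}{k}}$ (resp.\ $\su{d}\oplus\su{d}$ on the even middle sector, $d=\tfrac12\binom{n}{n/2}$), which amounts to a sufficiently strong lower bound on its dimension.

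\textbf{Step 3 (assembling the sectors).} Granting Steps 1--2, each sector (or $X^{\otimes n}$-graded half-sector) contributes a simple ideal. Because $\binom{n}{k}$ is strictly increasing for $k\le n/2$, the simple algebras attached to genuinely different sectors are pairwise non-isomorphic, so a Goursat-type argument forces a subalgebra of their direct sum that surjects onto each factor to be the whole direct sum — there are no hidden diagonal identifications. The one place isomorphic simple factors occur is the even middle sector, $\su{d}\oplus\su{d}$ on $\H^+\oplus\H^-$; here $X^{\otimes n}$ merely \emph{swaps} the two copies rather than graphing them, and since $\H^+$ and $\H^-$ are inequivalent $S_n$-modules no graph identification can occur, leaving the full $\su{d}^{\oplus2}$ for $\gxyk{}$, while for $\gxykz{}$ and $\gxyczzz{}$ the off-diagonal operator $iZ_j$ (which sends $\H^+\!\to\H^-$) recombines the two halves into a single $\su{\binom{n}{n/2}}$. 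Finally, for $\gxyk{}$ the identity $X^{\otimes n}XY_{j,l}X^{\otimes n}=XY_{j,l}$ makes the blocks on $\H_k$ and $\H_{n-k}$ correspond under conjugation by $X^{\otimes n}$, locking $\su{\binom{n}{k}}$ to $\su{\binom{n}{n-k}}$ and cutting the sum to $1\le k\le\floor{n/2}$ (with the middle doubling for even $n$); adding the $Z$ generators breaks this (since $X^{\otimes n}Z_jX^{\otimes n}=-Z_j$ twists the correspondence by a sign), and a Goursat argument now shows the pair un-locks into two independent copies, giving the range $1\le k\le n-1$ for $\gxykz{}$ and $\gxyczzz{}$.

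\textbf{Step 4 (center and dimension).} The extra $\u{1}$'s come purely from trace components: restricted to $\H_k$ every $Z_j$ has the \emph{same} scalar part $\binom{n}{k}^{-1}\!\bigl(\binom{n-1}{k}-\binom{n-1}{k-1}\bigr)$, every $Z_jZ_l$ the same scalar part $\binom{n}{k}^{-1}\!\bigl(\binom{n-2}{k}+\binom{n-2}{k-2}-2\binom{n-2}{k-1}\bigr)$, and all commutators are traceless on each block; hence the image of the DLA in $\bigoplus_k\R I_{\H_k}$ is spanned by a single vector for $\gxykz{}$ and by two vectors for $\gxyczzz{}$, which one checks are linearly independent as functions of $k$ for large $n$, yielding the claimed $\u{1}$ and $\u{1}^{\oplus2}$; $\gxyk{}$ is generated by block-traceless operators and so has trivial center. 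The dimension statement is then immediate, the dominant contribution being $\sum_{k}\binom{n}{k}^2=\binom{2n}{n}=\Theta\bigl(4^n/\sqrt{n}\bigr)=4^{\,n-\Theta(\log n)}$.
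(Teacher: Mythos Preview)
First, note that the paper does not prove this statement: it is explicitly a \emph{conjecture}. What the paper establishes rigorously is the two-sided bound $\Omega(3^n)\le\dim\le\binom{2n}{n}-2$ (Theorems~2 and~3), the upper bound coming from exactly your observation that every generator commutes with $Z^+$ so the DLA sits inside $\bigoplus_k\u{\binom{n}{k}}$. The precise decompositions are supported only by computing the DLA dimensions and block structure for $n\le 9$ (Supplementary Table~S2) and observing that the pattern stabilises; the authors say outright that ``it would be fruitful to formally prove these claims.'' So there is no proof in the paper to compare against --- your proposal is an attempt to go beyond what the paper does.

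Your outline is the natural line of attack, and Steps~3--4 are sound once Steps~1--2 are granted. But Step~2 contains a false assertion, not merely an acknowledged difficulty: it is \emph{not} true that an irreducible compact Lie subalgebra of $\su{d}$ must be one of $\su{d}$, $\so{d}$, or $\mathfrak{sp}(d/2)$. Any compact simple Lie algebra in a faithful irreducible $d$-dimensional representation is a counterexample --- for instance $\su{2}$ in its $d$-dimensional irreducible (spin-$\tfrac{d-1}{2}$) representation, or $\su{m}$ acting on $\Lambda^r\mathbb{C}^m$ with $d=\binom{m}{r}$, the latter being directly relevant here since the sector dimensions are precisely $\binom{n}{k}$. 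Irreducibility (trivial commutant) together with the absence of an invariant bilinear form only tells you the algebra is not contained in an $\so{d}$- or $\mathfrak{sp}(d/2)$-type subalgebra; it does not force the algebra \emph{up} to $\su{d}$. Closing this gap --- showing, uniformly in $n$ and $k$, that the restriction to each Hamming sector generates all of $\su{\binom{n}{k}}$ rather than some proper irreducible subalgebra --- is precisely the obstruction that keeps the statement a conjecture, and your proposal does not supply that missing ingredient.
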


        \noindent The conjecture postulates that $\gxykz{}$ and $\gxyczzz{}$ both respect Hamming weight subspaces of fixed $k$. That is, they contain the maximal Lie algebra such that every element commutes with $ \sum_{j=1}^{n} Z_{j} $ (up to a center), described further in Methods. This means that the generator sets $\mathcal{G}_{XY,Z}^{K}$ and $\mathcal{G}_{XY,Z,ZZ}^{C}$ can, up to phasing, express all unitaries that commute with $\sum_{j=1}^{n} Z_{j} $. For their centers, both contain $\sum_{j=1}^{n} Z_{j} $ while the center of $\gxyczzz{}$ also contains $\sum_{j=1}^{n} \sum_{j<k}^{n} Z_{j} Z_{k} $. 

        \cref{fig:venndlas} shows a Venn Diagram for some of the polynomial and exponential sized DLAs studied in manuscript while \cref{fig:dla-containment} shows the containment hierarchy for the polynomial sized DLAs. QAOA with the XY-mixer therefore has a Barren Plateau~\cite{larocca2024review} while restricting the VQA to $\gxycz{}$ is Barren Plateau free. \cref{app:controlgrad} reviews how the dimension of the DLA directly impacts trainability.

    \subsection*{Warmstarting QAOA on Constrained Optimization}\label{subsec:warmstart_qaoa}

    To demonstrate how warm starting inside the DLA $\gxycz{}$ can dramatically improve QAOA inside the DLA $\gxyczzz{}$, we develop the warm starting procedure for QAOA on three optimization tasks in the Methods section. Here we report the results from numerical experiments. %
    
    For comparison, we consider two performance metrics for each problem: the approximation ratio and the success probability. Let $ E_{\text{min}} $ ($ E_{\text{max}}$) be the minimum (maximum) energy of the system and $ \langle H_{f} \rangle $ the expected energy of the cost Hamiltonian $H_{f}$ after applying the PQC $U(\theta)$ to the input state. The approximation ratio is the ratio of the difference in the expected energy with the maximum energy over the total energy range, 
    \begin{align} 
        \text{AR} = \left(\langle H_f \rangle - E_{\text{max}} \right)/ \left( E_{\text{min}}-E_{\text{max}} \right);
    \end{align} 
    
    \noindent this ratio reaches its minimum value of $0$ when $\langle H_f \rangle = E_{\text{max}}$ and its maximal value of $1$ when $\langle H_f \rangle = E_{\text{min}}$.
    Let $P_{\text{min}}$ be the projector into the solution space, such that $P_{\text{min}}^2 = P_{\text{min}}$ and $\text{Tr}(H_f P_{\text{min}}) = n_s E_{\text{min}}$ where $n_s$ is the number of minimum energy states. %
    The success probability is the support of the final wavefunction over the solution space:
    \begin{align}
        \text{Pr}[\text{Success}] = \bra{\psi_f} P_{\text{min}} \ket{\psi_f}. 
    \end{align}

    \texttt{Portfolio Optimization} is an optimization task for a financial portfolio given a risk profile and historical data on the returns and correlations between the assets. Given $n$ assets $ \mathcal{A} = \{ A_{1}, \ldots, A_{n} \} $ from which to choose precisely $ k $ to be in the portfolio, define a binary variable $ x \in \{0,1\}^{n} $ that determines a selection: $ A_{i} $ will be in the portfolio if $ x_{i} = 1$.

    \cref{fig:multiqaoa_sp500} shows the approximation ratio and success probability for $p$-depths $\leq 40$ and $n\in\{12,14,16\}$ for this task using QAOA. As the depth and thus number of tunable parameters increases, so does performance, which is to be expected~\cite{larocca2023theory,zhang2022quantum}. We see that warm starting improves the discovery of high-quality approximate solutions as well as the optimal solution, with the gap widening dramatically for larger number of assets. Note that the approximation ratio tends to $1.0$ faster than the success probability, because QAOA more readily finds high-quality approximate solutions.
    
    \cref{fig:sp500multistep} shows the approximation ratio and success probability during training of QAOA with $p=10$ with and without warm starting. Warm starting starts with a significantly better approximation ratio in (a) and smoothly trends towards higher quality solutions, including the optimal solution as seen in (b).

    \begin{figure*}[!t]
        \centering
        \begin{tabular}{cc} 
        \includegraphics[width=0.48\textwidth]{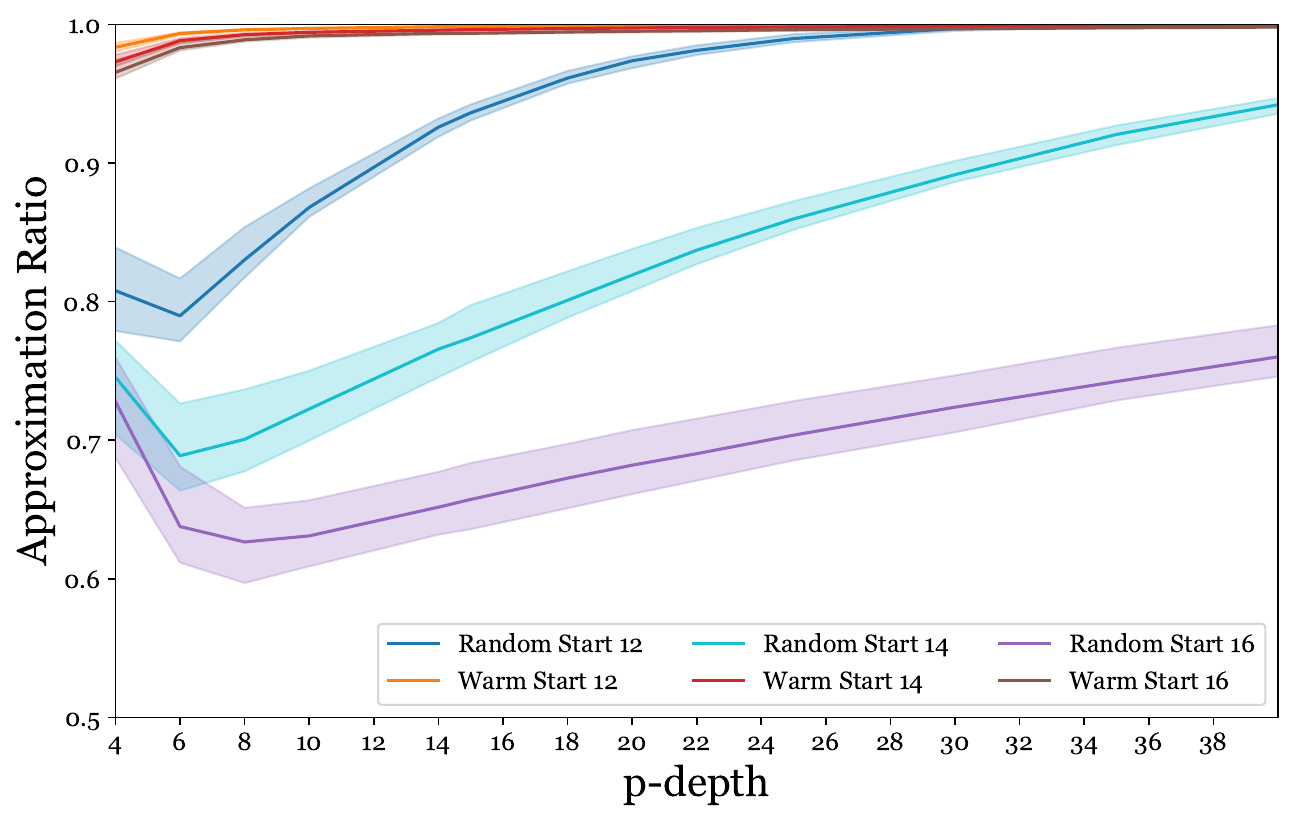} & \includegraphics[width=0.48\textwidth]{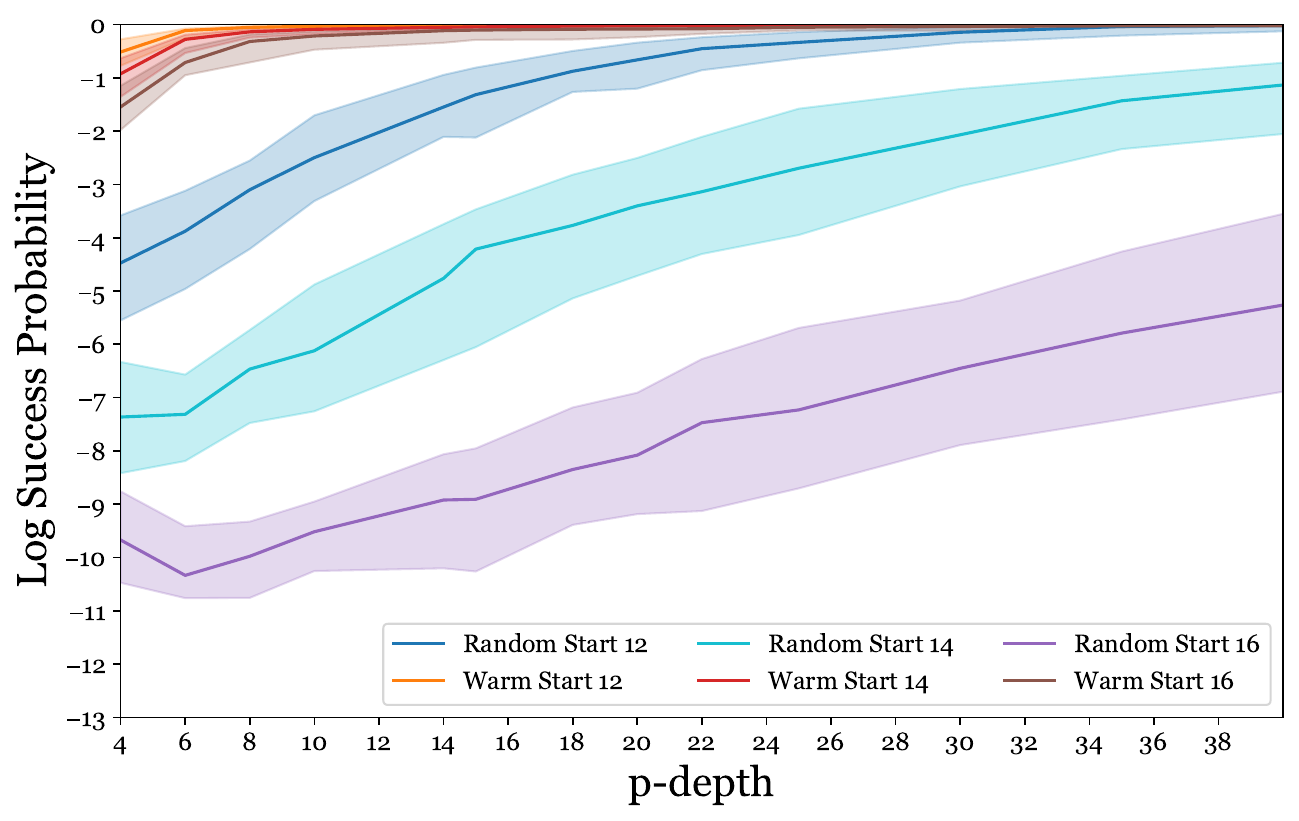} 
        \end{tabular}
        \caption{\textbf{Warm Starting QAOA for Portfolio Optimization on the SP500.} Median performance (solid lines) of the best of $\numstarts$ randomly initialized QAOA-$p$ circuits for different depth $p$ for portfolio optimization across $\numinsts$ monthly SP500 instances. Ribbons delineate the lower and upper quartile.}
        \label{fig:multiqaoa_sp500}
    \end{figure*}
    \begin{figure*}[!t]
        \centering
        \begin{tabular}{cc}
        \includegraphics[width=0.48\textwidth]{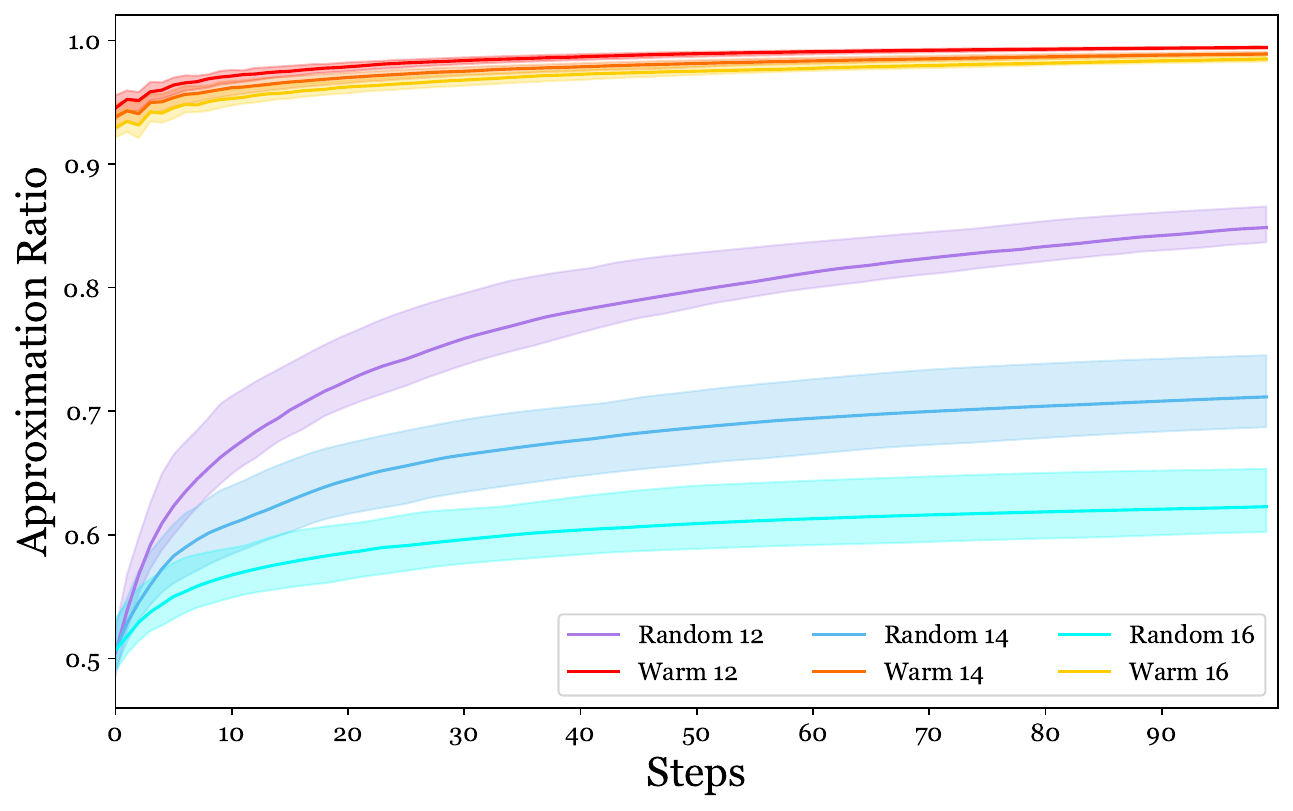} & \includegraphics[width=0.48\textwidth]{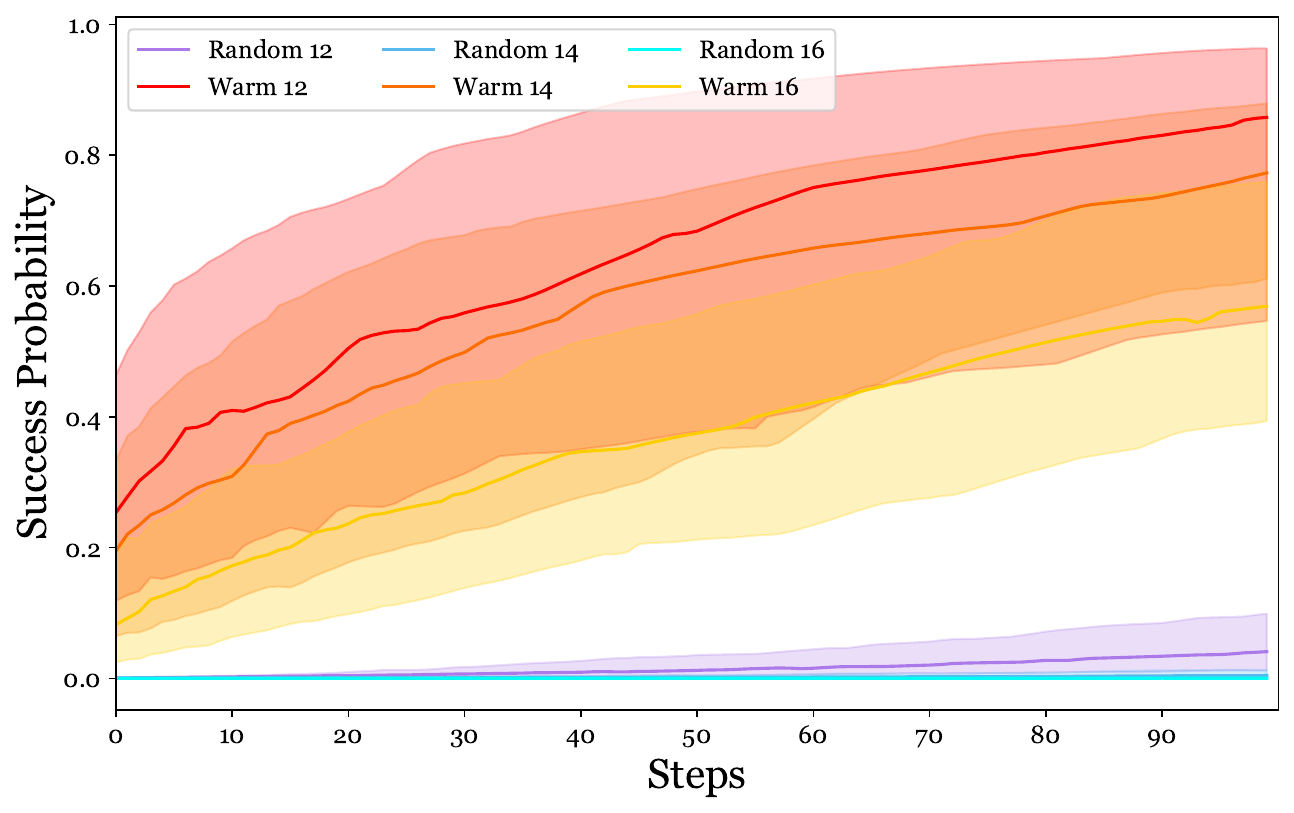}
        \end{tabular}
        \caption{\textbf{QAOA Performance during Training on SP500.} Median performance (solid lines) of the best of $\numstarts$ randomly initialized QAOA-$10$ circuits for each step in the training over $\numinsts$ instances. Ribbons delineate the lower and upper quartile.}\label{fig:sp500multistep}
    \end{figure*}

    \cref{fig:multiqaoa_graph} shows the performance of warm-started QAOA compared to random starts for two additional combinatorial problems. \texttt{Graph Partitioning} is the task of partitioning nodes in a graph into two equal sides such that the cut value across them is minimized. \texttt{Sparsest k-Subgraph} asks to identify the set of vertices that have the smallest number of edges in the induced subgraph. Across each problem domain, we see a dramatic improvement in both metrics by warm starting QAOA.

    Our results reflect empirical findings over meaningful instance families, but it may be possible to construct instances where the restricted ansatz manifold is not as significantly better than a random initialization point. The improvement from warm starting may also be less significant for much larger instance sizes. Our results show that tailored ans{\"a}tze for constrained optimization problems provide a fertile ground for high quality warm starting QAOA.
    
    \begin{figure*}[!t]    
        \centering
        \begin{tabular}{cc}
        \includegraphics[width=0.48\textwidth]{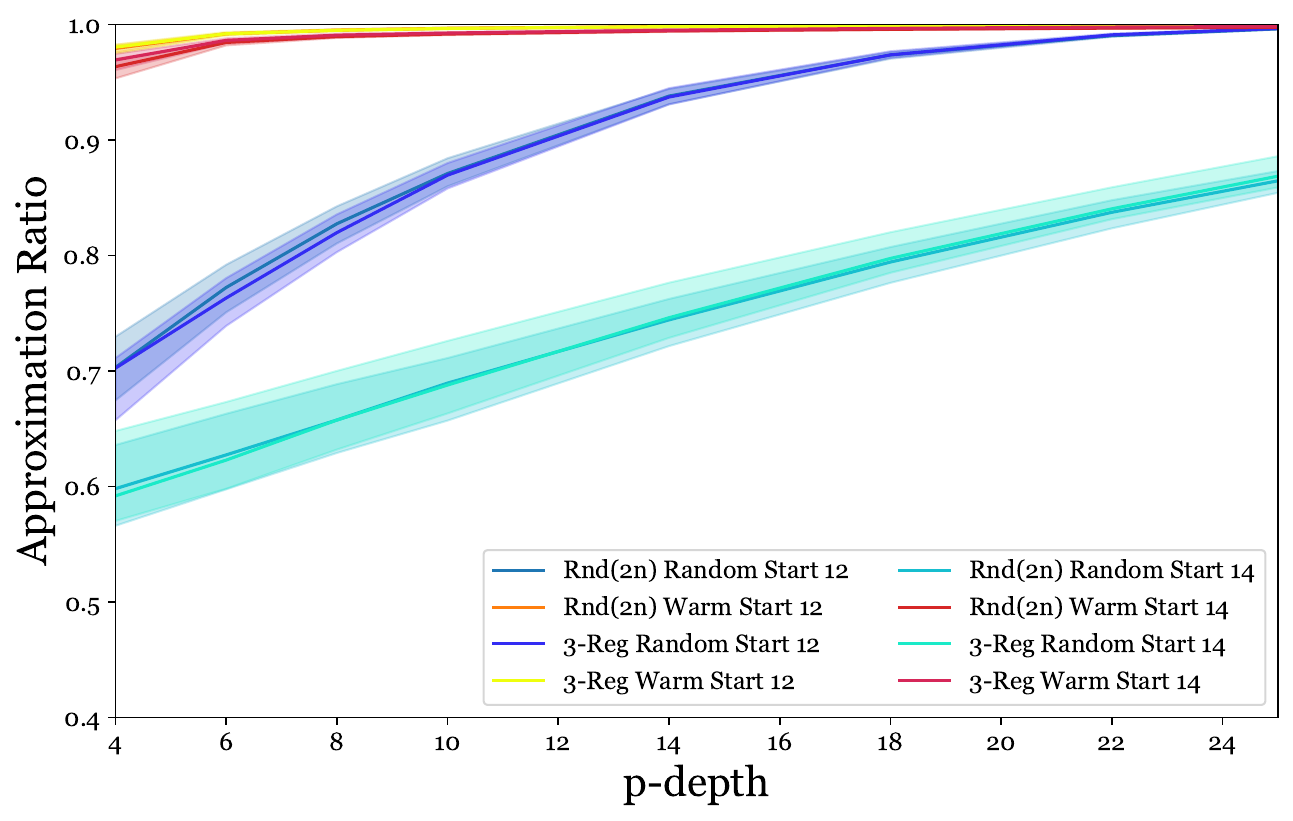} &
        \includegraphics[width=0.48\textwidth]{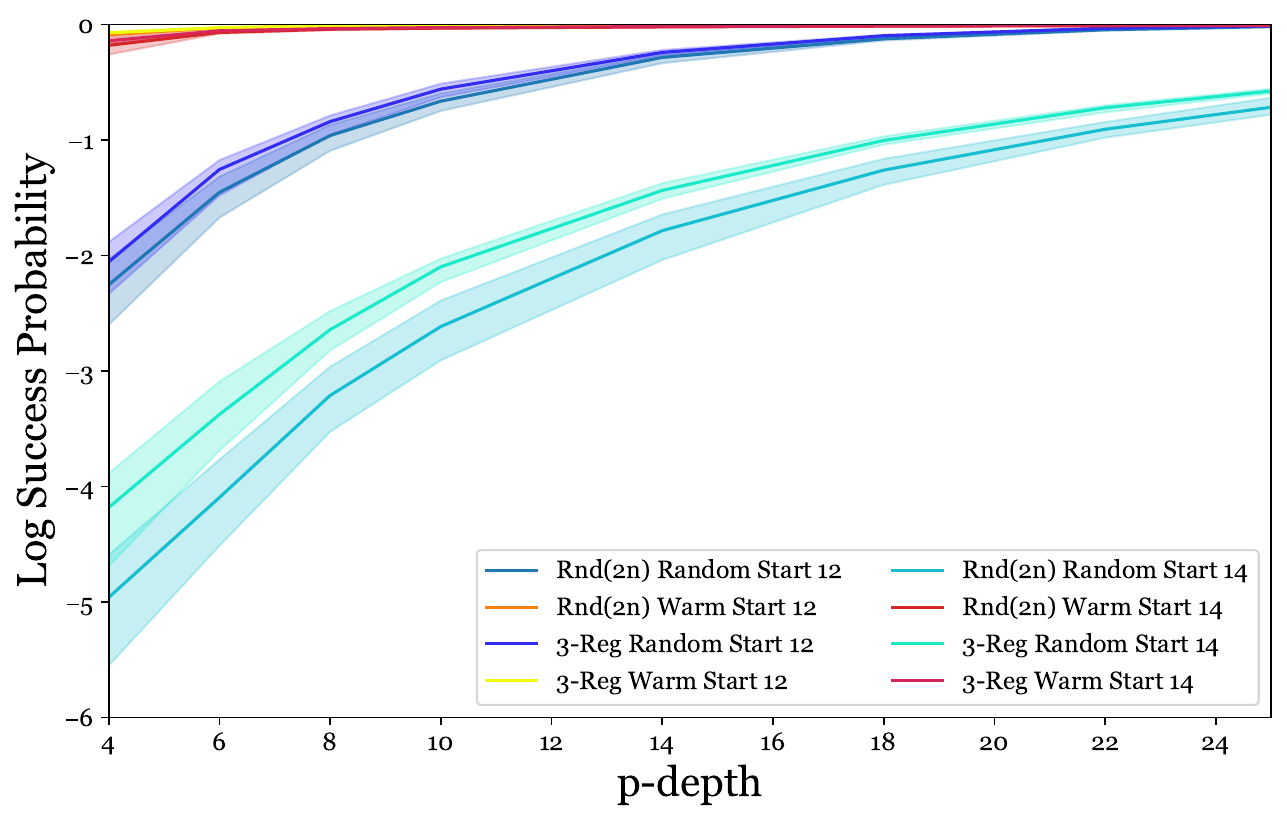} \\ 
        \multicolumn{2}{c}{(a) Graph Partitioning} \\ 
        \includegraphics[width=0.48\textwidth]{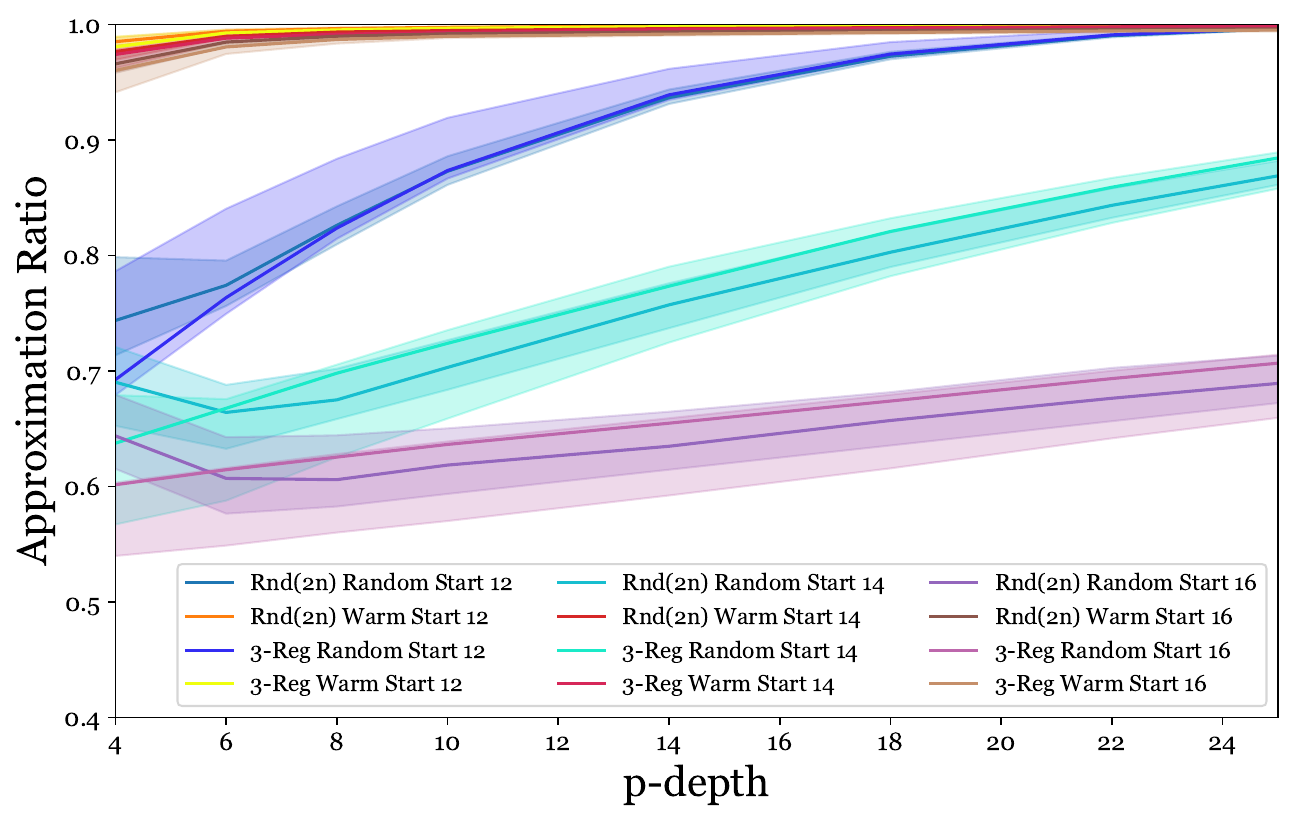} &
        \includegraphics[width=0.48\textwidth]{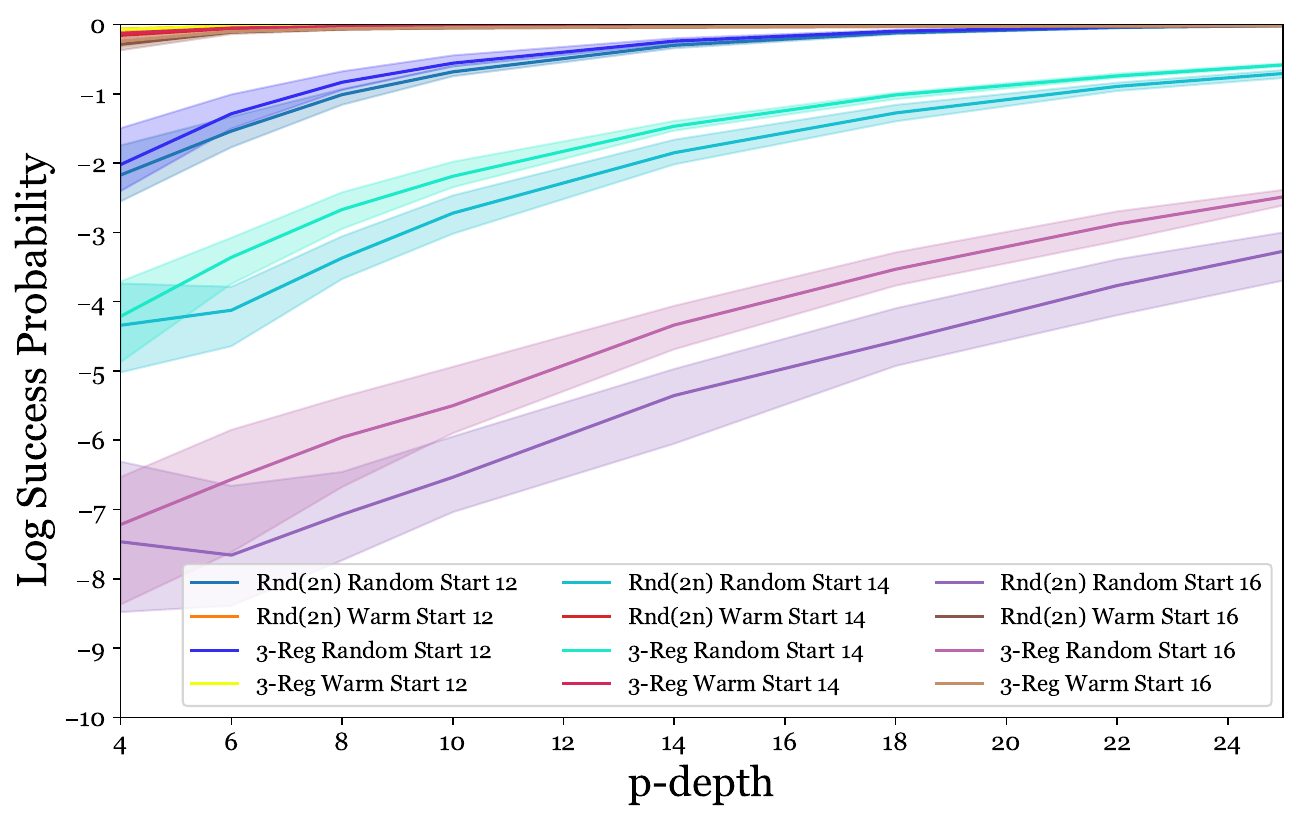} \\ 
        \multicolumn{2}{c}{(b) Sparsest k-Subgraph}
        \end{tabular}
        \caption{\textbf{Warm Starting QAOA for Graph Partitioning and Sparsest k-Subgraph.} Median performance (solid lines) of the best of $\numstarts$ randomly initialized QAOA-$p$ circuits for different depth $p$, across graph partition and random sparsest subgraph. For graphs labeled $\text{Rnd}(2n) $ we considered random graphs with $2\,n$ edges and $n$ vertices and for graphs labeled $\text{Reg-}3$ we considered 3-regular random graphs. Ribbons delineate the lower and upper quartile.}
        \label{fig:multiqaoa_graph}
    \end{figure*}

\section*{Discussion}\label{sec:conclusion}

    Parameterized XY operators (XY-mixers) are a mainstay utility for quantum algorithms, including VQAs. The topology of XY-mixers utilized for a VQA has a significant impact on the associated dimension of the Dynamic Lie algebra, which is known to impact trainability~\cite{larocca2022diagnosing,holmes_connecting_2022,fontana2024characterizing,ragone2024lie,larocca2024review} and overparameterization~\cite{larocca2023theory} for such approaches. Open and closed 1-dimensional XY-mixer topologies have dimensions that scale as quadratically in $ n $ as shown in \cref{thm:poly_dlas_decomp} while the all-to-all connectivity scales $ \Omega(3^{n}) $, as shown in \cref{thm:dlasummmary-big}. We conjecture the true scaling is $ 4^{n - \Theta\left(\log(n)\right)}$, summarized in \cref{conj:expliealg}. Inclusion of single qubit Pauli-Z rotations $ R_{Z} $ leads to the same asymptotic dimensions as shown in \cref{thm:dlasummmary-big} for the 1-dimensional cases specifically and scales at least as exponential for the clique. Moreover, the inclusion of the clique of two qubit Pauli-Z rotations $ R_{ZZ} $ leads to an exponential DLA for all topologies. 
    
    Given the quadratic DLA spaces associated with the XY-mixer cycle with single qubit Pauli-Z rotations, we considered warm starting the associated circuits for the shared angle and multiple angle QAOA formalisms for three NP-Hard constrained optimization tasks for which the XY-mixer is utilized: \texttt{Portfolio Optimization}, \texttt{Sparsest k-Subgraph} and \texttt{Graph Partitioning}. For each problem, we find significantly higher quality parameters leading to dramatic performance benefits with respect to success probability and approximation ratio, shown in \cref{fig:multiqaoa_sp500,fig:multiqaoa_graph} for MA-QAOA and \cref{app:warmstart_details} for SA-QAOA. MA-QAOA generally outperformed SA-QAOA with and without warm starting, likely due to more granular control inside the DLA for fixed depth.
    
    This work opens avenues for several promising directions. Many constrained optimization problems have tailored mixer and phase-separation operators; this rich interplay allows for the consideration of restriction criteria that can aid in designing warm starting. The XY-mixer is also utilized for fermionic simulation on qubit devices~\cite{stanisic2022observing} and such VQAs are also amendable to similar warm starting strategies. Furthermore, there is work being done to better understand constrained Hamiltonian problems that are stoq-MA- and QMA-hard \cite{parekh2024constrained,rayudu2024fermionic}. Understanding when these problems give rise to similar DLA behavior would be helpful for using variational algorithms to solve them.
    
    There is also exciting directions for future work to understand the behavior of XY-mixer PQCs better. While this work presents concrete conjectures about the decompositions of the exponentially-large DLAs, it would be fruitful to formally prove these claims. Moreover, understanding the $\mfg$-purities of the Dicke state and objective Hamiltonians could provide more concrete understanding of the barren plateau behavior when training these circuits. 

\section*{Methods}\label{sec:methods}

    \subsection*{Constructing the Dynamic Lie Algebra}

        An orthonormal basis for a Lie algebra $\mfg$ can be constructed by a simple iterative algorithm~\cite{larocca2022diagnosing}. Let 
        \begin{equation}
        \mathcal{B}_0 = \text{Gram-Schmidt}(\{ i G_{1}, \ldots, i G_{|\mcg| } \})
        \end{equation} 
        for $ G_{i} \in \mcg $ be an orthonormal basis of $ i \mcg $. Define 
        \begin{equation} 
        \Delta_{k} = \bigB{ \com{ i B_{kj}, i G_{l} } :  i B_{kj} \in \mathcal{B}_{k}, G_{l} \in \mcg }
        \end{equation} 
        to construct $ \mathcal{B}_{k+1} = \text{Gram-Schmidt}(\mathcal{B}_{k} \cup \Delta_{k}) $ by finding an orthonormal basis for the new linear span. Follow this construction until $ \mathcal{B}_{K+1} = \mathcal{B}_{K} $ for some $ K $. It follows that the span of the orthonormal basis is equivalent to the Lie algebra and hence the cardinality of the set equal to the dimension:
        \begin{align}
        \mathcal{B} &= \{ i B_{1}, \ldots, i B_{\dim(\mfg)} \}, \\ 
        \mfg &= \text{span}_{\R}\left(\mathcal{B}\right). 
        \end{align}

    \subsection*{Proof Sketch of \cref{thm:poly_dlas_decomp}}
        \noindent An isomorphism between two Lie algebras is a vector space isomorphism that preserves commutation relations. In order to prove Lie algebra equivalences, we give an explicit mapping of $\mathcal{B}$ to known bases of $\so{n}$ or $\su{n}$~\cite{bertlmann2008bloch,Bossion2022_sun_structure_constants}. See \cref{subsec:sosu} for a definition of these Lie algebras and their commutation relationships.

    \subsection*{Example: $\gxyp{n} \cong \so{n}$}    
        $\gxyp{n}$ is the DLA generated by $XY_{j,j+1}$ terms defined on a path graph. New terms in the iterative DLA process are generated by chains of nesting commutations of neighboring terms along a subpath. These terms are defined as $ P_{j,k} := \com{iXY_{j,j+1}, \com{iXY_{j+1,j+2}, \dots, \com{iXY_{k-2,k-1}, iXY_{k-1,k}}}} $ for $1 \leq j < k \leq n$, with $P_{j,j+1} = XY_{j,j+1}$ as the base case. The $\{P_{j,k}\}$ are the only possible terms that can be generated in this process and so $\gxyp{n} = \text{span}\left( \{ P_{j,k} \}_{j<k} \right)$. Moreover, they satisfy skew-symmetric relationships. As an example, for $1 \leq j < k < \ell \leq n$, they satisfy $\com{P_{j,k},P_{k,\ell}} = P_{j,\ell}$. Therefore $\gxyp{n} \cong \so{n}$.

    \subsection*{Example: $\gxyc{n} \cong \su{n}$ for odd $n$}
        $\gxyc{n}$ is the DLA generated by $XY_{j,j+1}$ terms defined on a cycle graph. Since $\Gxyp{n} \subset \Gxyc{n}$, we have that $\gxyp{n} \subset \gxyc{n}$ and so all of the $P_{j,k}$ defined above are in $\gxyc{n}$. Indeed, $\so{n}$ is a Lie subalgebra of $\su{n}$. However, $\Gxyc{n}$ also contains the term $XY_{1,n}$ that completes the cycle. By taking nested commutations of neighboring XY terms starting and ending at the same node $j$, a diagonal element $\Dopp{j,j+1}$ is generated. These $\{D_{j,j+1}\}$ form the Cartan subalgebra of $\su{n}$, the largest set of mutually commuting elements in the algebra. When the indices overlap, commutations of $P_{j,k}$ and $\Dopp{a,b}$ form the $\Popp{c,d}$ terms. Alternatively, these terms can also be formed by taking nested commutations around the cycle in the opposite direction of the $P_{c,d}$ terms. The $P_{j,k}, \Popp{c,d},$ and $\Dopp{a,b}$ satisfy the skew-symmetric, symmetric, and diagonal commutation relations of $\su{n}$. Thus, the linear span of these operators is isomorphic to $\su{n}$ as a Lie algebra.
        
    \subsection*{Proof Sketch of \cref{thm:dlasummmary-big}}
    We begin with the upper bound on the dimension of the exponential-sized DLAs. 
    
    Given a Lie algebra $\mfg$, the \textit{adjoint action} of $h \in \mfg$ is an endomorphism on $\mfg$ given by $\text{ad}_h(g) = [h,g]$. Then $\text{ker}(\text{ad}_h)$ are the elements in $\mfg$ that commute with $h$. Consider the operator
    \begin{equation}
        Z^+ = \sum_{j =1}^n Z_j,
    \end{equation}
    
    \noindent which is associated with preserving Hamming weight of computational basis states. Of particular interest are the \textit{centralizers} of $Z^+$, the Lie sub-algebras of operators that commute with $Z^+$:   
    \begin{align}
        \hlincon &:= \bigB{iH \in \u{2^n} : \com{iH,Z^+} = 0} \\
        \glincon &:= \bigB{iH \in \su{2^n} : \com{iH,Z^+} = 0} \label{eq:glincon}
    \end{align}

    We prove the following decompositions.
        
    \begin{theorem}\label{thm:lincon_decomp}
        \begin{align}
            \hlincon &\cong \u{1}^{\oplus n} \oplus \bigoplus_{k=1}^{n-1} \su{\binom{n}{k}}  \label{eq:cong_hlincon} \\ 
            \glincon &\cong \u{1}^{\oplus n-1} \oplus  \bigoplus_{k=1}^{n-1} \su{\binom{n}{k}}  \label{eq:cong_glincon}
        \end{align}

        \noindent Consequently, $\text{dim}\left(\hlincon\right) = \binom{2\,n}{n} - 1$ and $\text{dim}\left(\glincon\right) = \binom{2\,n}{n} - 2$ which are both $\Theta \left(\frac{4^n}{\sqrt{n}}\right) = 4^{n - \Theta(\log n)} $.
    \end{theorem}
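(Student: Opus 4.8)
The plan is to diagonalize the adjoint action $\text{ad}_{Z^+}$ on $\u{2^n}$ by exploiting the Hamming-weight grading of $\H$ and then read off the block structure of its kernel. Write $\H = \bigoplus_{k=0}^{n}\H_k$ with $\H_k = \Span{\ket{x} : \abs{x}=k}$ of dimension $\binom{n}{k}$. Since $Z^+\ket{x} = (n-2\abs{x})\ket{x}$, the space $\H_k$ is exactly the eigenspace of $Z^+$ for the eigenvalue $n-2k$, and the $n+1$ eigenvalues $n-2k$ are pairwise distinct. Hence, for any operator $A$ on $\H$, $\com{A,Z^+}=0$ if and only if $A$ preserves every $\H_k$, i.e.\ $A$ is block diagonal with respect to this decomposition (concretely, $\com{Z^+,A}_{xy} = (\abs{y}-\abs{x})\cdot 2\,A_{xy}$). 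Restricting to skew-Hermitian operators identifies the centralizer of $Z^+$ in $\u{2^n}$ with the block-diagonal subalgebra $\bigoplus_{k=0}^{n}\u{\binom{n}{k}}$: blocks with distinct $k$ have disjoint support and therefore commute, so the sum is a direct sum of ideals, each ideal being the full unitary algebra on $\H_k$.

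Next I would decompose each block using the standard splitting $\u{m}\cong\u{1}\oplus\su{m}$ (scalar center plus semisimple part), noting $\su{1}=0$ for the extremal blocks $k=0$ and $k=n$; this gives $\bigoplus_{k=0}^{n}\u{\binom{n}{k}} \cong \u{1}^{\oplus(n+1)}\oplus\bigoplus_{k=1}^{n-1}\su{\binom{n}{k}}$. To obtain $\hlincon$ and $\glincon$ in the displayed normalizations, I would then track the distinguished central directions. The $(n+1)$-dimensional center of the full centralizer is spanned by the sector projectors $iP_0,\dots,iP_n$, and it contains both the identity $I=\sum_k P_k$ and the generator $Z^+=\sum_k(n-2k)P_k$. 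Passing to $\hlincon$ and then to the traceless refinement $\glincon\subset\su{2^n}$ discards one and then one further of these central directions — the trace functional is nontrivial on the center since $\text{Tr}(P_k)=\binom{n}{k}>0$, while it vanishes identically on each $\su{\binom{n}{k}}$ — leaving centers $\u{1}^{\oplus n}$ and $\u{1}^{\oplus(n-1)}$, with the semisimple parts $\bigoplus_{k=1}^{n-1}\su{\binom{n}{k}}$ untouched throughout. This yields exactly \cref{eq:cong_hlincon} and \cref{eq:cong_glincon}.

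For the dimensions I would invoke the Chu--Vandermonde identity $\sum_{k=0}^{n}\binom{n}{k}^2=\binom{2n}{n}$ (the coefficient of $x^n$ in $(1+x)^n(1+x)^n$), so the full centralizer in $\u{2^n}$ has dimension $\binom{2n}{n}$; subtracting the one, respectively two, central relations above gives $\dim\hlincon=\binom{2n}{n}-1$ and $\dim\glincon=\binom{2n}{n}-2$. Stirling's approximation then gives $\binom{2n}{n}=\Theta(4^n/\sqrt{n})$, which is $4^{\,n-\Theta(\log n)}$, establishing the stated asymptotics.

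I do not anticipate a genuine obstacle: the entire content is the observation that commuting with $Z^+$ means block diagonality in the Hamming-weight basis, after which the result is assembled from the $\u{m}\cong\u{1}\oplus\su{m}$ splitting and one binomial identity. The only place requiring real care — and where the $\pm1$'s in the dimension formulas are actually determined — is the center bookkeeping when restricting the full centralizer down to $\hlincon$ and $\glincon$ (which distinguished direction, the global phase $iI$ or the central generator $iZ^+$ or the trace constraint, is removed at each step); I would also state explicitly that the kernel of $\text{ad}_{Z^+}$ is no \emph{larger} than the block-diagonal algebra, which uses only the distinctness of the eigenvalues $n-2k$ so that no two Hamming sectors merge.
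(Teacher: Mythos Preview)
Your approach is essentially the paper's: identify commuting with $Z^+$ as block-diagonality in the Hamming-weight grading (the paper packages this as \cref{lem:eigcommute}), split each block $\u{\binom{n}{k}}\cong\u{1}\oplus\su{\binom{n}{k}}$, and then impose tracelessness for $\glincon$. You are right to flag the center bookkeeping as the one delicate point; note however that by the paper's own definition $\hlincon$ \emph{is} the full centralizer in $\u{2^n}$, so nothing is discarded when ``passing to $\hlincon$'' --- the full centralizer carries an $(n{+}1)$-dimensional center $\Span{i\hat F^0,\dots,i\hat F^n}$, and the paper's proof writes exactly this span before asserting $\cong\u{1}^{\oplus n}$, so the off-by-one you anticipated is present in the source as well.
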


    The proof is as follows. Each eigenspace $F_{k}$ of $Z^{+}$ with eigenvalue $k$ is spanned by computational basis states with Hamming weight $k$. The dimension of this space is $\binom{n}{k}$. The Lie algebra $\hlincon$ can be broken into $\oplus_{k=0}^{n} \hlincon^{k} $ where each $\hlincon^{k}$ is over $F_{k}$ because each Lie algebra is mutually commutative. By dimensionality, we have $\hlincon^k = \u{1} \oplus \su{\binom{n}{k}} $, which proves \cref{eq:cong_hlincon}. \cref{eq:cong_glincon} follows from the definition of $\glincon$ and identities of projection matrices. This establishes \cref{thm:lincon_decomp}. The dimensions follow from the fact that $\text{dim}(\su{a}) = a^2 - 1$.
     
    Since the XY mixer and any Pauli-Z string commutes with $Z^{+}$, the associated Lie algebra of these operators are subalgebras of $\glincon$. Then we know their dimension is upper bound by $\text{dim}\left(\glincon\right)$. 

    We now consider the lower bound. We find linearly independent operators that must be in the DLA of the XY clique $\gxyk{}$, thereby lower bounding the dimension. 

    For the same reason the cycle has $P_{jk} $, there exists a term $P_{\sigma_{jk}} $ for every sequence $ \sigma_{jk} = (j,v_{2},\ldots,v_{N},k)$ with $j < v_{2} <\cdots<v_{N}<k$ in the clique such that it contains a $Z_{v_{j}}$ for every position $v_{j}$ and either $XY_{j,k}$ or $YX_{j,k} := \frac{1}{2}(Y_j X_k - X_j Y_k)$. Moreover, consider a location $w_{1}$ such that $w_{1}$ is not in the sequence. Then by commutating $P_{\sigma_{jk}}$ with $XY_{v_{1},w_{1}}$, now have the operator $YX_{v_{1},w_{1}}$ instead of $Z_{v_{1}}$. This can similarly be done in any location except $j$ and $k$. By counting the number of such constructions, we find that there are at least $(3^{n}-1)/2 = \Omega(3^{n})$ linearly independent terms in $\gxyk{}$. By containment, it immediately follows for $\gxykz{}$. 
    
    To show these terms are contained in $\gxyczzz{}$, we show that $XY_{j,k}$ for any $j,k$ is contained inside $\gxyczzz{}$. Begin with $P_{j,k}$ and commute with $Z_{k-1}Z_{k}$ to remove $Z_{k-1}$ from the term. This process can be repeated to remove any $Z_{j}$ in $P_{j,k}$ with $Z_{j} Z_{k}$. Each application sends the $X_{j,k}$ in $P_{j,k}$ to $YX_{j,k}$, but $Z_{k}$ can be applied to flip it back to $XY_{j,k}$.  

    \begin{figure*}[!t]
    \centering \includegraphics[width=0.75\textwidth]{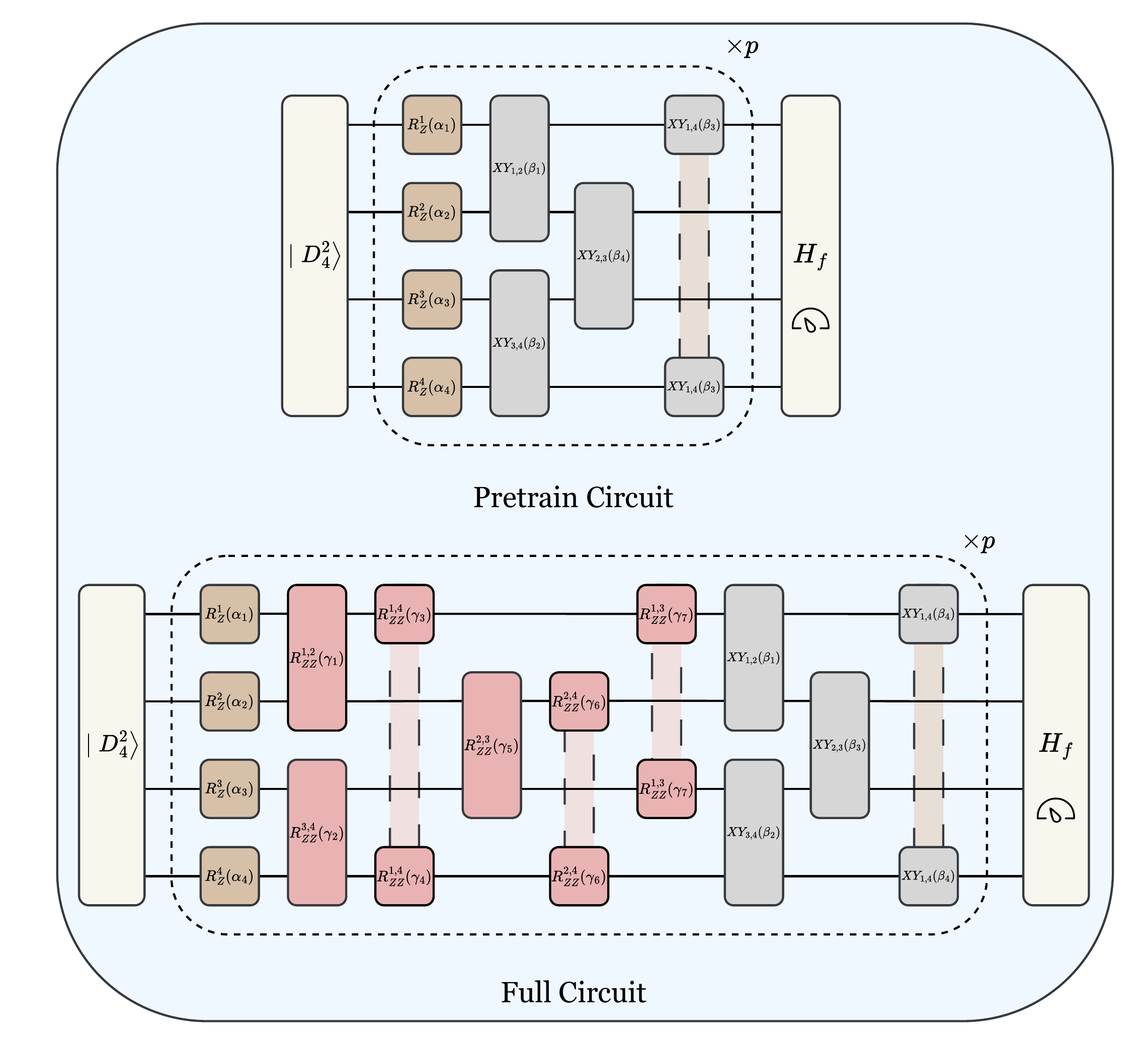}
    \caption{\textbf{Warm starting QAOA.} The figure on the top shows the circuit utilized to warm start QAOA without the $R_{ZZ}$ terms. The figure on the bottom shows the full circuit for QAOA. These corresponds to a polynomial and exponential sized DLA, respectively.}\label{fig:circ_maqaoa}
    \end{figure*}

    \subsection*{Warm Starting Procedure for QAOA}\label{subsec:qaoawxy} 

    \begin{figure*}
        \begin{mybluebox}[label={not_used}]{Warm Start Procedure}
            \textbf{Input}: Gate sets $\mcg$ and $\mcg' \subset \mcg$.
            
            \textbf{Output:} $\bmt_f \in [0,2\pi]^{\abs{\mcg}}$.
            \begin{enumerate}
                \item \textbf{Restrict:} Initialize tuple vector $\bmt_i^{WS} \in [0,2\pi]^{\abs{\mcg'}} $. 
                \item \textbf{Pretrain:} Train the circuit by varying the $\abs{\mcg'}$ angles to learn $\bmt_f^{WS}$.
                \item \textbf{Transfer:} Initialize tuple vector $\bmt_i = (\bmt_f^{WS}, \vec{0}) $ with length $\abs{\mcg}$.
                
                \item \textbf{Refine:} Run an efficient training schedule on all gates in $\mcg$ to achieve $\bmt_f $. 
                
                Return $\bmt_f$.
            \end{enumerate}
        \end{mybluebox}
        \caption{\textbf{Warm Start Procedure.} The warm-starting procedure used throughout this paper. The DLA $\mfg = \lie{\mcg}$ is assumed to be exponentially-large whereas $\mfg' = \lie{\mcg'}$ should be efficiently trainable.}
        \label{fig:warmstart_alg}
    \end{figure*}
    
    The warm starting circuit consists of $ \mathcal{G}_{XY}^{C} $ as the mixing operators and $ \mathcal{G}_{Z} $ as the phase-separating operators:
    \begin{align}\label{eq:maqaoa-ws}
        U^{WS}(\bm{\alpha}^{WS},\, &\bm{\beta}^{WS}) =\nonumber\\
        & \prod_{j=1}^{p} \prod_{M_{k} \in \mathcal{G}_{XY}^{C}} e^{i \beta_{jk}^{WS} M_{k}} \prod_{N_{k} \in \mathcal{G}_{Z}} e^{i \alpha_{jk}^{WS} N_{k}}. 
    \end{align}
    
    \noindent $ U^{WS} $ has $n$ $R_{Z}$ rotation gates and $n$ XY-mixer gates per layer leading to $ \bmt^{WS} \in \R^{2 \, n \, p} $ where $\bmt^{WS} = (\bm{\alpha}^{WS}, \bm{\beta}^{WS} )$. In comparison the full QAOA circuit with $ \mathcal{N} = \mathcal{G}_{Z,ZZ} $ is given by
    \begin{align}
        U(\bm{\alpha}, \bm{\beta}) = \prod_{j=1}^{p} \prod_{M_{k} \in \mathcal{G}_{XY}^{C}} e^{i \beta_{jk} M_{k}} \prod_{N_{k} \in \mathcal{G}_{Z,ZZ}} e^{i \alpha_{jk} N_{k}}. 
    \end{align}
    
    \noindent $ U $ is composed of $ n $ $R_{Z}$ gates, $ n $ XY-mixer gates, and $\binom{n}{2}$ $R_{ZZ}$ gates, and so $ \bmt = (\bm{\alpha}, \bm{\beta}) \in \R^{(n^2+3\,n) \, p/2} $. See \cref{fig:circ_maqaoa} for a depiction of these two circuits.
    
    During both the warm start and training with the full circuit, we use the Dicke state $ \ket{D^n_{n/2}} $ as the initial state. The observable is the optimization task cost Hamiltonian $ H_{f} $ in both scenarios. For all three combinatorial problems considered, $ H_{f} \in \text{span}\left( \mathcal{G}_{Z,ZZ} \right) $. Then the respective loss functions are 
    \begin{align}
        \mathcal{L}^{WS}(&\bmt^{WS}) = \nonumber\\ 
        &\bra{D^n_{n/2}} U^{WS}(\bmt^{WS})^{\dagger} \, H_{f} \, U^{WS}(\bmt^{WS}) \ket{D^n_{n/2}}, \nonumber\\
        \mathcal{L}(\bmt) &= \bra{D^n_{n/2}} U(\bmt)^{\dagger} \, H_{f} \, U(\bmt) \ket{D^n_{n/2}}. \nonumber 
    \end{align}

    To warm start the full circuit, we train the restricted ansatz circuit $ U^{WS} $ with a random initial set of parameters $ \bmt^{WS}_i \in [0,2\pi]^{2\,n\,p} $. After training for $\numsteps$ steps minimizing loss $\mathcal{L}^{WS}$, we learn the angles $ \bmt^{WS}_f $. Then we parameter transfer to the full circuit: $R_{Z}$ and XY-mixer gates are set by $ \bmt^{WS}_f $ and the $R_{ZZ}$ gates are set to zero. We then train the full circuit for $\numsteps$ steps with respect to the loss $ \mathcal{L} $. For comparison, we consider random initialization for the full circuit  $ \bmt \in [0,2\pi]^{(n^2+3\,n)\,p/2} $. For all experiments we used the Adam optimizer~\cite{kingma2014adam} in Pennylane framework~\cite{bergholm2018pennylane}. An overview is given in \cref{fig:warmstart_alg}.

    \subsection*{Example: Portfolio Optimization}\label{subsec:method_port_opt}

    Let $ \bm{p} \in \R^{n} $ represent the expected returns (e.g. based on historical data) such that $ \bm{p}_{i} = \mathbb{E}[ p(A_{i}) ] $ for asset $ A_{i} $ and a (normalized) returns function $ p $. Then $ \bm{p}^T x $ is the total expected return of a specific portfolio selection. Let $\bm{C}$ represented the associated covariance matrix for the returns such that $ \bm{C}_{ij} = \mathbb{E}[ (p(A_{i}) - \bm{p}_{i}) (p(A_{j}) - \bm{p}_{j}) ] $. For example, $ \bm{p} $ and $ \bm{C} $ may be constructed through sampling from an underlying population of historical asset pricing data (such as a stock index). 
    
    Given $ \bm{p} $ and $ \bm{C} $, we wish to solve the binary quadratic program with linear constraint (cast as a minimization):
    \begin{align}\label{eq:po_iqp}
        (IQP)\quad &\min_{x \in \{0,1\}^n}  - \bm{p}^{T} x + q \, x^{T} \bm{C}  x \nonumber \\ 
        &\text{s.t.} \quad \abs{x} = k
    \end{align}
    \noindent  Only allowing for $ |x| = k $ constrains the problem such that $ k $ assets are selected for the portfolio. Here, $q \in \R^+ $ is a risk-aversion parameter such that larger $q$ corresponding to higher (lower) risk aversion (appetite). The inner product with $ \bm{p} $ rewards selecting higher return assets while $ x^{T}  \bm{C} x $ is the associated risk profile of the selection. The parameter $q$ trades off the possible return of a selection with the associated risk of the selection. Since there are $\Theta(n^k)$ many binary strings $ x $ for a specific $k$, if $ k = \mathcal{O}(1) $, the problem is solvable in polynomial time. However, if $k = \Omega(n)$, such as $k = n/2$, the space we optimize over is of size  $\Theta(2^n/\sqrt{n}) = 2^{n-\Theta(\log(n))}$ and the problem becomes \cc{NP}-hard for arbitrary $\bm{p}, \bm{C}$~\cite{cai2008parameterized}.

    Let $\sigma^0 = \ketbra{0}$ and $\sigma^1 = \ketbra{1}$ be the projections onto the standard 1-qubit basis vectors and $ \sigma_{k}^{0}$  and $ \sigma_{k}^{1} $ be those operators on a specific qubit $k$. We can embed \cref{eq:po_iqp} as a cost Hamiltonian in the standard way. Define the return and covariance Hamiltonians as
    \begin{align}
        H_p = -\sum_{i = 1}^n p_i \, \sigma^1_i, \; H_{covar} = \sum_{i,j = 1}^n C_{ij} \, \sigma^1_i \sigma^1_j,
    \end{align}
    \noindent and let $H_{f} = H_p + q \, H_{covar}$.
    \begin{lemma}\label{lem:hamportop}
        $H_{f}$ encodes the IQP cost function \cref{eq:po_iqp}, for $x \in \{0,1\}^n$,  $\mel{x}{H}{x} = - \bm{p}^{T} x + q \, x^{T}  \bm{C}x$.
    \end{lemma}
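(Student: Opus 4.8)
The plan is to reduce the entire statement to the diagonal action of the single-qubit projector $\sigma^1 = \ketbra{1}$ on computational basis states. The key elementary fact is that $\sigma^1_i \ket{x} = x_i \ket{x}$ for every $x = (x_1,\dots,x_n) \in \{0,1\}^n$, so that $\mel{x}{\sigma^1_i}{x} = x_i$; and since operators supported on distinct qubits commute and $\sigma^1$ is a projector with $x_i^2 = x_i$ for binary $x_i$, we also get $\mel{x}{\sigma^1_i \sigma^1_j}{x} = x_i x_j$ for all $i,j$ (including the case $i=j$).

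First I would expand, using linearity of the diagonal matrix element in the basis state $\ket{x}$,
\begin{equation}
\mel{x}{H_p}{x} = -\sum_{i=1}^n p_i \, \mel{x}{\sigma^1_i}{x} = -\sum_{i=1}^n p_i x_i = -\bm{p}^{T} x .
\end{equation}
Next, in the same way,
\begin{equation}
\mel{x}{H_{covar}}{x} = \sum_{i,j=1}^n C_{ij} \, \mel{x}{\sigma^1_i \sigma^1_j}{x} = \sum_{i,j=1}^n C_{ij} x_i x_j = x^{T} \bm{C} x ,
\end{equation}
where the diagonal contribution $\sum_i C_{ii} x_i^2 = \sum_i C_{ii} x_i$ matches the diagonal of the quadratic form $x^{T}\bm{C}x$ precisely because $x_i \in \{0,1\}$. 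Combining by linearity, $\mel{x}{H_f}{x} = \mel{x}{H_p}{x} + q\, \mel{x}{H_{covar}}{x} = -\bm{p}^{T} x + q\, x^{T} \bm{C} x$, which is exactly the objective of the (IQP) in \cref{eq:po_iqp}; this proves the lemma.

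There is essentially no obstacle here: the only point worth flagging is the identity $x_i^2 = x_i$ for binary variables, which is what makes $\sigma^1$ the correct one-qubit gadget and also shows that folding $C_{ii}$ into the linear term would yield the same Hamiltonian on the feasible set. It is also worth remarking that each $\sigma^1_i$, and hence $H_f$, is diagonal in the computational basis, so in fact $H_f \ket{x} = \left(-\bm{p}^{T}x + q\, x^{T}\bm{C}x\right)\ket{x}$ — the encoding is exact at the level of eigenvalues, not merely of expectation values — and rewriting $\sigma^1_i = \tfrac{1}{2}(I - Z_i)$ exhibits $H_f$ as an element of $\text{span}(\mathcal{G}_{Z,ZZ})$ up to an additive constant, consistent with the Model section.
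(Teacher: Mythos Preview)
Your proof is correct and follows essentially the same approach as the paper: in \cref{app:isingembed} the paper proves the linear and quadratic cases separately (\cref{lem:embedvec,lem:embedmat}) via the identity $\lvert\braket{x_i}{1}\rvert^2 = x_i$, which is exactly your $\mel{x}{\sigma^1_i}{x} = x_i$, and then combines them by linearity. Your remark that $x_i^2 = x_i$ handles the diagonal $i=j$ terms is a nice explicit touch that the paper leaves implicit.
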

    
    \noindent \cref{app:isingembed} discusses Ising embedding for optimization and provides the explicit form of $ H_{f} $ over $ \mathcal{G}_{Z,ZZ} $.
    
    Standard and Poor's 500 is stock market index that tracks the performance of the 500 largest companies, founded in 1957. Given a price $ q_{a}^{i} $ for an asset $ a \in \mathcal{A} $ on a day index $ i $, the return is $ p_{a}^{i} = (q_{a}^{i} - q_{a}^{i-1})/q_{a}^{i-1} $. A daily return is defined for everyday a stock is traded in a monthly interval $ d_{m} = \{ 1, \ldots, |d_{m}| \} $ (excludes weekends and holidays). Then each day serves as a sample for that month, such that the average monthly return is $ p_{a}^{m} = \sum_{i} p_{a}^{i} / (|d_{m}| - 1) $. Then the average covariance between assets $ a $ and $ b $ over the month is given by $ C_{a,b} = \sum_{i} (r_{a}^{i} - p_{a}^{i})(r_{b}^{i} - p_{b}^{i}) / (|d_{m}|- 1) $. 
    
    Given $\numinsts$ months of publicly available trading data of the SP500 between 2010 and 2018, we select the $ n $ top performing stocks based on average return per month to define the returns vector $ \bm{p}^{m} = (p_{1}^{m}, \ldots, p_{n}^{m}) $ and associated covariance matrix $ \bm{C}^{m} $. Then the task is to find the best $ k=n/2 $ assets for our portfolio. 

    \subsection*{Example: Graph Partitioning}\label{subsec:method_graphpart}

    \texttt{Graph Partitioning} is a famous NP-Hard optimization problem that is well studied for Quantum Annealing and QAOA~\cite{lucas_ising_2014,hen_driver_2016}. Given a graph $ G = (V,E) $, we wish to find a partition, $ V_{1}, V_{2} $, such that the partition is proper ($ V_{1} \cap V_{2} = \{ \} $), equal sized ($ |V_1| = |V_2| $), and minimizes the number of edges in the crossing, $ E(V_1, V_2) = \{ (v_1, v_2) | v_1 \in V_1, v_2 \in V_2, (v_1, v_2) \in E \} $ over all possible partitions. 
    
    Let $ n = |V| $ and $ A $ the adjacency matrix of $ G $, then \texttt{Graph Partitioning} can be written as a quadratic program with a linear constraint of the form:
    \begin{align}
    \min_{x} \; (\mathbf{1} - x^{T}) \mathbf{A} x \text{ s.t. } |x| = \frac{n}{2},
    \end{align}
    where $ \mathbf{1} = (1,\ldots,1) $ such that an edge $(v_{i},v_{j})$ contributes to the count of the cut when $ x_{i} \neq x_{j} $. As shown in \cref{app:isingembed}, $ H_{f} = \frac{|E|}{2} I + \frac{1}{2} \sum_{i<j}^{n} Z_{i} Z_{j} $. Note that $ \text{Proj}_{\Gz{n}}\left( H_{f} \right) = 0 $.

    \subsection*{Example: Sparsest k-Subgraph}\label{subsec:method_sparsest}
       
    \texttt{Sparsest k-Subgraph} is a famous NP-Hard (via \texttt{Independent Set}) optimization problem that does not admit a polynomial time approximation scheme (PTAS)~\cite{khot2006ruling,bhaskara2010detecting}. Given a graph $ G = (V,E) $ and a constant $k$, we wish to find $ S \subseteq V $ such that $ |S| = k $ and the number of edges on the subgraph defined by $ S $ is minimized $ E(S) = \{ (v_1, v_2) : v_1, v_2 \in S, \, (v_1, v_2) \in E \} $~\cite{srivastav1998finding}.

    Let $ n = |V| $ and $ A $ the adjacency matrix of $ G $, then \texttt{Sparsest $\frac{n}{2}$-Subgraph} can be written as a quadratic program with a linear constraint of the form:
    \begin{align}
    \min_{x} \; x^{T} \mathbf{A} x \text{ s.t. } |x| = \frac{n}{2} 
    \end{align}
    
    As shown in \cref{app:isingembed}, $ H_{f} = \sum_{i<j}^{n} A_{ij} \sigma_{i}^{1} \sigma_{j}^{1} $.

\section*{Acknowledgements}

The authors thank Sarvagya Upadhyay, Yasuhiro Endo, and Hirotaka Oshima for their insightful comments. %

\section*{Author Contributions}

SK and HL developed the theoretical aspects of this work. SK and HL conducted the numerical experiments. SK and HL contributed equally to the writing of the manuscript.

\section*{Data Availability}

The data used in numerical experiments are available from the authors upon request.

\section*{Code Availability}

Further implementation details are available from the authors upon request, requests can be made to \texttt{hleipold@fujitsu.com}. 

\section*{Competing Interest}

The authors declare no competing interests.

\renewcommand*{\bibfont}{\fontsize{10}{10}\selectfont}
\printbibliography

\appendix 
\pagebreak 
\clearpage 
\newpage

\setcounter{page}{1}
\onecolumn 

\renewcommand{\thesection}{S.\arabic{section}}
\renewcommand{\thesubsection}{S.\arabic{section}.\arabic{subsection}}
\crefname{section}{Supplementary Section}{Supplementary Sections}
\renewcommand{\theequation}{S\arabic{equation}}
\setcounter{equation}{0}

\begin{center}
\textbf{\LARGE Supplementary Information}
\end{center}
\begin{center}
\textbf{\Large Contents}
\end{center}

\smallskip
\noindent \ref*{app:controlgrad}.~~\hyperref[app:controlgrad]{\bf Control and Training through Gradient Descent} \dotfill\textbf{\pageref*{app:controlgrad}}
\medskip 

\noindent \ref*{app:warmstart_details}.~~\hyperref[app:warmstart_details]{\bf Warm Starting Shared Angle QAOA versus Multi-Angle QAOA} \dotfill\textbf{\pageref*{app:warmstart_details}}
\medskip 

\noindent \qquad \begin{minipage}{\dimexpr\textwidth-0.72cm}
\ref*{app:subsec_saqaoa}.~~\hyperref[app:subsec_saqaoa]{Shared Angle QAOA}\dotfill\text{\pageref*{app:subsec_saqaoa}}

\qquad\ref*{subsec:warmstart_saqaoa}.~~\hyperref[subsec:warmstart_saqaoa]{Warm Starting Shared Angle QAOA}\dotfill\text{\pageref*{subsec:warmstart_saqaoa}}

\ref*{subsec:maqaoa_vs_saqaoa}.~~\hyperref[subsec:maqaoa_vs_saqaoa]{Shared Angle QAOA versus Multi-Angle QAOA on Portfolio Optimization}\dotfill\text{\pageref*{subsec:maqaoa_vs_saqaoa}}
\end{minipage}
\medskip 

\noindent \ref*{sec:appendix_liealgs}.~~\hyperref[sec:appendix_liealgs]{\bf Preliminaries and Lie Algebra}\dotfill\textbf{\pageref*{sec:appendix_liealgs}}
\medskip

\noindent \qquad \begin{minipage}{\dimexpr\textwidth-0.72cm}
\ref*{subsec:notation}.~~\hyperref[subsec:notation]{Notation}\dotfill\text{\pageref*{subsec:notation}}

\qquad\ref*{subsec:defs_paulis_XY}.~~\hyperref[subsec:defs_paulis_XY]{The XY Mixer}\dotfill\text{\pageref*{subsec:defs_paulis_XY}}

\ref*{subsec:lie_algebras}.~~\hyperref[subsec:lie_algebras]{Lie Algebras}\dotfill\text{\pageref*{subsec:lie_algebras}}

\qquad\ref*{subsec:sosu}.~~\hyperref[subsec:sosu]{The Special Orthogonal and Unitary Subalgebras}\dotfill\text{\pageref*{subsec:sosu}}

\end{minipage}
\medskip

\noindent \ref*{sec:appendix_DLAs_description}.~~\hyperref[sec:appendix_DLAs_description]{\bf Lie Algebras of XY-mixer Topologies}\dotfill\textbf{\pageref*{sec:appendix_DLAs_description}}
\medskip

\noindent \qquad \begin{minipage}{\dimexpr\textwidth-0.72cm}
\ref*{subsec:poly_dlas_desc}.~~\hyperref[subsec:poly_dlas_desc]{Polynomially-Sized XY-mixer DLAs}\dotfill\text{\pageref*{subsec:poly_dlas_desc}}

\qquad\ref*{subsubsec:poly_basis_els}.~~\hyperref[subsubsec:poly_basis_els]{Basis Elements}\dotfill\text{\pageref*{subsubsec:poly_basis_els}}

\qquad\ref*{subsubsec:poly_dlas_decomp}.~~\hyperref[subsubsec:poly_dlas_decomp]{Lie Algebra Decompositions}\dotfill\text{\pageref*{subsubsec:poly_dlas_decomp}}

\ref*{subsec:exp_dlas_desc}.~~\hyperref[subsec:exp_dlas_desc]{Exponentially-Sized XY-mixer DLAs}\dotfill\text{\pageref*{subsec:exp_dlas_desc}}

\qquad\ref*{app:glincon}.~~\hyperref[app:glincon]{Lie Algebra of Cardinality Invariance}\dotfill\text{\pageref*{app:glincon}}

\qquad\ref*{subsubsec:exp_basis_els}.~~\hyperref[subsubsec:exp_basis_els]{Basis Elements}\dotfill\text{\pageref*{subsubsec:exp_basis_els}}

\qquad\ref*{subsubsec:exp_basis_decomp}.~~\hyperref[subsubsec:exp_basis_decomp]{Lie Algebra Decompositions}\dotfill\text{\pageref*{subsubsec:exp_basis_decomp}}

\end{minipage}
\medskip

\noindent 
\ref*{sec:appendix_DLAs_calcs}.~~\hyperref[sec:appendix_DLAs_calcs]{\bf Derivation of Lie Algebras}\dotfill\textbf{\pageref*{sec:appendix_DLAs_calcs}}
\medskip

\noindent \qquad \begin{minipage}{\dimexpr\textwidth-0.72cm}
\ref*{subsec:dla_xy_path}.~~\hyperref[subsec:dla_xy_cycle]{Lie Algebra of the $XY$ Path}\dotfill\text{\pageref*{subsec:dla_xy_path}}

\ref*{subsec:dla_xy_cycle}.~~\hyperref[subsec:dla_xy_cycle]{Lie Algebra of the $XY$ Cycle}\dotfill\text{\pageref*{subsec:dla_xy_cycle}}

\qquad\ref*{subsubsec:even_cycle}.~~\hyperref[subsubsec:even_cycle]{Even $n$}\dotfill\text{\pageref*{subsubsec:even_cycle}}

\qquad\ref*{subsubsec:odd_cycle}.~~\hyperref[subsubsec:odd_cycle]{Odd $n$}\dotfill\text{\pageref*{subsubsec:odd_cycle}}

\ref*{subsec:dla_xy_z_path}.~~\hyperref[subsec:dla_xy_z_path]{Lie Algebra of the $XY$ Path with $R_z$}\dotfill\text{\pageref*{subsec:dla_xy_z_path}}

\ref*{subsec:dla_xy_z_cycle}.~~\hyperref[subsec:dla_xy_z_cycle]{Lie Algebra of the $XY$ Cycle with $R_z$}\dotfill\text{\pageref*{subsec:dla_xy_z_cycle}}

\ref*{subsec:exp_dlas}.~~\hyperref[subsec:exp_dlas]{Exponentially Large $XY$-mixer Dynamic Lie Algebra}\dotfill\text{\pageref*{subsec:exp_dlas}}

\qquad\ref*{subsubsec:xyclique}.~~\hyperref[subsubsec:xyclique]{Fully-connected $XY$ Interactions}\dotfill\text{\pageref*{subsubsec:xyclique}}

\qquad\ref*{subsubsec:xyclique_z}.~~\hyperref[subsubsec:xyclique_z]{Fully-connected $XY$ Interactions with $R_{z}$}\dotfill\text{\pageref*{subsubsec:xyclique_z}}

\qquad\ref*{subsubsec:xyring_zz}.~~\hyperref[subsubsec:xyring_zz]{$XY$ Cycle with $R_z$ and Fully-conencted $R_{zz}$ Interactions}\dotfill\text{\pageref*{subsubsec:xyring_zz}}

\end{minipage}
\medskip

\noindent 
\ref*{app:isingembed}.~~\hyperref[app:isingembed]{\bf Ising Embeddings of Linear and Quadratic Costs} \dotfill\textbf{\pageref*{app:isingembed}}
\medskip

\noindent 
\ref*{sec:com_calcs}.~~\hyperref[sec:com_calcs]{\bf Common Commutation Calculations} \dotfill\textbf{\pageref*{sec:com_calcs}}
\medskip

\section{Control and Training through Gradient Descent}\label{app:controlgrad}

Gradient descent optimizes the parameters $ \bmt $ of a classically controlled system through iterative updating the parameters (as a vector or position) through the associated gradient of the loss function. At timestep $t$ and current vector $\bmt^{t}$,  the partial derivative with respect to angle $ \theta_{i} $ is given by $ \partial_{\theta_{i}} \mathcal{L}(\bmt^t) $ and the gradient of the loss function is $ \grad \mathcal{L}(\bmt^t) = ( \partial_{\theta_{1}} \mathcal{L}(\bmt^t), \ldots, \partial_{\theta_{PK}} \mathcal{L}(\bmt^t) ) $. This gradient points in the direction of greatest ascent around the current position $\bmt^{t}$. Then $ \bmt^{t+1} = \bmt^{t} - \alpha \, \grad \mathcal{L}(\bmt^t) $ defines a ``close'' vector by moving in the direction of greatest descent. Efficient but approximate estimation of the gradient underlies most local update policies in quantum and classical computing, while other local updates without the explicit intention tend to be amendable to analysis through gradients~\cite{arrasmith_effect_2021}. 

For Quantum Machine Learning (QML)~\cite{biamonte2017quantum}, direct computation of the gradient on quantum devices may be expensive due to wavefunction collapse associated with each partial derivative (therefore hybrid quantum classical systems lack a direct analog to backpropagation~\cite{abbas2023quantum}), leading to approaches such as Parameter Shift~\cite{mitarai2018quantum} and SPSA~\cite{bonet2023performance} while simulation of quantum systems allow for backpropagation and the self-adjoint method. 

As noted in Ref.~\cite{larocca2023theory}, understanding the structure of \cref{eq:loss} is tightly connected to the DLA and its size. The loss equation can be decomposed into to explicit include intermediate mappings as:
\begin{equation}
    \R^{P \cdot K} \xrightarrow{(1)} \mathcal{U}(2^n) \xrightarrow{(2)} \H \xrightarrow{(3)} \R,
\end{equation}
(1) is the mapping of $ \bmt $ to the associated PQC $ U(\bmt) $, which is an element of $\mathcal{U}(2^{n})$ - the set of all unitaries over $\mathbb{C}^{2^{n}}$. The DLA $\mfg$ has an associated dynamical Lie group $e^\mfg \subset \mathcal{U}(2^n)$ which we interpret as the reachable unitaries this circuit can represent. The larger the circuit, either in width $K$ or depth $P$, the more control one has over the possible unitaries one can reach in $ e^\mfg $. Note that this comes at a possible cost of needing more samples to properly train. After fixing an initial state $ \rho_0 $, map (2) then produces a state in Hilbert space $ \rho_{1} = U(\bmt)\rho_0U(\bmt)^\dagger $. Exploring the whole state space with the goal of minimizing loss becomes possible when $PK$ is on the order of $\text{dim}(\mfg)$. Lastly, (3) is the measurement of observable $ O $ on $ \rho_{1} $ which results in sampling from the associated loss function. The loss landscape can be understood by analyzing the Hessian $ \nabla^2 \mathcal{L}(\bmt) $. In particular, the eigenvalues of the Hessian are associated with the local curvature around $\bmt$ and scale inverse to $\text{dim}(\mfg)$, meaning that the landscape looks more flat for higher dimensional Lie algebras. %
See \cite{larocca2023theory} Section C for more thorough details of these decompositions.

The Lie correspondence is that for any set of parameters $\bmt$, there exist $ \bm{\omega} \in \R^{\dim(\mfg)} $ such that $ U(\bmt) = e^{i \sum_{j=1}^{\dim(\mfg)} \omega_{j} B_{j} } $. The dimension of the Lie algebra, $ \dim(\mfg) $, is equivalent to the effective number of free parameters, $ \bm{\omega} $, for the image of $ \bmt $ given a collection of generators $ \mathcal{G} $ at a sufficient depth for the generators to become a 2-design~\cite{harrow2009random,dankert2009exact,hunter2019unitary}. The variance of a real function around a random point in the image space decays in the dimension of the space~\cite{ragone2024lie}, which leads to a curse of dimensionality~\cite{larocca2024review} for highly expressive circuits (with $\dim(\mfg)$ growing superpolynomial in the number of qubits) characterized by the cost landscape appearing very barren (lacking highly distinguished peaks) referred to as the Barren Plateau~\cite{mcclean2018barren,larocca2024review} due to vanishing gradients. 

\section{Warm Starting Shared Angle versus Multi-Angle QAOA}\label{app:warmstart_details}

    \begin{suppfigure}[!t]
        \centering
        \begin{tabular}{c c}
        \includegraphics[width=0.48\linewidth]{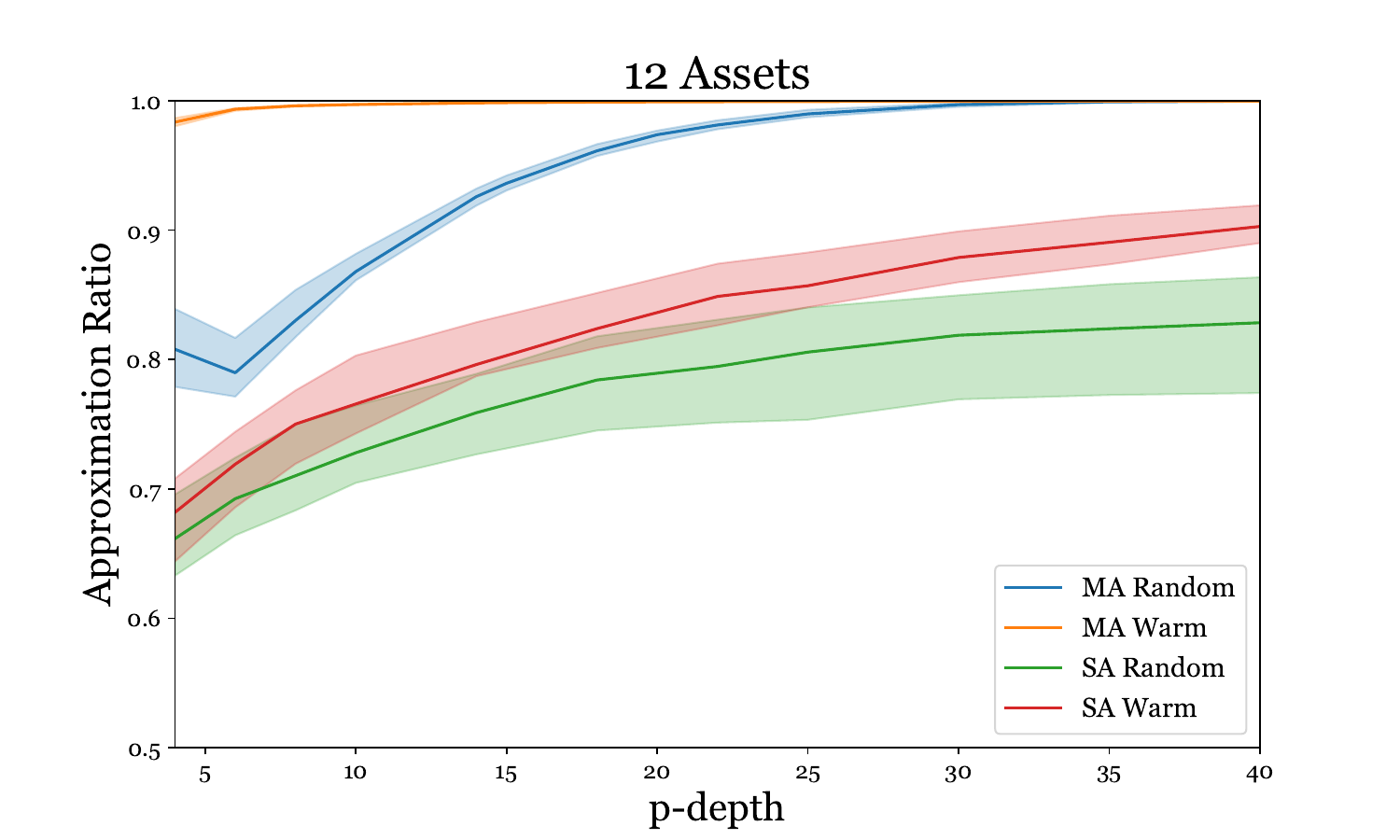} & 
        \includegraphics[width=0.48\linewidth]{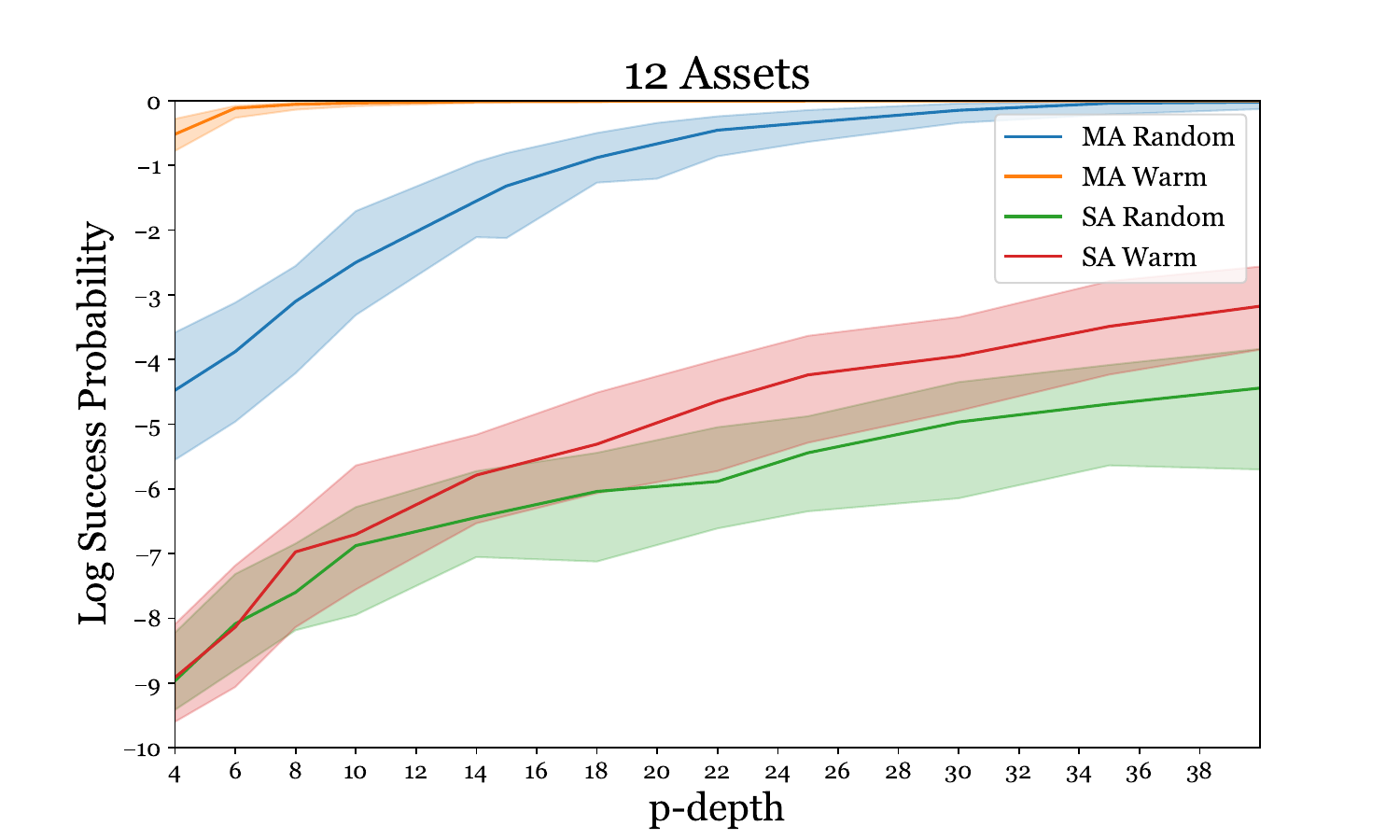} \\ 
        (a) & (b) \\
        \includegraphics[width=0.48\linewidth]{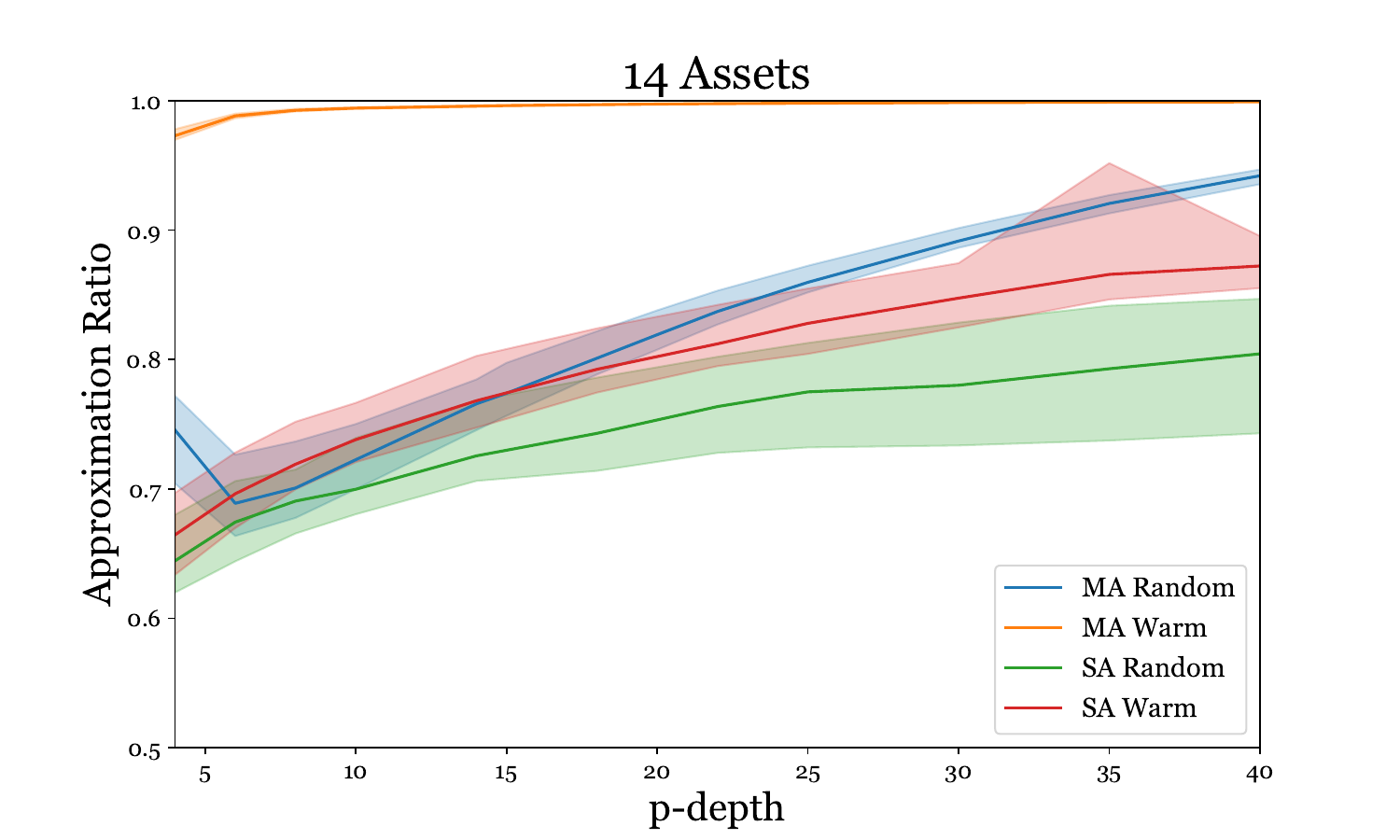} & 
        \includegraphics[width=0.48\linewidth]{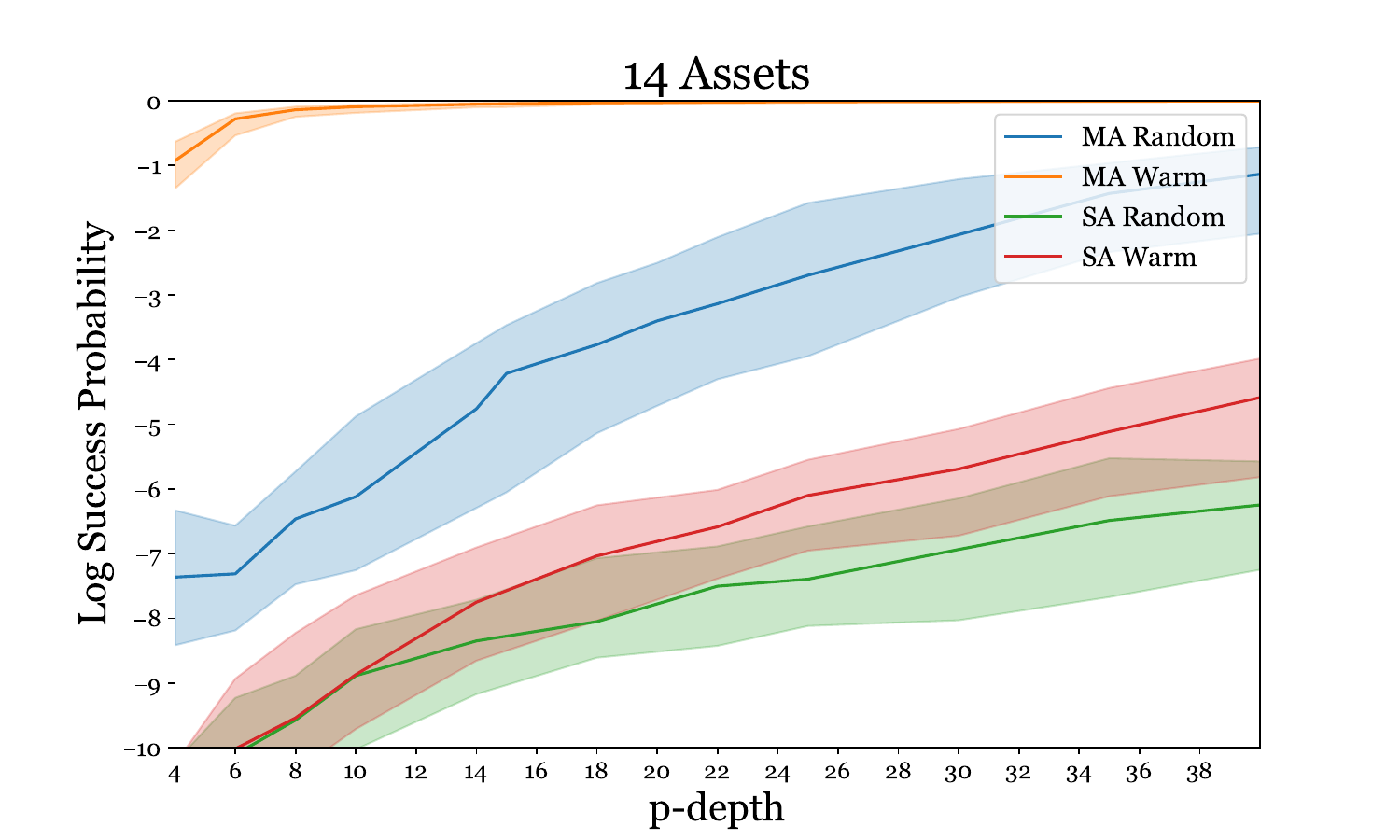} \\ 
        (c) & (d) \\
        \includegraphics[width=0.48\linewidth]{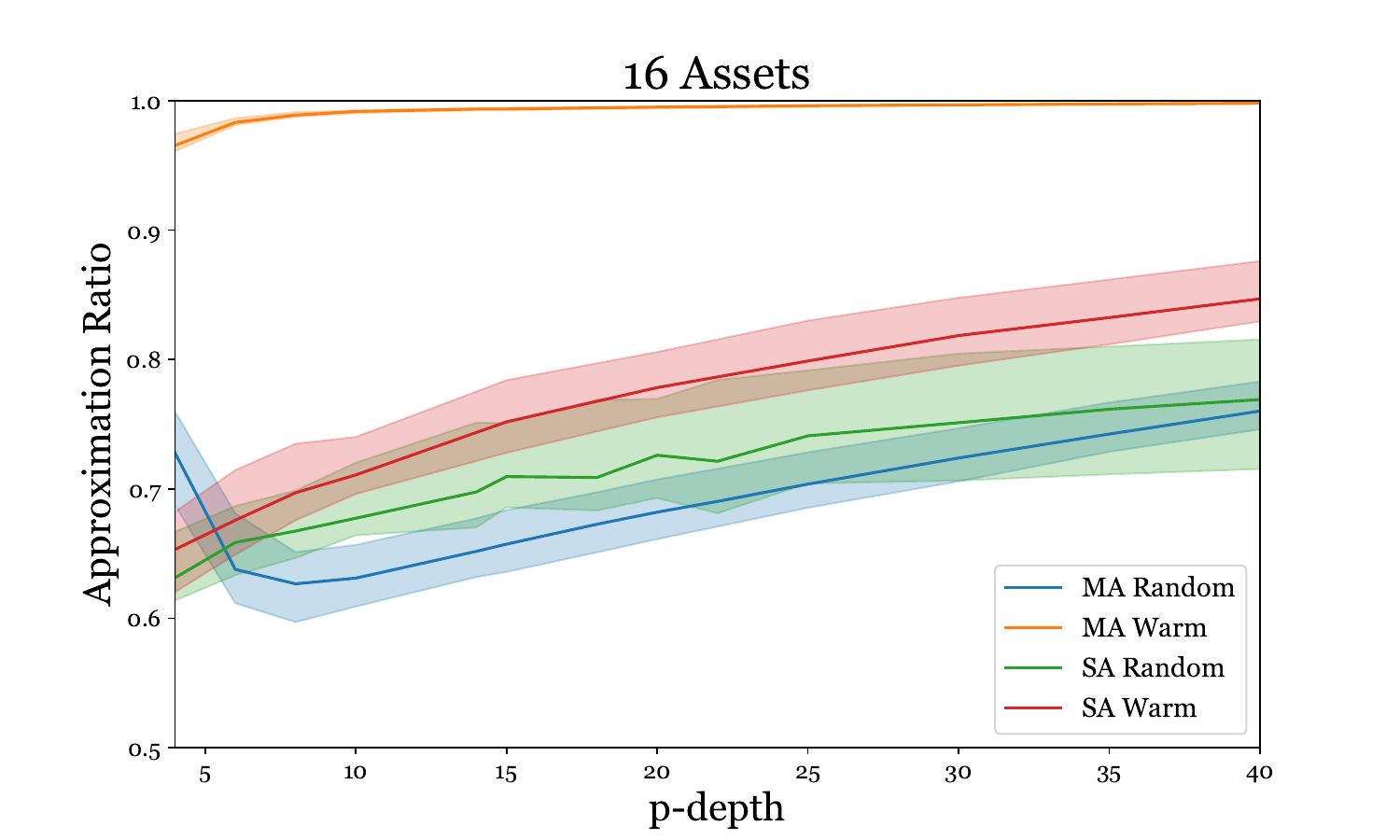} & 
        \includegraphics[width=0.48\linewidth]{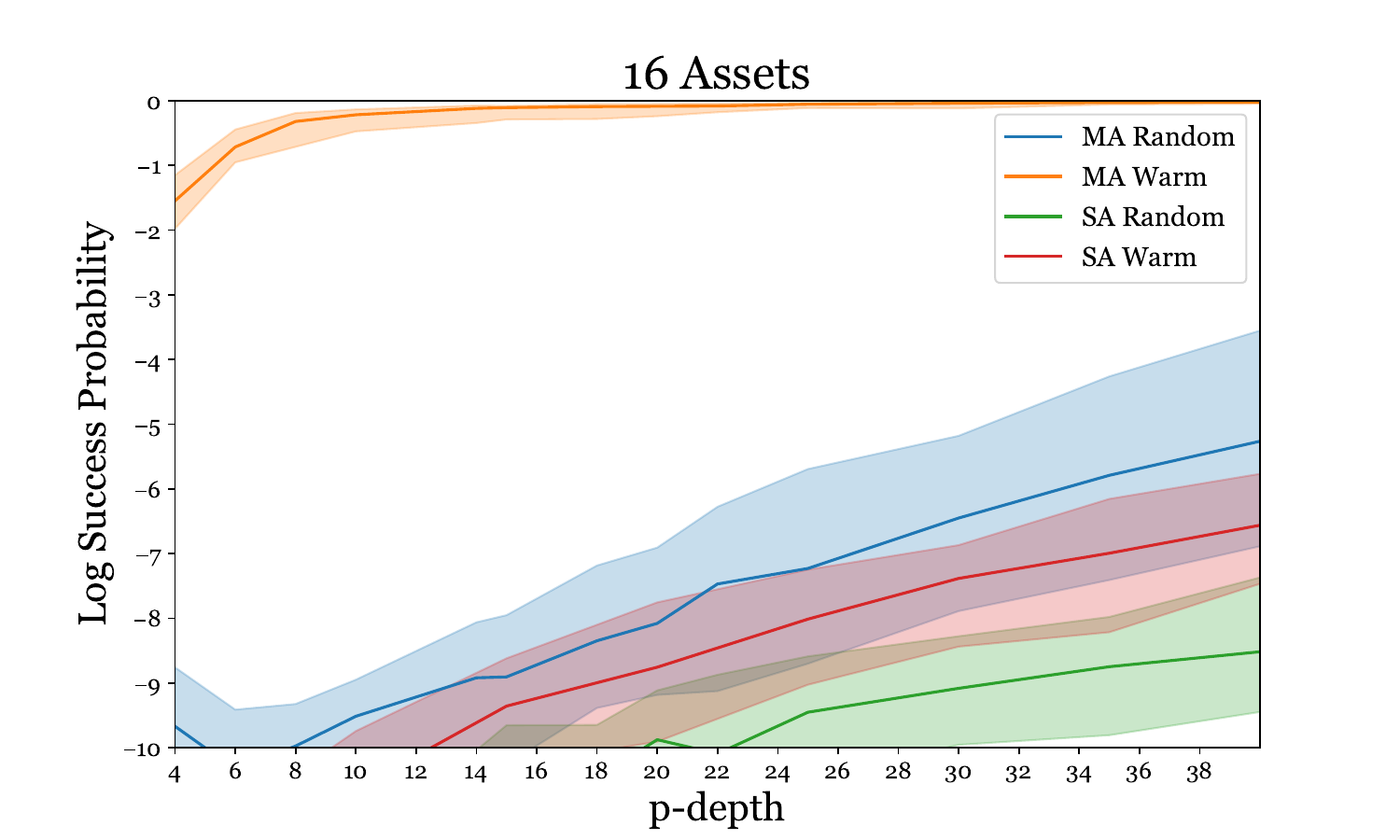} \\ 
        (e) & (f) %
        \end{tabular}
        \caption{\textbf{Warm Starting SA-QAOA vs. MA-QAOA for Portfolio Optimization.} Median performance (solid lines) of the best of $\numstarts$ randomly initialized QAOA-$p$ circuits for different depth $p$ over $\numinsts$ instances. Ribbons delineate the lower and upper quartile. MA-QAOA and SA-QAOA are shown $n=12,14,16$.}
        \label{fig:sp500qaoa}
    \end{suppfigure}

    \subsection{Shared Angle QAOA}\label{app:subsec_saqaoa}

        Due to the significantly better performance, the main text focuses on multi-angle QAOA, in which every gate is individually parameterized. The original protocols for QAOA~\cite{farhi_quantum_2014,hadfield_quantum_2019} delineate a much more \textit{structured} ansatz. In this section, we develop the \textit{shared angle} QAOA (SA-QAOA) which has only $2$ parameters per layer\footnote{$3$ parameters if we have a catalyst Hamiltonian.}. To solve optimization tasks with VQAs, Refs.~\cite{farhi_quantum_2014} introduced the Quantum Approximate Optimization Algorithm (QAOA1) paradigm. By introducing penalty terms, any constrained binary optimization task can be transformed into a quadratic unconstrained binary optimization (QUBO) problem. Using standard embedding techniques~\cite{lucas_ising_2014}, an Ising Hamiltonian can be constructed that encodes the energy spectrum of the QUBO cost function, $H_{QUBO}$. As in the Introduction section, we have a general cost function $f(x)$ subject to a constraint matrix $\bm{A}$:
        \begin{align*}
        \min_{x} \; f(x) \; \text{s.t.} \; \bm{A}  x = \bm{b}, \; x \in \{ 0, 1 \}^n 
        \end{align*}

        Recall that $ \sigma^{0} = \ketbra{0}{0} $ and $ \sigma^{1} = \ketbra{1}{1} $. Then each row of $\bm{A}$ is an embedded constraint operator:
        \begin{align} \label{eq:defembedcon}
        H_{A_{j}} = \sum_{jk} A_{jk} \, \sigma_{k}^{1}.
        \end{align}

        In the case of cardinality constrained optimization~\cite{cai2008parameterized}, we have:
        \begin{align}
        H_{A} = \sum_{k} \sigma_{k}^{1},
        \end{align}
        since $ \bm{A} = [1,\ldots,1] $ with dimension $ 1 \times n $. The cost function $f(x)$ is a general polynomial that can be written as:
        \begin{align} \label{eq:defcostfunc}
        f(x) = \sum_{j} a_{j} \prod_{k \in I_{j}} x_{k},
        \end{align}
        where $I_{j} \subseteq [ n ]$ is a list of indices. Then $f(x)$ can be embedded by
        \begin{align}\label{eq:defembedcostfunc}
        H_{f} = \sum_{j} a_{j} \prod_{k \in I_{j}} \sigma_{j}^{1}.
        \end{align}

        In particular $ f(x) = \bra{x} H_{f} \ket{x} $~\cite{leipold2024imposing}. Typically the cost function is at most quadratic: 
        \begin{align}
        H_{f} = h_{0}^{f} \, I + \sum_{j} h_{j}^{f} \, Z_{j} + \sum_{j<k}^{n} J_{jk}^{f} Z_{j} Z_{k}
        \end{align}
        Then we can define the total QUBO cost Hamiltonian using penalty terms:
        \begin{align}
        H_{QUBO} = H_{f} + B \sum_{j} \left( H_{A_{j}} - b_{j} \, I \right)^2,
        \end{align}
        where a sufficiently large $ B $ guarantees that each constraint is satisfied for the lowest energy state(s). In the case of cardinality constraint optimization:
        \begin{align}
        H_{QUBO} &= H_{f} + B \left( \sum_{j} \frac{1}{2} \left( 1 - Z_{j} \right) - k \, I \right)^2 \\
        &= H_{f} + B \left( \left(\frac{n}{2} - k \right) \, I - \sum_{j} \frac{1}{2} Z_{j} \right)^2 \\
        &= H_{f} + B \left( \left(\left( \frac{n}{2} - k \right)^2 + \frac{n}{4} \right) \, I - \left( \frac{n}{2} - k \right) \sum_{j} Z_{j} + \frac{1}{2} \sum_{j<k} Z_{j} Z_{k} \right)
        \end{align}
        
        Then by the adiabatic theorem~\cite{farhi_quantum_2000,albash_adiabatic_2018}, the slowly evolving quantum system initialized in the uniform superposition state,
        \begin{align} 
        \ket{s} = \frac{1}{\sqrt{2^{n}}} \sum_{x \in \{0,1\}^{n}} \ket{x},
        \end{align}
        will approximately prepare the ground state\footnote{A state in the ground space if degenerate.} by a dynamic Hamiltonian that interpolates between the transverse field and the Ising Hamiltonian encoding the problem:
        \begin{align}
        (TFQA)\quad H(s(t)) &= (1-s(t)) \, H_{TF} + s(t) \, H_{QUBO}, \\ 
        H_{TF} &= -\sum_{j} X_{j}, 
        \end{align}
        where $t \in (0,T]$ is the clock time and $ s(t) \in [0,1] $ is the dimensionless time. Transverse field quantum annealing (TFQA) is a mainstay procedure~\cite{ronnow_defining_2014} for quantum optimization with specialized devices. Ref.~\cite{hen_quantum_2016} introduced Constrained Quantum Annealing (CQA). Assuming the initial state $\ket{\psi(0)} $ is prepared such that $ H_{A_{j}} \ket{\psi(0)} = b_{j} \ket{\psi(0)} $ and we have a driver Hamiltonian $ H_{d} $ such that $ [H_{d}, H_{A_{j}} ] = 0 $, then we can use a quantum annealing schedule without needing penalty terms $(H_{A_{j}} - b_{j} \, I)^2$:
        \begin{align}
        (CQA)\quad H(s(t)) &= (1-s(t)) \, H_{d} + s(t) \, H_{f} \\ 
        [ H_{d} , H_{A_{j}} ] &= 0 \text{ for each } A_{j}
        \end{align}

        In general, finding such a $H_{d}$ is NP-Hard~\cite{leipold_constructing_2021,leipold2024imposing}, but for many important constraint classes mixers have been developed. In the case of the cardinality constraint, we can use an XY driver of (for example) cycle topology:
        \begin{align}
        (XYQA)\quad H(s(t)) &= (1-s(t)) \, H_{XYcycle} + s(t) \, H_{f} \\ 
        H_{XYcycle} &= -\sum_{j} \sigma_{j}^{x} \, \sigma_{j+1}^{x} + \sigma_{j}^{y} \, \sigma_{j+1}^{y}  
        \end{align}

        The Quantum Approximate Optimization Algorithm (QAOA1) is defined by the application of parameterized X mixer and a phase-separating operator associated with $H_{QUBO}$ as:
        \begin{align}
        (QAOA1) \quad U(\bm{\alpha}, \bm{\beta}) &= \prod_{j=1} U_{X}(\beta_{j}) U_{c}(\alpha_{j}) \\
        U_{X} &= \prod_{k=1}^{n} e^{i \beta_{j} X_{k}} \\ 
        U_{QUBO} &= e^{i \alpha_{j} H_{QUBO}} \\ 
        &= e^{i \alpha_{j} H_{f}} \prod_{A_{j}}^{n} e^{i \, \alpha_{j} \, \left( H_{A_{j}} - b_{j} \, I \right)^2},
        \end{align}
        where $\alpha,\beta \in \R^{p}$ are angles for application.

        The closest analog to CQA is H-QAOA where circuit is defined by repeated application of $ U_{c}(\alpha) = e^{i \alpha H_{f}} $ (as the phase-separating operator) followed by $ e^{i \beta \sum_{j} M_{j} } $ (as the mixing operator) for single layer parameters $ \alpha, \beta $. In general, $ M_{k}, M_{\ell} $ may not commutate, making the (approximate) application $ e^{i \beta \sum_{j} M_{j} } $ expensive. Instead, $ \mathcal{M} $ is typically ordered for mutually commuting operators to allow simultaneous application. In SA-QAOA, we have angles $ \bm{\alpha}, \bm{\beta} \in \R^{p} $ and apply $ U_{c}(\alpha_{\ell}) $ followed by $ U_{d}(\beta_{\ell}) = \prod_{M\in\mathcal{M}} e^{i \beta_{\ell} M} $ for layer $\ell$ in the circuit. SA-QAOA is defined by applying a $p$-depth circuit with each layer corresponding a the phase-separating and mixing operator: 
        \begin{align}\label{eq:saqaoa}
        (SA\text{-}QAOA) \quad U_{SA}(\bm{\alpha}, \bm{\beta}) &= \prod_{j=1}^{p} U_{d}(\beta_{j}) U_{c}(\alpha_{j}) \\ 
        U_{d}(\beta_{j}) &= \prod_{M\in\mathcal{M}} e^{i \beta_{j} M} \\ 
        U_{c}(\alpha_{j}) &= e^{i \alpha_{j} H_{f}}
        \end{align}
        
        \noindent This typically leads to between $\mathcal{O}(n)$ and $\mathcal{O}(n^2)$ gates per layer and only 2 parameters per layer, which is a highly structured ansatz.

        \subsubsection{Warm Starting Shared Angle QAOA}\label{subsec:warmstart_saqaoa}

        The DLA of $ U $ (for MA-QAOA) defined over $ \mathcal{M} =  \{  X_{j} \}_{j=1}^{n} $ and $ \mathcal{N} = \mathcal{G}_{Z,ZZ} $ is $\su{2^n} $ with dimension $4^{n}-1$ (universal for quantum computing over $n$ qubits). Note that the DLA of $ U_{SA} $ (for SA-QAOA) is strictly less expressive than $ U $ of MA-QAOA if $ H_{f} \in \text{span}\left(\mathcal{N}\right) $. Striking though is that for typical instances of interest the DLA of $ U_{SA} $ saturates to the same dimension as for the DLA of $ U $ despite how structured the ansatz is. For problems with high structure, such \texttt{Maximum Cut} on complete graphs or cycles~\cite{allcock_dynamical_2024}, $ U_{SA} $ has a polynomial sized DLA while $ U $ exhibits an exponential sized DLA. This showcases the subleties that can be involved in analyzing the DLA of highly structured ans{\"a}tze such as $ U_{SA} $ for VQAs. 
        
        Intuitively, families of instances that exhibit hardness tend to exhibit superpolynomial or exponential DLAs. We found for the instances we considered, the DLA of $U_{SA}$ grows exponentially as expected and so warm starting is an interesting approach for SA-QAOA.

        For SA-QAOA, $ U_{d} $ is utilized for both the restricted and full VQA, similar to MA-QAOA discussed in the main text. What is restricted is the associated phase-separating operator: 
        \begin{align}
            H_{f}^{WS} &= \text{Proj}_{\mathcal{G}_{Z}}( H_{f} ), \\ 
            U_{c}^{WS}(\alpha) &= e^{i \, \alpha H_{f}^{WS}}.
        \end{align}

        \noindent Suppose $ H_{f} = \sum_{j} h_{j} Z_{j} + \sum_{j<k} J_{jk} Z_{i} Z_{j} $ as is typical for Ising embeddings, then $ H_{f}^{WS} = \sum_{j} h_{j} Z_{j} $. Then the circuit for warm starting SA-QAOA of depth $p$ is:
        \begin{align}
            U_{SA}^{WS}(\bm{\alpha}, \bm{\beta}) = \prod_{j=1}^{p} U_{d}(\beta_{j}) U_{c}^{WS}(\alpha_{j}).
        \end{align}
        
        \noindent Then full circuit of SA-QAOA now with three parameter vectors is:
        \begin{align}
            U_{SA}(\bm{\alpha}, \bm{\beta}, \bm{\gamma}) = \prod_{j=p}^{1} U_{d}(\beta_{j}) U_{c}^{WS}(\alpha_{j}) U_{c}^{FC}(\gamma_{j}),
        \end{align}
        
        \noindent where $ U_{c}^{FC}(\gamma) = e^{i \gamma (H_{f} - H_{f}^{WS}) }$. We considered parameter transfer of $ \alpha_{j} $ to $ U_{c}(\alpha_{j}) $ to retain two angles, but this structured transfer was always worse than identity initialization of $ U_{c}^{FC} $ by setting $ \gamma = [ 0, \ldots, 0 ] \in \R^{p} $. Then loss function, Eq.~\ref{eq:loss}, of the warm start for SA-QAOA and MA-QAOA is:
        \begin{align}
            \mathcal{L}_{SA}^{WS}(\bmt^{WS}) &= \bra{D^n_k} U_{SA}^{WS}(\bmt^{WS})^{\dagger} \, H_{f} \, U_{SA}^{WS}(\bmt^{WS}) \ket{D^n_k}, \\ 
            \mathcal{L}^{WS}(\bmt^{WS}) &= \bra{D^n_k} U^{WS}(\bmt^{WS})^{\dagger} \, H_{f} \, U^{WS}(\bmt^{WS}) \ket{D^n_k}.
        \end{align}
        
        \noindent The loss function of the full circuit for SA-QAOA and MA-QAOA is:
        \begin{align}
        \mathcal{L}_{SA}(\bmt) &= \bra{D^n_k} U_{SA}(\bmt)^{\dagger} \, H_{f} \, U_{SA}(\bmt) \ket{D^n_k}, \\ 
        \mathcal{L}(\bmt) &= \bra{D^n_k} U(\bmt)^{\dagger} \, H_{f} \, U(\bmt) \ket{D^n_k}.
        \end{align}

    \subsection{Shared Angle QAOA versus Multi-Angle QAOA on Portfolio Optimization}\label{subsec:maqaoa_vs_saqaoa}

    \cref{fig:sp500qaoa} shows the approximation ratio and success probability for MA-QAOA and SA-QAOA with and without warm starting on instances of the \texttt{Portfolio Optimization} task. We see that warm starting MA-QAOA significantly outperforms all other approaches for every depth and that this improvement increases for larger number of assets. Due to this significant outperformance, we focused on MA-QAOA in the main text. 

\section{Preliminaries}\label{sec:appendix_liealgs}

    \subsection{Notation}\label{subsec:notation}
        We let $[m] := \bigB{1,2,\dots,m}$ and $[j,k] := \bigB{j, j+1, \dots, k}$ be intervals of integers. For $x \in \{0,1\}^n$ we let $\abs{x}_1 := \# \{i : x_i = 1\}$.
        
        The standard Hermitian Pauli spin operators over a single qubit are given by:
            \begin{alignat}{2}
                X &= \ketbra{0}{1} + \ketbra{1}{0} &&= \begin{pmatrix} 0 & 1 \\ 1 & 0  \end{pmatrix},\label{pauli-X} \\
                Y &= -i \bigP{\ketbra{0}{1} - \ketbra{1}{0}} &&= \begin{pmatrix} 0 & -i \\ i & 0  \end{pmatrix},\label{pauli-Y} \\
                Z &= \ketbra{0}{0} - \ketbra{1}{1} &&= \begin{pmatrix} 1 & 0 \\ 0 & -1  \end{pmatrix}. \label{pauli-Z}
            \end{alignat}
    
        \noindent Let $A$ be a 1-qubit operator. For $M \subseteq [n]$, we use $A_M$ to be the $n$-qubit operator that applies $A$ to qubits $j \in M$ and identity to those $j \not \in M$. $M_k$ is used for the singleton case. We use $M_{\overline{k}}$ as shorthand for $M_{[n] \setminus \{k\} }$.

        For $A,B \in \{X,Y\}$, We use the shorthand
        \begin{align}
            \AB{AB}{j,k} &:= A_j Z_{[j+1:k-1]}B_k \\
            \ABopp{AB}{j,k} &:= Z_{[1:j-1]}A_jB_kZ_{[k+1:n]}.
        \end{align}
        
        \noindent for any $1 \leq j < k \leq n$. Commutation relations between these types of terms are given in \cref{sec:com_calcs}.

    \subsubsection{The XY Mixer}\label{subsec:defs_paulis_XY}

        \begin{suppfigure}
        \centering 
        \begin{tabular}{cc}
        \includegraphics[width=0.30\textheight]{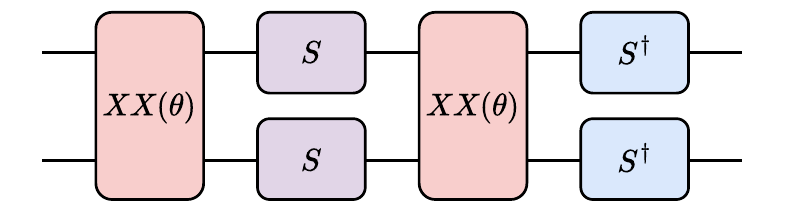} & \includegraphics[width=0.39\textheight]{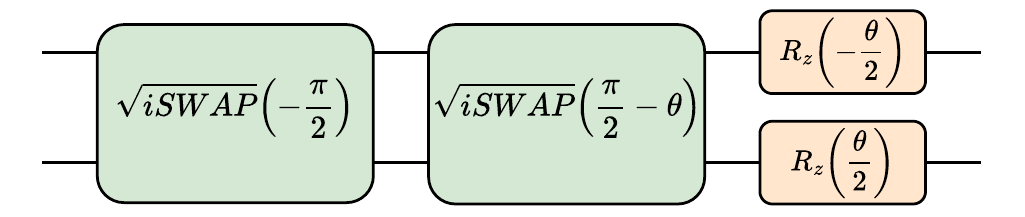} \\ 
        (a) & (b) 
        \end{tabular}
        \caption{\textbf{Circuit implementations of the XY Mixer.}  The XY mixer can be implemented with two qubit-pair gates and single qubit rotations. (a) shows an implementation well suited for trapped-ion devices that have a native XX gate~\cite{blumel2021power} while (b) was proposed by Ref.~\cite{abrams2019implementation} to exploit controlled phased $\sqrt{i\text{SWAP}}$ gates on superconducting devices.}
        \label{fig:xy_circuit}
        \end{suppfigure}
    
        The Pauli spin operators are known to satisfy simple commutation relationships:
        \begin{align}
        [X,Y] &= -[Y,X] = 2 i \, Z, \\
        [Y,Z] &= -[Z,Y] = 2 i \, X, \\ 
        [Z,X] &= -[X,Z] = 2 i \, Y.
        \end{align}
        
        \noindent They are widely used as a traceless orthogonal basis in quantum theory to write Hermitian operators, since $ \su{2} = \text{span}\left(\{ iX, iY, iZ \}\right) $ and $ \su{2^{n}} = \text{span}\left( i \{ I_{2}, X, Y, Z \}^{\otimes n} \setminus \{ I \} \right) $, where $ I $ is the identity operator over $\C^{2^{n}}$ and $ I_{2} $ is the 2 by 2 matrix identity over $\C^{2}$. 
        
        Over two qubits, the XY-mixer can be written in the standard basis as a real symmetric matrix:
        \begin{align}
        XY = \frac{1}{2}(X \otimes X + Y \otimes Y) = 
        \begin{pmatrix}
        0 & 0 & 0 & 0 \\
        0 & 0 & 1 & 0 \\
        0 & 1 & 0 & 0 \\
        0 & 0 & 0 & 0
        \end{pmatrix}. \label{eq:xy4by4}
        \end{align}
        
        \noindent The YX-mixer over two qubits in the standard basis can be written as an imaginary skew-symmetric matrix:
        \begin{align}
        YX = \frac{1}{2}(X \otimes Y - Y \otimes X) = 
        \begin{pmatrix}
        0 & 0 & 0 & 0 \\
        0 & 0 & -i & 0 \\
        0 & i & 0 & 0 \\
        0 & 0 & 0 & 0
        \end{pmatrix}. \label{eq:yx4by4}
        \end{align}
        
        \noindent Relevant commutations for the subsequent sections are:
        \begin{align}
        [XY, YX] &= i \, (I \otimes Z - Z \otimes I), \\ 
        [XY, Z \otimes I ] &= -[XY, I \otimes Z] = 2 i \, YX, \\ 
        [YX, I \otimes Z ] &= -[YX, Z \otimes I] = 2 i \, XY, \\
        [XY, Z \otimes Z] &= [YX,Z \otimes Z] = 0.
        \end{align}

        The XY mixer generator gives rise to the parameterized XY gate:
        \begin{align}
            XY(\theta) = e^{i \, \theta (XX + YY)}. 
        \end{align}
        Because $[X \otimes X,Y \otimes Y] = 0$, the XY-mixer has a straightforward circuit implementation~\cite{crooks2020gates} and is well suited for near term devices~\cite{abrams2019implementation,blumel2021power}. \cref{fig:xy_circuit}(a) shows an implementation suited for trapped-ion quantum computers, but it can also be decomposed further into the standard $\text{CNOT}, R_{z}, S, H $ gates by recognizing that $ZZ(\theta) = \text{CNOT}(I\otimes R_{z}(\theta))\text{CNOT}$ and $XX(\theta) = (H \otimes H) ZZ(\theta) (H\otimes H)$. 

        The QAOA requires an ordering of non-commuting mixers based on the sequence they should be applied on the device. Note that $ [ XY_{j,k}, XY_{p,q} ] \neq 0 $ if one of the two qubit indices overlap (see \cref{subsec:defs_paulis_XY} for two qubit commutation relations). The natural decomposition to apply the $n$ operators of $\mathcal{G}_{XY}^{C}$ in $2$ rounds is to split the operators into collections of commuting operators: $ \mathcal{G}_{XY}^{C} = \mathcal{G}_{XY}^{C,0} \sqcup \mathcal{G}_{XY}^{C,1} $ with $ \mathcal{G}_{XY}^{C,0} = \{ XY_{2 \, i,2 \, i+1} \}_{i=1}^{n/2} $ and $ \mathcal{G}_{XY}^{C,1} = \{ XY_{2 \, i - 1,2 \, i} \}_{i=1}^{n/2} $. In this way, $e^{i\sum_j \beta_j M_j} = \prod_j e^{i \beta_j M_j}$ for each $ \mathcal{G}_{XY}^{C,i} $. We then define the mixing operator
        \begin{align}
        U_{d}(\beta) = \prod_{M_{j} \in \mathcal{G}_{XY}^{C}} e^{i \, \beta M_{j} } \equiv \prod_{M_{j} \in \mathcal{G}_{XY}^{C,0}} e^{i \, \beta M_{j} } \prod_{M_{k} \in \mathcal{G}_{XY}^{C,1}} e^{i \, \beta M_{k} }, 
        \end{align}
        such that $ \mathcal{G}_{XY}^{C} $ is an ordered set and we always apply it's operators based on the underlying order. 

    \subsection{Lie Algebras}\label{subsec:lie_algebras}

        A \textit{Lie algebra} is a vector space $\mfg$ over a field $F$ along with a bilinear operator $[\cdot, \cdot]: \mfg \times \mfg \rightarrow \mfg$ that satisfies:
        \begin{enumerate}[label=(LA\arabic*), left=1.0em, labelsep=1.0em, itemsep=0.5em]
            \item Skew-symmetry: $[a,b] = -[b,a]$ for all $a,b \in \mfg$.
            \item Jacobi identity: $[a,[b,c]] = [[a,b],c] + [b,[a,c]]$ for all $a,b,c \in \mfg$.
        \end{enumerate}

        \noindent In this paper, we are interested in matrix Lie algebras over the field $\C$, where the commutator is defined as $[A,B] = AB - BA$. The \textit{general linear Lie algebra}, $\gl{m}$, is just $\C^{m \times m}$ with this commutator. $\gl{m}$ is a $m^2$-dimensional Lie algebra whose basis elements are
        
        \begin{equation}
            e_{j,k} := \ketbra{j}{k}
        \end{equation}
        
        \noindent for $j,k \in [m]$. Then $\gl{m} = \text{span}_{\C} \left( \{ e_{j,k} \}_{j,k}^{m} \right) $.

        For a fixed element $h \in \mfg$, it is useful to define the \textit{adjoint action}, denoted $ad_h : \mfg \rightarrow \mfg$ and given by $ad_h(g) := [h,g]$. Since commutation is bilinear\footnote{$[A+B,C] = [A,C] + [B,C]$ and $[A,B+C] = [A,B] + [A,C]$}, the adjoint action is a linear transformation over matrices. We say that $a, b \in \mfg$ \textit{commute} if  $[a,b] = 0$. By skew-symmetry, every element commutes with itself. $\mfg$ is \textit{abelian} if and only if every element commutes with one another. The \textit{center} of $\mfg$, denoted $Z(\mfg)$, is the set of operators that commute with every other element:
        
        \begin{equation}
            Z(\mfg) := \{g \in \mfg : [g,h] = 0\ \forall \, h \in \mfg\}
        \end{equation}
        
        A \textit{Lie sub-algebra} $\mfh \leq \mfg$ is a linear subspace closed under the bracket: $[h_1, h_2] \in \mathfrak{h}$ for any $h_1, h_2 \in \mathfrak{h}$. A Lie sub-algebra $\mfh$ is an \textit{ideal} if it absorbs under commutation: $[h,g] \in \mfh$ for any $h \in \mfh$ and $g \in \mfg$. The center $Z(\mfg)$ is always an ideal since for any $g \in \mfg$ and $h \in Z(\mfg)$, $[h,g] = 0 \in Z(\mfg)$. An ideal is said to be \textit{trivial} if it is the empty span $\{0\}$ or the full Lie algebra $\mfg$.

        Given two sets $A,B \subset \mfg$, extend the commutator operator to sets as
        
        \begin{equation}
            [A, B] = \text{span}\bigP{\bigB{\com{a, b} : a \in A, b \in B }}
        \end{equation}
        
        \noindent Many Lie algebras can be decomposed into sums of smaller Lie sub-algebras. $\mfg$ is a direct sum of Lie sub-algebras $\mfg_1, \dots, \mfg_n$, written as $\mfg = \mfg_1 \oplus \cdots \mfg_n$, if $\mfg$ is the direct sum of the $\mfg_i$ as a vector space and $[\mfg_i, \mfg_j] = \{0\}$ for all $i \neq j$. A Lie algebra $\mfg$ is \textit{simple} if it is non-abelian and contains no non-trivial ideals. $\mfg$ is \textit{semisimple} if it is a direct sum of simple Lie sub-algebras. Many Lie algebras of interest are semisimple.
        
        The study of Lie algebras is tightly connected to the study of Lie groups. We omit the rigorous connections but note that $\mfg$ is \textit{compact} if its associated Lie group is topologically compact. Further, the Lie group of special (orthogonal) unitary operators is generated by the Lie algebra of (real symmetric) skew-Hermitian operators ($\mathfrak{so}$) $\mathfrak{su}$ through exponentiation. Both $\so{m}$ and $\su{m}$ are compact Lie algebras and are defined formally later in this section. The following theorem is well-known; see Ref.~\cite{wiersema_classification_2023} proposition A.1. for a formal proof. 
        
        \begin{theorem}
            Let $\u{n} := \bigB{M \in \C^{n \times n} : M^\dagger = -M}$ be the Lie algebra of skew-Hermitian matrices. Any sub-algebra of $\u{n}$ is either abelian or a direct sum of simple, compact Lie algebras and a center as
        
            \begin{equation}
                \mfg = Z(\mfg) \oplus \mfg_1 \oplus \cdots \oplus \mfg_k
            \end{equation}
        \end{theorem}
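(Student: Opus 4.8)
The plan is to reduce the statement to the standard structure theory of compact Lie algebras, taking the Hilbert--Schmidt inner product as the invariant bilinear form. First I would fix a subalgebra $\mfg \leq \u{n}$ and equip $\u{n}$ with the real symmetric form $\langle A, B \rangle := -\text{Tr}(AB) = \text{Tr}(A^{\dagger} B)$. On skew-Hermitian matrices this is positive definite (it is the squared Frobenius norm), and it is $\text{ad}$-invariant: for any $X, A, B \in \u{n}$, cyclicity of the trace gives $\langle [X,A], B \rangle + \langle A, [X,B] \rangle = -\text{Tr}(XAB - AXB + AXB - ABX) = 0$. Restricting to $\mfg$ keeps both properties, so every $\text{ad}_X$ with $X \in \mfg$ is skew-symmetric in an $\langle \cdot, \cdot \rangle$-orthonormal basis of $\mfg$.

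Second, I would peel off ideals via orthogonal complements. The key lemma: if $\mathfrak{a} \leq \mfg$ is an ideal then $\mathfrak{a}^{\perp}$ (the $\langle \cdot, \cdot \rangle$-orthogonal complement inside $\mfg$) is an ideal and $[\mathfrak{a}, \mathfrak{a}^{\perp}] = \{0\}$, so $\mfg = \mathfrak{a} \oplus \mathfrak{a}^{\perp}$ as Lie algebras. Indeed, for $x \in \mathfrak{a}$, $y \in \mathfrak{a}^{\perp}$, $z \in \mathfrak{a}$ we get $\langle [x,y], z \rangle = -\langle y, [x,z] \rangle = 0$ because $[x,z] \in \mathfrak{a}$, so $[\mathfrak{a}, \mathfrak{a}^{\perp}] \subseteq \mathfrak{a}^{\perp}$; together with $[\mathfrak{a}, \mathfrak{a}^{\perp}] \subseteq \mathfrak{a}$ this forces $[\mathfrak{a}, \mathfrak{a}^{\perp}] \subseteq \mathfrak{a} \cap \mathfrak{a}^{\perp} = \{0\}$. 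Applying this to the abelian ideal $\mathfrak{z} := Z(\mfg)$ yields $\mfg = \mathfrak{z} \oplus \mfg'$ with $\mfg' := \mathfrak{z}^{\perp}$ an ideal; since the bracket splits across the sum, $Z(\mfg') = \{0\}$. I would then induct on $\dim \mfg'$, with base case $\mfg' = \{0\}$ (so $k = 0$). If $\dim \mfg' > 0$ and $\mfg'$ has no nontrivial proper ideal, then $\mfg'$ is non-abelian (an abelian Lie algebra equals its own center, which is $\{0\}$ here) and hence simple; otherwise a nontrivial proper ideal $\mathfrak{b}$ gives $\mfg' = \mathfrak{b} \oplus \mathfrak{b}^{\perp}$ with both summands of strictly smaller dimension, each inheriting the invariant positive-definite form and having trivial center (if $c \in Z(\mathfrak{b})$ then $[c, \mathfrak{b}] = 0$ and $[c, \mathfrak{b}^{\perp}] \subseteq [\mathfrak{b}, \mathfrak{b}^{\perp}] = \{0\}$, so $c \in Z(\mfg') = \{0\}$), and the inductive hypothesis writes each as a direct sum of simple ideals. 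Collecting terms gives $\mfg = Z(\mfg) \oplus \mfg_1 \oplus \cdots \oplus \mfg_k$ with each $\mfg_i$ simple, and $\mfg$ is abelian exactly when $k = 0$.

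Third, I would verify each $\mfg_i$ is compact. As an ideal of $\mfg'$ it carries the restriction of $\langle \cdot, \cdot \rangle$, an invariant positive-definite form, and has trivial center. For such a Lie algebra the Killing form $B(X,Y) = \text{Tr}(\text{ad}_X \text{ad}_Y)$ satisfies $B(X,X) = \text{Tr}(\text{ad}_X^2) = -\text{Tr}(\text{ad}_X \text{ad}_X^{\mathsf{T}}) \leq 0$ in an $\langle \cdot, \cdot \rangle$-orthonormal basis, with equality only if $\text{ad}_X = 0$, i.e. $X \in Z(\mfg_i) = \{0\}$; hence $B$ is negative definite on $\mfg_i$, which is one of the standard characterizations of a compact semisimple Lie algebra. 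This yields the claimed decomposition into a center plus simple compact ideals.

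The main obstacle is organizational rather than computational: each step is short once the invariant inner product is available, but one must track that ``trivial center'' is inherited by orthogonal-complement ideals, and one must invoke correctly the equivalence between ``admits an invariant positive-definite form and has trivial center'' and ``compact semisimple'' (equivalently, negative-definite Killing form). To sidestep the Killing-form criterion one can instead note that a semisimple subalgebra of $\u{n}$ integrates to a closed connected subgroup of $U(n)$ (its analytic subgroup is closed since the subalgebra is semisimple), which is compact, so each $\mfg_i$ is compact by definition; I would mention this as a remark.
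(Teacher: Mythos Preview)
The paper does not actually prove this theorem: it states it as well-known and defers to Ref.~\cite{wiersema_classification_2023}, Proposition~A.1, for a formal proof. Your argument is the standard structure-theory proof for compact Lie algebras (invariant positive-definite form $\Rightarrow$ orthogonal complements of ideals are ideals $\Rightarrow$ split off the center, then recursively split the centerless remainder into simple pieces, then check negative-definite Killing form), and it is correct as a plan. One small point worth making explicit when you write it up: your ``Indeed'' paragraph justifies $[\mathfrak{a},\mathfrak{a}^{\perp}]=\{0\}$ but not directly that $\mathfrak{a}^{\perp}$ is an ideal; the missing half-line is that for $y_1,y_2\in\mathfrak{a}^{\perp}$ and $x\in\mathfrak{a}$ one has $\langle [y_1,y_2],x\rangle=-\langle y_2,[y_1,x]\rangle=0$ since $[y_1,x]\in\mathfrak{a}$, so $\mathfrak{a}^{\perp}$ is already a subalgebra and hence (together with $[\mathfrak{a},\mathfrak{a}^{\perp}]=0$) an ideal. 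Your closing remark on integrating to a closed subgroup of $U(n)$ is correct but leans on a nontrivial fact (analytic subgroups with semisimple Lie algebra are closed in a compact group); the Killing-form route you give first is cleaner and self-contained.
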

        
        \noindent This decomposition theorem allows for the classification of Lie algebras defined over skew-Hermitian operators as the direct sum of known sub-algebras. The work in the following section is to show the specific decomposition of different Lie algebras into sums of $\so{m}$ and/or $\su{m}$ - the Lie algebras of skew-symmetric real operators and skew-Hermitian operators respectively - sometimes with a non-trivial center.
        
        Two Lie algebras can formally be related through congruence. A Lie algebra \textit{homomorphism} is a linear map $\phi : \mfg_1 \rightarrow \mfg_2$ that preserves the Lie bracket (commutation):
        
        \begin{equation}
            \phi([g_1,g_2]_1) = [\phi(g_1), \phi(g_2)]_2 \quad \forall g_1, g_2 \in \mfg_1; 
        \end{equation}
        
        \noindent $\phi$ is an \textit{isomorphism} if it is bijective, in which case we say the two Lie algebras are congruent: $\mfg_1 \cong \mfg_2$. Therefore the most straightforward way to show that two Lie algebras are isomorphic is to show (1) a bijection between linear bases that (2) preserves commutation relations.
        
        As discussed in \cref{app:controlgrad}, given a set of operators (the generators), the Lie algebra can constructed as the span over the nested commutation of its elements.
        
        \begin{definition}[Dynamical Lie algebra]
            Consider a set $\mathcal{A} = \{iH_1, \dots, iH_k\}$ of DLA generators. The \textit{dynamical Lie algebra (DLA)} is the (real) Lie algebra that is spanned by all nested commutators of $\mathcal{A}$, denoted by $\lie{\mathcal{A}}$. Equivalently, this is the smallest Lie algebra that contains $\mathcal{A}$.
        \end{definition}
        
        \noindent Given a PQC defined by layers of gate generators $\mathcal{G}$ as in \cref{eq:pqc}, their resulting DLAs define all the unitaries expressible at sufficient depth. 

        \begin{remark}
            Using the Jacobi identity and skew-symmetry, it can be shown that we only need consider nested commutations that are ``nested from the right'':
                \begin{equation}
                    [iH_1, [iH_2, [\dots,[iH_k, iH_{k+1}]\dots]]]
                \end{equation}
        \end{remark}      
        
        \begin{remark}
            If $\mg$ is a DLA generated by a set of generators $\mathcal{A}$, then an element $h \in Z(\mg)$ if and only if $[h,a] = 0$ for all $a \in \mathcal{A}$. Thus, in order to understand the center of a DLA, one must only ask about elements that commute with the generator set. (See Ref.~\cite{allcock_dynamical_2024} lemma 16 for a more general statement about symmetries and gate genetators.)
        \end{remark}     

        \subsubsection{The Special Orthogonal and Unitary Subalgebras}\label{subsec:sosu}
        In this manuscript, we use the unsubscripted span to be over the reals, such that $ \Span{A} := \text{span}_{\R}(A)$ and clarify if we represent a span over a different field through subscripts (i.e. $\text{span}_{\C} $ for a span over complex numbers). 
        
        \begin{enumerate}
            \item $\so{m} := \{M \in \gl{m} : M^T = -M\}$ is the Lie algebra of \textit{skew-symmetric} matrices. Through exponentiation, this Lie algebra generates the Lie group of orthogonal matrices. A basis for this space is given by 
        
            \begin{equation}
                a_{j,k} = \ketbra{j}{k} - \ketbra{k}{j}
            \end{equation}
        
            \noindent for $j < k$ in $[m]$ such that $ \so{m} = \text{span}_{\R}\left(\{ a_{j,k} \}_{j<k}^{m} \right) $. This space has dimension $\binom{m}{2} = \frac{m(m-1)}{2}$ (select an element in the upper triangle and its negation is selected in the lower triangle) and the non-zero commutation relations of $\so{m}$ are given by:
            \begin{align}
                &[a_{j,k}, a_{k,\ell}] = a_{j,\ell}, \tag{SO1}\label{eq:SO1}\\
                &[a_{j,\ell},a_{j,k}] = a_{k, \ell}, \tag{SO2}\label{eq:SO2}\\
                &[a_{k,\ell},a_{j,\ell}] = a_{j,k}, \tag{SO3}\label{eq:SO3}
            \end{align}
            
            \noindent for $j < k < \ell$.
        
            \item $\su{m} := \{M \in \gl{m} : M^\dagger =  -M,\ \tr(M) = 0\}$ is the set of skew-Hermitian matrices with trace 0. A real-valued skew-symmetric matrix is skew-Hermitian and so $\so{m} \subset \su{m}$. The structure of the Lie algebra $\su{m}$ is well-studied~\cite{Bossion2022_sun_structure_constants}. A typical basis for $\su{m}$ composes of three types of operators, denoted (real) skew-symmetric, (imaginary) symmetric, and diagonal. One way to define these operators is as follows:
            
            \begin{enumerate}
                \item Skew-symmetric: $a_{j,k}$ as above.
                \item Symmetric: $s_{j,k} := i \ketbra{j}{k} + i \ketbra{k}{j}$ for $j < k$ in $[n]$.
                \item Diagonal: $d_{j,k} := i \ketbra{j}{j} - i \ketbra{k}{k}$ for $j < k$ in $[n]$.
            \end{enumerate}
    
            Summing up the diagonal operators telescopes and so we only need to take $d_{1,2}, \dots, d_{n-1,n}$ to span the full set. These $m^2-1$ operators form a basis for $\su{m}$. They obey the following commutation relationships (all others are zero or can be derived from these, see \cref{rmk:SU_generates_others}). Here, let $j < k < \ell$ be indices in $[m]$:
            \begin{multicols}{2}
            \begin{enumerate}[label=(SU\arabic*), left=1.0em, labelsep=1.0em, itemsep=0.5em]
                \item $[a_{j,k}, a_{k,\ell}] = a_{j,\ell}$
                \item $[a_{j,\ell},a_{j,k}] = a_{k, \ell}$
                \item $[a_{k,\ell},a_{j,\ell}] = a_{j,k}$
                \item $[s_{j,k}, s_{k,\ell}] = -a_{j,\ell} $
                \item $[s_{j,\ell}, s_{j,k}] = a_{k,\ell} $
                \item $[s_{k,\ell}, s_{j,\ell}] = a_{j,k} $
                \item $[a_{j,k}, s_{j,k}] = 2 \, d_{j,k}$
                \item $[d_{j,k}, a_{j,k}] = 2 \, s_{j,k}$
                \item $[s_{j,k}, d_{j,k}] = 2 \, a_{j,k}$
                \item $[d_{j,k}, a_{k,\ell}] = -s_{k, \ell}$
                \item $[d_{j,r}, a_{j,k}] = s_{j,k}$ for $r \neq k$
                \item $[a_{j,k}, d_{k,\ell}] = s_{j,k}$
                \item $[d_{r,k}, a_{j,k}] = -s_{j,k}$ for $r \neq j$
            \end{enumerate}            
            \end{multicols}

            Any matrix in this space is orthogonal to the identity matrix $ I $, i.e. $ \u{m} = \text{span}( i \, I) \oplus \su{m} $. The Lie algebra of skew-Hermitian matrices, $\su{m}$, generates the Lie group of special unitary matrices through exponentiation. 
            
            Note that the $d_{j,j+1}$ terms defined here is slightly different than the normal Cartan generators of the $\su{m}$ basis. In fact, these are not orthogonal with respect to the trace inner product. However, they are simpler to handle for our purposes. There is a simple mapping between the terms defined here and the normal diagonal Cartan generators; see \cref{rmk:gellman-relationship}.
        \end{enumerate}

        \begin{remark}\label{rmk:SU_generates_others}
            These 13 commutation relationships are sufficient to describe the dynamics of $\su{n}$ but not exhaustive. However, other relationships can be generated from these. For example, we also have that $\com{s_{j,\ell}, a_{k,\ell}} = -s_{j,k}$:
            \begin{align}
                \com{s_{j,\ell}, a_{k,\ell}} &= \com{\com{d_{j,m}, a_{j,\ell}}, a_{k,\ell}} &\text{(SU11)} \\
                &=\com{d_{j,m}, \com{a_{j,\ell}, a_{k,\ell}}} - \com{a_{j,\ell}, \com{d_{j,m}, a_{k,\ell}}} &\text{(Jacobi)} \\
                &=\com{d_{j,m}, \com{a_{j,\ell}, a_{k,\ell}}} & \com{d_{j,m}, a_{k,\ell}} = 0\\
                &=-\com{d_{j,m}, a_{j,k}} &\text{(SU3)} \\
                &=-s_{j,k} &\text{(SU11)}
            \end{align}
        \end{remark}

        \begin{remark}\label{rmk:gellman-relationship}
            The symmetric and skew-symmetric terms correspond to the Gell-Mann basis for off-diagonal terms (see e.g. Ref.~\cite{bertlmann2008bloch}). The diagonal terms, known as Cartan generators, of the Gell-Mann basis are:
            \begin{align}
            G_{k} = \frac{i}{\sqrt{2 \, k \, (k+1)}} \left( -k \, \ketbra{k+1}{k+1} + \sum_{j=1}^{k} \ketbra{j}{j} \right). \label{eq:gellman}
            \end{align} 
            
            In general, the diagonal elements form a \textit{Cartan sub-algebra} of $\su{n}$ which is the maximal abelian sub-algebra. The Gell-Mann diagonal operators have the nice property of being traceless, mutually commutative, and orthogonal, while $ \{ d_{j,j+1} \}_{j=1}^{n-1}$ is traceless, mutually commutative, but non-orthogonal. However, we can relate our basis terms to the Cartan generators through the simple relation:
            \begin{align}
            G_{k} &= \frac{i}{\sqrt{2 \, k \, (k+1)}} \left( -k \, \ketbra{k+1}{k+1} + \sum_{j=1}^{k} \ketbra{j}{j} \right)  \\ 
            &= \frac{i}{\sqrt{2 \, k \, (k+1)}} \big( \ketbra{1}{1} + (2-1) \ketbra{2}{2} + \ldots + (k-(k-1)) \ketbra{k}{k} - k \ketbra{k+1}{k+1} \big) \\ 
            &= \frac{i}{\sqrt{2 \, k \, (k+1)}} \big( \left( \ketbra{1}{1} - \ketbra{2}{2} \right) + 2 \, \left( \ketbra{2}{2} - \ketbra{3}{3} \right) + \ldots + k \left(\ketbra{k}{k} - \ketbra{k+1}{k+1} \right) \big) \\ 
            &= \frac{i}{\sqrt{2 \, k \, (k+1)}} \sum_{j=1}^{k} j \, \left( \ketbra{j}{j} - \ketbra{j+1}{j+1} \right) \\ 
            &= \frac{1}{\sqrt{2 \, k \, (k+1)}} \sum_{j=1}^{k-1} j \, d_{j,j+1} \label{eq:gellman-alt}
            \end{align}
            and vice versa:
            \begin{align} 
            d_{k,k+1} &= i\left( \ketbra{k}{k}-\ketbra{k+1}{k+1} \right)  \\ 
            &= i/k \big( (1-1) \ketbra{1}{1} + .... + (1-1)\ketbra{k-1}{k-1} \\ 
            &\phantom{= 1/k \big(} + (-(k-1)-1)\ketbra{k}{k} + -k \ketbra{k+1}{k+1} \big) \\ 
            &= \frac{1}{k} \left( \sqrt{2 \, (k+1) \, (k+2)} \, G_{k+1} - \sqrt{2 \, k \, (k+1)} \, G_{k} \right)
            \end{align}
            
            \noindent In particular, these relationships can be used to describe an orthonormal basis for the DLA which is widely used. For our purposes, using $d_{j,k}$ as defined is fruitful since it is a more direct relationship with the terms generated below.
        \end{remark}

        \begin{remark}
            A common way to characterize a basis for an algebra is through its structure constants. See ~\cite{Bossion2022_sun_structure_constants} for the $\su{m}$ structure constants. We opt to use the commutation relations explicitly in this manuscript as these naturally correspond to elements constructed via the DLA process (and our operators are not necessarily orthogonal as per the previous remark).
        \end{remark}

    \section{Lie Algebras of XY-mixer Topologies}\label{sec:appendix_DLAs_description}
        In this section, we give a high-level overview of the XY-mixer DLAs. Rigorous calculations are provided in \cref{sec:appendix_DLAs_calcs}. Recall the 2-qubit terms $XY_{j,k} := \frac{1}{2}\bigP{X_jX_k + Y_jY_k}$ and $YX_{j,k} := \frac{1}{2}\bigP{X_jY_k - Y_jX_k}$.
        
        \subsection{Polynomially-Sized XY-mixer DLAs}\label{subsec:poly_dlas_desc}
                
            \subsubsection{Basis Elements}\label{subsubsec:poly_basis_els}
            
                Here we define the basis elements for the different Lie algebras associated with $XY$-mixers, with and without single $Z$ terms.  Normalizing coefficients are ignored; see \cref{sec:appendix_DLAs_calcs} for full details. Let $1 \leq j < k \leq n$ be fixed indices. \cref{fig:soubasis} presents visual descriptions of these basis elements on the left and right side of the figure.
                
                \begin{suppfigure}[t!]
                \centering 
                \includegraphics[width=0.9\linewidth]{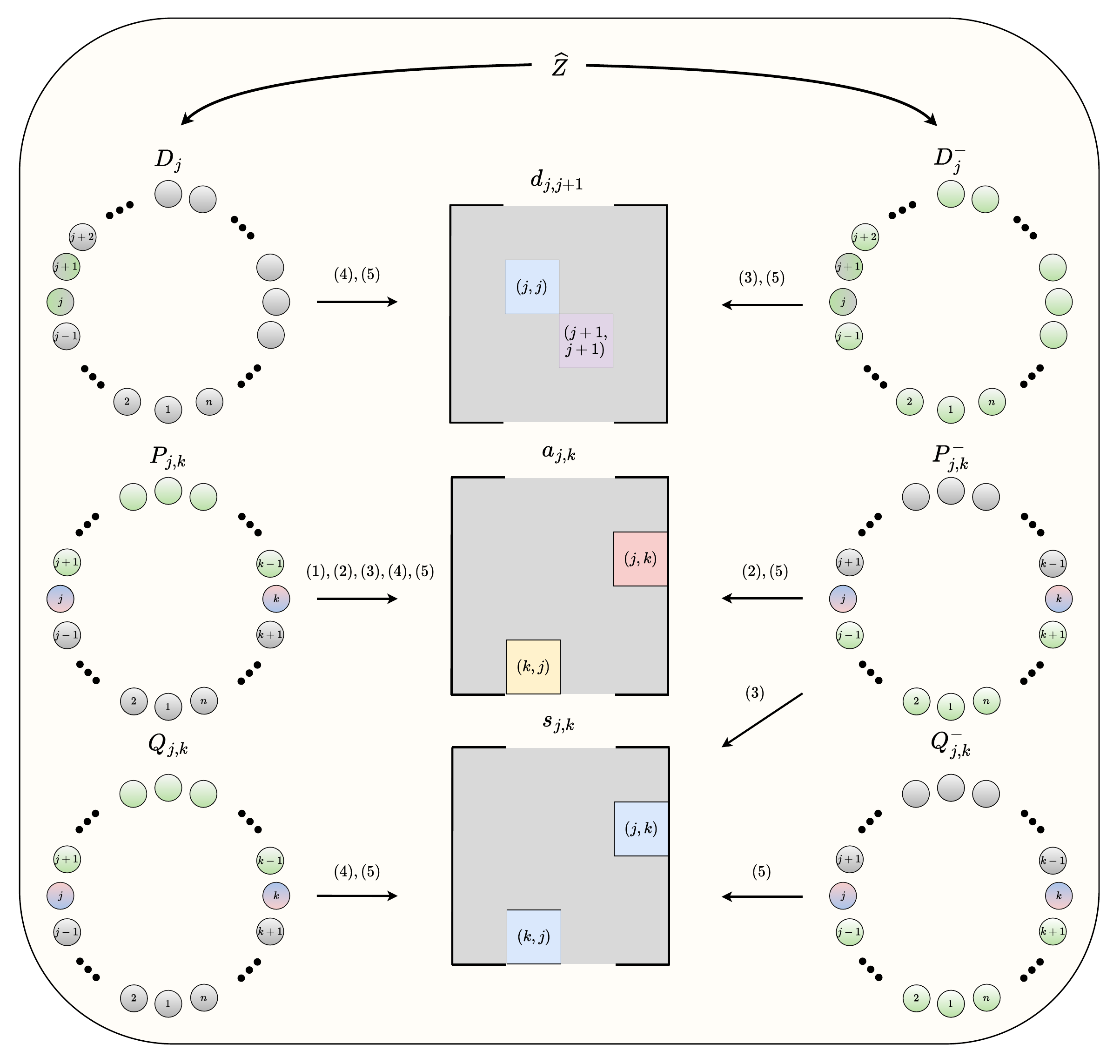}
                \caption{\textbf{Mappings from $\su{2^n}$ to $\su{n}$.} Left and right are visual representations of basis elements of the Lie algebras summarized in \cref{tab:dlasummary-1d} while center are basis elements of $\su{n}$. Labeled arrows show mappings of basis elements in $\su{2^n}$ to basis elements in $\su{n}$. For example, $\mfg_{XY}^{C,O} \cong \su{n}$ with $D_{j}^{-}$ mapped to $d_{j,j+1}$, $P_{j,k}$ mapped to $a_{j,k}$, and $P_{j,k}^{-}$ mapped to $s_{j,k}$ while $\mfg_{XY}^{C,E}{n} \cong \mathfrak{so}(n) \oplus \mathfrak{so}(n)$ with $P_{j,k}$ and $P_{j,k}^{-}$ mapped to $a_{j,k}$ (explicit mapping uses $P_{j,k} + P_{j,k}^{-}$ and $P_{j,k} - P_{j,k}^{-}$). Operators visualized here are defined in \cref{subsubsec:poly_basis_els} and in \cref{sec:appendix_DLAs_calcs}. }
                \label{fig:soubasis}
                \end{suppfigure}

                \begin{enumerate}
                \item $P_{j,k} := 
                        \begin{cases}\label{Pjk_simple}
                            X_j Z_{[j+1:k-1]} X_k + Y_j Z_{[j+1:k-1]}Y_k &(k-j \text{ odd}) \\ 
                            X_j Z_{[j+1:k-1]} Y_k - Y_j Z_{[j+1:k-1]} X_k &(k-j \text{ even})
                        \end{cases}$
                    
                    These terms are generated by nested commutations of neighboring $XY$ terms from $j$ up to $k$ (up to coefficient):
                    \begin{equation}
                        P_{j,k} = [iXY_{j,j+1}, [\dots [iXY_{k-2,k-1}, iXY_{k-1,k}]\dots]]
                    \end{equation}

                \item $\Popp{j,k} :=\label{Pjk_minus_simple}                \begin{cases}
                        Z_{[1:j-1]} \bigP{X_j X_k + Y_j Y_k} Z_{[k + 1:n]} &(n - k + j \text{ odd})\\ 
                        Z_{[1:j-1]} \bigP{X_j Y_k - Y_j X_k} Z_{[k + 1:n]} &(n - k + j \text{ even})\\
                    \end{cases}$
                    
                    These terms are generated by nested commutations of neighboring $XY$ terms $j$ \textit{down} to $k$, looping around zero:
                    \begin{equation}
                        \Popp{j,k} = [iXY_{j,j-1}, [iXY_{j-1,j-2}, [\dots [iXY_{k+2,k+1}, iXY_{k+1,k}]\dots]]]
                    \end{equation}
                    In other words, this follows the path around the cycle in the opposite direction of $P_{j,k}$. 
                \item $Q_{j,k} := \begin{cases}\label{Qjk_simple}
                            X_j Z_{[j+1:k-1]} Y_k - Y_j Z_{[j+1:k-1]} X_k &(k-j \text{ odd}) \\ 
                            X_j Z_{[j+1:k-1]} X_k + Y_j Z_{[j+1:k-1]}Y_k &(k-j \text{ even})
                        \end{cases}$
                
                    These terms are generated as $Q_{j,k} = [D_{j,k}, P_{j,k}]$.
                    
                \item $\Qopp{j,k} :=\label{Qjk_minus_simple}                 \begin{cases}
                        Z_{[1:j-1]} \bigP{X_j Y_k - Y_j X_k} Z_{[k + 1:n]} &(n - k + j \text{ odd})\\ 
                        Z_{[1:j-1]} \bigP{X_j X_k + Y_j Y_k} Z_{[k + 1:n]} &(n - k + j \text{ even})\\
                    \end{cases}$
                
                    These terms are generated as $Q_{j,k} = [D_{j,k}, \Popp {j,k}]$.
                \item \label{Djk_simple} $D_{j,k} := Z_j - Z_{k}$. These terms are in $\text{span}(\Gz{n})$.
                
                \item $\Dopp{j,k} = \Zbar{j} - \Zbar{k}$ where $ \Zbar{\ell} = Z_{\ell} \prod_{j=1}^{n} Z_{j} $ is the operator that has a Pauli-$Z$ on all qubits except identity at the $\ell^{th}$ location. The $\Dopp{j,j+1}$ terms are generated by nested commutations of neighboring $XY$ terms beginning and ending at $j$ for \textit{odd} $n$: 
                \begin{equation}
                    \Dopp{j,j+1} = [iXY_{j,j+1}, [\dots [iXY_{j-2,j-1}, XiY_{j-1,j}]\dots]]
                \end{equation}
    
                Note that $\Dopp{1,2}, \dots, \Dopp{n-2,n-1}$ are sufficient to linearly span all the $\bigB{\Dopp{jk}}_{j,k}$.
                    
                \end{enumerate}
                
                \noindent A few remarks. The $P$ and $P^-$ terms are identical except the ordering is flipped based on the parity of $k-j$; we have similar behavior for $Q$ and $Q^-$. Define the operator
    
                \begin{equation}
                    \hat{Z} := (-1)^{\floor{\frac{n}{2}}} \Zn 
                \end{equation}
                
                This operator tends to take $A \mapsto A^-$ and vice versa. This is helpful when both are in our Lie algebra as $P$ and $P^-$ may not commute but $P + P^-$ and $P - P^-$ do; see \cref{sec:appendix_DLAs_calcs} for specific details.

            \subsubsection{Lie Algebra Decompositions}\label{subsubsec:poly_dlas_decomp}
                \begin{supptable}[t]
                \centering 
                \begin{tabular}{|c|c|c|c|c|}
                    \hline
                    & \textit{DLA} & \textit{Topology}  & \textit{Decomposition} & \textit{Dimension} \\
                    \hline
                    1 & $\gxyp{n}$ & Path  & $\so{n}$ & $n(n-1)/2$ \Tstrut\Bstrut\\
                    \hline 
                    2 & $\gxyce{n} $ & Cycle (even $n$) &  $ \so{n} \oplus \so{n} $  & $ n(n-1) $ \Tstrut\Bstrut\\
                    \hline 
                    3 & $\gxyco{n} $ & Cycle (odd $n$) &  $\su{n}$  & $ n^2 - 1 $ \Tstrut\Bstrut\\
                    \hline 
                    4 & $\gxypz{n}$ & Path &   $\su{n} \oplus \u{1} $& $n^2$ \Tstrut\Bstrut\\
                    \hline 
                    5 & $\gxycz{n}$ & Cycle &  $\su{n} \oplus \su{n} \oplus \u{1}$ &$ 2n^2 - 1$ \Tstrut\Bstrut\\ 
                    \hline
                \end{tabular}
                \caption{\textbf{Polynomial-sized $XY$-mixer Lie Algebras.} A summary of the decompositions and dimensions of relevant polynomial-sized DLA studied in this manuscript. \cref{fig:xytopdiags}(a)-(c),(e),(f) provide visual representations of each.}
                \label{tab:dlasummary-1d}
                \end{supptable}
        
                Here we show the semisimple Lie algebraic decomposition of the polynomial-sized DLAs. Most proofs are omitted; see \cref{sec:appendix_DLAs_calcs} for full details. See \cref{fig:soubasis} for representations of how basis elements from each $ \mathfrak{g} $ are mapped to basis elements of $ \so{n} $ or $ \su{n} $. The structure of $\lie{i\Gxyc{n}}$ depends on the parity of $n$ so let $\gxyce{n}$ and $\gxyco{n}$ be the even and odd versions of this DLA, respectively.
                
                \begin{enumerate}
                \item $\gxyp{n} \cong \so{n}$. 
                
                    In this first case, the only way to generate new elements from the gate generators is to nest neighboring $XY$ terms. So, a basis for $\gxyp{n}$ is given by the $P_{j,k}$ for $1 \leq j < k \leq n$. These obey the skew-symmetric commutation relations $SO1, SO2, SO3$. Thus, we have isomorphism via $P_{j,k} \mapsto a_{j,k}$ (mapping (1) in \cref{fig:soubasis}): 
                    \begin{align}
                    \gxyp{n} = \text{span}\left( \{ P_{j,k} \}_{j<k} \right) \cong \so{n}. 
                    \end{align}
                    
                \item $\gxyce{n} \cong \so{n} \oplus \so{n}$.
            
                    In the cycle case, we can now go from nodes $j$ to $k$ in two directions: generate $P_{jk}$ by going directly between (clockwise) or $\Popp{jk}$ by ``looping around'' (anti-clockwise), see \cref{fig:soubasis}. However, if we begin at node $j$ and proceed along the cycle all the way back to the start, the terms cancel out to zero. That is, for even $n$:
                    \begin{equation}
                        [iXY_{j,j+1}, [\dots [iXY_{j-2,j-1}, iXY_{j-1,j}]\dots]] = 0.
                    \end{equation}

                    So, $\gxyce{n} = \text{span}\bigP{\bigB{P_{j,k}, \Popp{\ell,m}}_{j,k;\ell,m}}$. The $\Popp{j,k}$ terms act like symmetric terms of $\su{n}$ (as opposed to the $P_{j,k}$ acting skew-symmetrically). For example, $[\Popp{1,2}, \Popp{2,3}] = P_{1,3}$. This means that $\text{span}\bigP{\bigB{\Popp{\ell,m}}_{\ell,m}}$ is not a Lie algebra itself.
                    
                    One can show that $\Zn \cdot P_{j,k} = \Popp{j,k}$. Define the shifted sets
            
                    \begin{equation}
                        A^+ = \bigB{\frac{1}{2}\bigP{I + \Zn} P_{j,k}: 1 \leq j < k \leq n},\ A^- = \bigB{\frac{1}{2}\bigP{I - \Zn} P_{j,k}: 1 \leq j < k \leq n}.
                    \end{equation}

                    Then $\gxyce{n} = \text{span}(A^+, A^-)$. Moreover, the sets $A^+, A^-$ both individually obey skew-symmetric commutation relations. Lastly, $[A^+, A^-] = \{0\}$. Therefore, we can split the DLA into 
                    \begin{align} 
                    \gxyce{n} = A^+ \oplus A^- \cong \so{n} \oplus \so{n}.
                    \end{align}
            
                \item $\gxyco{n} \cong \su{n}$.
            
                    The main difference between $\gxyce{n}$ and $\gxyco{n}$ is that when $n$ is odd, we now have that looping around through nested commutation does not cancel:
                    \begin{equation}
                        [iXY_{j,j+1}, [\dots [iXY_{j-2,j-1}, iXY_{j-1,j}]\dots]] \neq 0
                    \end{equation}
            
                    These in fact define the $\Dopp{j,j+1}$ terms. We have 
                    \begin{align}
                    \gxyco{n} = \text{span}\bigP{\bigB{P_{j,k}, \Popp{\ell,m}, \Dopp{a,a+1}}_{j,k;\ell,m;a}}.
                    \end{align}
                    The $P_{j,k}$ act as skew-symmetric terms, the $\Popp{\ell,m}$ act as symmetric terms, and the $\Dopp{a}$ as diagonal terms (mapping (3) in \cref{fig:soubasis}). Therefore, $\gxyco{n} \cong \su{n}$.
            
                \item $\gxypz{n} \cong \u{1} \oplus \su{n}$.
            
                    First we understand the 1-dimensional center $\u{1}$. Now that we have access to $\Gz{n}$ terms, we can construct the element 
                    \begin{equation}
                        Z^+ = i\sum_{j \in [n]}Z_j
                    \end{equation}
                    Indeed, $[Z^+, \Gxypz{n}] = \{0\}$ which implies that $[Z^+, \gxypz{n}] = \{0\}$. Therefore, 
                    \begin{align}
                        Z\bigP{\gxypz{n}} = \text{span}(Z^+) \cong \u{1}.
                    \end{align}
            
                    Since $\Gxyp{n} \subset \Gxypz{n}$, we have the skew-symmetric $P_{jk}$ terms here. Since $D_{j,k} = Z_j - Z_{k} \in \text{span}\bigP{\bigB{Z_a}_a}$, we must have $D_{j,k} \in \gxypz{n}$. Moreover, we can now use this to generate \mbox{$Q_{j,k} = [D_{j,k}, P_{j,k}]$.} There are no additional terms in this DLA and so 
                    \begin{align} 
                    \gxypz{n} = \text{span}\bigP{\bigB{Z^+, P_{j,k}, Q_{\ell,m},D_{a,a+1}}_{j,k;\ell,m;a}}.
                    \end{align}
                    The $Q_{\ell,m}$ terms obey symmetric commutation relations and the $D_{a,a+1}$ obey diagonal ones. Thus, there is a copy of $\su{n}$ inside $\gxypz{n}$. Together with the center this gives 
                    \begin{align} 
                    \gxypz{n} \cong \u{1} \oplus \su{n}.
                    \end{align}
            
                    It is worth noting that though $\gxyco{n} \cong \su{n}$, those are different elements than the copy of $\su{n}$ in $\gxypz{n}$. It is not a Lie sub-algebra in the way that $\gxyp{n} \subset \gxypz{n}$.
            
                \item $\gxycz{n} \cong \u{1} \oplus \su{n} \oplus \su{n}$.
            
                    The same center $\text{span}(Z^+) \cong \u{1}$ is applicable in this scenario.
            
                    Since we have a cycle connectivity, we know we have all the $P_{j.k}$ and $\Popp{\ell,m}$ terms. In turn, we can generate all the $Q_{j,k}$ and $\Qopp{\ell,m}$ terms using $D_{j,k}$ and $D_{\ell,m}$, respectively. For odd $n$, we know we can generate $\Dopp{j,j+1}$ as in the $\gxyco{n}$ case. For even $n$, we have that $\Dopp{j,j+1} = [P_{j,j+1}, \Qopp{j,j+1}]$. Thus, we have that 
                    \begin{equation}
                        \gxycz{n} = \text{span}\bigP{ \left\{ Z^{+}, P_{j,k}, \Popp{\ell,m}, Q_{j',k'}, \Qopp{\ell',m'}, D_{a,a+1}, \Dopp{b,b+1} \right\}}
                    \end{equation}
            
                    Recall the trick of using $\Zn$ to map elements to one another. Define $B^+$ and $B^-$ as
                    \begin{equation}
                        B^\pm:= \frac{1}{2}\bigP{I \pm \Zn}\bigB{P_{j,k},Q_{j,k},\Dopp{j,j+1} : 1 \leq j < k \leq n}
                    \end{equation}
            
                     \noindent  It is easy to see that $\Zn \cdot D_{a,a+1} = \Dopp{a,a+1}$. When $n$ is even, we have that $\Zn \cdot P_{j,k} = \Popp{j,k}$ and $\Zn \cdot Q_{\ell,m} = \Qopp{\ell,m}$. When $n$ is odd, we have that $\Zn \cdot P_{j,k} = -\Qopp{j,k}$ and $\Zn \cdot Q_{\ell,m} = \Popp{\ell,m}$. Either way, $\gxycz{n} = \text{span}(B^+, B^-)$. Moreover, $B^+ \cong \su{n}$ with the $\frac{1}{2}\bigP{I + \Zn}P_{j,k}$ acting as the skew-symmetric, $\frac{1}{2}\bigP{I + \Zn}Q_{j,k}$ the symmetric, and $\frac{1}{2}\bigP{I + \Zn}D_{j,j+1}$ the diagonal terms. This is true regardless of the parity of $n$. Equivalently, $B^- \cong \su{n}$. Lastly, $[B^+, B^-] = 0$, giving the full decomposition 
                     \begin{align} 
                     \gxycz{n} \cong \u{1} \oplus \su{n} \oplus \su{n}.
                     \end{align}
                \end{enumerate}

        \subsection{Exponentially-Sized XY-mixer DLAs}\label{subsec:exp_dlas_desc}

            \begin{supptable}[t]
            \centering 
            \begin{tabular}{|c|c|c|c|c|c|}
            \hline
            \textit{Qubits} & $\gxyk{n}$ & $\gxykz{n}$ & $\gxyczzz{n}$ & $ \gxyzzzk{n} $ & $\redglincon(n)$\Tstrut\Bstrut\\ 
            \hline 
            3   &   8   &     17   &    18  &     18 & 16 \Tstrut\Bstrut\\ 
            \hline
            4   &   31   &    66   &    67  &     67 & 65 \Tstrut\Bstrut\\
            \hline
            5   &   123  &   247   &   248  &    248 & 246 \Tstrut\Bstrut\\
            \hline
            6   &   457  &   918   &   919  &    919 & 917 \Tstrut\Bstrut\\
            \hline
            7   &   1712 &   3425  &  3426  &   3426 & 3424 \Tstrut\Bstrut\\
            \hline
            8   &   6429  & 12862 &  12863  &  12863 & 12861 \Tstrut\Bstrut\\
            \hline
            9   &   24305  & 48611 &  48612  &  48612 & 48610 \Tstrut\Bstrut\\
            \hline 
            $n$   &  $\frac{1}{2} \left( \binom{2n}{n} - 3 + (-1)^n \right) + \lfloor \frac{n}{2} \rfloor $ & $ \binom{2n}{n} - n $ & $ \binom{2n}{n} - n + 1 $ & $ \binom{2n}{n} - n + 1 $ & $ \binom{2n}{n} - n - 1 $ \Tstrut\Bstrut\\
            \hline
            \end{tabular}
            \caption{\textbf{Dimensions of exponential sized DLAs for small $n$.} For the XY-mixer clique topology and the XY-mixer path, cycle, and clique topologies with $R_{ZZ}$ clique, we report the dimension of the associated DLA. }
            \label{tab:dlasummary-exp-numerics}
            \end{supptable}

            We describe the Lie algebra of cardinality invariance as a fundamental superalgebra of any XY-mixer algebra and then define basis elements in each exponential Lie algebra to show each of the DLAs grows $\Omega(3^n)$. Based on their fundamental relationship to the Lie algebra of cardinality invariance, as numerically verified for small $n$, we conjecture semisimple decompositions for each exponential DLA. Detailed descriptions of the construction of basis elements and decompositions of the DLAs are provided in \cref{subsec:exp_dlas}. 

            \subsubsection{The Lie Algebra of Imposed Cardinality Invariance}\label{app:glincon}
        
                In this section we describe the Lie algebra associated with skew-Hermitian operators that commute with $ Z^{+} = i\sum_{j=1}^{n} Z_{j} $. Since the XY-mixer and any Pauli-Z string commutes, this algebra is pertinent for our analysis as all XY-mixer topologies studied in this manuscript must be a subalgebra. As discussed in the Methods section, $ Z^{+} $ is associated with the constraint of preserving cardinality or Hamming weight. As such, consider the decomposition of the Hilbert space by Hamming weight 
                \begin{align} 
                \mathcal{H} = \bigoplus_{k=1}^{n} \text{span}\left( F^{k} \right),
                \end{align}
                where $ F^{b} = \left\{ \ket{x} : |x| = b, \ket{x} \in \{ \ket{0}, \ket{1} \}^{\otimes n} \right\} $. Note that $|F^{b}| = \binom{n}{b}$ and $\text{dim}\left(\mathcal{H}\right) = \sum_{k=0}^{n} |F^{k}| = \sum_{k=0}^{n} \binom{n}{k} = 2^{n} $. In particular for $ \ket{x} \in F^{b} $, we have 
                \begin{align}
                \bra{x} Z^{+} \ket{x} &= \bra{x} \left( \sum_{j=1}^{n} I - 2 \ketbra{1}{1}_{j} \right) \ket{x} \\ 
                &= n - 2 \sum_{j=1}^{n} x_{j} \\ 
                &= n - 2 \, b.
                \end{align} 
                Define
                \begin{align}\label{eq:feasible_proj}
                    \hat{F}^{k} = \sum_{\ket{x} \in F^{k}} \ketbra{x}{x},
                \end{align} 
                as the projection operator onto the subspace of hamming weight $k$. Then $ \hat{F}^{k} \, \hat{F}^{k'} = \delta_{k,k'} \hat{F}^{k} $, $ \sum_{k=0}^{n} \hat{F}^{k} = I $ and the embedded constraint operator has the eigensystem decomposition:
                \begin{align} 
                    Z^{+} = i\sum_{k=0}^{n} (n - 2\, k) \hat{F}^{k}.
                \end{align}
        
                Then a Hermitian operator $ H $ that commutes with $Z^{+}$ can be decomposed over the eigenspaces of $Z^{+}$:
                \begin{lemma}\label{lem:eigcommute}
                $[H,Z^{+}] = 0 $ if and only if $ H = \sum_{k=0}^{n} \hat{F}^{k} H \hat{F}^{k} $.
                \end{lemma}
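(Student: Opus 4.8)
The plan is to exploit the spectral decomposition $Z^{+} = i\sum_{k=0}^{n}(n-2\,k)\,\hat{F}^{k}$ already derived above, together with the fact that the eigenvalues $i(n-2\,k)$ are pairwise distinct; this turns the statement into the familiar principle that an operator commutes with a diagonalizable operator precisely when it is block-diagonal with respect to that operator's eigenspace decomposition. Note that Hermiticity of $H$ plays no role: the equivalence holds for any operator on $\mathcal{H}$.

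First I would dispatch the direction $(\Leftarrow)$. Assume $H = \sum_{k}\hat{F}^{k}H\hat{F}^{k}$. Since $\hat{F}^{k}$ projects onto the $i(n-2\,k)$-eigenspace of $Z^{+}$, we have $Z^{+}\hat{F}^{k} = i(n-2\,k)\,\hat{F}^{k} = \hat{F}^{k}Z^{+}$. Multiplying $H$ by $Z^{+}$ on each side and using orthogonality $\hat{F}^{k}\hat{F}^{k'} = \delta_{k,k'}\hat{F}^{k}$ gives $Z^{+}H = \sum_{k} i(n-2\,k)\,\hat{F}^{k}H\hat{F}^{k} = HZ^{+}$, so $\com{H,Z^{+}} = 0$. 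For $(\Rightarrow)$, insert the resolution of the identity $I = \sum_{k}\hat{F}^{k}$ on both sides of $H$ to write $H = \sum_{k,k'}\hat{F}^{k}H\hat{F}^{k'}$, and show the off-diagonal blocks vanish. Sandwiching the hypothesis $0 = \com{H,Z^{+}}$ between $\hat{F}^{k}$ on the left and $\hat{F}^{k'}$ on the right and pulling $Z^{+}$ through each projector yields
\begin{equation}
    0 = \hat{F}^{k}\com{H,Z^{+}}\hat{F}^{k'} = \bigl(i(n-2\,k') - i(n-2\,k)\bigr)\,\hat{F}^{k}H\hat{F}^{k'} = 2\,i\,(k-k')\,\hat{F}^{k}H\hat{F}^{k'}.
\end{equation}
Hence $\hat{F}^{k}H\hat{F}^{k'} = 0$ whenever $k \neq k'$, leaving $H = \sum_{k}\hat{F}^{k}H\hat{F}^{k}$, as claimed.

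There is no genuine obstacle here — the result is elementary linear algebra. The one point worth flagging is that the argument relies on the values $n-2\,k$, $k = 0,\dots,n$, being pairwise distinct, which is exactly what makes the scalar $2\,i\,(k-k')$ nonzero for $k\neq k'$ and lets us cancel it. This lemma is the technical engine that lets $\hlincon$ be split into the mutually commuting blocks $\bigoplus_{k=0}^{n}\hlincon^{k}$ with $\hlincon^{k}$ supported on $F^{k}$, as used in the proof of \cref{thm:lincon_decomp}.
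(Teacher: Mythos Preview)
Your proof is correct and follows essentially the same approach as the paper: both arguments use the spectral decomposition $Z^{+} = i\sum_{k}(n-2\,k)\hat{F}^{k}$ and the pairwise distinctness of the eigenvalues to force the off-diagonal blocks $\hat{F}^{k}H\hat{F}^{k'}$ to vanish. The only cosmetic difference is that you sandwich $\com{H,Z^{+}}$ between projectors to extract each block, whereas the paper computes $\com{\hat{F}^{k}H\hat{F}^{\ell},Z^{+}}$ directly; the content is identical.
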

        
                \begin{proof}
                Suppose there exists $ \hat{F}^{k} H \hat{F}^{\ell} \neq 0 $ for $k \neq \ell$, then 
                \begin{align}
                \com{ \hat{F}^{k} H \hat{F}^{\ell}, Z^{+} } &= \sum_{j} (n - 2 j )\com{ \hat{F}^{k} H \hat{F}^{\ell} , \hat{F}^{j} } \\
                &= \left((n - 2k) - (n-2\ell)\right) \hat{F}^{k} H \hat{F}^{\ell} \\ 
                &= 2(\ell-k) \hat{F}^{k} H \hat{F}^{\ell} \\ 
                &\neq 0.
                \end{align}
        
                By the same logic if no such pair $k, \ell$ exists, $ H $ commutes, proving both directions.
                \end{proof}

                Consider the decomposition of $ \hlincon $ over the eigenspace of $Z^{+}$. Let 
                \begin{align}\label{eq:hlincon_k}
                \hlincon^{k} := \text{proj}_{F^{k}} \left( \hlincon \right) = \left\{ i H : i \hat{F}^{k} H \hat{F}^{k} = i H, iH \in \mathfrak{u}(2^n) \right\}, 
                \end{align} 
                be the Lie subalgebra of the skew-Hermitian over the subspace $\text{span}\left( F^{k} \right)$. This is a proper Lie algebra since skew-Hermiticity is preserved and the space is closed under commutation. Note that since $ \hat{F}^{k} H \hat{F}^{k} = H $ and $ \hat{F}^{k} \hat{F}^{k} = \hat{F}^{k} $ we also have $ \hat{F}^{k} H = \hat{F}^{k} \hat{F}^{k} H \hat{F}^{k} = H $ and $ H \hat{F}^{k} = \hat{F}^{k} H \hat{F}^{k} \hat{F}^{k} = H $. To see that the space is closed, we have for $ A, B \in \hlincon^{k} $:
                \begin{align}
                \com{ A, B } &= A B - B A \\ 
                &= \hat{F}^{k} A B \hat{F}^{k} - \hat{F}^{k} B A \hat{F}^{k} \\ 
                &= \hat{F}^{k} \com{ A,B } \hat{F}^{k},
                \end{align}
                and so $[A,B] \in \hlincon^{k}$ as well. Moreover, we define the \textit{traceless} skew-Hermitian operators: 
                \begin{align}\label{eq:glincon_k}
                \glincon^k = \left\{ g : g \in \hlincon^{k}, \text{Tr}(g)=0 \right\},
                \end{align}
                and so we have the decomposition:
                \begin{align}\label{eq:hlincon_k_decomp}
                \hlincon^{k} = \text{span}\left(\{ i \hat{F}^{k} \}\right) \oplus \glincon^k,
                \end{align} 
                since the projection of the identity operator into the space is $ \hat{F}^{k} I \hat{F}^{k} = \hat{F}^{k} \hat{F}^{k} = \hat{F}^{k} $. By dimensionality, we have:
                \begin{align}
                \hlincon^{k} \cong \u{1} \oplus \su{\binom{n}{k}}.
                \end{align}
                
                A fundamental Lie algebra utilized in subsequent sections is that of traceless skew-Hermitian operators up to relative phasing between the feasible subspaces (the reduced centralizer of $Z^{+}$):\begin{align}\label{eq:redglincon}
                    \redglincon &:= \bigoplus_{k=1}^{n-1} \glincon^{k} \\ 
                    &\cong \bigoplus_{k=1}^{n-1} \su{\binom{n}{k}}, 
                \end{align}
                
                since $\glincon^{0} = \glincon^{n} = \{0\} $ (the empty span). Note that the Lie algebra over these feasible subspaces are mutually commutative: $ \com{ \hlincon^{k}, \hlincon^{j} } = \com{ \glincon^{k}, \glincon^{j} } = \{ 0 \} $ if $j \neq k$.
        
                \begin{suppfigure}[!t]
                \centering 
                \includegraphics[width=0.5\textwidth]{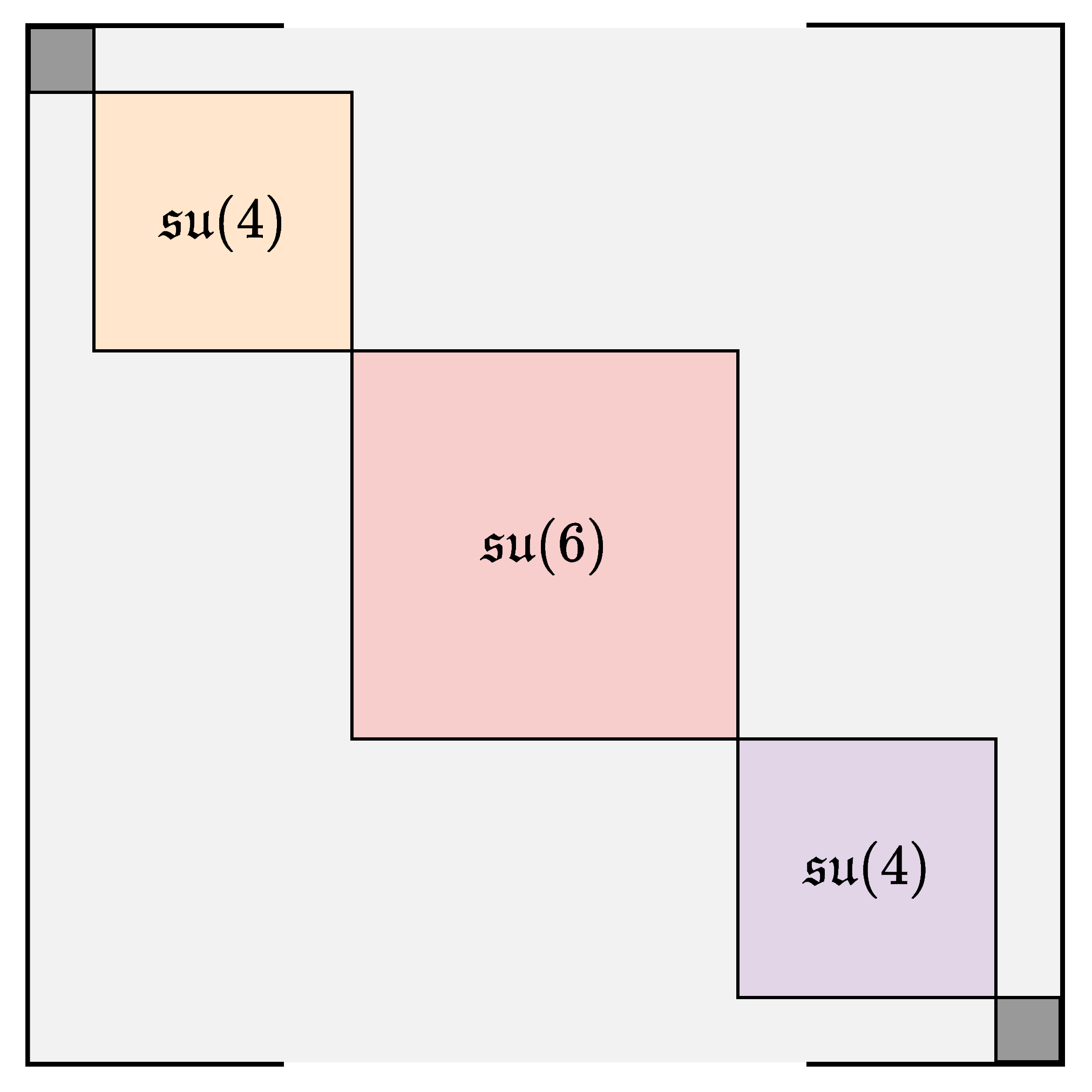}
                \caption{\textbf{Lie Algebra of Cardinality Invariance.} The space of traceless skew-Hermitian operators that commute with $ Z^{+} = i\sum_{j=1}^{n} Z_{j} $ can be depicted as a sum of Lie algebras over each Hamming weight preserving space. Here we depict the space in the case $ i (Z_{1} + Z_{2} + Z_{3} + Z_{4}) = 4 \, i \, \hat{F}^{0} + 2 \, i \, \hat{F}^{1} - 2 \, i \, \hat{F}^{3} - 4 \, i \, \hat{F}^{4} $ (defined in \cref{eq:feasible_proj}) that leads to $ \redglincon(4) = \su{4} \oplus \su{6} \oplus \su{4} $ (defined in \cref{eq:redglincon}) with $4 = \binom{4}{1}, 6 = \binom{4}{2}, 4 = \binom{4}{3}$.  
                }
                \label{fig:glincon}
                \end{suppfigure}
        
                The relationship between these subalgebras over each feasible space and the centralizers $\hlincon \subseteq \u{2^{n}}$ and $ \glincon \subseteq \su{2^{n}} $ is captured by \cref{thm:lincon_decomp} stated in the main text.

                \begin{proof}[Proof \cref{thm:lincon_decomp}]
                It is clear that $ \hlincon^{k} \leq \hlincon $, since $ \hat{F}^{k} H \hat{F}^{k} = H $ implies $ [ H, Z^{+} ] = 0 $, and so suppose there is an operator $ i M \in \hlincon $ not in $\bigoplus_{k=0}^{n} \hlincon^{k} $. By \cref{lem:eigcommute}, $ i M = \sum_{k=0}^{n} i M^{k} $ where $ M^{k} = \hat{F}^{k} M \hat{F}^{k} $ and so each $ iM^k \in \hlincon^{k} $, which implies $i M \in \bigoplus_{k=0}^{n} \hlincon^{k}$. Then by \cref{eq:hlincon_k_decomp} and \cref{eq:redglincon}: 
                \begin{align}
                \hlincon &= \bigoplus_{k=0}^{n} \hlincon^{k} \\ 
                &= \bigoplus_{k=0}^{n} \text{span}\left(\left\{ i \hat{F}^{k} \right\}\right) \oplus \glincon^{k} \\  
                &= \text{span}\left(\left\{ i \hat{F}^k \right\}_{k=0}^{n} \right) \oplus \redglincon \\ 
                &\cong \u{1}^{\oplus n} \oplus \bigoplus_{k=1}^{n-1} \su{\binom{n}{k}}
                \end{align}
        
                This completes the proof for \cref{eq:cong_hlincon}. Since $ \sum_{k=0}^{n} \hat{F}^{k} = I $ and $\text{Tr}(\hat{F}^k) = \binom{n}{k} $, we have:
                \begin{align} 
                \text{span}\left(\Big\{ i \hat{F}^{k} \Big\}_{k=0}^{n} \right) \cap \text{span}^{\perp} \left(\Bigg\{ i \sum_{k=0}^{n} \hat{F}^{k} \Bigg\}\right) = \text{span}\left(\Bigg\{ i \binom{n}{k}^{-1} \hat{F}^{k} - i \binom{n}{k+1}^{-1} \hat{F}^{k+1} \Bigg\}_{k=0}^{n-1} \right),
                \end{align}
                which leads to:
                \begin{align}
                \glincon &= \text{span}\left(\Bigg\{ i \binom{n}{k}^{-1} \hat{F}^{k} - i \binom{n}{k+1}^{-1} \hat{F}^{k+1} \Bigg\}_{k=0}^{n-1}\right) \oplus \redglincon \\ 
                &\cong \u{1}^{\oplus n-1} \oplus \bigoplus_{k=1}^{n-1} \su{\binom{n}{k}} 
                \end{align} 
        
                This completes the proof for \cref{eq:cong_glincon}. 
        
                \end{proof}
        
                The dimension of of the reduced Lie algebra $\redglincon$ is given by: 
                \begin{align}
                \text{dim}\left( \redglincon \right) &= \sum_{k=1}^{n-1} \text{dim}\left( \su{\binom{n}{k}} \right) \\ 
                &= \sum_{k=1}^{n-1} \binom{n}{k}^2 - 1 \\ 
                &= \binom{2 \, n}{n} - n - 1 \label{eq:exact_dims_ker_ad_Zplus} \\ 
                &= 4^{n - \Theta\left( \log(n) \right)}. \label{eq:bound_dims_ker_ad_Zplus}
                \end{align}
        
                In particular, $\text{dim}(\glincon) = \text{dim}(\redglincon) + n - 1 = \binom{2\,n}{n} - 2 $ is an upper bound on any Lie algebra $ \mfg \subseteq \mathfrak{su}(2^n) $ that commutes with $Z^{+}$: if $ \text{ad}_{Z^{+}}(g) = 0 $ for all $ g \in \mfg$, then $ \mfg \leq \glincon$ and so $ \text{dim}(\mfg) \leq \text{dim}(\glincon) $. \cref{fig:glincon} depicts the decomposition of $\redglincon$ for $n=4$.
        
                Let us connect this back to QAOA and prelude the subsequent section. The collection of phase-separating operators $ \mathcal{N} \subseteq \text{span}\left(\{ I_2, Z \}^{\otimes n} \setminus I \right) $ for QAOA commute trivially with $ Z^{+} $, such as $\Gz{n}$ and $\Gzz{n}$. The collection of mixers $ \mathcal{M} \subseteq \su{2^n} \cup \text{span}^{\perp}\left( \{I, Z\}^{\otimes n} \right) $ are selected such that $ [ M, Z^{+} ] = 0 $ for all $ M \in \mathcal{M} $ (see \cref{fig:flowchart}). Then the Lie algebra of QAOA with generators $\mathcal{M}, \mathcal{N}$ is a Lie subalgebra: $\lie{i \mathcal{M}, i \mathcal{N}} \leq \glincon $. This is true for both the polynomial sized DLAs considered in previous sections and the exponential sized DLAs considered in this work. The reduced Lie algebra $ \redglincon $ will play a central role in understanding $ \gxykz{}, \gxyczzz{}$.

            \subsubsection{Basis Elements}\label{subsubsec:exp_basis_els}
                We need to generalize the $P_{j,k}$ from the previous section to allow for arbitrary paths inside of a clique. Fix some $1 \leq j < k \leq n$ and let $ \sigma_{jk} \in \{0,1\}^{n-2} $. \begin{equation}\label{def:Psjk_simple}
                    P_{\sigma_{jk}} := 
                    \begin{cases}
                        XY_{j,k} \prod\limits_{\ell \neq j,k}Z^{\sigma_{jk}(\ell)}_\ell, &\abs{\sigma_{jk}}_1 \text{ odd}\\
                        YX_{j,k} \prod\limits_{\ell \neq j,k}Z^{\sigma_{jk}(\ell)}_\ell, &\abs{\sigma_{jk}}_1 \text{ even}\\
                    \end{cases}
                \end{equation}
                \noindent For example, setting $\sigma_{jk}(\ell) = 1$ when $\ell \in [j,k-1]$ and zero otherwise results in the original $P_{j,k}$ operator. Then $P_{\sigma_{jk}} $ can be constructed through nested $XY$ terms on a path beginning at node $j$ and going to each node $\ell$ where $\sigma_{jk}(\ell) = 1$ and then ending at node $k$. 

                Moreover, we can construct terms:
                \begin{equation}\label{def:Pmsjk_simple}
                    P_{\mu,\sigma_{jk}} := 
                    \begin{cases}
                        XY_{j,k} \prod\limits_{(p,q) \in \mu} YX_{p,q} \prod\limits_{\ell \notin V_{\mu} \cup \{j,k\}} Z^{\sigma_{jk}(\ell)}_\ell, & \abs{\mu} / 2 + \abs{\sigma_{jk}}_1 \text{ odd}\\
                        YX_{j,k} \prod\limits_{(p,q) \in \mu} YX_{p,q} \prod\limits_{\ell \notin V_{\mu} \cup \{j,k\}} Z^{\sigma_{jk}(\ell)}_\ell, & \abs{\mu} / 2 + \abs{\sigma_{jk}}_1 \text{ even}\\
                    \end{cases}
                \end{equation}
                
                \noindent where $\mu \subseteq \{ (p,q) : p,q \in [n] \setminus \{j,k\} , p \neq q \} $ is a set of ordered pairs, $V_{\mu} = \{ p : (p,q) \in \mu \} \cup \{ q : (p,q) \in \mu \} $ is the set of all individual nodes of each pair, and the pairs are disjoint such that $ 2\,|\mu| = |V_{\mu}| $.
        
                The construction follows from an appropriate $P_{\sigma_{jk}'}$ in $\gxyk{n}$ and applying $ XY_{p,q} $ on the appropriate slot for each $(p,q) \in \mu$ where $ \sigma_{jk}'(q) = 0 $ (such that we have identity on this qubit) and $ \sigma_{jk}'(p) = 1 $ (such that we have $ Z $ on this qubit).
            
            \subsubsection{Lie Algebra Decompositions}\label{subsubsec:exp_basis_decomp}
                The following are the conjectured semisimple Lie algebra decompositions, with numerically verified isomorphism for small $n$. Once again we use superscript E and O to denote even and odd $n$ cases. 

                \begin{enumerate}
                    \item $ \gxyke{n} \cong \gxykdecompeven $
                    \item $ \gxyko{n} \cong \gxykdecompodd $
                    \item $ \gxykz{n} \cong \u{1} \oplus \bigoplus_{k=1}^{n-1} \su{\binom{n}{k}}  $
                    \item $ \gxyczzz{n} \cong \u{1}^{\oplus 2}  \oplus  \bigoplus_{k=1}^{n-1} \su{\binom{n}{k}} $
                \end{enumerate}

                All of these DLAs have dimension $ 4^{n - \Theta\left( \log(n) \right)} $. These decompositions are due to a fundamental relationship to the Lie algebras described in \cref{app:glincon} as discussed in \cref{sec:appendix_DLAs_calcs}, we conjecture (and verify for small $n$) that $\gxykz{} = \text{span}\left( Z^{+} \right) \oplus \redglincon $ and $\gxyczzz{} = \text{span}\left( Z^{+} \right) \oplus \text{span}\left( ZZ^{+} \right) \oplus \redglincon $, where $ ZZ^{+} = i \sum_{j<k} Z_{j} Z_{k} $. 
        
    \section{Derivation of Lie Algebras}\label{sec:appendix_DLAs_calcs}
        In this section, we formally prove the Lie algebraic decompositions given in the Results section, building on the description given in \cref{sec:appendix_liealgs}. In \cref{subsec:dla_xy_path,subsec:dla_xy_cycle,subsec:dla_xy_z_path,subsec:dla_xy_z_cycle}, we provide explicit calculations for the polynomial-sized DLAs. In \cref{subsec:exp_dlas}, we handle the exponentially-large DLA calculations and provide motivation for the conjectures.

        \subsection{Lie Algebra of the $XY$ Path}\label{subsec:dla_xy_path}
            \begin{proposition}\label{prop:XY_DLA_path}
                $\gxyp{n} \cong \so{n}$.
            \end{proposition}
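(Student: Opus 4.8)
The plan is to exhibit an explicit spanning set for $\gxyp{n}$ and then check it obeys the defining relations of $\so{n}$. The candidate basis is the family $P_{j,k}$ for $1 \le j < k \le n$, with $P_{j,j+1} := XY_{j,j+1}$ and, inductively, $P_{j,k} := \com{iXY_{j,j+1}, P_{j+1,k}}$ (up to the conventional factor of $i$ and a nonzero real normalization, which I drop throughout). First I would establish the closed Pauli form of $P_{j,k}$ by induction on $k-j$, using the two-qubit identities $\com{XY, Z\otimes I} = -\com{XY, I\otimes Z} = 2i\,YX$ and $\com{YX, I\otimes Z} = -\com{YX, Z\otimes I} = 2i\,XY$ from \cref{subsec:defs_paulis_XY}: commuting $iXY_{j,j+1}$ into the leading qubit of $P_{j+1,k}$ turns the identity on qubit $j+1$ into a $Z$ (so interior qubits carry $Z$-strings) and toggles the pattern between $X_jZ_{[j+1:k-1]}X_k + Y_jZ_{[j+1:k-1]}Y_k$ and $X_jZ_{[j+1:k-1]}Y_k - Y_jZ_{[j+1:k-1]}X_k$ according to the parity of $k-j$, matching the definition in \cref{subsubsec:poly_basis_els}. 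In particular each $P_{j,k}$ has support exactly $\{j, j+1, \dots, k\}$, so the $\binom{n}{2}$ operators $\{iP_{j,k}\}_{j<k}$ are linearly independent.

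Next I would show that $V := \Span{\{iP_{j,k}\}_{j<k}}$ is $\mathrm{ad}$-closed with respect to every generator $iXY_{m,m+1}$, by case analysis on how the edge $(m,m+1)$ sits against the interval $[j,k]$: (i) if $\{m,m+1\}$ is disjoint from $[j,k]$ the bracket vanishes; (ii) if $\{m,m+1\} \subseteq [j+1,k-1]$ then $P_{j,k}$ restricts to $Z_m Z_{m+1}$ there and $\com{XY, Z\otimes Z} = 0$, so the bracket again vanishes; (iii) if $m = j-1$ or $m = k$ the path lengthens and the bracket is $\pm P_{j-1,k}$ or $\pm P_{j,k+1}$; (iv) if one end of the edge is an endpoint of $[j,k]$ and the other is interior (e.g.\ $m=j$ with $j+1 < k$) the $XY$--$Z$ identities collapse the bracket to $\pm P_{j+1,k}$ (and symmetrically $\pm P_{j,k-1}$), with the degenerate subcase $\{m,m+1\} = \{j,k\}$ giving $\com{XY_{j,k}, XY_{j,k}} = 0$. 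Since $V$ contains all generators $iXY_{j,j+1} = iP_{j,j+1}$ and is $\mathrm{ad}$-closed under them, $V \supseteq \gxyp{n}$; since every $P_{j,k} \in \gxyp{n}$ by construction, we conclude $\gxyp{n} = V$, of dimension $\binom{n}{2}$.

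Finally I would identify $V$ with $\so{n}$. Computing the remaining brackets $\com{iP_{j,k}, iP_{m,\ell}}$: when $\{j,k\} \cap \{m,\ell\}$ is empty (even for overlapping intervals) or equals both, the bracket vanishes; when they share exactly one index it is $\pm$ a single $P$ term, reproducing \cref{eq:SO1}--\cref{eq:SO3} up to signs and normalizations that can be absorbed by rescaling basis elements, so $iP_{j,k} \mapsto a_{j,k}$ is a Lie algebra isomorphism and $\gxyp{n} \cong \so{n}$. I expect the main obstacle to be carrying the parity-dependent sign convention in $P_{j,k}$ through every overlap case uniformly — making the claim ``the only operators generated are the $P_{j,k}$'' genuinely airtight rather than merely plausible, and pinning down the exact sign in each bracket so that (SO1)--(SO3) and the vanishing of all other brackets are verified; the rest is routine bookkeeping with the two-qubit Pauli identities collected in \cref{subsec:defs_paulis_XY}.
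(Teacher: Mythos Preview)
Your proposal is correct and follows essentially the same route as the paper: define the $P_{j,k}$ via nested commutators of neighboring $XY$ terms, establish their parity-dependent Pauli form by induction, prove the span is closed (the paper does this by directly verifying SO1--SO3 on all $P_{j,k}$, you do it first via ad-closure under the generators and then check the SO relations, which is a harmless reordering), and conclude via the isomorphism $P_{j,k}\mapsto a_{j,k}$. The only bookkeeping the paper adds beyond your sketch is an explicit coefficient $i^{c_{jk}}$ with $c_{jk}=2\lfloor (k-j)/2\rfloor+1$ to track the signs uniformly, which is exactly the mechanism you anticipate needing in your final paragraph.
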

            
            This proposition is proven in the following steps.
            \begin{enumerate}
                \item Show that the only terms generated in the DLA process are the $P_{j,k}$ as defined below.
                \item Show that the $\bigB{P_{j,k}}_{j,k}$ is closed under commutation and so is a valid Lie algebra it self, showing that $\gxyp{n} = \Span{\bigB{P_{j,k}}_{j<k}}$. In doing so, we show that they satisfy the skew-symmetric relations \cref{eq:SO1,eq:SO2,eq:SO3}. This shows that $\text{span}\left(\bigB{P_{j,k}}_{j,k} \right)$ is isomorphic to $\so{n}$.
            \end{enumerate}        
            
            \begin{definition}[$P_{jk}$]
                 For any $1 \leq j < k \leq n$, define 
                \begin{align}
                    P_{j,k} &:= \frac{i^{c_{jk}}}{2} \begin{cases}\label{def_Pjk}
                        \AB{XX}{jk} + \AB{YY}{jk} &(k - j \text{ odd}) \\ 
                        \AB{XY}{jk} - \AB{YX}{jk} &(k - j \text{ even}) \\
                    \end{cases} \\
                    c_{jk} &:= 2\floor{\frac{k - j}{2}} + 1 \label{cjk_def}
                \end{align}
            \end{definition}
            
            \begin{lemma}\label{lem:nested_XY_Pij}
               The terms generated in the $\gxyp{n}$ DLA process given by
                \begin{equation}\label{XY_ring_nested_action}
                    ad_{iXY_{j,j+1}} \cdot ad_{iXY_{j+1,j+2}} \cdots ad_{iXY_{k-1,k}} = P_{j,k}
                \end{equation}    
            \end{lemma}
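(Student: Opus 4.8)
The plan is to prove \cref{lem:nested_XY_Pij} by induction on the chain length $k-j$, building the nested commutator up from its innermost (rightmost) factor. The base case is $k=j+1$: here the ``nested commutator'' is the single term $iXY_{j,j+1}$, and since $c_{j,j+1}=2\floor{1/2}+1=1$ the definition gives $P_{j,j+1}=\tfrac{i}{2}\bigP{\AB{XX}{j,j+1}+\AB{YY}{j,j+1}}=iXY_{j,j+1}$, so the claim holds.

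For the inductive step, assume the chain from $j+1$ to $k$ equals $P_{j+1,k}$, so that the chain from $j$ to $k$ equals $\com{iXY_{j,j+1},P_{j+1,k}}$. The key structural point is that $P_{j+1,k}$ always carries a genuine $X_{j+1}$ or $Y_{j+1}$ on its lowest qubit (never $Z$ or $I$ there), with the remaining factors supported on qubits $\geq j+2$. Hence the only commutation facts needed are the two ``pushing'' identities
\begin{align}
\com{XY_{j,j+1},\, X_{j+1} R} &= -i\, Y_j Z_{j+1} R,\\
\com{XY_{j,j+1},\, Y_{j+1} R} &= i\, X_j Z_{j+1} R,
\end{align}
valid for any $R$ supported on qubits $\geq j+2$; both follow immediately from the single- and two-qubit Pauli brackets of \cref{subsec:defs_paulis_XY} and \cref{sec:com_calcs} together with the fact that $XY_{j,j+1}$ and $R$ commute. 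Applying these termwise: if $k-(j+1)$ is odd, $P_{j+1,k}\propto \AB{XX}{j+1,k}+\AB{YY}{j+1,k}$ is sent to a multiple of $\AB{XY}{j,k}-\AB{YX}{j,k}$; if $k-(j+1)$ is even, $P_{j+1,k}\propto \AB{XY}{j+1,k}-\AB{YX}{j+1,k}$ is sent to a multiple of $\AB{XX}{j,k}+\AB{YY}{j,k}$ — exactly the two cases in the definition of $P_{j,k}$, since $k-j$ has parity opposite to $k-(j+1)$, and the stray $Z_{j+1}$ extends the $Z$-string from qubits $[j+2:k-1]$ to $[j+1:k-1]$.

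It remains only to check the scalar prefactor. Collecting the $i$ from $\mathrm{ad}_{iXY_{j,j+1}}$ and the $\pm i$ from the pushing identities, one gets $\com{iXY_{j,j+1},P_{j+1,k}}=\tfrac{i^{c_{j+1,k}+2}}{2}\bigP{\AB{XY}{j,k}-\AB{YX}{j,k}}$ when $k-j$ is even and $\tfrac{i^{c_{j+1,k}}}{2}\bigP{\AB{XX}{j,k}+\AB{YY}{j,k}}$ when $k-j$ is odd. A short computation from $c_{j,k}=2\floor{(k-j)/2}+1$ shows $c_{j,k}\equiv c_{j+1,k}+2$ in the even case and $c_{j,k}\equiv c_{j+1,k}$ in the odd case, both modulo $4$, so the prefactor is exactly $i^{c_{j,k}}/2$ and the chain equals $P_{j,k}$, closing the induction.

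There is no conceptual obstacle here; the only care needed is the bookkeeping of powers of $i$ through the $\mathrm{ad}_{iXY}$ normalization and the alternating $X/Y\leftrightarrow Y/X$ structure, which is precisely what the $i^{c_{jk}}$ prefactor in \cref{def_Pjk} is designed to absorb, plus the one-line observation that $P_{j+1,k}$ presents a true $X$ or $Y$ (not $Z$ or $I$) on qubit $j+1$ so that the pushing identities apply. (The separate claims that the $P_{j,k}$ are the \emph{only} terms produced by the path DLA process and that they satisfy the $\so{n}$ relations are not part of this lemma and are handled elsewhere.)
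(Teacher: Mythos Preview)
Your proposal is correct and follows essentially the same approach as the paper: both argue by induction on $k-j$, extending the chain one step to the left and splitting into the two parity cases, with the scalar bookkeeping handled via the $c_{jk}$ identities of \cref{lem:c_jk}. Your packaging of the inductive step through the two ``pushing identities'' $\com{XY_{j,j+1},X_{j+1}R}=-iY_jZ_{j+1}R$ and $\com{XY_{j,j+1},Y_{j+1}R}=iX_jZ_{j+1}R$ is a slightly cleaner abstraction of the same four-term commutator expansion the paper writes out explicitly, but the underlying computation and the prefactor check are identical.
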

            
            Before we prove this statement, we give the following fact about the $c_{jk}$ terms. This is helpful when working with the $P_{j,k}$ terms.
            
            \begin{lemma}[$c_{jk}$ values, rephrased]\label{lem:c_jk}
                \begin{itemize}
                    \item[]
                    \item If $k - j$ is even, then $c_{jk} = k - j +1$. Also, $c_{j-1,k} = c_{jk}$. 
                    \item If $k - j$ is odd, then $c_{jk} = k - j$. Also, $c_{j-1,k} = c_{jk} +2$.
                \end{itemize}
            \end{lemma}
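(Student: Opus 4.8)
The plan is a direct case analysis on the parity of $k-j$, obtained by substituting into the definition $c_{jk} = 2\floor{\frac{k-j}{2}} + 1$ and tracking how the floor changes when $j$ is decremented by one. I would set $m := k-j$ so that $c_{jk} = 2\floor{m/2} + 1$, and observe that $c_{j-1,k} = 2\floor{(m+1)/2} + 1$ because replacing $j$ by $j-1$ increases the gap by exactly one (and flips its parity).

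For the first bullet, suppose $m$ is even and write $m = 2t$. Then $\floor{m/2} = t$, so $c_{jk} = 2t+1 = m+1 = k-j+1$; and since $m+1 = 2t+1$ is odd, $\floor{(m+1)/2} = t$, giving $c_{j-1,k} = 2t+1 = c_{jk}$. For the second bullet, suppose $m$ is odd and write $m = 2t+1$. Then $\floor{m/2} = t$, so $c_{jk} = 2t+1 = m = k-j$; and since $m+1 = 2t+2$ is even, $\floor{(m+1)/2} = t+1$, giving $c_{j-1,k} = 2(t+1)+1 = 2t+3 = m+2 = c_{jk}+2$. These two cases exhaust all possibilities, so the lemma follows.

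There is no genuine obstacle here: the statement is a bookkeeping identity for the floor function, and the only points requiring care are the direction of the shift (decrementing $j$ raises $k-j$ by one, hence flips its parity) and keeping the constant $+1$ offset in the definition of $c_{jk}$ straight. The lemma is isolated because precisely these two facts — the closed form of $c_{jk}$ within each parity class and its behavior under $j \mapsto j-1$ — are invoked repeatedly when checking that the normalizing coefficients of the $P_{j,k}$ terms propagate correctly through the nested commutators in the proof of \cref{lem:nested_XY_Pij}.
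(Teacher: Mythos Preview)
Your proof is correct; the paper itself omits the proof of this lemma, calling it ``a simple observation,'' and your direct parity case analysis on $m=k-j$ is exactly the intended elementary verification.
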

    
            \noindent The previous lemma gives the exact values of $c_{jk}$ but since they are used as exponents on $i$, we primarily care about their value modulo 4. This can be seen in the following proof and beyond. The proof of this lemma is omitted since it is a simple observation.
                
            \begin{proof}[Proof (\cref{lem:nested_XY_Pij})]
            
                We prove by inducting on $k-j$.
                
                As a base case, we have $iXY_{j,j+1} = \frac{i}{2}(\AB{XX}{j,j+1} + \AB{YY}{j,j+1}) = P_{j,j+1}$ which matches \cref{def_Pjk} since $(j+1) - j = 1$ is odd and $c_{j,j+1} = 1$. 
                
                Now assume by induction on $k-j$ for $j > 1$ that $ad_{iXY_{j,j+1}} \cdots ad_{iXY_{k-1,k}} = P_{j,k}$. If $k - j$ is even, one has
                \begin{align}
                ad_{iXY_{j-1,j}} \cdots ad_{iXY_{k-1,k}}  &= \left[iXY_{j-1,j},\ ad_{iXY_{j,j+1}} \cdots ad_{iXY_{k-1,k}} \right] \\
                     &\stackrel{IH}{=} \left[iXY_{j-1,j},P_{j,k} \right]\\
                     &= \frac{i^{1 + c_{jk}}}{4}[X_{j-1}X_{j} + Y_{j-1}Y_{j},\ \AB{XY}{jk} - \AB{YX}{jk}] \\[5pt]
                    &=  \frac{i^{1 + c_{jk}}}{4}\left(\begin{aligned}
                        &[X_{j-1}X_{j} ,\ \AB{XY}{jk} ] - [X_{j-1}X_{j},\ \AB{YX}{jk}] \\
                        &+[Y_{j-1}Y_{j},\ \AB{XY}{jk} ] - [Y_{j-1}Y_{j},\  \AB{YX}{jk}]
                    \end{aligned} \right) \\[5pt]
                    &=  \frac{i^{1 + c_{jk}}}{4}\left(\begin{aligned}
                        &X_{j-1}[X,X]_j Z_{[j+1:k-1]}Y_k -  X_{j-1}[X,Y]_j Z_{[j+1:k-1]}X_k \\
                        &+Y_{j-1}[Y,X]_j Z_{[j+1:k-1]}Y_k - Y_{j-1}[Y,Y]_j Z_{[j+1:k-1]}X_k  
                    \end{aligned} \right) \\[5pt] 
                    &=  \frac{i^{1 + c_{jk}}}{4}\left(\begin{aligned}
                        &0 - 2iX_{j-1}Z_j Z_{[j+1:k-1]}X_k \\
                        &-2iZ Y_{j-1}Z_j Z_{[j+1:k-1]}Y_k - 0
                    \end{aligned} \right) \\[5pt] 
                    &=-\frac{i^{2 + c_{jk}}}{2} (\AB{XX}{j-1,k} + \AB{YY}{j-1,k}) \\
                    &=\frac{i^{c_{jk}}}{2} (\AB{XX}{j-1,k} + \AB{YY}{j-1,k}) \\
                    &=\frac{i^{c_{j-1,k}}}{2} (\AB{XX}{j-1,k} + \AB{YY}{j-1,k}) \label{eq1}
                \end{align}
            
                \noindent where we have used \cref{lem:c_jk} in the last line for $c_{j-1,k} = c_{jk}$ since $k - j$ is even. Moreover, \cref{eq1} agrees with \cref{def_Pjk} since $k - (j -1) = k - j + 1$ is necessarily odd.
            
                For $k - j$ odd:
                \begin{align}
                    ad_{iXY_{j-1,j}} \cdots ad_{iXY_{k-1,k}}  &\stackrel{IH}{=} \com{iXY_{j-1,j},P_{j,k} }\\
                     &= \frac{i^{1 + c_{jk}}}{4}[X_{j-1}X_{j} + Y_{j-1}Y_{j},\ \AB{XX}{jk} + \AB{YY}{jk}] \\[5pt]
                    &= \frac{i^{1 + c_{jk}}}{4}\left(\begin{aligned}
                        &[X_{j-1}X_{j} ,\ \AB{XX}{jk} ] + [X_{j-1}X_{j},\ \AB{YY}{jk}] \\
                        &+[Y_{j-1}Y_{j},\ \AB{XX}{jk} ] + [Y_{j-1}Y_{j},\  \AB{YY}{jk}]
                    \end{aligned} \right) \\[5pt]
                    &= \frac{i^{1 + c_{jk}}}{4}\left(\begin{aligned}
                        & X_{j-1}[X,X]_j Z_{[j+1:k-1]}X_k + X_{j-1}[X,Y]_j Z_{[j+1:k-1]}Y_k \\
                        &+Y_{j-1}[Y,X]_j Z_{[j+1:k-1]}X_k  + Y_{j-1}[Y,X]_j Z_{[j+1:k-1]}Y_k 
                    \end{aligned} \right) \\[5pt] 
                    &= \frac{i^{1 + c_{jk}}}{4}\left(\begin{aligned}
                        &2iX_{j-1}Z_j Z_{[j+1:k-1]}X_k - 0 \\
                        &0 + 2iZ Y_{j-1}Z_j Z_{[j+1:k-1]}Y_k
                    \end{aligned} \right) \\[5pt] 
                    &=\frac{i^{2 + c_{jk}}}{2}(\AB{XX}{j-1,k} + \AB{YY}{j-1,k}) \\
                    &=\frac{i^{c_{j-1,k}}}{2}P_{j-1,k}
                \end{align}
            
                \noindent where we once again used \cref{lem:c_jk} for $c_{j-1,k} = 2 + c_{jk}$ since $k - j$ is odd.
            
                In either case, we have shown that $ad_{iXY_{j-1,j}} \cdots ad_{iXY_{k-1,k}} = P_{j-1,k}$.
            \end{proof}

            We are now equipped to prove that the $P_{j,k}$ satisfy the skew-symmetric commutation relations \cref{eq:SO1,eq:SO2,eq:SO3}.
            
            \begin{proof}[Proof (\cref{prop:XY_DLA_path})]\label{pf:P_son}
                Due to the definition of $P_{j,k}$ being two cases with respect to the parity of $n$, for each SO relation, we need to show 4 cases: (i) when both terms are even, (ii) when both terms are odd, (iii) when the first is even and the second is odd, and (iv) when the first is odd and the second is even. We only include one case per commutation relationship for brevity; the others follow from almost identical calculation.
                
                \begin{enumerate}
                    \item[SO1:] Assume $k-j$ odd, $\ell - k$ even. We use \cref{lem:ABCD_jkl_coms} freely throughout these calculations.  
                        \begin{align}
                            \left[P_{j,k}, P_{k,\ell} \right] &=\frac{i^{c_{jk} + c_{k,\ell}}}{4} [\AB{XX}{jk} + \AB{YY}{jk}, \AB{XY}{k,\ell} - \AB{YX}{k,\ell}]\\[5pt]
                            &=\frac{i^{c_{jk} + c_{k,\ell}}}{4}\left( \begin{aligned}
                                &[\AB{XX}{jk}, \AB{XY}{k,\ell}] - [\AB{XX}{jk} , \AB{YX}{k,\ell}] \\
                                &+ [ \AB{YY}{jk}, \AB{XY}{k,\ell}] - [\AB{YY}{jk}, \AB{YX}{k,\ell}] 
                            \end{aligned} \right)\\[5pt]
                            &=\frac{i^{c_{jk} + c_{k,\ell}}}{4}\left( \begin{aligned}
                                &0 - X_j Z_{[j+1:k-1]} [X,Y]_k Z_{[k+1:\ell-1]} X_{\ell} \\
                                &+ Y_j Z_{[j+1:k-1]} [Y,X]_k Z_{[k+1:\ell-1]} Y_{\ell} - 0
                            \end{aligned} \right)\\[5pt]
                            &=\frac{i^{c_{jk} + c_{k,\ell}}}{4}\left( \begin{aligned}
                                &- X_j Z_{[j+1:k-1]} (2iZ)_k Z_{[k+1:\ell-1]} X_{\ell} \\
                                &+ Y_j Z_{[j+1:k-1]} (-2iZ)_k Z_{[k+1:\ell-1]} Y_{\ell} 
                            \end{aligned} \right)\\[5pt]
                            &=\frac{i^{3 + c_{jk} + c_{k,\ell}}}{2}(\AB{XX}{j,\ell} + \AB{YY}{j,\ell})
                        \end{align}
                    
                        \noindent Using \cref{lem:c_jk}: 
                        \begin{equation}
                            3 + c_{jk} + c_{k,\ell} = 3 + (k - j +1) + (\ell - k) = 4 + \ell - j \equiv \ell - j \bmod 4 = c_{j, \ell}
                        \end{equation}
                        
                        \noindent since $\ell - j = \overbrace{(\ell - k)}^{\text{even}} - \overbrace{(k - j)}^{\text{odd}}$ is odd. Also, since this is the exponent on $i$, all that matters is behavior $\mod 4$. Thus, $\left[P_{j,k}, P_{k,\ell} \right] = P_{j,\ell}$ as desired in SO1.
            
                    \item[SO2:] Assume $\ell-j$ odd, $k - j$ even. Then 
                        \begin{align}
                            \left[P_{j,\ell}, P_{j, k} \right] &= \frac{i^{c_{j,\ell} + c_{j, k}}}{4}[\AB{XX}{j,\ell} + \AB{YY}{j,\ell}, \AB{XY}{jk} - \AB{YX}{jk}]\\[5pt]
                            &=\frac{i^{c_{j,\ell} + c_{j, k}}}{4}\left( \begin{aligned}
                                &[\AB{XX}{j,\ell}, \AB{XY}{jk}] - [\AB{XX}{j,\ell} , \AB{YX}{jk}] \\
                                &+ [ \AB{YY}{j,\ell}, \AB{XY}{jk}] - [\AB{YY}{j,\ell}, \AB{YX}{jk}] 
                            \end{aligned} \right)\\[5pt]
                            &=\frac{i^{c_{j,\ell} + c_{j, k}}}{4} \bigP{
                            \begin{aligned}
                                &\com{X_jZ_k, X_jY_k}Z_{[k+1:\ell-1]}X_\ell - \com{X_jZ_k, Y_jX_k}Z_{[k+1:\ell-1]}X_\ell \\
                                &+\com{Y_jZ_k, X_jY_k}Z_{[k+1:\ell-1]}Y_\ell - \com{Y_jZ_k, Y_jX_k}Z_{[k+1:\ell-1]}Y_\ell
                            \end{aligned}
                            }\\[5pt]
                            &=\frac{i^{c_{j,\ell} + c_{j, k}}}{4} \bigP{
                            \begin{aligned}
                                &-2iX_kZ_{[k+1:\ell-1]}X_\ell - 0 \\
                                &+0 - 2iY_kZ_{[k+1:\ell-1]}Y_\ell
                            \end{aligned}
                            }\\[5pt]
                            &= \frac{i^{3 + c_{j,\ell} + c_{j, k}}}{2}(\AB{XX}{k,\ell} + \AB{YY}{k,\ell})
                        \end{align}
    
                        Observe that
                        \begin{align}
                            3 + c_{j,\ell} + c_{j, k} = 3 + (\ell-j) + (k - j + 1) &\equiv \ell + k - 2j \bmod 4 \\
                            &\equiv \ell - k \bmod 4
                        \end{align}
    
                        \noindent where we subtract $2(k - j)$ in the last line which we can do since $k-j$ is even (and thus $2(k-j) \equiv 0 \bmod 4$). Therefore, $\left[P_{j,\ell}, P_{j, k} \right] = P_{k,\ell}$ as desired in SO2.
            
                    \item[SO2:] Assume $\ell - k$ and $\ell - j$ are odd. Then
                    \begin{align}
                            \left[P_{k,\ell}, P_{j, \ell} \right] &= \frac{i^{c_{k,\ell} + c_{j, \ell}}}{4}[\AB{XX}{k,\ell} + \AB{YY}{k,\ell}, \AB{XX}{j,\ell} + \AB{YY}{j,\ell}]\\[5pt]
                            &=\frac{i^{c_{k,\ell} + c_{j, \ell}}}{4}\bigP{
                            \begin{aligned}
                                &[\AB{XX}{k,\ell}, \AB{XX}{j,\ell}] + [\AB{XX}{k,\ell} , \AB{YY}{j,\ell}] \\
                                &+[\AB{YY}{k,\ell}, \AB{XX}{j,\ell}] + [\AB{YY}{k,\ell}, \AB{YY}{j,\ell}]
                            \end{aligned}
                            }\\[5pt]
                            &=\frac{i^{c_{k,\ell} + c_{j, \ell}}}{4}\bigP{
                            \begin{aligned}
                                &X_jZ_{[j+1:k-1]}(XZ)_k\{X,X\}_k + 0\\
                                &+ 0 + Y_jZ_{[j+1:k-1]}(YZ)_k\{Y,Y\}_k
                            \end{aligned}
                            }\\[5pt]
                            &=\frac{i^{c_{k,\ell} + c_{j, \ell}}}{4}\bigP{X_jZ_{[j+1:k-1]}(-2iY)_k + Y_jZ_{[j+1:k-1]}(2iX)_k}\\[5pt]
                    \end{align}
                \end{enumerate}
            
                Thus, the mapping $P_{j,k} \mapsto a_{j,k}$ is a Lie algebra isomorphism from $\gxyp{n}$ to $\so{n}$.
            \end{proof}

        \subsection{Lie Algebra of the $XY$ Cycle}\label{subsec:dla_xy_cycle}
            The DLA generated by $\Gxyc{n}$ is different depending on the parity of $n$ so we must handle $\gxyce{n}$ and $\gxyco{n}$ separately.
            
            Define operators $\Popp{jk}$ which serve are generated in the cycle DLA that are not present in the path case by considering looping around the opposite direction of $P_{jk}$
            
            \begin{definition}[$\Popp{jk}$]
                For any $1 \leq j < k \leq n$
                \begin{equation}
                    \Popp{jk} :=\label{def_Pjk_minus} \frac{i^{\copp{jk}}}{2}\begin{cases}
                        \ABopp{XX}{jk} + \ABopp{YY}{jk} &(n - k + j \text{ odd})\\ 
                        \ABopp{XY}{jk} - \ABopp{YX}{jk} &(n - k + j \text{ even})\\
                    \end{cases}
                \end{equation}
                
                \noindent with corresponding coefficients as
                \begin{equation}
                    c_{jk}^- := 2\ceil*{\frac{n-k + j}{2}} -1 \label{def_cjk_minus}
                \end{equation}
            \end{definition}
            
            \noindent Note that these technically depend on $n$ but we drop this for simplicity. Similar to \cref{lem:c_jk}, the following fact about the $c_{jk}^-$ is helpful when doing calculations.
            \begin{lemma}\label{lem:c_jk_minus}
                \begin{itemize}
                    \item[]
                    \item If $n -k +j$ is even, then $c_{jk}^- = n-k + j -1$. Also, $c_{j-1,k}^- = c_{jk}^-$.
                    \item If $n -k + j$ is odd, then $c_{jk}^- = n-k + j$. Also, $c_{j-1,k}^- = c_{jk}^- - 2$.
                \end{itemize}
            \end{lemma}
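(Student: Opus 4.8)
The plan is to prove \cref{lem:c_jk_minus} by a direct case analysis on the parity of $m := n - k + j$, unwinding the ceiling in the definition $\copp{jk} = 2\ceil{\tfrac{n-k+j}{2}} - 1$. This is entirely parallel to the (omitted) proof of \cref{lem:c_jk}, where the analogous facts were extracted from the floor appearing in $c_{jk}$.

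First I would settle the value of $\copp{jk}$ itself. If $m$ is even, then $\ceil{m/2} = m/2$, so $\copp{jk} = 2(m/2) - 1 = m - 1 = n - k + j - 1$. If $m$ is odd, then $\ceil{m/2} = (m+1)/2$, so $\copp{jk} = (m+1) - 1 = m = n - k + j$. This yields the first clause in each bullet. Next, for the shifted index $j - 1$, observe that the relevant quantity becomes $m' := n - k + (j-1) = m - 1$, which has parity opposite to that of $m$; hence one may read off $\copp{j-1,k}$ from the case just proved with $m'$ in place of $m$. When $m$ is even, $m'$ is odd, so $\copp{j-1,k} = m' = m - 1 = \copp{jk}$; when $m$ is odd, $m'$ is even, so $\copp{j-1,k} = m' - 1 = m - 2 = \copp{jk} - 2$. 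This gives the second clause in each bullet.

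There is no real obstacle here: the only point requiring (minimal) care is noting that decrementing $j$ flips the parity of $n - k + j$ and therefore switches which branch of the formula applies to $\copp{j-1,k}$ relative to $\copp{jk}$. As with \cref{lem:c_jk}, since these quantities enter later computations only as exponents on $i$, what ultimately matters is their residue modulo $4$; the exact expressions above are recorded purely for bookkeeping in the commutation calculations that follow.
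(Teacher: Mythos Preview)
Your proof is correct. The paper does not give a proof of this lemma at all (it treats it, like \cref{lem:c_jk}, as a simple observation to be verified by the reader), and your direct parity case analysis on $m = n-k+j$ unwinding the ceiling is exactly the intended verification.
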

            
            \begin{lemma}\label{lem:nested_XY_opposite}
                For $1 \leq j < k \leq n$\begin{equation}\label{XY_ring_nested_opposite_action}
                    ad_{iXY_{j,j-1}} \cdot ad_{iXY_{j-1,j-2}} \cdots ad_{iXY_{k+2,k+1}} ad_{iXY_{k+1,k}} =  \Popp{jk}
                \end{equation}
            \end{lemma}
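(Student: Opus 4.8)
The plan is to prove \cref{lem:nested_XY_opposite} by exactly the induction that established \cref{lem:nested_XY_Pij}, now allowing the edge $(n,1)$ to lie on the path. I would induct on the length $n-k+j$ of the ``opposite path'' $j, j-1,\dots,1,n,n-1,\dots,k+1,k$, peeling off the outermost nested factor $iXY_{j,j-1}$. The base case is forced to be $j=1,\ k=n$ (the only $1\le j<k\le n$ with $n-k+j=1$), where the chain collapses to the single operator $iXY_{1,n}$; substituting into \cref{def_Pjk_minus} and \cref{def_cjk_minus} with $n-k+j=1$ gives $\copp{1,n}=1$ and, since both outer $Z$-strings are empty, $\Popp{1,n}=\frac{i}{2}\bigP{\ABopp{XX}{1,n}+\ABopp{YY}{1,n}}=\frac{i}{2}\bigP{X_1X_n+Y_1Y_n}=iXY_{1,n}$, as required.

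For the inductive step there are two sub-cases according to whether the path has yet crossed the cut between qubit $n$ and qubit $1$. If $j\ge 2$, the inner chain $\com{iXY_{j-1,j-2},\com{\dots,iXY_{k+1,k}}}$ equals $\Popp{j-1,k}$ by the inductive hypothesis, and it remains to check $\com{iXY_{j,j-1},\Popp{j-1,k}}=\Popp{j,k}$. This is the mirror image of the inductive step in the proof of \cref{prop:XY_DLA_path}: $XY_{j,j-1}$ overlaps $\Popp{j-1,k}$ only on qubit $j-1$, and expanding $\com{X_jX_{j-1}+Y_jY_{j-1},\,\cdot\,}$ with the two-qubit relations of \cref{sec:com_calcs} turns the $X$ (or $Y$) at position $j-1$ into a $Z$ and introduces an $X$ (or $Y$) at position $j$, building $\Popp{j,k}$; the new power of $i$ is read off from \cref{lem:c_jk_minus}, which records precisely that $\copp{j-1,k}$ is $\copp{j,k}$ or $\copp{j,k}-2$ according to the parity of $n-k+j$. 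If $j=1$ (and $k<n$), the inner chain $\com{iXY_{n,n-1},\com{\dots,iXY_{k+1,k}}}$ is a \emph{straight} path from $k$ to $n$ and hence, by \cref{lem:nested_XY_Pij}, equals $P_{k,n}$ up to the sign induced by the reversed ordering of the factors; one then evaluates the single commutator $\com{iXY_{1,n},P_{k,n}}$, where $XY_{1,n}$ meets $P_{k,n}$ only at qubit $n$, so the $X$ (or $Y$) there becomes a $Z$ while an $X$ (or $Y$) appears at qubit $1$, producing $\ABopp{XX}{1,k}$-- or $\ABopp{XY}{1,k}$--type support, i.e. $\Popp{1,k}$. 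This boundary-crossing commutator is the only genuinely new computation relative to the path case.

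As a sanity check one can also argue non-inductively: the opposite path from $j$ to $k$ is simply a path in the cycle graph supported on the complementary arc $[k:n]\cup[1:j]$, so the cyclic relabelling $r\mapsto (r+1-k)\bmod n$ carries it to the straight path $1,2,\dots,n-k+j+1$ and the chain to the one in \cref{lem:nested_XY_Pij}; translating the resulting $P$-operator back shows its $Z$-support is the in-between set $[1:j-1]\cup[k+1:n]$ and its endpoints are $j,k$, matching \cref{def_Pjk_minus}. I would nonetheless keep the explicit induction in the paper, since only it pins down the coefficient exactly.

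The main obstacle is the power-of-$i$ bookkeeping: as in \cref{prop:XY_DLA_path}, each sub-case of the inductive step splits further into the parities of $n-k+j$ (and of $n-k$ at the boundary step), and one must verify that \cref{def_cjk_minus} is exactly the convention making the stated equalities hold on the nose rather than merely up to sign. The operator content is routine — it is the same family of $\su{2}^{\otimes 2}$ commutators used throughout \cref{sec:com_calcs} — and the ``wrap-around'' of the $Z$-string through the $(n,1)$ edge, the only structural feature distinguishing this lemma from the path case, never changes the local commutator algebra; it merely changes how the $Z$-string is indexed, which is precisely why $\Popp{jk}$ is written with the two pieces $Z_{[1:j-1]}$ and $Z_{[k+1:n]}$.
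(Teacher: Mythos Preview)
Your proposal is correct and uses the same induction on $n-k+j$ as the paper, with the same base case $(1,n)$ and the same inductive step of peeling off the outermost $ad_{iXY_{j,j-1}}$; your $j\ge 2$ sub-case is exactly the calculation the paper carries out (phrased there as $(j,k)\to(j+1,k)$). You are in fact more thorough than the paper in isolating the $j=1$ boundary crossing, which the paper does not treat separately---just note that the reversed-order chain equaling $\pm P_{k,n}$ is not literally \cref{lem:nested_XY_Pij} and needs its own short induction, after which the sign folds into the power-of-$i$ bookkeeping you already identify as the only real obstacle.
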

            
            Some notes:
            \begin{enumerate}
                \item The subscripts in \cref{lem:nested_XY_opposite} is modulo $n$. In particular, we go down from $j$ to 1 then loop around to $n$, continuing down to $k$.
                \item The definition of $\Popp{jk}$ is hiding a bit of complexity. For example, we have a $\ABopp{XX}{jk} + \ABopp{YY}{jk}$ term whenever $n$ and $k-j$ are even \textit{or} when $n$ and $k-j$ are odd.
            \end{enumerate}
            
            \begin{proof}[(Proof of \cref{lem:nested_XY_opposite})]
                We prove this using induction on $n - k + j$. As a base case, consider when $j=1$ and $k = n$. Then we have
                \begin{equation}
                    XY_{1n}= \frac{i}{2}\left(X_1X_n + Y_1Y_n\right) = \frac{i}{2}\left(\ABopp{XX}{1n} + \ABopp{YY}{1n}\right)  = \Popp{1n}
                \end{equation}
            
                \noindent as desired.
            
                Next, assume that our equation holds for some $n - k + j$ for $j > 1$, ie, is true for the $(j,k)$ case. Consider the $(j+1,k)$ case. 
                
                Assume that $n - k + j$ is even.
                \begin{align}
                    ad_{iXY_{j+1,j}} ad_{iXY_{j,j-1}} \cdots ad_{iXY_{k+1,k}} &\stackrel{IH}{=} \frac{i^{\copp{jk}}}{2}ad_{iXY_{j+1,j}}\left(\ABopp{XY}{jk} - \ABopp{YX}{jk} \right) \\
                    &= \frac{i^{1 + \copp{jk}}}{4}\left[XY_{j+1,j}, \ABopp{XY}{jk} - \ABopp{YX}{jk} \right] \\[5pt]
                    &= \frac{i^{1 + \copp{jk}}}{4}\left(
                    \begin{aligned}
                        &[X_jX_{j+1},\ \ABopp{XY}{jk} ] - [X_{j}X_{j+1} ,\ \ABopp{YX}{jk}] \\
                        &+[Y_{j}Y_{j+1},\ \ABopp{XY}{jk} ] - [ Y_{j}Y_{j+1},\  \ABopp{YX}{jk}]
                    \end{aligned}
                    \right)\\[5pt]
                    &=\frac{i^{1 + \copp{jk}}}{4}Z_{[1:j-1]}\left(
                    \begin{aligned}
                        &\com{X,X}_jX_{j+1}Y_k - \com{X,Y}_jX_{j+1}X_k \\
                        &+\com{Y,X}_jY_{j+1}Y_k - \com{Y,Y}_jY_{j+1}X_k
                    \end{aligned}
                    \right)Z_{[k+1:n]}\\[5pt]
                    &=\frac{i^{1 + \copp{jk}}}{4}Z_{[1:j-1]}\left(
                    \begin{aligned}
                        & 0- 2iZ_jX_{j+1}X_k \\
                        &-2iZ_jY_{j+1}Y_k - 0
                    \end{aligned}
                    \right)Z_{[k+1:n]}\\[5pt]
                    &=\frac{i^{3 + \copp{jk}}}{2}\bigP{\ABopp{XX}{j+1,k} + \ABopp{YY}{j+1,k}}
                \end{align}

                Observe that 
                \begin{align}
                    3 + \copp{jk} = 3 + (n-k + j -1) = n -k + j + 2 = \copp{j+1,k}
                \end{align}
    
                \noindent since $n - (k+1) + j$ is odd. So, $ad_{iXY_{j+1,j}} ad_{iXY_{j,j-1}} \cdots ad_{iXY_{k+1,k}} = \Popp{j+1,k}$. 
            
                A similar calculation shows for $n - k + j$ odd.
            \end{proof}

            Next, we are interested in what happens when we nest commutations fully around the cycle, beginning and ending at the same node. In the DLA process of generating $\gxyp{n}$, the ``last'' element is given by
            \begin{equation}
                P_{1n} =  ad_{iXY_{12}} \left( P_{2,n} \right)
            \end{equation}
    
            \noindent Now that we have $XY_{n,1}$ in $\Gxyc{n}$, what happens when we take $ad_{X_{n1}}(P_{1n})$? We show that for even $n$, this is 0. That is, during the DLA process, once we wrap around the full (even) cycle, there are no additional terms. However, for odd $n$ odd, this is non-zero. This in turn leads to more complex DLA for the odd case. By shifting indices, we see that this behavior extends to $[iXY_{j,j+1}, [\dots [iXY_{j-2,j-1}, iXY_{j-1,j}]\dots]]$ for any $j \in [n]$. This leads to the following definition.
        
            \begin{definition}[$\Dopp{j,j+1;n}$]
                Let $Z_{\overline{j}}$ be the operator that applies a Pauli-Z to every qubit \textit{except} at qubit $j$ where it is just $I$. 
                \begin{equation}
                    \Dopp{j,j+1;n} := \frac{i^{d_n}}{2}(\Zbar{j} - \Zbar{j+1})
                \end{equation}
        
                \noindent for $d_n := 2\floor{\frac{n}{2}} + 1$
            \end{definition}
        
            \noindent Note that the subscript is modular $n$. We drop the $n$ in the subscript when it is fixed.  The exponent is used so that we have $+i$ for $n \equiv 0 \pmod 4$ or $n \equiv 1 \pmod 4$ and $-i$ otherwise. This is due to the fact that we are taking nested commutations of $XY$ terms which have a $i$ prefactor; thus, when we take an additional 2 commutations, we pick up a $i^2 = -1$ term. For example, see the difference between two different $\Dopp{1,2}$ for differing $n$ (ignoring 1/2):
            \begin{align}
                \Dopp{1,2;3} &= [iXY_{1,2}, [iXY_{2,3}, iXY_{3,1}]] \\
                \Dopp{1,2;5} &= [iXY_{1,2}, [iXY_{2,3}, [iXY_{3,4}, [iXY_{4,5}, iXY_{5,1}]]]]
            \end{align}
            
            \begin{lemma}\label{lem:XY_fully_nested_coms}
            \begin{equation}\label{eq:XY_fully_nested_coms}
                    \frac{1}{2}ad_{iX_{n1}}(P_{1n}) = \begin{cases}
                        0 &n \text{ even}\\
                        \Dopp{j,j+1;n}  &n \text{ odd}
                    \end{cases}
                \end{equation}
            \end{lemma}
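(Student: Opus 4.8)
The plan is to prove \cref{lem:XY_fully_nested_coms} by directly expanding the single commutator $\com{iXY_{n,1},\,P_{1,n}}$, where $P_{1,n}$ is already in closed form by \cref{lem:nested_XY_Pij} and \cref{def_Pjk}. Since $P_{1,n}$ is supported on qubit $1$, the string $Z_{[2:n-1]}$, and qubit $n$, while $XY_{n,1} = \tfrac12(X_1X_n + Y_1Y_n)$ touches only qubits $1$ and $n$, the middle $Z$-string is inert under the bracket and the whole computation reduces to single-qubit Pauli algebra on qubits $1$ and $n$. Expanding bilinearly gives four brackets of the shape $\com{A_1B_n,\; C_1\,Z_{[2:n-1]}\,D_n}$ with $A,B,C,D\in\{X,Y\}$, and each collapses using only $X^2=Y^2=I$, $XY=iZ$, and $\com{X,Y}=2iZ$, together with the fact that on each of qubits $1,n$ the two factors are either equal or differ by an $X\leftrightarrow Y$ swap.

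First I would dispatch the even case. Then $n-1$ is odd, so up to the normalization $i^{c_{1n}}/2$ one has $P_{1,n} = \AB{XX}{1,n} + \AB{YY}{1,n} = X_1Z_{[2:n-1]}X_n + Y_1Z_{[2:n-1]}Y_n$. In each of the four brackets the two factors either agree on both of qubits $1,n$ (so they commute) or differ on both (so the factors anticommute at two sites and hence commute overall); every bracket vanishes, and $\mathrm{ad}_{iXY_{n,1}}(P_{1,n})=0$.

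For odd $n$, $n-1$ is even, so $P_{1,n} = \AB{XY}{1,n} - \AB{YX}{1,n} = X_1Z_{[2:n-1]}Y_n - Y_1Z_{[2:n-1]}X_n$ up to $i^{c_{1n}}/2$. Now each of the four brackets differs on exactly one of qubits $1,n$ — producing a factor $\pm 2iZ$ there — and agrees on the other; two brackets collapse to $\pm 2i\,\Zbar{1}$ (with $\Zbar{1}=Z_{[2:n]}$) and two to $\pm 2i\,\Zbar{n}$ (with $\Zbar{n}=Z_{[1:n-1]}$). Tracking signs, the four brackets sum to $4i(\Zbar{1}-\Zbar{n})$, whence $\mathrm{ad}_{iXY_{n,1}}(P_{1,n}) = \tfrac{i\cdot i^{c_{1n}}}{4}\cdot 4i\,(\Zbar{1}-\Zbar{n}) = -i^{c_{1n}}(\Zbar{1}-\Zbar{n})$. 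By \cref{lem:c_jk}, $c_{1n}=n$ for odd $n$, which equals $d_n = 2\floor{n/2}+1 = n$, so $\tfrac12\,\mathrm{ad}_{iXY_{n,1}}(P_{1,n}) = \tfrac{i^{d_n}}{2}(\Zbar{n}-\Zbar{1}) = \Dopp{n,1;n}$, i.e. the claimed $\Dopp{j,j+1;n}$ with $j\equiv n \pmod n$ (subscripts read modulo $n$). The statement for an arbitrary base node then follows by the cyclic qubit relabeling $q\mapsto q+s\pmod n$, which permutes $\Gxyc{n}$ and carries the nested commutator based at $(1,n)$ to the one based at $(1+s,\,n+s)$.

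The computation is entirely elementary; the one place requiring care is the bookkeeping of the powers of $i$ and the signs of the four brackets in the odd-$n$ case, and then checking that $c_{1n}$ and $d_n$ genuinely coincide so that the output is exactly $\Dopp{n,1;n}$ and not merely proportional to it. That normalization check is the step I would treat most carefully — an error of $i^2$ would flip the sign of the diagonal element, which is harmless for any later span argument but still worth getting right.
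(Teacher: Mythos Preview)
Your proposal is correct and essentially identical to the paper's own proof: both split on parity, use \cref{def_Pjk} and \cref{lem:c_jk} to write $P_{1,n}$ in closed form, expand the four Pauli brackets $[A_1B_n,\widehat{CD}_{1n}]$ directly, and in the odd case verify $c_{1n}=n=d_n$ to identify the output as $\Dopp{n,1;n}$, then invoke cyclic relabeling for general $j$. The only cosmetic difference is that you give a slightly more explicit commute/anticommute justification in the even case, whereas the paper simply asserts the four brackets vanish.
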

            
            \noindent This lemma is proven separately in even and odd $n$ sections.

            Lastly, we define $\hat{Z}$:
            \begin{equation}
                \hat{Z} := (-1)^{\floor{\frac{n}{2}}} \Zn \label{Zhat}
            \end{equation}
        
            \noindent $\hat{Z}$ as well as the operators $\frac{1}{2}(I + \hat{Z})$ and $\frac{1}{2}(I - \hat{Z})$ are very useful throughout the remaining Lie algebra decompositions. We list some facts here before proceeding.
        
            \begin{lemma}\label{lem:I_plus_Zhat}
                Let $A,B \in \bigB{P_{jk}}_{j,k}$ and $C,D \in \bigB{\Popp{jk}}_{j,k}$.
                \begin{enumerate}
                    \item $\hat{Z}$ is an involution and $ \frac{1}{2}(I + \hat{Z}),\frac{1}{2}(I - \hat{Z})$ are orthogonal projections.
                    \item $  [\hat{Z}, \bigB{A,C}] = \com{\frac{1}{2}(I \pm \hat{Z}), \bigB{A,C}} = \bigB{0} $.
                    \item
                    \begin{align}
                        \com{\frac{1}{2}(I + \hat{Z})\bigB{A,C}, \bigB{B,D}} &= \com{\bigB{A,C}, \frac{1}{2}(I + \hat{Z})\bigB{B,D}} 
                        =  \frac{1}{2}(I + \hat{Z})\com{\bigB{A,C},\bigB{B,D}}\\
                        \com{\frac{1}{2}(I - \hat{Z})\bigB{A,C}, \bigB{B,D}} &= \com{\bigB{A,C}, \frac{1}{2}(I - \hat{Z})\bigB{B,D}} 
                        =  \frac{1}{2}(I - \hat{Z})\com{\bigB{A,C},\bigB{B,D}}
                    \end{align}
        
                    \item 
                    \begin{align}
                        \com{\frac{1}{2}(I + \hat{Z})\bigB{A,C}, \frac{1}{2}(I + \hat{Z})\bigB{B,D}} &= \frac{1}{2}(I + \hat{Z})\com{\bigB{A,C}, \bigB{B,D}} \\
                        \com{\frac{1}{2}(I - \hat{Z})\bigB{A,C}, \frac{1}{2}(I - \hat{Z})\bigB{B,D}} &= \frac{1}{2}(I - \hat{Z})\com{\bigB{A,C}, \bigB{B,D}}
                    \end{align}
                \end{enumerate}
            \end{lemma}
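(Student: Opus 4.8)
The plan is to reduce all four items to two elementary facts: that $\hat{Z}$ squares to the identity, and that $\hat{Z}$ commutes with every $P_{jk}$ and every $\Popp{jk}$. Item (1) is immediate once these are in place: since $\bigl((-1)^{\floor{n/2}}\bigr)^2 = 1$ and $Z^2 = I$ on each tensor factor, $\hat{Z}^2 = (\Zn)^2 = I$, so $\hat{Z}$ is an involution, and it is Hermitian (a real scalar times a tensor product of Hermitian Paulis). Writing $\Pi^{\pm} := \tfrac12(I \pm \hat{Z})$, one checks $(\Pi^{\pm})^2 = \tfrac14(I \pm 2\hat{Z} + \hat{Z}^2) = \Pi^{\pm}$, $(\Pi^{\pm})^\dagger = \Pi^{\pm}$, $\Pi^{+}\Pi^{-} = \tfrac14(I - \hat{Z}^2) = 0$, and $\Pi^{+} + \Pi^{-} = I$, so they are complementary orthogonal projections.

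For item (2), I would use that $\Zn$ commutes with any Pauli string carrying an even number of single-qubit $X$ or $Y$ factors: $Z$ anticommutes with $X$ and $Y$ and commutes with $I$ and $Z$, so the overall commutation sign against $\Zn$ is $(-1)^{\#\{\text{off-diagonal factors}\}}$. Inspecting the two parity cases in each of \cref{def_Pjk} and \cref{def_Pjk_minus}, every constituent Pauli string has exactly two off-diagonal factors, at qubits $j$ and $k$, hence even weight; therefore $\Zn$, and so $\hat{Z}$, commutes with each $P_{jk}$ and $\Popp{jk}$. An equivalent and slightly cleaner route: $\hat{Z}$ commutes with every generator $XY_{j,k} = \tfrac12(X_jX_k + Y_jY_k)$ (again two off-diagonal factors per term), and by the Jacobi identity the centralizer of $\hat{Z}$ is closed under commutation, so every element produced by the nested-commutation construction of \cref{lem:nested_XY_Pij} and \cref{lem:nested_XY_opposite} commutes with $\hat{Z}$. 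Either way, $[\hat{Z}, \{A,C\}] = \{0\}$, and $[\Pi^{\pm}, \{A,C\}] = \pm\tfrac12[\hat{Z}, \{A,C\}] = \{0\}$.

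Items (3) and (4) are then bookkeeping driven by item (2). Set $P = \Pi^{+}$; the $\Pi^{-}$ statements follow verbatim with $\Pi^{+}$ replaced by $\Pi^{-}$. Fix $X \in \{A,C\}$ and $Y \in \{B,D\}$, so $P$ commutes with both $X$ and $Y$ and $P^2 = P$. Then $PXPY = P^2 XY = PXY$ and $PYPX = P^2 YX = PYX$, whence $[PX, PY] = P(XY - YX) = P[X,Y]$, which is item (4). For item (3), $[PX, Y] = PXY - YPX = PXY - PYX = P[X,Y]$ and $[X, PY] = XPY - PYX = PXY - PYX = P[X,Y]$, using only that $P$ commutes with $X$ and with $Y$; this also shows the two bracketed expressions agree.

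The calculations are entirely routine; the single place that requires a genuine (if small) observation is item (2) — that $\hat{Z}$ lies in the centralizer of the $XY$-cycle DLA — so I would flag that as the main step. It amounts only to the remark that the generators have even off-diagonal weight and that this property is inherited by all nested commutators, but it is the structural input on which items (3) and (4), and hence the later splitting of $\gxyce{n}$ and $\gxycz{n}$ into $\Zn$-graded summands, all rest.
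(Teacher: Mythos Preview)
Your proof is correct and follows essentially the same approach as the paper: item (1) is dismissed as a direct computation, item (2) is proved by noting that each Pauli string in $P_{jk}$ and $\Popp{jk}$ has exactly two off-diagonal factors and hence commutes with $\Zn$ (the paper phrases this as $[ZZ,\{XX,YY,XY,YX,ZI,IZ\}]=0$), and items (3)--(4) are immediate consequences of (1) and (2). One minor remark: in your argument for (3)--(4) you reuse the symbols $X,Y$ for elements of $\{A,C\}$ and $\{B,D\}$, which risks confusion with the Pauli operators already in play; otherwise the writeup is clean.
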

        
             \begin{proof}
                \begin{enumerate}
                    \item[]
                    \item[(1)] These are simple calculations.
                    \item[(2)] First note that $\com{ZZ, \bigB{XX, YY, XY, YX, ZI, IZ}} = 0$. By extension, for any $1 \leq j < k \leq n$, one has $\com{\hat{Z}, \bigB{\AB{AB}{jk}, \ABopp{AB}{jk}}} = 0$ for any $A,B \in \{X,Y\}$. Since $P_{jk}$ and $\Popp{jk}$ are linear combinations of these terms, we have that $\com{\hat{Z}, \bigB{P_{jk}, \Popp{jk}}_{j,k}} = 0$. By extension, $\com{\frac{1}{2}\bigP{ I \pm \hat{Z}}, \bigB{P_{jk}, \Popp{jk}}_{j,k}} = 0$.
                    \item[(3)] This is a direct application of (2).
                    \item[(4)] This is a direct application of (1) and (3).
                \end{enumerate}
            \end{proof}
            
        \subsubsection{Even $n$}\label{subsubsec:even_cycle}
            
            \begin{proposition}\label{prop:XY_DLA_cycle_even}
                $\gxyce{n} \cong \so{n} \oplus \so{n}$.
            \end{proposition}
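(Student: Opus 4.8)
The plan is to prove $\gxyce{n}\cong\so n\oplus\so n$ (even $n$) by pinning down an explicit basis, symmetrizing it with respect to $\Zn$, and showing the symmetrized basis splits into two mutually commuting copies of $\so n$.

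\emph{Generating the candidate basis.} Since $\Gxyp{n}\subset\Gxyc{n}$, \cref{prop:XY_DLA_path} already places all path operators $P_{j,k}$ $(1\le j<k\le n)$ in $\gxyce{n}$, with $iXY_{j,j+1}=P_{j,j+1}$. The extra edge contributes the generator $iXY_{n,1}=\Popp{1,n}$, and by \cref{lem:nested_XY_opposite} nesting neighboring $XY$ terms the other way around the cycle produces exactly the operators $\Popp{j,k}$ of \cref{def_Pjk_minus}. The decisive even-$n$ feature is \cref{lem:XY_fully_nested_coms}: continuing a nested commutator all the way around the cycle back to its starting node yields $0$, so no diagonal $\Dopp{j,j+1}$-type term is ever generated — this is exactly where the odd case will diverge, and it is consistent with the target dimension $n(n-1)$. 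Hence $\Span{\bigB{P_{j,k},\Popp{j,k}}_{j<k}}\subseteq\gxyce{n}$.

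\emph{Symmetrizing and splitting.} Using \cref{Zhat} one verifies the involutive identity $\hat Z\,P_{j,k}=\Popp{j,k}$, so with $A^{\pm}:=\bigB{\tfrac{1}{2}(I\pm\hat Z)P_{j,k}:1\le j<k\le n}$ we get $\tfrac{1}{2}(I\pm\hat Z)P_{j,k}=\tfrac{1}{2}(P_{j,k}\pm\Popp{j,k})\in\gxyce{n}$ and $\Span{A^+\cup A^-}=\Span{\bigB{P_{j,k},\Popp{j,k}}_{j<k}}$. By \cref{lem:I_plus_Zhat}(4) together with the relations \cref{eq:SO1,eq:SO2,eq:SO3} of \cref{prop:XY_DLA_path}, the bracket of two elements of $A^+$ equals $\tfrac{1}{2}(I+\hat Z)$ times a bracket of $P$'s, hence is again in $A^+$ (and similarly for $A^-$). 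Since $\hat Z^2=I$ and $\hat Z$ is an involution, \cref{lem:I_plus_Zhat}(3) applied to each slot gives $\com{\tfrac{1}{2}(I+\hat Z)P,\tfrac{1}{2}(I-\hat Z)P'}=\tfrac{1}{4}(I-\hat Z^2)\com{P,P'}=0$, so $[A^+,A^-]=\{0\}$. The $2\binom{n}{2}=n(n-1)$ operators involved are linearly independent — the $P_{j,k}$ and $\Popp{j,k}$ are distinguished by their Pauli-string supports, and $\tfrac{1}{2}(I\pm\hat Z)$ projects onto the two complementary eigenspaces of $\hat Z$ — so each $A^{\pm}$ is $\binom{n}{2}$-dimensional and obeys precisely the $\so n$ relations; thus $A^{\pm}\cong\so n$ and $A^+\oplus A^-$ is a Lie subalgebra of $\su{2^n}$ isomorphic to $\so n\oplus\so n$.

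\emph{Closing the loop.} Every generator lies in $A^+\oplus A^-$: $iXY_{j,j+1}=P_{j,j+1}=\tfrac{1}{2}(I+\hat Z)P_{j,j+1}+\tfrac{1}{2}(I-\hat Z)P_{j,j+1}$, and $iXY_{n,1}=\Popp{1,n}$ decomposes the same way via $\hat Z\Popp{1,n}=P_{1,n}$. Because $A^+\oplus A^-$ is closed under the bracket, minimality of the DLA forces $\gxyce{n}\subseteq A^+\oplus A^-$; the reverse inclusion was shown above, so $\gxyce{n}=A^+\oplus A^-\cong\so n\oplus\so n$. The structural part of this argument is routine once \cref{prop:XY_DLA_path} and \cref{lem:I_plus_Zhat} are in hand; I expect the main obstacle to be the phase bookkeeping in the two input lemmas — proving the even-$n$ case of \cref{lem:XY_fully_nested_coms} means tracking the powers of $i$ (the $c_{jk}$'s modulo $4$) through a full loop of nested $XY$ commutators and checking the cancellation, and establishing $\hat Z P_{j,k}=\Popp{j,k}$ along with the linear-independence count requires a parity-by-parity analysis on $k-j$ and on which qubits carry $Z$'s.
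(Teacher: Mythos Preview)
Your proposal is correct and follows essentially the same route as the paper: generate $P_{j,k}$ and $\Popp{j,k}$, use the even-$n$ vanishing of the full-loop nested commutator (\cref{lem:XY_fully_nested_coms}) to rule out further basis elements, then symmetrize via $\hat Z$ into $A^{\pm}=\tfrac12(I\pm\hat Z)\{P_{j,k}\}$ and invoke \cref{lem:I_plus_Zhat} to get two commuting copies of $\so n$. Your closing argument via minimality (generators lie in the closed subalgebra $A^+\oplus A^-$) is a slightly cleaner packaging of the paper's step~(2), which instead shows $\{P_{j,k},\Popp{j,k}\}$ is commutator-closed directly using $\hat Z P_{j,k}=\Popp{j,k}$ before splitting; the content is the same.
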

            
            This proposition is proven in the following steps.
            \begin{enumerate}
                \item Show that the only operators generated by the DLA process are $\bigB{P_{j,k},\Popp{\ell,m}}_{j,k;\ell,m}$. In particular, show that nested commutation around the full cycle starting and ending at any node $j$ results in 0 as seen in \cref{eq:XY_fully_nested_coms} case 1. Thus, no other operators can be constructed by the DLA process.
                \item Show that commutation $\bigB{P_{j,k},\Popp{\ell,m}}_{j,k;\ell,m}$ is closed under commutation and so a Lie algebra itself.  In particular, this shows that $\gxyce{n} = \text{span}\left(\bigB{P_{j,k},\Popp{\ell,m}}_{j,k;\ell,m}\right)$ as $\gxyce{n}$ is the minimal Lie algebra containing generators $\Gxyc{n}$.
                \item In proving the previous this step, we show that $\hat{Z} \cdot P_{jk} = \Popp{jk}$.\footnote{This is only true for even $n$.} Define the sets $A^+ := \bigB{\frac{1}{2}(I + \hat{Z})P_{j,k}}_{j,k}$ and $A^- := \bigB{\frac{1}{2}(I - \hat{Z})P_{j,k}}_{j,k}$. Notice that $A^+ + A^- =\bigB{P_{j,k},\Popp{\ell,m}}_{j,k;\ell,m}$. Moreover, they have zero overlap in the Lie algebra: $\com{A^+,A^-} = \{0\}$. Lastly, we can show that both $A^+$ and $A^-$ are each isomorphic to $\so{n}$. This allows us to decompose the space into a the direct sum of Lie algebras:
                \begin{equation}
                    \gxyce{n} = \text{span}\left(\bigB{P_{j,k},\Popp{\ell,m}}_{j,k;\ell,m}\right) = \lie{A^+} \oplus \lie{A^-} \cong \so{n} \oplus \so{n}
                 \end{equation}
            \end{enumerate}   
            
            We now prove the statements above.
            
            \begin{proof}[(1)]
                When $n$ is even, $n-1$ is odd, so $P_{1n} = \frac{i^{n-1}}{2} \left(\AB{XX}{1n} + \AB{YY}{1n} \right)$. Then
                \begin{align}
                ad_{iXY_{n,1}}(P_{1n}) &=\frac{i^{n}}{4}[X_1X_n + Y_1Y_n, \AB{XX}{1n} + \AB{YY}{1n}]\\
                &=\frac{1}{4}\bigP{[X_1X_n, \AB{XX}{1n}] + [X_1X_n, \AB{YY}{1n}] + [Y_1Y_n, \AB{XX}{1n} ] + [Y_1Y_n,  \AB{YY}{1n}]}\\
                &= 0
            \end{align}
            
                \noindent Since all 4 terms commute.
            
                Thus, the only elements generated in the DLA process are the $P_{j,k}$ and $\Popp{\ell,m}$.
            \end{proof}
            
            \begin{proof}[(2)]
                We begin by showing that $\hat{Z} \cdot P_{j,k} = \Popp{j,k}$.
                
                For $k-j$ even:
                \begin{align}
                    \hat{Z} \cdot P_{j,k} &= i^{2\frac{n}{2}}Z_1 \cdots Z_n \cdot \frac{i^{c_{j,k}}}{2}(\AB{XY}{jk} - \AB{YX}{jk}) \\
                    &=\frac{i^{n+c_{j,k}}}{2}\left(\begin{aligned}
                        &Z_{[1:j-1]} (ZX)_j(ZZ)_{[j+1:k-1]} (ZY)_kZ_{[j+1:n]} \\
                        &+Z_{[1:j-1]} (ZY)_j(ZZ)_{[j+1:k-1]} (ZX)_kZ_{[j+1:n]}
                    \end{aligned}\right)\\[5pt]
                    &=\frac{i^{n+c_{j,k}}}{2}\left(\begin{aligned}
                        &Z_{[1:j-1]} (iY)_j(-iX)_kZ_{[j+1:n]} \\
                        &-Z_{[1:j-1]} (-iX)_j(iY)_k Z_{[j+1:n]}
                    \end{aligned}\right)\\[5pt]
                    &=\frac{i^{n+c_{j,k} +2}}{2}\left(Z_{[1:j-1]} X_jY_k Z_{[j+1:n]} - Z_{[1:j-1]} Y_jX_k Z_{[j+1:n]}\right)\\
                    &=\frac{i^{n+c_{j,k} +2}}{2}\bigP{\ABopp{XY}{jk} - \ABopp{YX}{jk}}
                \end{align}
                
                \noindent We can now work the exponent to get
                \begin{align}
                    n+c_{j,k} +2 = n + (k - j + 1) + 2  &\equiv n - k + j + 3 \bmod 4\\
                    &\equiv n - k + j - 1 \bmod 4\\
                    &=\copp{jk}
                \end{align}
        
                \noindent where the first equivalence is from $2(k-j) \equiv 0 \mod 4$. Thus,
                \begin{equation}
                    \hat{Z} \cdot P_{j,k} = \frac{i^{\copp{jk}}}{2}\bigP{\ABopp{XY}{jk} - \ABopp{YX}{jk}} = \Popp{jk}\label{Z_times_Pjk}
                \end{equation}
        
                \noindent A similar calculation works for $k-j$ odd.
            
                Since $\hat{Z}$ commutes with both $P_{j,k}$ and $\Popp{\ell,m}$, we have the following:
                \begin{align}
                    [P_{j,k}, P_{\ell,m}^-] &= \hat{Z} \cdot [P_{j,k}, P_{\ell,m}] \label{comm_P_Pminus}\\
                    [P_{j,k}^-, P_{\ell,m}] &= \hat{Z} \cdot [P_{j,k}, P_{\ell,m}] \label{comm_Pmimus_P}\\
                    [P_{j,k}^-, P_{\ell,m}^-] &= [P_{j,k}, P_{\ell,m}] \label{comm_Pmimus_P_minus}
                \end{align}
                
                \noindent In particular, the first two are some $\Popp{a,b}$ depending on the individual indices. Therefore, $\bigB{P_{j,k},\Popp{\ell,m}}_{j,k;\ell,m}$ is closed under commutation and $\text{span}\left(\bigB{P_{j,k},\Popp{\ell,m}}_{j,k;\ell,m}\right)$ is thus a Lie algebra. In particular, it must be $\gxyce{n}$ since this is the smallest Lie algebra containing $\Gxyc{n}$.
            \end{proof}

            \begin{proof}[(3)]\label{pf:Aplus_Aminus_cong_son}
                By \cref{lem:I_plus_Zhat} item 4, commutations in $A^+$ or $A^-$ reduce to commutations of $\{P_{jk}\}_{j,k}$. For example, taking $A^+_{j,k}, A^+_{k,\ell} \in A^+$, we have
                \begin{align}
                    \com{A^+_{j,k}, A^+_{k,\ell}} &= \com{\frac{1}{2}(I + \hat{Z})P_{j,k}, \frac{1}{2}(I + \hat{Z})P_{\ell,m}} \\
                    &= \frac{1}{2}(I + \hat{Z})\com{P_{j,k}, P_{\ell,m}}\\
                    &=\frac{1}{2}(I + \hat{Z})P_{j,\ell}\\
                    &=A^+_{j,\ell}
                \end{align}            
                
                \noindent Thus, they both obey all the skew-symmetric relations \cref{eq:SO1,eq:SO2,eq:SO3} and so $A^+$ and $A^-$ are themselves isomorphic to $\so{n}$.
        
                Lastly, by combining \cref{lem:I_plus_Zhat} items (1) and (3), we see that these sets have no overlap. For example, for any $A^+_{j,k} \in A^+, A^-_{\ell,m} \in A^-$:
                \begin{align}
                    [A^+_{j,k}, A^-_{\ell,m}] &= \com{\frac{1}{2}(I + \hat{Z})P_{j,k}, \frac{1}{2}(I - \hat{Z})P_{\ell,m}}\\
                    &=\frac{1}{2}(I + \hat{Z}) \cdot \frac{1}{2}(I - \hat{Z}) \com{P_{j,k}, P_{\ell,m}}\\
                    &= 0.
                \end{align}
        
                In conclusion, we have that 
                \begin{equation}
                    \gxyce{n} = \text{span}\left(\bigB{P_{j,k},\Popp{\ell,m}}_{j,k;\ell,m}\right) = \lie{A^+} \oplus \lie{A^-} \cong \so{n} \oplus \so{n}
                 \end{equation}
            \end{proof}
        
        \subsubsection{Odd $n$}\label{subsubsec:odd_cycle}
        
            \begin{proposition}\label{prop:XY_DLA_cycle_odd}
                $\gxyco{n} \cong \su{n}$.
            \end{proposition}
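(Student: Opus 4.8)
The plan is to mirror the template of \cref{prop:XY_DLA_path} and \cref{prop:XY_DLA_cycle_even}, but now matching $\gxyco{n}$ against $\su{n}$ instead of $\so{n}$ or a direct sum. First I would pin down a spanning set. By \cref{lem:nested_XY_Pij} the clockwise nestings of neighboring $XY$ terms produce the $P_{j,k}$; by \cref{lem:nested_XY_opposite} the anticlockwise nestings around the cycle produce the $\Popp{\ell,m}$; and by \cref{lem:XY_fully_nested_coms}, because $n$ is odd, nesting once all the way around the cycle starting and ending at node $j$ now produces the diagonal term $\Dopp{j,j+1}$ rather than $0$. Since the $\bigB{\Dopp{a,a+1}}_{a=1}^{n-1}$ telescope to span every $\Dopp{j,k} = \Zbar{j} - \Zbar{k}$, the candidate spanning set is \[ \mathcal{S} := \bigB{P_{j,k}}_{1\le j<k\le n} \cup \bigB{\Popp{\ell,m}}_{1\le \ell<m\le n} \cup \bigB{\Dopp{a,a+1}}_{1\le a\le n-1}. \] I would then show $\Span{\mathcal{S}}$ is closed under commutation, which — since $\gxyco{n}$ is the smallest Lie algebra containing $\Gxyc{n}$ — forces $\gxyco{n} = \Span{\mathcal{S}}$.

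Next I would exhibit the isomorphism $\phi \colon \gxyco{n} \to \su{n}$ sending $P_{j,k} \mapsto a_{j,k}$, $\Popp{j,k} \mapsto s_{j,k}$, and $\Dopp{j,j+1} \mapsto d_{j,j+1}$, extended linearly across the telescoping diagonal terms. The content is to verify that $\mathcal{S}$ obeys the $\su{n}$ relations (SU1)--(SU13): the $P$'s satisfy the skew-symmetric relations (SU1)--(SU3), already established in \cref{prop:XY_DLA_path}; the $\Popp$'s satisfy the symmetric relations (SU4)--(SU6); and the $\Dopp$'s are mutually commuting and satisfy the diagonal/mixed relations (SU7)--(SU13). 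As in the path case, most of these reduce to the two-qubit commutator identities of \cref{subsec:defs_paulis_XY} together with bookkeeping of the phase exponents $c_{jk}$, $\copp{jk}$, $d_n$ modulo $4$ (\cref{lem:c_jk}, \cref{lem:c_jk_minus}). A useful shortcut for the relations involving a diagonal term is that $\Dopp{j,j+1}$ equals $(Z_j - Z_{j+1})\Zn$ up to a phase, so $[\Dopp{j,j+1}, X] = \pm\, i^{d_n}\,\hat{Z}\cdot[D_{j,j+1}, X]$ whenever $X$ commutes with $\hat{Z}$ (true for all $P$-, $\Popp{}$-, $Q$-, $\Qopp{}$-type operators, cf. \cref{lem:I_plus_Zhat}), and for odd $n$ the operator $\hat{Z}$ swaps the $Q$-type terms with the $\Popp{}$-type terms (and $P$-type with $\Qopp{}$-type), exactly as in the $\gxycz{n}$ discussion; this both keeps $\Span{\mathcal{S}}$ closed and delivers the correct right-hand sides. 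Finally, the dimension count $|\mathcal{S}| = 2\binom{n}{2} + (n-1) = n^2 - 1 = \dim\su{n}$ shows $\phi$ maps a basis bijectively onto a basis, hence is a Lie algebra isomorphism.

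The main obstacle is the overlap casework underlying Steps~1 and~2: commutators $[P_{j,k}, \Popp{\ell,m}]$ where a clockwise arc from $j$ to $k$ and an anticlockwise arc from $\ell$ to $m$ share one or both endpoints, or run along common stretches of the cycle. One must enumerate the interleaving patterns of $\{j,k\}$ and $\{\ell,m\}$ around the cycle, check that each such commutator lands back in $\Span{\mathcal{S}}$ as a single $P$-, $\Popp{}$-, or $\Dopp{}$-term with the right sign, and confirm that no new operator type is ever produced — which is precisely where oddness of $n$ is used, since the full-cycle nesting that annihilates in \cref{prop:XY_DLA_cycle_even} here instead terminates the generation process by producing exactly the diagonal terms. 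The phase tracking modulo $4$ across these cases is the routine-but-delicate part; everything else follows the pattern already set by the path and even-cycle proofs.
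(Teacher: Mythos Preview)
Your proposal is correct and follows essentially the same approach as the paper: identify the spanning set $\{P_{j,k},\Popp{\ell,m},\Dopp{a,a+1}\}$ via \cref{lem:nested_XY_Pij,lem:nested_XY_opposite,lem:XY_fully_nested_coms}, then verify the SU relations directly to exhibit the isomorphism $P_{j,k}\mapsto a_{j,k}$, $\Popp{j,k}\mapsto s_{j,k}$, $\Dopp{j,j+1}\mapsto d_{j,j+1}$. The paper carries out the SU4--SU13 checks by brute-force Pauli calculations using \cref{lem:AB_minus_CD_minus_coms,lem:AB_CD_minus_com,lem:Zbar_AB}; your $\hat{Z}$ shortcut (observing $\Dopp{j,j+1}=\hat{Z}\cdot D_{j,j+1}$ and borrowing the odd-$n$ swap $\hat{Z}\cdot Q_{jk}=\Popp{jk}$ from \cref{lem:Z_hat_times_D_P_and_Q}) is a nice streamlining of those same computations rather than a different strategy.
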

        
            This proposition is proven in the following steps.
            \begin{enumerate}
                \item Show that the operators generated by the DLA process are $\bigB{P_{j,k},\Popp{\ell,m},\Dopp{a,a+1}}_{j,k;\ell,m;a}$. In particular, show case 2 of \cref{eq:XY_fully_nested_coms}.
                \item Show that commutation $\bigB{P_{j,k},\Popp{\ell,m},\Dopp{a,a+1}}_{j,k;\ell,m;a}$ is closed under commutation and so a Lie algebra itself.  In particular, this shows that $\gxyco{n} = \text{span}\left(\bigB{P_{j,k},\Popp{\ell,m},\Dopp{a,a+1}}_{j,k;\ell,m;a}\right)$ as $\gxyco{n}$ is the minimal Lie algebra containing generators $\Gxyc{n}$. In showing this, we also show that these operators obey the SU relations with $\bigB{P_{j,k} : 1 \leq j < k \leq n},\bigB{\Popp{}{j,k} : 1 \leq j < k \leq n}$, and $\bigB{\Dopp{j,j+1} : 1 \leq j < n}$ acting as the skew-symmetric, symmetric, and diagonal operators, respectively.
            \end{enumerate}
        
            \begin{proof}[(1)]
                When $n$ is odd, then $n-1$ is even and so $P_{1n} = \frac{i^n}{2}\bigP{\AB{XY}{1n} - \AB{YX}{1n}}$. So:
                \begin{align}
                    \frac{1}{2}ad_{iXY_{n,1}}(P_{1n}) &=\frac{i^{n+1}}{8}[X_1X_n + Y_1Y_n, \AB{XY}{1n} - \AB{YX}{1n}]\\
                    &=\frac{i^{n+1}}{8}\bigP{[X_1X_n, \AB{XY}{1n}] - [X_1X_n, \AB{YX}{1n}] + [Y_1Y_n, \AB{XY}{1n} ] - [Y_1Y_n,  \AB{YX}{1n}]} \label{dn_odd_eq} \\
                    &=\frac{i^{n+1}}{8}\bigP{2iZ_{\overline{1}} - 2iZ_{\overline{n}} - 2iZ_{\overline{n}} + 2iZ_{\overline{1}}}\\
                    &=\frac{i^{n+2}}{2}\bigP{Z_{\overline{1}} - Z_{\overline{n}}}\\
                    &= \frac{i^{n}}{2}(Z_{\overline{n}} - Z_{\overline{1}} ) \\
                    &=\Dopp{n,1;n}.
                \end{align}
    
                By re-ordering indices, we can see that this behavior generates all the $\Dopp{a,a+1;n}$ for any $a \in [n]$.
            \end{proof}
    
            We now show that the operators $\bigB{P_{j,k},\Popp{\ell,m},\Dopp{a,a+1}}_{j,k;\ell,m;a}$ obey $\su{n}$ relations. 
        
            \begin{proof}[(2)]\label{pf:P_Pminus_Dminus_sun}            
                \begin{enumerate}
                    \item[]
                    \item[(SU1-3)] We have already shown that that $\bigB{P_{j,k}}_{j,k}$ satisfy \cref{eq:SO1,eq:SO2,eq:SO3} in $\gxyp{n}$.
                    \item[(SU4)] Let $1 \leq j < k < \ell \leq n$. We consider the case when $n- k + j$ is odd and $n - \ell + k$ is even.
                     \begin{align}
                         \com{\Popp{j,k},\Popp{k,\ell}} &= \frac{i^{c_{jk}^- + c_{k\ell}^-}}{4}\com{\ABopp{XX}{jk} + \ABopp{YY}{jk},\ \ABopp{XY}{k \ell} - \ABopp{YX}{k\ell}} \\[5pt]
                         &=\frac{i^{c_{jk}^- + c_{k\ell}^-}}{4} \bigP{
                         \begin{aligned}
                             &\com{\ABopp{XX}{jk}, \ABopp{XY}{k \ell}} - \com{\ABopp{XX}{jk}, \ABopp{YX}{k\ell}} \\
                             &+\com{\ABopp{YY}{jk},\ \ABopp{XY}{k \ell}} - \com{\ABopp{YY}{jk}, \ABopp{YX}{k\ell}}
                         \end{aligned}
                         }
                     \end{align}
                     Apply \cref{lem:AB_minus_CD_minus_coms} to these four terms.
                     \begin{align}
                         \com{\ABopp{XX}{jk}, \ABopp{XY}{k \ell}} &=\com{X_jX_kZ_\ell,\  Z_jX_kY_\ell}\cdot Z_{[j+1:\ell-1]\setminus \{k\}}\\
                         &=\bigP{(XZ)_j(XX)_k(ZY)_\ell -(ZX)_j(XX)_k(YZ)_\ell}\cdot Z_{[j+1:\ell-1]\setminus \{k\}}\\
                         &=\bigP{(-iY)_j(-iX)_\ell -(iY)_j(iX)_\ell}\cdot Z_{[j+1:\ell-1]\setminus \{k\}}\\
                         &=i^2\bigP{Y_jX_\ell -Y_jX_\ell}\cdot Z_{[j+1:\ell-1]\setminus \{k\}}\\
                         &=0 \\[3pt]
                         \com{\ABopp{XX}{jk}, \ABopp{YX}{k\ell}} &=\com{X_jX_kZ_\ell,\  Z_jY_kX_\ell}\cdot Z_{[j+1:\ell-1]\setminus \{k\}}\\
                        &=\bigP{(XZ)_j(XY)_k(ZX)_\ell -(ZX)_j(YX)_k(XZ)_\ell}\cdot Z_{[j+1:\ell-1]\setminus \{k\}}\\
                        &=\bigP{(-iY)_j(iZ)_k(iY)_\ell -(iY)_j(-iZ)_k(-iY)_\ell}\cdot Z_{[j+1:\ell-1]\setminus \{k\}}\\
                        &=-i^3\bigP{Y_jZ_kY_\ell + Y_jZ_kY_\ell}\cdot Z_{[j+1:\ell-1]\setminus \{k\}}\\
                        &=2i\AB{YY}{j,\ell}\\[3pt]
                          \com{\ABopp{YY}{jk}, \ABopp{XY}{k\ell}} &=\com{Y_jY_kZ_\ell,\  Z_jX_kY_\ell}\cdot Z_{[j+1:\ell-1]\setminus \{k\}}\\
                        &=\bigP{(YZ)_j(YX)_k(ZY)_\ell -(ZY)_j(XY)_k(YZ)_\ell}\cdot Z_{[j+1:\ell-1]\setminus \{k\}}\\
                        &=\bigP{(iX)_j(-iZ)_k(-iX)_\ell -(-iX)_j(iZ)_k(iX)_\ell}\cdot Z_{[j+1:\ell-1]\setminus \{k\}}\\
                        &=i^3\bigP{X_jZ_kX_\ell + X_jZ_kX_\ell}\cdot Z_{[j+1:\ell-1]\setminus \{k\}}\\
                        &=-2i\AB{XX}{j,\ell}\\[3pt]
                         \com{\ABopp{YY}{jk}, \ABopp{YX}{k \ell}} &=\com{Y_jY_kZ_\ell,\  Z_jY_kX_\ell}\cdot Z_{[j+1:\ell-1]\setminus \{k\}}\\
                         &=\bigP{(YZ)_j(YY)_k(ZX)_\ell -(ZY)_j(YY)_k(XZ)_\ell}\cdot Z_{[j+1:\ell-1]\setminus \{k\}}\\
                         &=\bigP{(iX)_j(iY)_\ell -(-iX)_j(-iY)_\ell}\cdot Z_{[j+1:\ell-1]\setminus \{k\}}\\
                         &=i^2\bigP{X_jY_\ell -X_jY_\ell}\cdot Z_{[j+1:\ell-1]\setminus \{k\}}\\
                         &=0
                     \end{align}
                
                    Thus, 
                    \begin{equation}
                        \com{\Popp{j,k},\Popp{k,\ell}} = -\frac{i^{c_{jk}^- + c_{k\ell}^- + 1}}{2}\bigP{\AB{XX}{j,\ell} + \AB{YY}{j,\ell}}
                    \end{equation}
            
                    Lastly, handle the exponent. $n-k+j$ is odd, so $\copp{j,k} = n - k + j$ and $n - \ell +k$ is even, so $\copp{k,\ell} = n - \ell + j  - 1$.
                    \begin{align}
                        c_{jk}^- + c_{k\ell}^- + 1 = (n-k + j) + (n - \ell + k -1) + 1 = 2n + j - \ell \equiv 2 + j - \ell \bmod 4 %
                    \end{align}
                    
                    \noindent where we have used $2n \equiv 2 \bmod $ since $n$ is odd. Since $n$ and $n- k + j$ are odd, $k - j$ is even. Conversely, $\ell - k$ is odd. In turn, $\ell - j = (\ell - k) + (k - j)$ is odd. Therefore, $2(\ell-j) \equiv 2 \bmod 4$. This implies that $2 + j - \ell \equiv \ell - j \bmod 4$ which is $ c_{j,\ell} $ by \cref{lem:c_jk}. Putting this all together: $\com{\Popp{j,k},\Popp{k,\ell}} = P_{j,\ell}$.
    
                    \item[(SU5,6)] Follow similarly as SU5, using \cref{lem:AB_minus_CD_minus_coms} to handle the individual terms and showing that the exponent matches correctly.               
    
                    \item[(SU7)] Assume $k-j$ odd (and so $n-k+j$ even).
                        \begin{align}
                            \com{P_{j,k}, \Popp{j,k}} &= \frac{i^{c_{jk} + \copp{jk}}}{4}\com{\AB{XX}{jk} + \AB{YY}{jk},\ \ABopp{XY}{jk} - \ABopp{YX}{jk}}\\[3pt]
                            &=\frac{i^{c_{jk} + \copp{jk}}}{4}\bigP{
                            \begin{aligned}
                                &\com{\AB{XX}{jk}, \ABopp{XY}{jk}} - \com{\AB{XX}{jk},  \ABopp{YX}{jk}}\\
                                &+\com{\AB{YY}{jk}, \ABopp{XY}{jk}} - \com{\AB{YY}{jk}, \ABopp{YX}{jk}}
                            \end{aligned}
                            }\label{eq:su8}
                        \end{align}
                
                        Apply \cref{lem:AB_CD_minus_com} to these four terms.
                         \begin{align}
                             \com{\AB{XX}{jk}, \ABopp{XY}{jk}} &= \com{\AB{XX}{jk}, \AB{XY}{jk}} \cdot Z_{[n]\setminus \{j,k\}} \\
                             &= \bigP{(XX)_j(XY)_k - (XX)_j(YX)_k}\cdot Z_{[n]\setminus \{j,k\}} \\
                             &= 2i\Zbar{j}\\[3pt]
                             \com{\AB{XX}{jk},  \ABopp{YX}{jk}} &= \com{\AB{XX}{jk}, \AB{YX}{jk}} \cdot Z_{[n]\setminus \{j,k\}} \\
                             &= \bigP{(XY)_j(XX)_k - (YX)_j(XX)_k}\cdot Z_{[n]\setminus \{j,k\}} \\
                             &= 2i\Zbar{k}\\[3pt]
                             \com{\AB{YY}{jk}, \ABopp{XY}{jk}} &= \com{\AB{YY}{jk}, \AB{XY}{jk}} \cdot Z_{[n]\setminus \{j,k\}} \\
                             &= \bigP{(YX)_j(YY)_k - (XY)_j(YY)_k}\cdot Z_{[n]\setminus \{j,k\}} \\
                             &= -2i\Zbar{k}\\[3pt]
                             \com{\AB{YY}{jk}, \ABopp{YX}{jk}} &= \com{\AB{YY}{jk}, \AB{YX}{jk}} \cdot Z_{[n]\setminus \{j,k\}} \\
                             &= \bigP{(YY)_j(YX)_k - (YY)_j(XY)_k}\cdot Z_{[n]\setminus \{j,k\}} \\
                             &= -2i\Zbar{j}\\[3pt]
                         \end{align}
                    
                         \noindent Plugging this into \cref{eq:su8} we get
                         \begin{equation}
                             \com{P_{jk}, \Popp{j,k}} = i^{c_{jk} + \copp{jk} + 1}\bigP{\Zbar{j} - \Zbar{k}}
                         \end{equation}
                    
                        \noindent Handle the exponent on the coefficient:
                        \begin{align}
                            c_{jk} + \copp{jk} + 1 = (k-j) +(n-k+j - 1) + 1 = n = 2\floor{\frac{n}{2}} + 1 = d_n
                        \end{align}
                    
                        \noindent since $n$ is odd. Thus, $\com{P_{jk}, \Popp{j,k}} = 2 \Dopp{j,k}$.
    
                    \item[(SU8,9)] Follow similarly as SU7.
    
                    \item[(SU10)] Assume $\ell - k$ is even.
                        \begin{align}
                            \com{\Dopp{j,k}, P_{k,\ell}} &= \frac{i^{d_n + c_{k,\ell}}}{4}\com{\Zbar{j} - \Zbar{k}, \AB{XY}{k,\ell} - \AB{YX}{k,\ell}}\\
                            &=\frac{i^{d_n + c_{k,\ell}}} {4}\bigP{\begin{aligned}
                                &\com{\Zbar{j}, \AB{XY}{k,\ell}} - \com{\Zbar{j},  \AB{YX}{k,\ell}}\\
                                &-\com{\Zbar{k}, \AB{XY}{k,\ell}} + \com{\Zbar{k}, \AB{YX}{k,\ell}}
                            \end{aligned}}\\
                            &=\frac{i^{d_n + c_{k,\ell}}} {4}\bigP{\begin{aligned}
                                &-Z_{[1:k-1]}X_k \com{Z,Y}_\ell Z_{[\ell+1:n]}\\
                                &+Z_{[1:k-1]}Y_k \com{Z,X}_\ell Z_{[\ell+1:n]}
                            \end{aligned}}\\
                            &=\frac{i^{d_n + c_{k,\ell}}} {4}\bigP{\begin{aligned}
                                &-2iZ_{[1:k-1]}X_k X_\ell Z_{[\ell+1:n]}\\
                                &-2iZ_{[1:k-1]}Y_k Y_\ell Z_{[\ell+1:n]}
                            \end{aligned}}\\
                            &=\frac{i^{d_n + c_{k,\ell} + 3}}{2}\bigP{\ABopp{XX}{k,\ell} + \ABopp{YY}{k,\ell}}
                        \end{align}
    
                        \noindent where we have used \cref{lem:Zbar_AB} in line 3. Working out the exponent:
                        \begin{align}
                            d_n + c_{k,\ell} + 3 = \bigP{2\floor{\frac{n}{2}} + 1} + \bigP{\ell - k + 1} + 3 &= 2\floor{\frac{n}{2}} + \ell - k + 5 \\
                            &\equiv 2\floor{\frac{n}{2}} + \ell - k + 1 \bmod 4 \\
                            &\equiv 2\floor{\frac{n}{2}}  -\ell + k +1 \bmod 4
                        \end{align}
    
                        \noindent where we have used that $-2(k-\ell) \equiv 0 \bmod 4$ since $k-\ell$ is even. Now, $n$ odd implies that $2\floor{\frac{n}{2}} = n-1$, so
                        \begin{equation}
                            2\floor{\frac{n}{2}}  -\ell + k +1  = n - \ell + k = \copp{k,\ell}
                        \end{equation}
    
                        \noindent Thus, $\com{\Dopp{j,k}, P_{k,\ell}} = \Popp{k,\ell}$.
    
                    \item[(SU11-13)] Follows similarly as SU10, where we use \cref{lem:Zbar_AB} on the individual terms and then work out the algebra on the exponent.

                 \end{enumerate}
            \end{proof}

        \subsection{Lie Algebra of the $XY$ Path with $ R_{Z} $}\label{subsec:dla_xy_z_path}        
            This proposition is proven in the following steps.
            \begin{enumerate}
                \item Show that the term $Z^+$ as defined below is in $\gxypz{n}$ and commutes with all the terms in $\Gxypz{n}$, thus showing that we have a 1-dimensional center $Z(\gxypz{n}) = \text{span}\left(\bigB{Z^+}\right)$.
                \item Show that there are three sets of terms generated in the DLA process: $\bigB{P_{jk}}_{j,k}$, $\bigB{Q_{jk}}_{j,k}$, and $\bigB{D_{a}}_{a}$ where the latter two are defined below.
                \item Show that these sets of operators obey the skew-symmetric, symmetric, and diagonal commutation relations, respectively.
            \end{enumerate}
        
            \begin{lemma}\label{lem:Zplus_center}
                The center of $\gxypz{n}$ is 1-dimensional, containing the element $Z^+ := i\sum_v Z_v$. 
            \end{lemma}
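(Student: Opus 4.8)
\textbf{Proof plan for \cref{lem:Zplus_center}.} The plan is to establish the two bounds $\dim Z(\gxypz{n}) \ge 1$ and $\dim Z(\gxypz{n}) \le 1$ separately. For the lower bound I would first note that $Z^+ = i\sum_{v\in[n]}Z_v$ lies in $\gxypz{n}$ simply because each $iZ_v$ is one of the generators in $\Gz{n}\subset\Gxypz{n}$. By the Remark that an element of a DLA is central iff it commutes with every generator, it then suffices to check $\com{Z^+,G}=0$ for each $G\in\Gxypz{n}=\Gxyp{n}\cup\Gz{n}$. This is immediate for $G=Z_v$, and for $G=XY_{j,j+1}$ only the two terms $Z_j,Z_{j+1}$ have overlapping support, so using the two-qubit identities $\com{XY,Z\otimes I}=-\com{XY,I\otimes Z}=2i\,YX$ one gets $\com{Z_j+Z_{j+1},XY_{j,j+1}}=-2i\,YX_{j,j+1}+2i\,YX_{j,j+1}=0$. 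Hence $Z^+\in Z(\gxypz{n})$.

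For the upper bound I would take an arbitrary $h\in Z(\gxypz{n})$ and squeeze it. Since $\com{h,Z_v}=0$ for all $v$, $h$ is diagonal in the computational basis. Since $\com{h,XY_{j,j+1}}=0$ for all $j$ and $XY_{j,j+1}$ swaps the basis states $\ket{\dots 0_j1_{j+1}\dots}$ and $\ket{\dots 1_j0_{j+1}\dots}$, the diagonal entries of $h$ must agree on any two basis states related by swapping the contents of adjacent qubits; as the adjacent transpositions $(1\,2),\dots,(n-1\,n)$ generate $S_n$, the entries of $h$ depend only on the Hamming weight, i.e.\ $h\in\Span{\bigB{i\hat{F}^k}_{k=0}^n}$ where $\hat{F}^k$ is the weight-$k$ projector. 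It then remains to show that the only such element of $\gxypz{n}$ is a multiple of $Z^+$. I see two ways to finish: (i) once the enumeration of generated elements (the next step of the proposition) shows that the \emph{diagonal} part of $\gxypz{n}$ is exactly $\Span{\bigB{iZ_v}_{v\in[n]}}$, write $h=i\sum_v c_vZ_v$ and use $\com{i c_jZ_j + i c_{j+1}Z_{j+1},\,XY_{j,j+1}} = 2(c_j-c_{j+1})\,YX_{j,j+1}=0$ for every $j\in[n-1]$ to force $c_1=\dots=c_n$, so $h\in\Span{\bigB{Z^+}}$; or (ii) invoke the decomposition $\gxypz{n}\cong\u{1}\oplus\su{n}$ proved later and the simplicity (hence centrelessness) of $\su{n}$ for $n\ge2$, which pins the center down to the one-dimensional $\u{1}$ summand, namely $\Span{\bigB{Z^+}}$.

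The only genuinely nontrivial ingredient, and the part I expect to be the real obstacle, is the claim that the nested-commutation process produces no diagonal operators beyond $\Span{\bigB{iZ_v}}$ — equivalently, no new combination of the $\hat{F}^k$ besides the fixed one $-iZ^+=\sum_k(n-2k)\hat{F}^k$. This is exactly the content delivered by the subsequent step that identifies the generated basis as $\bigB{Z^+}\cup\bigB{P_{j,k}}\cup\bigB{Q_{\ell,m}}\cup\bigB{D_{a,a+1}}$, since every $P_{j,k}$ and $Q_{\ell,m}$ has strictly off-diagonal Pauli support while the $D_{a,a+1}=Z_a-Z_{a+1}$ together with $Z^+$ only span $\Span{\bigB{iZ_v}}$. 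Everything else — the bracket evaluations and the $S_n$-orbit argument — is routine and reuses the two-qubit commutation table of \cref{subsec:defs_paulis_XY}.
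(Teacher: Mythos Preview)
Your proposal is correct. For the lower bound (showing $Z^+$ is central) you use exactly the paper's computation: only $Z_j$ and $Z_{j+1}$ fail to commute with $XY_{j,j+1}$, and their contributions cancel. The paper's proof of this lemma in fact stops there and simply asserts $Z(\gxypz{n})=\Span{Z^+}$; the upper bound is only justified implicitly by the subsequent decomposition $\gxypz{n}\cong\u{1}\oplus\su{n}$ together with the simplicity of $\su{n}$---which is your route~(ii). Your route~(i), extracting the diagonal part of the DLA from the basis enumeration $\{Z^+,P_{j,k},Q_{\ell,m},D_{a,a+1}\}$ and then forcing $c_1=\cdots=c_n$ from $[h,XY_{j,j+1}]=0$, is a more self-contained alternative that avoids appealing to the full isomorphism; note that once you have $h\in\Span{\{iZ_v\}}$ this last step already suffices, so the intermediate $S_n$/Hamming-weight argument, while correct, is not needed.
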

            
            \begin{proof}
                Ignoring coefficients:
                \begin{align}
                    [Z_j, XY_{j,k}] &= \com{Z_j, X_j X_k} + \com{Z_j, Y_j Y_k} = \com{Z,X}_j X_k + \com{Z, Y}_j Y_k = 2iY_jX_k - 2i X_j Y_k \\
                    [Z_k, XY_{j,k}] &= \com{Z_k, X_j X_k} + \com{Z_k, Y_j Y_k}= X_j \com{Z,X}_k + Y_j \com{Z, Y}_k = -2iX_jY_k + 2i Y_j X_k
                \end{align}
        
                \noindent In particular, $[Z_k, XY_{j,k}] = -[Z_j, XY_{j,k}]$. Thus for any generator $XY_{j,j+1} \in \Gxypz{n}$:
                \begin{equation}
                    \com{Z^+, XY_{j,j+1}} = \com{Z_j, XY_{j,j+1}} + \com{Z_{j+1}, XY_{j,j+1}} = 0 
                \end{equation}
        
                Clearly $[Z^+, Z_k]$ for any $k$. If an element commutes with all of the generators, then it commutes with the full the DLA. Thus, $Z(\Gxypz{n}) = \text{span}(Z^+)$.
            \end{proof}
        
            \begin{definition}[$D_{j,k}$]
                For any $1 \leq j < k \leq n$, define 
                \begin{equation}\label{def_Djk}
                    D_{j,k} := \frac{i}{2}\bigP{Z_j - Z_k}
                \end{equation}
            \end{definition}
        
            \noindent Due to telescoping behavior, $D_{j,k} \in \text{span}\left(\bigB{D_{a,a+1} : 1 \leq a < n}\right)$ for any $j,k$. This allows us to just take these $n-1$ operators as our diagonal generators. These elements are in $\gxypz{n}$ since they are a linear combination of generators.
        
            \begin{definition}[$Q_{jk}$]
                For any $1 \leq j < k \leq n$, define 
                \begin{align}
                    Q_{j,k} &:= \frac{1}{2}\com{D_{j,k}, P_{j,k}}
                \end{align}
            \end{definition}
        
            \noindent $Q_{j,k}$ are in $\gxypz{n}$ as they are the result of a commutation of two operators in $\gxypz{n}$. The following is a helpful characterization of $Q_{j,k}$ akin to \cref{def_Pjk}.
        
            \begin{lemma}\label{lem:Qjk_paulis}
                \begin{align}
                    Q_{j,k} &= \frac{i^{b_{jk}}}{2} \begin{cases}\label{eq:Qjk_paulis}
                        \AB{YX}{jk} - \AB{XY}{jk}  &(k - j \text{ odd}) \\ 
                         \AB{XX}{jk} + \AB{YY}{jk}&(k - j \text{ even}) \\
                    \end{cases} \\
                    b_{jk} &:= 2\floor{\frac{k - j}{2}} - 1 \label{bjk_def}
                \end{align}
            \end{lemma}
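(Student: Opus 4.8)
The plan is to prove \cref{lem:Qjk_paulis} by a direct computation of the commutator. By definition $Q_{j,k} = \tfrac12\com{D_{j,k}, P_{j,k}} = \tfrac{i}{4}\com{Z_j - Z_k,\, P_{j,k}}$, where $D_{j,k}$ is given in \cref{def_Djk} and $P_{j,k}$ in \cref{def_Pjk}. The only inputs needed are the single-qubit identities $\com{Z,X} = 2iY$ and $\com{Z,Y} = -2iX$, together with the observation that every Pauli string appearing in $P_{j,k}$ carries an $X$ or $Y$ on qubit $j$, a $Z$ on each interior qubit $\ell\in[j+1:k-1]$, and an $X$ or $Y$ on qubit $k$. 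Consequently $Z_j$ commutes with the interior $Z$'s and with the qubit-$k$ factor, so $\com{Z_j,\cdot}$ acts only on qubit $j$; symmetrically $\com{Z_k,\cdot}$ acts only on qubit $k$. In effect the adjoint action of $Z_j$ sends $X_j \mapsto 2iY_j$, $Y_j\mapsto -2iX_j$ and leaves the rest of the string fixed, and likewise for $Z_k$ on qubit $k$.

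Then I would split into the two parities. For $k-j$ odd, where $P_{j,k} = \tfrac{i^{c_{jk}}}{2}\bigP{\AB{XX}{jk} + \AB{YY}{jk}}$, the four single-site commutators are $\com{Z_j, \AB{XX}{jk}} = 2i\AB{YX}{jk}$, $\com{Z_j, \AB{YY}{jk}} = -2i\AB{XY}{jk}$, $\com{Z_k, \AB{XX}{jk}} = 2i\AB{XY}{jk}$, and $\com{Z_k, \AB{YY}{jk}} = -2i\AB{YX}{jk}$; subtracting the $Z_k$ contributions from the $Z_j$ ones, the $\AB{YX}{jk}$ and $\AB{XY}{jk}$ pieces each add coherently, giving $\com{Z_j - Z_k, P_{j,k}} = 2\,i^{c_{jk}+1}\bigP{\AB{YX}{jk} - \AB{XY}{jk}}$ and hence $Q_{j,k} = \tfrac{i^{c_{jk}+2}}{2}\bigP{\AB{YX}{jk} - \AB{XY}{jk}}$. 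The case $k-j$ even is handled identically starting from $P_{j,k} = \tfrac{i^{c_{jk}}}{2}\bigP{\AB{XY}{jk} - \AB{YX}{jk}}$; the analogous four commutators produce $\com{Z_j - Z_k, P_{j,k}} = 2\,i^{c_{jk}+1}\bigP{\AB{XX}{jk} + \AB{YY}{jk}}$ and thus $Q_{j,k} = \tfrac{i^{c_{jk}+2}}{2}\bigP{\AB{XX}{jk} + \AB{YY}{jk}}$.

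Finally I would reconcile the coefficient with the claimed exponent: from \cref{cjk_def,bjk_def} we have $b_{jk} = c_{jk} - 2$, so $i^{c_{jk}+2} = i^{c_{jk}-2} = i^{b_{jk}}$, which is exactly \cref{eq:Qjk_paulis} in both parities. The proof is thus entirely routine; the only real work is bookkeeping — keeping the signs of the four single-qubit commutators straight and tracking the power of $i$ modulo $4$ — and I do not anticipate any conceptual obstacle. The commutator identities between the $\AB{AB}{jk}$-type strings invoked above are the ones collected in \cref{sec:com_calcs}, and the arithmetic on the $i$-exponents parallels the parity bookkeeping already used for $c_{jk}$ in \cref{lem:c_jk}.
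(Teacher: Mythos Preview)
Your proof is correct and follows essentially the same approach as the paper: a direct expansion of $\tfrac12\com{D_{j,k},P_{j,k}}$ into the four single-qubit commutators, followed by exponent bookkeeping. The paper only writes out the $k-j$ odd case and notes the even case is similar; your observation that $b_{jk}=c_{jk}-2$ handles the exponent uniformly for both parities is a slightly cleaner way to finish than the paper's ad hoc $k-j+2\equiv k-j-2\pmod 4$ remark.
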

        
            \noindent In fact, the only difference is that the cases flip depending on the parity. E.g. when $k-j$ is odd, $P_{j,k} \propto \AB{XX}{jk} + \AB{YY}{jk}$ and $Q_{j,k} \propto \AB{XY}{jk} - \AB{YX}{jk}$.
    
            We state a lemma about the $b_{jk}$ terms similar to \cref{lem:c_jk,lem:c_jk_minus}.
            \begin{lemma}\label{lem:b_jk}
                \begin{itemize}
                    \item[]
                    \item If $k - j$ is even, then $b_{jk} = k - j  - 1$. Also, $b_{j-1,k} = b_{jk}$. 
                    \item If $k - j$ is odd, then $b_{jk} = k - j - 2$. Also, $b_{j-1,k} = b_{jk} +2$.
                \end{itemize}
            \end{lemma}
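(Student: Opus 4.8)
The plan is to reduce everything to a one-line computation with the floor function, exactly as was done for the analogous coefficient lemmas \cref{lem:c_jk,lem:c_jk_minus}. Set $m := k - j \geq 1$, so that by definition $b_{jk} = 2\floor{m/2} - 1$. First I would split on the parity of $m$: if $m$ is even then $\floor{m/2} = m/2$ and hence $b_{jk} = m - 1 = k - j - 1$; if $m$ is odd then $\floor{m/2} = (m-1)/2$ and hence $b_{jk} = (m-1) - 1 = m - 2 = k - j - 2$. This establishes the first clause in each bullet.

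For the index shift, observe that replacing $j$ by $j-1$ sends $m = k-j$ to $m' = k - (j-1) = m + 1$, which has the opposite parity. If $m$ is even, then $m'$ is odd, so by the odd case just proved $b_{j-1,k} = m' - 2 = (m+1) - 2 = m - 1 = b_{jk}$. If $m$ is odd, then $m'$ is even, so by the even case $b_{j-1,k} = m' - 1 = (m+1) - 1 = m = (m-2) + 2 = b_{jk} + 2$. This gives the second statement in each bullet and completes the proof.

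There is essentially no real obstacle here; the only point requiring care is keeping track of \emph{which} parity case of the (already-proved) first clause one is invoking when computing $b_{j-1,k}$, since the parity flips under the shift $j \mapsto j-1$. As with \cref{lem:c_jk}, only the value of $b_{jk}$ modulo $4$ ultimately matters downstream (it appears as an exponent on $i$ in \cref{lem:Qjk_paulis}), but the exact integer identities stated here are precisely what make the telescoping and inductive arguments in \cref{subsec:dla_xy_z_path} go through cleanly, so they are worth recording explicitly. I would present the whole thing as a short routine observation rather than a formal multi-case derivation.
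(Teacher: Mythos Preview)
Your proof is correct and exactly the kind of routine floor-function parity split the paper has in mind; indeed the paper omits the proof entirely, treating it as a simple observation analogous to \cref{lem:c_jk}. Your presentation is if anything more detailed than necessary, but there is nothing to correct.
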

        
            \begin{proof}[Proof (\cref{lem:Qjk_paulis})]    
                Assume that $k - j$ is odd.
                \begin{align}
                    \frac{1}{2}\com{D_{j,k}, P_{j,k}} &= \frac{i^{1 + k - j}}{8}\com{Z_j - Z_k,\ \AB{XX}{jk} + \AB{YY}{jk}} \\
                    &=  \frac{i^{1 + k - j}}{8} \bigP{
                    \begin{aligned}
                        &\com{Z_j,\AB{XX}{jk}} + \com{Z_j , \AB{YY}{jk}} \\
                        &-\com{Z_k, \AB{XX}{jk}} - \com{ Z_k,\AB{YY}{jk}}
                    \end{aligned}
                    }\\
                    &=  \frac{i^{1 + k - j}}{8} \bigP{
                    \begin{aligned}
                        &\com{Z,X}_jX_k + \com{Z,Y}_jY_k \\
                        &-X_j\com{Z, X}_k -Y_j\com{Z, Y}_k
                    \end{aligned}
                    }Z_{[j+1:k-1]}\\
                    &=  \frac{i^{1 + k - j}}{8} \bigP{
                    \begin{aligned}
                        &2iY_jX_k -2iX_jY_k \\
                        &-2iX_jY_k +2iY_jX_k
                    \end{aligned}
                    }Z_{[j+1:k-1]}\\
                    &=  \frac{i^{k - j + 2}}{2}\bigP{\AB{YX}{jk} - \AB{XY}{jk}}
                \end{align}
        
                \noindent Note that $k - j + 2 \equiv k - j - 2 \bmod 4 = \bopp{jk}$ and thus $\frac{1}{2}\com{D_{j,k}, P_{j,k}} = Q_{j,k}$.
    
                The $k-j$ even case follows similarly.
            \end{proof}
        
            Next, we show that the $\bigB{P_{jk}, Q_{\ell m}, D_a}$ satisfy $\su{n}$ relations. As in \cref{pf:P_Pminus_Dminus_sun}, we do not show all relations but an important subset of them.
        
            \begin{proof}[(3)]\label{pf:PQD_sun}
                \begin{itemize}
                    \item[]
                    \item[(SU1-3)] We have already shown that that $\bigB{P_{j,k}}_{j,k}$ satisfy \cref{eq:SO1,eq:SO2,eq:SO3} in $\gxyp{n}$.
                    \item[(SU4)] Assume $k - j$ is odd and $\ell - k$ even. Using \cref{lem:AB_minus_CD_minus_coms},
                        \begin{align}
                            \com{Q_{jk}, Q_{k\ell}} &= \frac{i^{b_{jk} + b_{k\ell}}}{4}\com{\AB{YX}{jk} - \AB{XY}{jk},\ \AB{XX}{k\ell} + \AB{YY}{k\ell}} \\
                            &= \frac{i^{(k - j - 2) + (\ell - k - 1)}}{4} \bigP{
                            \begin{aligned}
                                &\com{\AB{YX}{jk} , \AB{XX}{k\ell} } + \com{\AB{YX}{jk},  \AB{YY}{k\ell}} \\
                                &-\com{\AB{XY}{jk}, \AB{XX}{k\ell}} - \com{\AB{XY}{jk}, \AB{YY}{k\ell}}
                            \end{aligned}
                            }\\
                            &= \frac{i^{\ell -  j - 3}}{4} \bigP{
                            \begin{aligned}
                                & Y_j  [X,Y]_k  Y_{\ell} \\
                                &-X_j [Y,X]_k  X_{\ell}
                            \end{aligned}
                            }Z_{[j+1:\ell-1]\setminus k}\\
                            &= \frac{i^{\ell -  j - 3}}{4} \bigP{2iY_jZ_k Y_{\ell} +2iX_j Z_kX_{\ell}} Z_{[j+1:\ell-1]\setminus k}\\
                            &=\frac{i^{\ell -  j - 2}}{2}\bigP{\AB{XX}{j\ell} + \AB{YY}{j\ell}}
                            \\
                            &=-\frac{i^{\ell -  j}}{2}\bigP{\AB{XX}{j\ell} + \AB{YY}{j\ell}}
                            \\
                            &=-P_{j,\ell}
                        \end{align}
    
                        \noindent Using \cref{lem:c_jk} on the exponent.
                        
                    \item[(SU5,6)] Follow similarly as SU5, using \cref{lem:AB_minus_CD_minus_coms} to handle the individual terms and showing that the exponent matches correctly.
    
                    \item[(SU7)] Assume $k-j$ odd. Once again using \cref{lem:AB_minus_CD_minus_coms} for the intermediate calculations,
                        \begin{align}
                            \com{P_{jk}, Q_{jk}} &= \frac{i^{c_{jk} + b_{jk}}}{4}\com{\AB{XX}{jk} + \AB{YY}{jk}, \AB{YX}{jk} - \AB{XY}{jk}} \\
                            &=\frac{i^{c_{jk} + b_{jk}}}{4}\bigP{\begin{aligned}
                                &\com{\AB{XX}{jk}, \AB{YX}{jk}} - \com{\AB{XX}{jk} ,  \AB{XY}{jk}}\\
                                &+\com{\AB{YY}{jk}, \AB{YX}{jk}} - \com{\AB{YY}{jk},  \AB{XY}{jk}}
                            \end{aligned}}\\
                            &=\frac{i^{c_{jk} + b_{jk}}}{4}\bigP{\begin{aligned}
                                &2iZ_j - 2iZ_k\\
                                &-2iZ_k + 2iZ_j
                            \end{aligned}}\\
                            &=i^{c_{jk} + b_{jk} + 1}\bigP{Z_j - Z_k}
                        \end{align}
    
                        \noindent Using \cref{lem:c_jk,lem:b_jk} and that $2(k-j) \equiv 2 \bmod 4$:
                        \begin{equation}
                            c_{jk} + b_{jk} + 1 = (k - j) + (k - j - 2) + 1 = 2(k-j) - 1 \equiv 1 \bmod 4
                        \end{equation}
    
                        \noindent Thus, $\com{P_{jk}, Q_{jk}} = 2 D_{j,k}$.
    
                    \item[(SU8,9)] Follow similarly as SU7.

                    \item[(SU10)] Assume $\ell - k$ is even. 
                        \begin{align}
                            \com{D_{j,k}, P_{k,\ell}} &= \frac{i^{1 + c_{k,\ell}}}{4}\com{Z_j - Z_{k}, \AB{XY}{k,\ell} - \AB{YX}{k,\ell}}\\
                            &= -\frac{i^{1 + c_{k,\ell}}}{4}\bigP{\com{Z_{k}, \AB{XY}{k,\ell}} -\com{Z_{k}, \AB{YX}{k,\ell}}}\\
                            &= -\frac{i^{1 + c_{k,\ell}}}{4}\bigP{2iY_{k} Z_{[j+2:\ell-1]}Y_\ell +2iX_{k} Z_{[j+2:\ell-1]}X_\ell}\label{eq:su16_pqd}\\
                            &= -\frac{i^{2 + c_{k,\ell}}}{2}\bigP{\AB{XX}{k,\ell} + \AB{YY}{k,\ell} }
                        \end{align}
        
                        \noindent where we have used \cref{lem:com_Z_AB} in \cref{eq:su16_pqd}. Working out the exponent:
                        \begin{align}
                            2 + c_{k,\ell} = 2 + (\ell - k + 1) &= \ell - k + 3 \\
                            &\equiv \ell - k - 1 \bmod 4 \\
                            &= b_{k,\ell}
                        \end{align}
    
                        \noindent Thus $\com{D_{j,k}, P_{k,\ell}} = -Q_{k, \ell}$
    
                    \item[(SU11-13)] Follows similarly as SU10.

                \end{itemize}
            \end{proof}

        \subsection{Lie Algebra of the $XY$ Cycle with $ R_{Z} $}\label{subsec:dla_xy_z_cycle}
        
            The last polynomial-sized DLA we discuss in this manuscript is the $XY$ cycle with $R_Z$ rotations. Define the generator set $\Gxycz{n} := \Gxyc{n} \cup i\Gz{n}$ which induces the DLA $\gxycz{n}$.
        
            \begin{proposition}\label{prop:XY_Z_DLA_cycle}
                $\gxycz{n} \cong \u{1} \oplus \su{n} \oplus \su{n}$.
            \end{proposition}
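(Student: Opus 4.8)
The plan is to follow the same blueprint used for $\gxypz{n}$, now carrying the extra ``anticlockwise'' operators that the cycle-closing generator $XY_{n,1}$ contributes, and then to split off two commuting copies of $\su{n}$ via the involution $\hat{Z} = (-1)^{\lfloor n/2\rfloor}\Zn$. First I would identify the center. Exactly as in \cref{lem:Zplus_center}, the element $Z^{+} = i\sum_{v}Z_{v}$ commutes with every generator in $\Gxycz{n}$: the two single-$Z$ contributions cancel against each $XY_{j,j+1}$ and against $XY_{n,1}$, and it trivially commutes with each $Z_{v}$. Hence $Z(\gxycz{n}) = \Span{\bigB{Z^{+}}} \cong \u{1}$, and this center splits off as a direct summand by the decomposition theorem for subalgebras of $\su{2^{n}}$.

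Next I would enumerate the operators produced by the DLA procedure. From the cycle connectivity we already have, via \cref{prop:XY_DLA_cycle_even,prop:XY_DLA_cycle_odd}, the clockwise terms $P_{j,k}$ and the anticlockwise terms $\Popp{\ell,m}$. Adjoining $\Gz{n}$ makes the diagonal operators $D_{a,a+1} = \tfrac{i}{2}(Z_{a}-Z_{a+1})$ available as linear combinations of generators, and therefore $Q_{j,k} = \tfrac12\com{D_{j,k},P_{j,k}}$ and $\Qopp{\ell,m} = \tfrac12\com{D_{\ell,m},\Popp{\ell,m}}$ (with the Pauli-string forms of \cref{lem:Qjk_paulis}). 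The wrap-around diagonals $\Dopp{b,b+1}$ arise for odd $n$ by nesting $XY$ terms fully around the cycle as in \cref{prop:XY_DLA_cycle_odd} (\cref{lem:XY_fully_nested_coms}), and for even $n$ by the identity $\Dopp{j,j+1} = \com{P_{j,j+1},\Qopp{j,j+1}}$. I would then verify that
$$\Span{\bigB{Z^{+},\, P_{j,k},\, \Popp{\ell,m},\, Q_{j,k},\, \Qopp{\ell,m},\, D_{a,a+1},\, \Dopp{b,b+1}}}$$
is closed under the bracket, which (since $\gxycz{n}$ is the smallest Lie algebra containing $\Gxycz{n}$) identifies it with $\gxycz{n}$.

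For the decomposition I would use the $\hat{Z}$-trick of \cref{lem:I_plus_Zhat}. The key multiplication identities are $\Zn\cdot D_{a,a+1} = \Dopp{a,a+1}$ always, together with $\Zn\cdot P_{j,k} = \Popp{j,k}$ and $\Zn\cdot Q_{j,k} = \Qopp{j,k}$ when $n$ is even, versus $\Zn\cdot P_{j,k} = -\Qopp{j,k}$ and $\Zn\cdot Q_{j,k} = \Popp{j,k}$ when $n$ is odd. In either parity, setting
$$B^{\pm} := \tfrac12\bigP{I \pm \Zn}\bigB{P_{j,k},\, Q_{j,k},\, \Dopp{j,j+1} : 1 \le j < k \le n}$$
gives $\gxycz{n} = \Span{\bigB{Z^{+}}} \oplus \Span{B^{+}} \oplus \Span{B^{-}}$. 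By \cref{lem:I_plus_Zhat}, brackets inside $\Span{B^{+}}$ (and inside $\Span{B^{-}}$) collapse to brackets of the underlying $P,Q,D$ operators, which satisfy the $\su{n}$ structure relations established in the $XY$-path-with-$R_{Z}$ derivation, with $\tfrac12(I+\Zn)P_{j,k}$ playing the role of the skew-symmetric basis, $\tfrac12(I+\Zn)Q_{j,k}$ the symmetric basis, and $\tfrac12(I+\Zn)D_{j,j+1}$ the diagonal basis; hence $\Span{B^{+}} \cong \Span{B^{-}} \cong \su{n}$. Since $\tfrac12(I+\Zn)$ and $\tfrac12(I-\Zn)$ are complementary orthogonal projections, $\com{B^{+},B^{-}} = \{0\}$, which yields $\gxycz{n} \cong \u{1} \oplus \su{n} \oplus \su{n}$.

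I expect the main obstacle to be the closure step. The difficulty is the parity bookkeeping: one must track, case by case, which of $P,\Popp{},Q,\Qopp{}$ (and with which power of $i$) each commutator produces — and the answer toggles with the parity of $k-j$, of $\ell-k$, and of $n$ — and one must confirm both routes to $\Dopp{b,b+1}$ as well as the $\Zn$-multiplication identities above. All of these reduce to the commutation tables for the $Z$-padded Pauli strings $\AB{AB}{jk}$ and $\ABopp{AB}{jk}$ collected in \cref{sec:com_calcs}, so the labour is bounded, but it is where essentially all of the work lies; the $\su{n}$-identification of $B^{\pm}$ is then a near-corollary of the path-with-$R_{Z}$ computation combined with \cref{lem:I_plus_Zhat}.
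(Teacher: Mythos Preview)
Your proposal is correct and follows essentially the same approach as the paper: identify the center $\Span{Z^{+}}$, enumerate the six families $P,\Popp{},Q,\Qopp{},D,\Dopp{}$, record the $\hat{Z}$-multiplication identities (with the even/odd-$n$ swap of $\Popp{}$ and $\Qopp{}$ exactly as you state), and split via the complementary projections $\tfrac12(I\pm\hat{Z})$ into two mutually commuting copies of $\su{n}$ whose structure constants are inherited from the path-with-$R_{Z}$ computation. The paper's appendix packages the same steps into \cref{lem:Qjk_minus_paulis}, \cref{prop:Z_hat_commutes_D_P_and_Q}, \cref{lem:Z_hat_times_D_P_and_Q}, and \cref{prop:B_cong_sun}, and, as you anticipate, the only real labour is the parity-tracked closure check, all of which reduces to the tables in \cref{sec:com_calcs}.
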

        
            This proposition is proven in the following steps.
        
            \begin{enumerate}
                \item $\gxycz{n}$ has the same center $\Span{Z^+}$ for the same reasons as in \cref{lem:Zplus_center}.
                \item Every term we have seen thus far is in this DLA: $P_{jk}, \Popp{jk}, D_{jk}, \Dopp{jk}, Q_{jk}$, as well as $\Qopp{jk}$ as described below.
                \item Similar to \cref{prop:XY_DLA_cycle_even}, we need to split our space up using the $\hat{Z}$ operator. Define the sets
                \begin{equation}\label{eq:B_plus_minus}
                    B^\pm := \frac{1}{2}(I \pm \hat{Z}) \bigB{P_{jk}, Q_{\ell m}, D_{a} : 1 \leq j < k \leq n; 1 \leq \ell < m \leq n; 1 \leq a < n }
                \end{equation}
    
                First, we show that $\Span{B^+} + \Span{B^-} + \Span{Z^+} = \gxycz{n}$ and that $[B^+, B^-] = 0$. From this, we have  $\gxycz{n} = \lie{B^+} \oplus \lie{B^-} \oplus \u{1}$. We then show that both $B^+$ and $B^-$ are isomorphic to $\su{n}$. Similar to the XY-cycle cases, we have slightly different behavior depending on the parity of $n$ though this does not change the end result.
            \end{enumerate}

            Now that we have closed boundary conditions, we have the $\bigB{\Popp{jk}}$ terms. This results in the last set of operators that can be generated.
        
            \begin{definition}[$\Qopp{jk}$]
                For any $1 \leq j < k \leq n$, define 
                \begin{align}
                    \Qopp{jk} &:= \frac{1}{2}\com{D_{jk}, \Popp{jk}}
                \end{align}
            \end{definition}
        
            Similar to the previous scenarios, we can write $\Qopp{jk}$ in the Pauli basis as follows.
        
            \begin{lemma}\label{lem:Qjk_minus_paulis}
                \begin{align}
                    \Qopp{jk} &= \frac{i^{\bopp{jk}}}{2} \begin{cases}\label{eq:Qjk_minus_paulis}
                        \ABopp{XY}{jk} - \ABopp{YX}{jk}  &(n - k + j \text{ odd}) \\ 
                         \ABopp{XX}{jk} + \ABopp{YY}{jk}&(n - k + j \text{ even}) \\
                    \end{cases} \\
                    \bopp{jk} &:= 2\floor{\frac{n - k + j}{2}} + 1 \label{b_opp_jk_def}
                \end{align}
            \end{lemma}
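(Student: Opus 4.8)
The element $D_{jk} = \tfrac{i}{2}\bigP{Z_j - Z_k}$ is a real linear combination of the generators $i\Gz{n}$, and $\Popp{jk} \in \gxycz{n}$ once we have cycle connectivity (it is generated by looping around the cycle, cf.\ the $XY$-cycle analysis), so $\Qopp{jk} := \tfrac12\com{D_{jk},\Popp{jk}}$ is a well-defined element of $\gxycz{n}$. The plan is to prove \cref{lem:Qjk_minus_paulis} by a direct computation that mirrors, line for line, the proof of \cref{lem:Qjk_paulis}, splitting on the parity of $n-k+j$, since this is exactly what selects the Pauli form of $\Popp{jk}$ in \cref{def_Pjk_minus}.

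The one new ingredient compared to \cref{lem:Qjk_paulis} is the commutation of a single $Z_j$ (or $Z_k$) with the boundary-string operators $\ABopp{AB}{jk}$. Since $\ABopp{AB}{jk}$ carries $Z$ on the qubits $[1:j-1]$ and $[k+1:n]$ and Paulis $A,B \in \{X,Y\}$ on qubits $j,k$, the operators $Z_j,Z_k$ meet only those $Z$-strings (with which they trivially commute) and the $A,B$ factors on qubits $j,k$; hence $\com{Z_j,\cdot}$ and $\com{Z_k,\cdot}$ reduce to the one-qubit rules $\com{Z,X}=2iY$, $\com{Z,Y}=-2iX$ acting in those two slots. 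These are precisely the relations collected in \cref{sec:com_calcs}. In the case $n-k+j$ odd, $\Popp{jk}\propto \ABopp{XX}{jk}+\ABopp{YY}{jk}$, and the four commutators with $Z_j-Z_k$ collapse after cancellation to a multiple of $\ABopp{XY}{jk}-\ABopp{YX}{jk}$; in the case $n-k+j$ even, $\Popp{jk}\propto \ABopp{XY}{jk}-\ABopp{YX}{jk}$ and they collapse to a multiple of $\ABopp{XX}{jk}+\ABopp{YY}{jk}$. Either way, the scalar prefactors (the $\tfrac{i}{2}$ from $D_{jk}$, the $\tfrac{i^{\copp{jk}}}{2}$ from $\Popp{jk}$, the overall $\tfrac12$, and a $\pm 4i$ from the collapsed commutators) combine, using $-i^2=1$, to $\tfrac{i^{\copp{jk}}}{2}$ in the odd case and $\tfrac{i^{\copp{jk}+2}}{2}$ in the even case.

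It then remains only to reconcile the $i$-exponent with $\bopp{jk}$ modulo $4$. Using \cref{lem:c_jk_minus}, $\copp{jk}=n-k+j$ when $n-k+j$ is odd and $\copp{jk}=n-k+j-1$ when it is even; combined with $\bopp{jk}=2\floor{\tfrac{n-k+j}{2}}+1$ one checks $\copp{jk}=\bopp{jk}$ in the odd case and $\copp{jk}+2=\bopp{jk}$ in the even case, giving exactly \cref{eq:Qjk_minus_paulis}.

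I do not expect a genuine obstacle here: the calculation is routine and entirely parallel to \cref{lem:Qjk_paulis}, and the only delicate point is the $\bmod 4$ bookkeeping of the $i$-exponents (the reason the statement is phrased via the auxiliary constant $\bopp{jk}$). As a cross-check one can also note that $\Zn$ commutes with $D_{jk}$ and, by \cref{lem:I_plus_Zhat} together with the multiplication-by-$\Zn$ identities already used for $\gxycz{n}$, sends $Q_{jk}$ to $\Qopp{jk}$ up to a sign depending on the parity of $n$; reading off the coefficient this way recovers the same $i^{\bopp{jk}}$ and confirms the direct computation.
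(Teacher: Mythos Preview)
Your proposal is correct and follows essentially the same approach as the paper: a direct computation of $\tfrac12\com{D_{jk},\Popp{jk}}$ using the single-qubit commutators $[Z,X]=2iY$, $[Z,Y]=-2iX$ on slots $j,k$, followed by the $\bmod 4$ bookkeeping relating $\copp{jk}$ to $\bopp{jk}$. The paper writes out only the even $n-k+j$ case and states the other follows similarly; your sketch covers both parities and your exponent identities $\copp{jk}=\bopp{jk}$ (odd) and $\copp{jk}+2=\bopp{jk}$ (even) are exactly right.
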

        
            \noindent Just the terms in $P_{jk}$ and $Q_{jk}$ are flipped based on parity of $k - j$, the same behavior occurs with $\Popp{jk}$ and $\Qopp{jk}$.

            We state a lemma about the $\bopp{jk}$ terms similar to previous lemmas.
            \begin{lemma}\label{lem:b_jk_minus}
                \begin{itemize}
                    \item[]
                    \item If $n -k +j$ is even, then $\bopp{j,k} = n-k + j +1$. Also, $\bopp{j,k} = \bopp{j+1,k}$.
                    \item If $n -k + j$ is odd, then $\bopp{jk} = n-k + j$. Also, $\bopp{jk} = \bopp{j+1,k} - 2$.
                \end{itemize}
            \end{lemma}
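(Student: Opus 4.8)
The statement is an elementary arithmetic fact about the floor function, and the plan is to verify it directly by unwinding the definition $\bopp{jk} = 2\floor{\frac{n-k+j}{2}} + 1$ and splitting on the parity of the integer $m := n - k + j$, exactly as in the proofs of \cref{lem:c_jk,lem:c_jk_minus,lem:b_jk}. (Note that $j \ge 1$ and $k \le n$ force $m \ge 1$, so all quantities in play are nonnegative integers, but the floor identities used below hold for every integer.)

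First I would pin down the value of $\bopp{jk}$ itself. When $m$ is even, $\floor{\frac{m}{2}} = \frac{m}{2}$, so $\bopp{jk} = m + 1 = n - k + j + 1$; when $m$ is odd, $\floor{\frac{m}{2}} = \frac{m-1}{2}$, so $\bopp{jk} = m = n - k + j$. This yields the first clause of each bullet.

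Next I would track what happens to the first index. Passing from $j$ to $j+1$ sends $m \mapsto m+1$ and therefore flips its parity. In the even-$m$ case $m+1$ is odd, so $\bopp{j+1,k} = 2\cdot\frac{m}{2} + 1 = m + 1 = \bopp{jk}$; in the odd-$m$ case $m+1$ is even, so $\bopp{j+1,k} = 2\cdot\frac{m+1}{2} + 1 = m + 2 = \bopp{jk} + 2$, i.e. $\bopp{jk} = \bopp{j+1,k} - 2$. Together with the previous paragraph this establishes both bullets and finishes the lemma.

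There is no genuine obstacle here: like its analogues \cref{lem:c_jk,lem:c_jk_minus,lem:b_jk}, this lemma is purely a bookkeeping device, and the only mild care needed is keeping the parity cases straight. It will be used later to reduce the exponents of $i$ that arise in commutators of the $\Qopp{jk}$ operators modulo $4$ when verifying their $\su{n}$ commutation relations inside $\gxycz{n}$.
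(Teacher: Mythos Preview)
Your proof is correct and is exactly the direct parity-case verification the paper has in mind; in fact the paper omits the proof entirely, treating \cref{lem:b_jk_minus} (like \cref{lem:c_jk,lem:c_jk_minus,lem:b_jk}) as a simple observation.
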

    
            \begin{proof}[Proof of \cref{lem:Qjk_minus_paulis}]
                Assume that $k-j$ is odd and that $n - k + j$ is even.
                \begin{align}
                    \frac{1}{2}\com{D_{jk}, \Popp{jk}} &= \frac{i^{1 + \copp{jk}}}{8}\com{Z_j - Z_k,\ \ABopp{XY}{jk} - \ABopp{YX}{jk}} \\
                    &=\frac{i^{1 + \copp{jk}}}{8}\bigP{\begin{aligned}
                        &\com{Z_j , \ABopp{XY}{jk}} - \com{Z_j, \ABopp{YX}{jk}}\\
                        &-\com{Z_k, \ABopp{XY}{jk}} + \com{Z_k, \ABopp{YX}{jk}}
                    \end{aligned}}\\
                    &=\frac{i^{1 + \copp{jk}}}{8}Z_{[1:j-1]}\bigP{\begin{aligned}
                        &[Z,X]_j Y_k - [Z,Y]_j X_k\\
                        &-X_j [Z,Y]_k + Y_j [Z,X]_k
                    \end{aligned}}Z_{[k+1:n]}\\
                    &=\frac{i^{1 + \copp{jk}}}{8}Z_{[1:j-1]}\bigP{\begin{aligned}
                        &2iY_j Y_k +2iX_j X_k\\
                        &+2iX_j X_k + 2iY_j Y_k
                    \end{aligned}}Z_{[k+1:n]}\\
                    &=\frac{i^{2 + \copp{jk}}}{2}\bigP{\ABopp{XX}{jk} + \ABopp{YY}{jk}}
                \end{align}
    
                Now, $2 + \copp{jk} = 2 + (n-k + j -1) = n - k + j +1 = \bopp{jk}$ and so we have shown that $\frac{1}{2}\com{D_{jk}, \Popp{jk}} = \frac{i^{\bopp{jk}}}{2}\bigP{\ABopp{XX}{jk} + \ABopp{YY}{jk}}$. The other cases follow similarly.
            \end{proof}
        
            We have the following proposition which extends the cases in \cref{lem:I_plus_Zhat}.
            \begin{proposition}\label{prop:Z_hat_commutes_D_P_and_Q}
                Let $A,B \in \bigB{Q_{jk}}_{j,k}$ and $C,D \in \bigB{\Qopp{jk}}_{j,k}$.
                \begin{enumerate}
                    \item $  [\hat{Z}, \bigB{A,C}] = \com{\frac{1}{2}(I \pm \hat{Z}), \bigB{A,C}} = \bigB{0} $.
                    \item 
                    \begin{align}
                        \com{\frac{1}{2}(I + \hat{Z})\bigB{A,C}, \bigB{B,D}} &= \com{\bigB{A,C}, \frac{1}{2}(I + \hat{Z})\bigB{B,D}} 
                        =  \frac{1}{2}(I + \hat{Z})\com{\bigB{A,C},\bigB{B,D}}\\
                        \com{\frac{1}{2}(I - \hat{Z})\bigB{A,C}, \bigB{B,D}} &= \com{\bigB{A,C}, \frac{1}{2}(I - \hat{Z})\bigB{B,D}} 
                        =  \frac{1}{2}(I - \hat{Z})\com{\bigB{A,C},\bigB{B,D}}
                    \end{align}
        
                    \item 
                    \begin{align}
                        \com{\frac{1}{2}(I + \hat{Z})\bigB{A,C}, \frac{1}{2}(I + \hat{Z})\bigB{B,D}} &= \frac{1}{2}(I + \hat{Z})\com{\bigB{A,C}, \bigB{B,D}} \\
                        \com{\frac{1}{2}(I - \hat{Z})\bigB{A,C}, \frac{1}{2}(I - \hat{Z})\bigB{B,D}} &= \frac{1}{2}(I - \hat{Z})\com{\bigB{A,C}, \bigB{B,D}}
                    \end{align}
                \end{enumerate}
            \end{proposition}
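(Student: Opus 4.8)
The plan is to recognize that this proposition is the exact analogue of \cref{lem:I_plus_Zhat}, now with the operators $Q_{jk}$ and $\Qopp{jk}$ playing the roles of $P_{jk}$ and $\Popp{jk}$, so that the same three-step argument transfers essentially verbatim. The substantive input is that $Q$ and $Q^-$ live in the same linear span structure as $P$ and $P^-$, after which everything is formal.

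First I would invoke the Pauli-basis characterizations \cref{lem:Qjk_paulis} and \cref{lem:Qjk_minus_paulis}: up to a scalar prefactor, each $Q_{jk}$ equals either $\AB{XX}{jk} + \AB{YY}{jk}$ or $\AB{YX}{jk} - \AB{XY}{jk}$, and each $\Qopp{jk}$ equals either $\ABopp{XX}{jk} + \ABopp{YY}{jk}$ or $\ABopp{XY}{jk} - \ABopp{YX}{jk}$. In particular $Q_{jk}$ (resp.\ $\Qopp{jk}$) lies in the real span of exactly the family $\bigB{\AB{AB}{jk} : A,B \in \{X,Y\}}$ (resp.\ $\bigB{\ABopp{AB}{jk}}$) out of which $P_{jk}$ (resp.\ $\Popp{jk}$) was built. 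In the proof of \cref{lem:I_plus_Zhat} item (2) it was already established that $\com{\hat{Z}, \AB{AB}{jk}} = \com{\hat{Z}, \ABopp{AB}{jk}} = 0$ for all such $A,B$ and all $1 \leq j < k \leq n$ (each such Pauli string carries exactly two single-qubit factors that anticommute with $Z$, so the two sign flips picked up from $\Zn$ cancel). By linearity this immediately gives $\com{\hat{Z}, \bigB{A,C}} = \bigB{0}$ for $A \in \bigB{Q_{jk}}_{j,k}$ and $C \in \bigB{\Qopp{jk}}_{j,k}$, and hence $\com{\tfrac{1}{2}(I \pm \hat{Z}), \bigB{A,C}} = \bigB{0}$, which is item (1).

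Items (2) and (3) are then formal consequences of item (1) together with the fact recorded in \cref{lem:I_plus_Zhat}(1) that $\tfrac{1}{2}(I \pm \hat{Z})$ are orthogonal idempotents. For item (2), since $\hat{Z}$ also commutes with every $B \in \bigB{Q_{jk}}_{j,k}$ and $D \in \bigB{\Qopp{jk}}_{j,k}$, the factor $\tfrac{1}{2}(I \pm \hat{Z})$ can be slid past either argument of the commutator and pulled out in front of it, giving exactly the claimed chain of equalities. For item (3), one applies item (2) twice and then collapses the repeated projector using $\big(\tfrac{1}{2}(I \pm \hat{Z})\big)^2 = \tfrac{1}{2}(I \pm \hat{Z})$; explicitly,
\begin{align*}
\com{\tfrac{1}{2}(I + \hat{Z})\bigB{A,C},\ \tfrac{1}{2}(I + \hat{Z})\bigB{B,D}}
&= \tfrac{1}{2}(I + \hat{Z}) \cdot \tfrac{1}{2}(I + \hat{Z}) \com{\bigB{A,C},\ \bigB{B,D}} \\
&= \tfrac{1}{2}(I + \hat{Z}) \com{\bigB{A,C},\ \bigB{B,D}},
\end{align*}
and identically with $(I - \hat{Z})$ in place of $(I + \hat{Z})$.

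I do not expect a real obstacle here: the only point that needs checking is that the Pauli-basis forms in \cref{lem:Qjk_paulis} and \cref{lem:Qjk_minus_paulis} genuinely express $Q_{jk}$ and $\Qopp{jk}$ as combinations of the $\AB{AB}{jk}$ / $\ABopp{AB}{jk}$ strings and nothing else — which they do — and that reduces the whole statement to the commutation identity $\com{\Zn, \bigB{XX,YY,XY,YX,ZI,IZ}} = 0$ already used for \cref{lem:I_plus_Zhat}. Hence \cref{lem:I_plus_Zhat} applies without modification.
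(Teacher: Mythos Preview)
Your proposal is correct and matches the paper's own approach exactly: the paper simply states ``The proof follows exactly as in \cref{lem:I_plus_Zhat},'' and you have spelled out precisely why that transfer works, namely that $Q_{jk}$ and $\Qopp{jk}$ are (by \cref{lem:Qjk_paulis} and \cref{lem:Qjk_minus_paulis}) built from the same $\AB{AB}{jk}$, $\ABopp{AB}{jk}$ Pauli strings already shown to commute with $\hat{Z}$, after which items (2) and (3) are formal.
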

        
            The proof follows exactly as in \cref{lem:I_plus_Zhat}.

            \noindent We know that $\hat{Z} \cdot P_{jk} = \Popp{jk}$ for even $n$. These remaining scenarios are summarized in the following lemma.
        
            \begin{lemma}\label{lem:Z_hat_times_D_P_and_Q}
                For any $1 \leq j < k \leq n$,
                \begin{align}\label{eq:Z_hat_times_D}
                    \hat{Z} \cdot D_{jk} &= \Dopp{jk}\\
                    \hat{Z} \cdot P_{jk} &= \begin{cases}
                        \Popp{jk} &n \text{ even}\\
                        -\Qopp{jk} &n \text{ odd}
                    \end{cases}\label{eq:Z_hat_times_P}\\
                    \hat{Z} \cdot Q_{jk} &= \begin{cases}
                        \Qopp{jk} &n \text{ even}\\
                        \Popp{jk} &n \text{ odd}
                    \end{cases}\label{eq:Z_hat_times_Q}
                \end{align}
            \end{lemma}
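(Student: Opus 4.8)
The plan is to reduce the whole lemma to one elementary ``$Z$-conjugation'' rule and then check that the scalar prefactors match, case by case, exactly as was already carried out for $\hat{Z}\cdot P_{jk}$ in the even cycle (\cref{Z_times_Pjk}).

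First I would record the core computation. Since $\Zn$ is diagonal with support on every qubit, left multiplication by $\Zn$ acts tensor-factor-wise via $I \mapsto Z$, $Z \mapsto I$, $X \mapsto iY$, $Y \mapsto -iX$ (using $ZX = iY$, $ZY = -iX$ from \cref{sec:com_calcs}). A ``path'' operator $\AB{AB}{jk} = A_j Z_{[j+1:k-1]} B_k$ carries $I$ on $[1:j-1]\cup[k+1:n]$ and $Z$ on the interior $[j+1:k-1]$, so $\Zn$ turns those exterior $I$'s into $Z$'s and the interior $Z$'s into $I$'s, producing a ``loop'' operator $\ABopp{A'B'}{jk} = Z_{[1:j-1]}A'_j B'_k Z_{[k+1:n]}$, and the only phases come from the two endpoints. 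This gives $\Zn \cdot \AB{XX}{jk} = -\ABopp{YY}{jk}$, $\Zn \cdot \AB{YY}{jk} = -\ABopp{XX}{jk}$, $\Zn \cdot \AB{XY}{jk} = \ABopp{YX}{jk}$, $\Zn \cdot \AB{YX}{jk} = \ABopp{XY}{jk}$, and $\Zn \cdot Z_j = \Zbar{j}$ on diagonal terms. Since $i^{d_n} = i^{2\floor{n/2}+1} = (-1)^{\floor{n/2}}\, i$, we get $\hat{Z}\cdot D_{jk} = (-1)^{\floor{n/2}}\tfrac{i}{2}(\Zbar{j}-\Zbar{k}) = \tfrac{i^{d_n}}{2}(\Zbar{j}-\Zbar{k}) = \Dopp{jk}$, which settles \cref{eq:Z_hat_times_D}.

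Next I would handle $P_{jk}$ and $Q_{jk}$ by splitting on the parity of $k-j$; once the parity of $n$ is also fixed this determines the parity of $n-k+j$, hence which Pauli form of $\Popp{jk}$ or $\Qopp{jk}$ is the target. For instance, when $k-j$ is odd, $P_{jk}\propto \AB{XX}{jk}+\AB{YY}{jk}$, so $\Zn \cdot P_{jk}\propto -(\ABopp{XX}{jk}+\ABopp{YY}{jk})$; if $n$ is even then $n-k+j$ is odd and $\ABopp{XX}{jk}+\ABopp{YY}{jk}$ is precisely the defining form of $\Popp{jk}$, whereas if $n$ is odd then $n-k+j$ is even and it is the defining form of $\Qopp{jk}$ --- this is exactly the $n$-even/$n$-odd split in \cref{eq:Z_hat_times_P}, and the analogous bookkeeping for $Q_{jk}$ (with the extra sign that yields $-\Qopp{jk}$ vs.\ $\Popp{jk}$) gives \cref{eq:Z_hat_times_Q}. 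The only remaining work is the scalar: one verifies that $(-1)^{\floor{n/2}}$ times the $i$-power produced by the computation equals the $i$-power in the definition of the target operator. That is a routine $\bmod\,4$ exponent check using $i^{d_n} = (-1)^{\floor{n/2}} i$, $2n \equiv 2 \pmod{4}$ for odd $n$, $i^n \equiv (-1)^{n/2}$ for even $n$, together with \cref{lem:c_jk,lem:c_jk_minus,lem:b_jk,lem:b_jk_minus} relating $c_{jk},\copp{jk},b_{jk},\bopp{jk}$ to $k-j$ and $n-k+j$; the even-$n$, $P$ instance is literally \cref{Z_times_Pjk}, and the other cases proceed identically. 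Note that the involution/commutation facts of \cref{lem:I_plus_Zhat,prop:Z_hat_commutes_D_P_and_Q} are not needed here --- they only matter downstream in \cref{prop:XY_Z_DLA_cycle}.

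I expect the main (and essentially only) obstacle to be bookkeeping: there are $2$ (parity of $k-j$) $\times\,2$ (parity of $n$) $\times\,2$ ($P$ versus $Q$) phase ledgers to balance, and it is easy to lose a factor of $i$ or a sign. There is no conceptual difficulty --- every ingredient is already available --- so the write-up is a matter of carefully tabulating the phases in each case.
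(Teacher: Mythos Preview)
Your proposal is correct and follows essentially the same approach as the paper: both compute the action of $\Zn$ on the Pauli tensor factors of $P_{jk}$, $Q_{jk}$, $D_{jk}$, identify the resulting string as the appropriate $\ABopp{\,\cdot\,}{jk}$ or $\Zbar{\cdot}$ form, and then verify the scalar prefactor by a $\bmod\,4$ exponent check using the $c_{jk},\copp{jk},b_{jk},\bopp{jk}$ lemmas. Your framing via the tensor-factor rules $I\mapsto Z$, $Z\mapsto I$, $X\mapsto iY$, $Y\mapsto -iX$ is slightly more systematic than the paper's case-by-case expansion, but the content is identical.
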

    
            Note that since $\hat{Z}$ is an involution, we also have the other direction e.g. $\hat{Z} \cdot \Popp{jk} = \begin{cases}
                    P_{jk} &n \text{ even}\\
                    Q_{jk} &n \text{ odd}
                \end{cases}$.
    
            \begin{proof}[(Proof \cref{lem:Z_hat_times_D_P_and_Q})]
                To begin:
                \begin{equation}
                    \hat{Z} \cdot D_{jk} = (-1)^{\floor{\frac{n}{2}}} \Zn \cdot \frac{i}{2}\bigP{Z_j - Z_k} = \frac{i^{2\floor{\frac{n}{2}} + 1}}{2}\bigP{\Zbar{j} - \Zbar{k}} = \Dopp{jk}
                \end{equation}
    
                Assume that $n$ is even. We know $\hat{Z} \cdot P_{jk} = \Popp{jk}$. Assume that $k - j$ is odd. This implies that $n - k + j$ is also odd. Then
                \begin{align}
                    \hat{Z} \cdot Q_{jk} &= (-1)^{\floor{\frac{n}{2}}} \Zn \cdot \frac{i^{b_{jk}}}{2}\bigP{\AB{YX}{jk} - \AB{XY}{jk}} \\
                    &=\frac{i^{2\floor{\frac{n}{2}} + b_{jk}}}{2}Z_{[1:j-1]}\bigP{(ZY)_j (ZZ)_{[j+1:k-1]} (ZX)_k - (ZX)_j (ZZ)_{[j+1:k-1]} (ZY)_k }Z_{[k+1:n]} \\
                    &=\frac{i^{2\floor{\frac{n}{2}} + b_{jk}}}{2}Z_{[1:j-1]}\bigP{-i^2 X_jY_k + i^2Y_j  X_k }Z_{[k+1:n]}\\
                    &=\frac{i^{2\floor{\frac{n}{2}} + b_{jk}}}{2}Z_{[1:j-1]}\bigP{i^4 X_jY_k - i^4Y_j  X_k }Z_{[k+1:n]}\\
                    &=\frac{i^{2\floor{\frac{n}{2}} + b_{jk} }}{2}\bigP{\ABopp{XY}{jk} - \ABopp{YX}{jk}}\label{eq:Zhat_Qjk}
                \end{align}
    
                \noindent We now handle the exponent. First note that $n$ is even and so $2\floor{\frac{n}{2}} = n$. Since $k -j$ is odd, we have that $b_{jk} = k - j - 2$. Furthermore, $2(k-j) \equiv 2 \mod 4$. Thus,
                \begin{align}
                    2\floor{\frac{n}{2}} + b_{jk}  = n + k - j - 2 \equiv  n-k + j \bmod 4 = \bopp{jk}
                \end{align}
    
                \noindent Putting this together, we have that $\hat{Z} \cdot Q_{jk} = \Qopp{jk}$.
    
                Next, assume that $n$ is odd. Still assume $k -j$ is odd. Note that this implies that $n - k + j$ is now even.
                \begin{align}
                    \hat{Z}\cdot P_{jk} &= (-1)^{\floor{\frac{n}{2}}} \Zn \cdot\frac{i^{c_{jk}}}{2} \bigP{\AB{XX}{jk} + \AB{YY}{jk}} \\
                    &=\frac{i^{2\floor{\frac{n}{2}} + c_{jk}}}{2}Z_{[1:j-1]}\bigP{(ZX)_j (ZZ)_{[j+1:k-1]} (ZX)_k + (ZY)_j (ZZ)_{[j+1:k-1]} (ZY)_k }Z_{[k+1:n]} \\
                    &=\frac{i^{2\floor{\frac{n}{2}} + c_{jk}}}{2}Z_{[1:j-1]}\bigP{i^2 Y_jY_k + i^2X_j X_k }Z_{[k+1:n]}\\
                    &=-\frac{i^{2\floor{\frac{n}{2}} + c_{jk}}}{2}\bigP{\ABopp{XX}{jk} + \ABopp{YY}{jk}}
                \end{align}
    
                \noindent Since $n$ is odd, $2\floor{\frac{n}{2}} = n -1$. Since $k-j$ is odd, $c_{jk} = k - j$. So,
                \begin{equation}
                    2\floor{\frac{n}{2}} + c_{jk} = n + k - j - 1 \equiv n-k + j +1 \bmod 4 = \bopp{jk}
                \end{equation}
    
                \noindent Thus, $\hat{Z}\cdot P_{jk} = -\Qopp{jk}$.
    
                Lastly, note that \cref{eq:Zhat_Qjk} is still valid: $\hat{Z} \cdot Q_{jk} = \frac{i^{2\floor{\frac{n}{2}} + b_{jk} }}{2}\bigP{\ABopp{XY}{jk} - \ABopp{YX}{jk}}$. The difference here is that $2\floor{\frac{n}{2}} = n -1$ and $b_{jk} =  k - j  - 2$. So,
                \begin{equation}
                    2\floor{\frac{n}{2}} + b_{jk} = n - k - j - 3 \equiv n-k + j -1 \bmod 4 = \copp{jk}
                \end{equation}
    
                \noindent Therefore, $\hat{Z} \cdot Q_{jk} = \frac{i^{\copp{jk}}}{2}\bigP{\ABopp{XY}{jk} - \ABopp{YX}{jk}} = \Popp{jk}$.
            \end{proof}
    
            \begin{corollary}
                $\gxycz{n} = \Span{B^+} + \Span{B^-} + \Span{Z^+}$
            \end{corollary}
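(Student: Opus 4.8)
The plan is to prove the two inclusions $\Span{B^+}+\Span{B^-}+\Span{Z^+}\subseteq\gxycz{n}$ and $\gxycz{n}\subseteq\Span{B^+}+\Span{B^-}+\Span{Z^+}$ separately, reusing \cref{lem:Z_hat_times_D_P_and_Q}, \cref{prop:Z_hat_commutes_D_P_and_Q}, and the already-established structure of the $XY$-path-with-$R_Z$ algebra.

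For the first inclusion I would argue that $Z^+\in\gxycz{n}$ exactly as in \cref{lem:Zplus_center}, and that each defining element $\frac{1}{2}(I\pm\hat Z)R$ of $B^\pm$, with $R$ one of $P_{jk},Q_{\ell m},D_{a,a+1}$, equals $\frac{1}{2} R\pm\frac{1}{2}\hat Z R$. Step (2) of the proof of \cref{prop:XY_Z_DLA_cycle} places every $P_{jk}$, $Q_{\ell m}$, $D_{a,a+1}$, $\Popp{jk}$, $\Qopp{\ell m}$, $\Dopp{a}$ in $\gxycz{n}$, and \cref{lem:Z_hat_times_D_P_and_Q} identifies $\hat Z R$ with one of these barred operators (up to sign) in each parity regime; hence $\frac{1}{2}(I\pm\hat Z)R\in\gxycz{n}$ and the inclusion follows.

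For the reverse inclusion, set $M:=\Span{\bigB{P_{jk},Q_{\ell m},D_{a,a+1}}}$; from the derivation of $\gxypz{n}\cong\u{1}\oplus\su{n}$ in \cref{subsec:dla_xy_z_path}, $M$ is precisely the $\su{n}$-summand and is therefore a Lie subalgebra. I would then check that $L:=\Span{B^+}+\Span{B^-}+\Span{Z^+}$ is a Lie algebra containing the generating set $\Gxycz{n}$, which forces $\gxycz{n}\subseteq L$ since $\gxycz{n}$ is the smallest such algebra. For the generators: each $XY_{j,j+1}=P_{j,j+1}$, the wrap-around mixer $XY_{n,1}=\Popp{1,n}$ equals $\hat Z$ times an element of $M$ by \cref{lem:Z_hat_times_D_P_and_Q}, and for every $R\in M$ both $R=\frac{1}{2}(I+\hat Z)R+\frac{1}{2}(I-\hat Z)R$ and $\hat Z R=\frac{1}{2}(I+\hat Z)R-\frac{1}{2}(I-\hat Z)R$ lie in $\Span{B^+}+\Span{B^-}$, so all $XY$ generators are in $L$; each $iZ_j$ is a linear combination of the $D_{a,a+1}$ and $Z^+$ and hence also in $L$. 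For closure under the bracket: $Z^+$ is central, while \cref{prop:Z_hat_commutes_D_P_and_Q} and \cref{lem:I_plus_Zhat} let the commuting projectors $\frac{1}{2}(I\pm\hat Z)$ be pulled out of commutators, giving $\com{B^+,B^+}\subseteq\Span{B^+}$ and $\com{B^-,B^-}\subseteq\Span{B^-}$ by closure of $M$, and $\com{B^+,B^-}=\frac{1}{4}(I+\hat Z)(I-\hat Z)\com{M,M}=\bigB{0}$ since $\hat Z$ is an involution.

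The bulk of this is citation and bookkeeping; the only genuine ingredients are that $M$ is a subalgebra (imported from the path-with-$R_Z$ case) and that $\hat Z$ passes through brackets. I do not expect a real obstacle, but the step to handle carefully is verifying that each individual generator — in particular the cycle-closing term $XY_{n,1}=\Popp{1,n}$ and each single-qubit $Z_j$ — is visibly inside $\Span{B^+}+\Span{B^-}+\Span{Z^+}$ rather than merely inside the full DLA.
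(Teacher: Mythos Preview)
Your proposal is correct and follows the same route as the paper: both hinge on \cref{lem:Z_hat_times_D_P_and_Q} to identify $\hat Z\cdot\{P_{jk},Q_{jk},D_{jk}\}$ with $\{\Popp{jk},\Qopp{jk},\Dopp{jk}\}$ (up to sign and parity swap), so that $\Span{B^+}+\Span{B^-}$ coincides with the span of all six families of operators. The paper's proof is a one-line citation of that lemma; your two-inclusion argument, and in particular your verification that $L$ is a Lie subalgebra containing the generating set $\Gxycz{n}$, spells out the reverse inclusion that the paper leaves implicit.
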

    
            This follows immediately from \cref{lem:Z_hat_times_D_P_and_Q}.

            \begin{proposition}\label{prop:B_cong_sun}
                $\lie{B^\pm} \cong \su{n}$.
            \end{proposition}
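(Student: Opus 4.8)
The plan is to show that $B^{\pm}$ is \emph{already} a Lie algebra, so that $\lie{B^{\pm}} = \Span{B^{\pm}}$, and then to read off an explicit isomorphism with $\su{n}$ from the standard basis $\bigB{a_{j,k}, s_{j,k}, d_{j,j+1}}$. Throughout write $\tilde{X} := \tfrac12(I + \hat{Z})X$ for $X \in \bigB{P_{j,k}, Q_{\ell,m}, D_{a}}$. First I would establish closure: by \cref{lem:I_plus_Zhat}(4) and \cref{prop:Z_hat_commutes_D_P_and_Q}(3) one has $\com{\tilde{X}, \tilde{Y}} = \tfrac12(I + \hat{Z})\com{X,Y}$ for all such $X,Y$. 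Since the unprojected operators $\bigB{P_{j,k}, Q_{\ell,m}, D_{a}}$ span a Lie subalgebra isomorphic to $\su{n}$ (established in \cref{subsec:dla_xy_z_path}, with $P \leftrightarrow a$, $Q \leftrightarrow s$, $D \leftrightarrow d$), every bracket $\com{X,Y}$ is a linear combination of $P,Q,D$ terms; applying the linear map $\tfrac12(I+\hat{Z})$ shows $\com{\tilde{X},\tilde{Y}} \in \Span{B^{\pm}}$. Hence $\Span{B^{\pm}}$ is closed, and since $B^{\pm} \subseteq \gxycz{n}$ (by \cref{lem:Z_hat_times_D_P_and_Q}) while $\gxycz{n}$ is the smallest Lie algebra containing the generators, we conclude $\lie{B^{\pm}} = \Span{B^{\pm}}$.

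Next I would verify that the natural map is a linear bijection onto $\su{n}$. There are $\binom{n}{2}$ operators $\tilde{P}_{j,k}$, $\binom{n}{2}$ operators $\tilde{Q}_{j,k}$, and $n-1$ operators $\tilde{D}_{a,a+1}$, totalling $n(n-1) + (n-1) = n^2 - 1 = \dim \su{n}$. Linear independence follows from injectivity of $X \mapsto \tilde{X} = \tfrac12(X \pm \hat{Z}X)$ on $\Span{\bigB{P_{j,k},Q_{\ell,m},D_{a}}}$: by \cref{lem:Z_hat_times_D_P_and_Q}, $\hat{Z}$ sends each $P, Q, D$ to a $\Popp{}$-, $\Qopp{}$-, or $\Dopp{}$-type operator, whose constituent Pauli strings have support disjoint from those of the $P, Q, D$ operators once $n \geq 3$; so $X = \mp \hat{Z}X$ forces $X = 0$. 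Combined with the linear independence of $\bigB{P_{j,k},Q_{\ell,m},D_{a}}$ itself (distinct Pauli strings, as in \cref{subsec:dla_xy_z_path}), this makes $B^{\pm}$ a basis of $\lie{B^{\pm}}$. I then define $\phi : \lie{B^{\pm}} \to \su{n}$ by $\tilde{P}_{j,k} \mapsto a_{j,k}$, $\tilde{Q}_{j,k} \mapsto s_{j,k}$, $\tilde{D}_{a,a+1} \mapsto d_{a,a+1}$, a linear bijection by the above. It preserves brackets because the unprojected $P, Q, D$ satisfy the relations (SU1)--(SU13) of \cref{pf:PQD_sun} (all others following by \cref{rmk:SU_generates_others}); applying the linear map $\tfrac12(I \pm \hat{Z})$ to each relation and using $\com{\tilde{X},\tilde{Y}} = \tfrac12(I \pm \hat{Z})\com{X,Y}$ shows $\tilde{P}, \tilde{Q}, \tilde{D}$ obey exactly the defining relations of $\su{n}$ on $\bigB{a,s,d}$. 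Hence $\phi$ is a Lie algebra isomorphism, giving $\lie{B^{\pm}} \cong \su{n}$, with the $B^-$ case identical after swapping every $+$ for $-$ (the $-$ statements in \cref{lem:I_plus_Zhat} and \cref{prop:Z_hat_commutes_D_P_and_Q} supply the matching identities). The parity of $n$ only changes whether $\hat{Z}\cdot Q_{j,k}$ equals $\Qopp{j,k}$ or $\Popp{j,k}$, which is irrelevant since the proof uses only $\tilde{P}, \tilde{Q}, \tilde{D}$ and their internal $\su{n}$ relations.

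I expect the main obstacle to be the injectivity / linear-independence step: one must ensure the projector $\tfrac12(I \pm \hat{Z})$ neither collapses the $(n^2-1)$-dimensional span nor creates spurious linear relations among $\tilde{P}, \tilde{Q}, \tilde{D}$, which rests on the support-disjointness of the Pauli strings in barred versus unbarred operators and needs $n \geq 3$ (the small cases being degenerate). Once that is in hand, closure and bracket-preservation are essentially bookkeeping inherited from the already-verified path-with-$R_{Z}$ relations.
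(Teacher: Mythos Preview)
Your proposal is correct and follows essentially the same route as the paper: both reduce the $\su{n}$ relations for $\tilde{P},\tilde{Q},\tilde{D}$ to the already-verified relations for $P,Q,D$ via the identity $\com{\tilde{X},\tilde{Y}} = \tfrac12(I\pm\hat{Z})\com{X,Y}$ from \cref{lem:I_plus_Zhat} and \cref{prop:Z_hat_commutes_D_P_and_Q}. Your treatment is in fact more thorough, since you explicitly address the injectivity of $X\mapsto\tfrac12(I\pm\hat{Z})X$ (and hence the linear independence of $B^{\pm}$), a point the paper's proof leaves implicit.
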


            \begin{proof}
                We claim that $\lie{B^+} \cong \su{n}$ by arguing that $P^+_{jk} := \frac{1}{2}(I + \hat{Z})P_{jk}$ act skew-symmetrically, $ Q^+_{jk} := \frac{1}{2}(I + \hat{Z})Q_{jk} $ act symmetrically, and the $ D^+_{jk} := \frac{1}{2}(I + \hat{Z})D_{jk} $ act diagonally in $\su{n}$. The $B^-$ follows identically. Just as showing $\lie{A^\pm} \cong \so{n}$ reduced to showing that the $P_{jk}$ obeyed $\so{n}$ relations using \cref{lem:I_plus_Zhat}, the calculations here reduce to $\bigB{P_{jk}, Q_{\ell m}, D_a}_{j,k;\ell,m;a}$ via \cref{lem:Z_hat_times_D_P_and_Q,prop:Z_hat_commutes_D_P_and_Q}. For example, to show that these operators obey SU12:
                \begin{align}
                    \com{P_{jk}^+,D^+_{k \ell}} &= \com{\frac{1}{2}\bigP{I + \hat{Z}}P_{jk},\ \frac{1}{2}\bigP{I + \hat{Z}}D_{k\ell}}  \\
                        &=\frac{1}{2}\bigP{I + \hat{Z}}\com{P_{jk},D_{k\ell}} \\
                        &=\frac{1}{2}\bigP{I + \hat{Z}}Q_{jk}\\
                        &=Q_{jk}^+
                \end{align}
    
                All of the calculations follow in the exact same way.
            \end{proof}

        \subsection{Exponentially Large XY-mixer Dynamic Lie Algebras}\label{subsec:exp_dlas}
                            
            We now describe all of the associated DLAs with exponentially large dimension. These include the Lie algebras used to describe QAOA with the XY-mixer for constrained optimization. More surprisingly, the clique topology of the XY-mixer with no additional generators is also exponential. The main results of this section are summarized by \cref{thm:lincon_decomp} and \cref{conj:expliealg} in the Results section and expanded on in \cref{subsec:exp_dlas_desc}. 
            
            \subsubsection{Fully-connected $XY$ Interactions}\label{subsubsec:xyclique}
        
                The fully-connected $XY$-mixer DLA $\gxyk{n} = \lie{i \Gxyk{n}}$ for $ \Gxyk{n} = \{ XY_{j,k} \}_{j<k}^{n} $ is exponential size despite its symmetry. In this section we show a subset of operators exists in $ \Gxyk{n} $ such that $ \text{dim}\left(\gxyk{n}\right) = \Omega(3^n) $. Note that this is a strict lower bound and the DLA themselves seem to be larger than this bound gives and so we present a conjecture of DLA decomposition.

                Fix a pair of vertices $1 \leq j < k \leq n$. Consider $P_{j,k}$ operator as defined in \cref{def_Pjk}. This is constructed by nested commutations of $i XY_{\ell, \ell+1}$ for $j \leq \ell < k$ (see \cref{XY_ring_nested_action}) while $\Popp{j,k}$ as given in \cref{def_Pjk_minus} can be constructed by nesting $i XY_{\ell, \ell+1}$ for $\ell \in [n] \setminus [j,k]$ (see \cref{XY_ring_nested_opposite_action}). In both cases, the operators are constructed on a path from $j$ to $k$ and for each edge included in the path, we nest another $XY$ term. Since a clique of size $n$ contains all cycles of length less than $n$, we can nest from $j$ to $k$ on any path $(j,n_2),(n_2,n_3),\ldots,(n_{N},k)$. The number of unique operators defined are given by the path with the ordering $n_2<n_3<\ldots<n_{N}$:
                \begin{align}
                ad_{iXY_{j,n_1}} \cdot ad_{iXY_{n_1,n_2}} \cdots ad_{iXY_{n_N,k}}.
                \end{align}
                
                We generalize the definition of $c_{jk}$ from \cref{cjk_def} given for $P_{j,k}$ for paths on the clique as:
                \begin{align}\label{def:cjk_paths}
                c_{jk} := 2\floor{\frac{|\sigma_{jk}|+1}{2}} + 1. 
                \end{align}
                In the case that $ \sigma_{jk}(\ell) = 1 $ only for $ \ell \in [j-k] $, then this expression matches \cref{cjk_def}. Then for $ \sigma_{jk} \in \{0,1\}^{n-2} $, each of the following is in the DLA:
                \begin{equation}\label{def:Psjk}
                    P_{\sigma_{jk}} := \frac{i^{c_{jk}}}{2}
                    \begin{cases}
                        XY_{j,k} \prod\limits_{\ell \neq j,k}Z^{\sigma_{jk}(\ell)}_\ell, &\abs{\sigma_{jk}}_1 \text{ odd}\\
                        YX_{j,k} \prod\limits_{\ell \neq j,k}Z^{\sigma_{jk}(\ell)}_\ell, &\abs{\sigma_{jk}}_1 \text{ even}\\
                    \end{cases}
                \end{equation}
                \noindent For example, setting $\sigma_{jk}(\ell) = 1$ when $\ell \in [j,k-1]$ and zero otherwise results in the original $P_{j,k}$ operator. Then $P_{\sigma_{jk}} $ can be constructed through nested $XY$ terms on a path beginning at node $j$ and going to each node $\ell$ where $\sigma_{jk}(\ell) = 1$ and then ending at node $k$. 
                
                Moreover, we can construct terms:
                \begin{equation}\label{def:Pmsjk}
                    P_{\mu,\sigma_{jk}} := \frac{i^{c_{jk}}}{2}
                    \begin{cases}
                        XY_{j,k} \prod\limits_{(p,q) \in \mu} YX_{p,q} \prod\limits_{\ell \notin V_{\mu} \cup \{j,k\}} Z^{\sigma_{jk}(\ell)}_\ell, & \abs{\mu} / 2 + \abs{\sigma_{jk}}_1 \text{ odd}\\
                        YX_{j,k} \prod\limits_{(p,q) \in \mu} YX_{p,q} \prod\limits_{\ell \notin V_{\mu} \cup \{j,k\}} Z^{\sigma_{jk}(\ell)}_\ell, & \abs{\mu} / 2 + \abs{\sigma_{jk}}_1 \text{ even}\\
                    \end{cases}
                \end{equation}
                
                \noindent where $\mu \subseteq \{ (p,q) : p,q \in [n] \setminus \{j,k\} , p \neq q \} $ is a set of ordered pairs, $V_{\mu} = \{ p : (p,q) \in \mu \} \cup \{ q : (p,q) \in \mu \} $ is the set of all individual nodes of each pair, and the pairs are disjoint such that $ 2\,|\mu| = |V_{\mu}| $.
        
                The construction follows from an appropriate $P_{\sigma_{jk}'}$ in $\gxyk{n}$ and applying $ XY_{p,q} $ on the appropriate slot for each $(p,q) \in \mu$ where $ \sigma_{jk}'(q) = 0 $ (such that we have identity on this qubit) and $ \sigma_{jk}'(p) = 1 $ (such that we have $ Z $ on this qubit). Without loss of generality, assume $ \abs{\sigma_{jk}'}_1 $ is odd. Then we construct:
                \begin{align} 
                P_{\{(p_1,q_1)\}, \sigma_{jk}' - e_{q_1}} &= \com{ i XY_{p_1,q_1} , P_{\sigma_{jk}'} } / 2 \\
                &= \frac{i^{c_{jk}+1}}{4} XY_{j,k}  \com{ X_{p_1} X_{q_1} + Y_{p_1} Y_{q_1}, I_{p_1} Z_{q_1} } \prod\limits_{\ell \notin \{p_1,q_1,j,k\}} Z^{\sigma_{jk}'(\ell)}_\ell \\ 
                &= \frac{i^{c_{jk}}}{2} XY_{j,k} YX_{p_1,q_1} \prod\limits_{\ell \notin \{p_1,q_1,j,k\}} Z^{\sigma_{jk}'(\ell)}_\ell ,
                \end{align} 
                \noindent where $ e_{q_{1}}(q_1) = 1 $ and zero otherwise. Recall that $i^{M+4} = i^{M}$ and the elementary commutation relationships in \cref{subsec:defs_paulis_XY}. Suppose we have construct $P_{\mu',\sigma_{jk}' - \mu_{jk}'}$ where $ \mu' = \{ (p_1, q_1), \ldots, (p_{r-1}, q_{r-1}) \} $ and $ \mu_{jk}'(\ell) = 1$ ($ \mu_{jk}'(\ell) = 0$) if $ \ell \in \{ q_1, \ldots, q_{r-1} \} $ ($ \ell \notin \{ q_1, \ldots, q_{r-1} \} $). Then we can construct: 
                \begin{align}
                P_{\mu' + \{ (p_r, q_r) \}, \sigma_{jk}' - \mu_{jk}' - e_{q_r}} &=  \com{ i XY_{p_r,q_r} , P_{\mu', \sigma_{jk}' - \mu_{jk}'} } / 2 \\
                &= \frac{i^{c_{jk}+1}}{4} XY_{j,k} \com{ X_{p_r} X_{q_r} + Y_{p_r} Y_{q_r}, I_{p_r} Z_{q_r} } / 2
                \prod\limits_{(p,q) \in \mu'} YX_{p,q}
                \prod\limits_{\ell \notin V_{\mu'} \cup \{j,k\}} Z^{\sigma_{jk}'(\ell)}_\ell \\ 
                &= \frac{i^{c_{jk}}}{2} XY_{j,k} YX_{p_r,q_r} \prod\limits_{(p,q) \in \mu'} YX_{p,q}
                \prod\limits_{\ell \notin V_{\mu'} \cup \{j,k\}} Z^{\sigma_{jk}'(\ell)}_\ell .
                \end{align}
                
                Following this procedure, we can generate any $ P_{\mu, \sigma_{jk}} $, therefore each is in $ \gxyk{n} $.

                How many such terms are there? Any unique $j,k$ pair can be selected from $n$, any $p_{l+1}, q_{l+1}$ pair can be selected from $ n - 2 \, l $ available indices up to any $l < n / 2 - 1$, and any $\sigma$ can be selected over remaining indices $ \{0,1\}^{n-2\,l-2} $. In total, we have at least $ \sum_{l=1}^{n/2} \binom{n}{2\,l} \, 2^{n-2\,l} = ((2+1)^n + (2-1)^n)/2 = \Omega\left( 3^n \right) $ basis elements in $ \gxyk{n} $.

                \cref{tab:dlasummary-exp-numerics} reports the dimension of $ \gxyk{n} $ up to $ n = 9 $. Through numerical experiments we verify that, for small $n$, $ \text{proj}_{F^{k}} \left( \gxyk{n} \right) = \text{span}\left(\{ \hat{F}^{k} g \hat{F}^{k} : g \in \gxyk{n} \} \right) \cong \su{\binom{n}{k}} $ for all $ k $ except $ k = n/2 $ (in the case that $ n $ is even). In this even case, $ \text{dim}\left( \text{proj}_{F^{n/2}} \left( \gxyk{n} \right) \right) = 2\,(\binom{n}{n/2}^2/2 -1) $, which suggests congruence with $ \su{\frac{1}{2}\binom{n}{n/2}} \oplus \su{\frac{1}{2}\binom{n}{n/2}} $. We therefore split the classification into even and odd: $ \gxyke{n} $ in the case $ n $ is even and $ \gxyko{n} $ in the case $ n $ is odd. Based on the dimension of the Lie algebra and stability of classification as $n$ scales, we conjecture the following:
                \begin{conjecture}\label{conj:gxyk}
                    The classification of $ \gxyk{n} $ depends on parity. In the case that $ n $ is odd, \mbox{$ \gxyko{n} \cong \gxykdecompodd $}; in the case that $ n $ is even, $ \gxyke{n} \cong \gxykdecompeven $. Then, the dimension of the Lie algebra scales as $ \text{dim}\left(\gxyk{n}\right) =  \frac{1}{2} \left( \binom{2n}{n} - 3 + (-1)^{n+1} \right) + \lfloor \frac{n}{2} \rfloor  = 4^{n - \Theta\left( \log(n) \right)} $. 
                \end{conjecture}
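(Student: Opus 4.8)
The conjecture asserts that $\gxyk{n}$ is the \emph{largest} Lie subalgebra of $\su{2^{n}}$ consistent with its two manifest symmetries, so the plan is to (i) identify that maximal ``envelope,'' (ii) prove $\gxyk{n}$ saturates it on each symmetry block, and (iii) reassemble the decomposition abstractly. For (i): each generator $XY_{j,k}$ commutes with $Z^{+}=i\sum_{\ell}Z_{\ell}$ (it preserves Hamming weight) and with the global bit-flip $K:=X^{\otimes n}$, since $X$ commutes with $X$ while $XYX=-Y$ contributes two sign flips; hence so does every element of $\gxyk{n}$. By \cref{lem:eigcommute}, commuting with $Z^{+}$ forces $g=\bigoplus_{k}\hat{F}^{k}g\hat{F}^{k}$ over the Hamming-weight eigenspaces $F^{k}$, and commuting with $K$ identifies the $F^{k}$-block with the $F^{n-k}$-block via conjugation by $K$. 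Finally each generator restricts to an off-diagonal operator in the occupation basis of every $F^{k}$, hence is blockwise traceless, a property inherited by all commutators (the trace of a commutator on an invariant subspace vanishes). Thus $\gxyk{n}\subseteq\mathfrak{M}_{n}$, the algebra of blockwise-traceless, $K$-equivariant skew-Hermitian operators; counting degrees of freedom — one free $\su{\binom{n}{k}}$ per unordered pair $\{k,n-k\}$ with $k\neq n/2$, and, for even $n$, a factor $\su{\tfrac12\binom{n}{n/2}}^{\oplus2}$ from the two $K$-eigenspaces inside $F^{n/2}$ (a set never equals its complement, so $K$ acts freely there and an off-diagonal operator is traceless on each eigenspace) — gives $\mathfrak{M}_{n}\cong\gxykdecompeven$ for even $n$ and $\mathfrak{M}_{n}\cong\gxykdecompodd$ for odd $n$. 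So the conjecture is equivalent to the equality $\gxyk{n}=\mathfrak{M}_{n}$.

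For (ii) I would work one block at a time. On $F^{k}$ the hopping terms $XY_{j,l}$ act as hard-core-boson exchanges on the complete graph, which is connected, so $\gxyk{n}$ acts \emph{irreducibly} on $F^{k}$ whenever $k\neq n/2$ (and on each $K$-eigenspace of $F^{n/2}$). The remaining task is to pin the commutant down to scalars \emph{at the level of the Lie algebra}: the key resource is the family of $Z$-string--dressed and $YX$-dressed hopping operators $P_{\sigma_{jk}}$, $P_{\mu,\sigma_{jk}}$ of \cref{subsubsec:xyclique}, already shown to produce $\Omega(3^{n})$ independent elements, together with the diagonal operators supplied by odd sub-cycles as in \cref{lem:XY_fully_nested_coms} and \cref{prop:XY_DLA_cycle_odd}. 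The plan is to show, for fixed $k$, that real linear combinations of these operators and their further nested commutators span every skew-symmetric, symmetric, and diagonal matrix unit in the $k$-subset basis of $F^{k}$ — the three families that build $\su{\binom{n}{k}}$ via the relations of \cref{subsec:sosu} — so that $\text{proj}_{F^{k}}(\gxyk{n})=\su{\binom{n}{k}}$; at $k=n/2$ the same argument runs inside each $K$-eigenspace, yielding $\su{\tfrac12\binom{n}{n/2}}^{\oplus2}$.

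For (iii) I would invoke the compact-subalgebra decomposition recalled in \cref{subsec:lie_algebras}: $\gxyk{n}=Z(\gxyk{n})\oplus\bigoplus_{i}\mfg_{i}$ with the $\mfg_{i}$ simple. The center is trivial, since the only operator commuting with all generators is $Z^{+}$, which is a nonzero scalar on each $F^{k}$ and hence not blockwise traceless, so not in $\gxyk{n}$. Given (ii), the projection of $\gxyk{n}$ onto each independent block is surjective onto a simple algebra and the joint map to all blocks is injective; since distinct independent blocks have distinct dimensions $\binom{n}{k}$, one identifies the $\mfg_{i}$ with the factors of $\mathfrak{M}_{n}$. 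The only delicate point is the pair of isomorphic factors $\su{\tfrac12\binom{n}{n/2}}$ at $k=n/2$, where one must exclude a Goursat-type diagonal gluing — e.g.\ by exhibiting an element of $\gxyk{n}$ nonzero on one $K$-eigenspace of $F^{n/2}$ and zero on the other, equivalently ruling out a $\gxyk{n}$-equivariant operator swapping the two eigenspaces. The claimed dimension then follows from $\dim\su{m}=m^{2}-1$ and the Vandermonde identity $\sum_{k=0}^{n}\binom{n}{k}^{2}=\binom{2n}{n}$.

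The main obstacle is step (ii): turning ``XY hopping on a connected graph is universal within each excitation sector'' into a statement uniform in $n$ and $k$. Irreducibility is routine, but upgrading it to the \emph{Lie}-algebraic identity $\text{proj}_{F^{k}}(\gxyk{n})=\su{\binom{n}{k}}$ means ruling out every proper $S_{n}$-invariant irreducible subalgebra — in particular showing that $F^{k}$ carries no invariant bilinear form (which would force $\mathfrak{so}$ or $\mathfrak{sp}$) and no hidden smaller classical or exceptional group. A plausible route is an induction on $k$, deriving $F^{k}$-universality from $F^{k-1}$-universality by ``adding an occupied site,'' or a dimension-count matching the Lie algebra generated by the dressed hoppings against $\binom{n}{k}^{2}-1$; at present both are only confirmed numerically for small $n$ (\cref{tab:dlasummary-exp-numerics}). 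A lesser but genuine subtlety is checking that the diagonal differences needed on \emph{every} block are actually generated inside the \emph{pure} XY clique, i.e.\ that odd-sub-cycle constructions suffice for all $n$ and $k$ rather than only the cases treated directly above.
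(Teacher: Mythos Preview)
The paper does \emph{not} prove this statement: it is explicitly labeled a conjecture, supported only by the $\Omega(3^{n})$ lower bound (via the $P_{\sigma_{jk}}$ and $P_{\mu,\sigma_{jk}}$ constructions of \cref{subsubsec:xyclique}), the $\glincon$ upper bound, and numerical verification through $n=9$ (\cref{tab:dlasummary-exp-numerics}). So there is no ``paper's own proof'' to compare against.

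Your proposal goes meaningfully \emph{beyond} what the paper offers. The key new ingredient is the bit-flip symmetry $K=X^{\otimes n}$, which the paper does not identify. This is a genuine contribution: $K$-equivariance explains precisely why $\gxyk{n}$ has roughly half the dimension of $\gxykz{n}$ (the $Z_{j}$ generators break $K$-symmetry, decoupling the $F^{k}$ and $F^{n-k}$ blocks), and why the $k=n/2$ block splits into two equal pieces for even $n$. Your envelope $\mathfrak{M}_{n}$ is the correct upper bound and matches the conjectured decomposition exactly; the blockwise-traceless argument is sound. This is a sharper structural picture than the paper's, which only situates $\gxyk{n}$ inside $\glincon$ and records numerically that the projection onto each $F^{k}$ has the right dimension.

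That said, your proposal is a \emph{strategy}, not a proof, and you correctly flag the gap: step (ii), upgrading irreducibility on each block to the Lie-algebraic identity $\text{proj}_{F^{k}}(\gxyk{n})=\su{\binom{n}{k}}$, is the entire content of the conjecture and remains open. Your suggested routes (induction on $k$; ruling out $\so{}$/$\mathfrak{sp}$ via absence of an invariant bilinear form) are reasonable but not carried out, and the Goursat obstruction at $k=n/2$ is also left unresolved. A minor point: your claim that ``the only operator commuting with all generators is $Z^{+}$'' is too quick --- the commutant of $\{XY_{j,k}\}$ inside $\u{2^{n}}$ is larger (it contains $K$, for instance), and what you actually need is that the commutant intersected with the blockwise-traceless operators is trivial. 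In summary: your outline is more informative than the paper's treatment, but it does not close the conjecture, and you are right to say so.
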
     
            
            \subsubsection{Fully-connected $XY$ Interactions with $R_z$}\label{subsubsec:xyclique_z}
                    
                Since the generates of XY clique are included, the Lie algebra of the XY clique is a Lie subalgebra $\gxyk{n} \leq \gxykz{n} $, and therefore $ \text{dim}\left( \gxyk{n} \right) = \Omega\left(3^n\right) $. \cref{tab:dlasummary-exp-numerics} reports the dimension of $ \gxykz{n} $ up to $ n = 9 $. Recognize that $ Z^{+} \in \gxyk{n} $ and so $ \gxykz{n} = \text{span}\left( Z^{+}\right) \oplus [ \gxykz{n}, \gxykz{n} ] $. Through numerical experiments we verify that, for small $n$,
                \begin{align} 
                \text{proj}_{F^{k}}\left( \com{ \gxykz{n}, \gxykz{n} } \right) = \text{span}\left( \{ \hat{F}^{k} g \hat{F}^{k} : g \in \com{ \gxykz{n}, \gxykz{n} } \} \right) \cong \su{\binom{n}{k}}.
                \end{align}  
                
                Based on the dimension of the Lie algebra and that this decomposition will hold as $n$ scales, we conjecture: 
                \begin{conjecture}\label{conj:gxykz}
                    $ \gxykz{n} = \text{span}\left( Z^{+} \right) \, \oplus \, \redglincon \cong \u{1} \oplus \bigoplus_{k=1}^{n-1} \su{\binom{n}{k}} $. As a consequence of \cref{eq:exact_dims_ker_ad_Zplus}, the dimension of the Lie algebra scales as $ \text{dim}\left( \gxykz{n} \right) = \binom{2 \, n}{n} - n = 4^{n - \Theta\left(\log(n)\right)} $.
                \end{conjecture}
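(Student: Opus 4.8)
\textbf{Proof proposal for \cref{conj:gxykz}.}
The plan is to prove the two inclusions $\gxykz{n} \subseteq \Span{Z^{+}} \oplus \redglincon$ and $\gxykz{n} \supseteq \Span{Z^{+}} \oplus \redglincon$; the claimed isomorphism is then immediate from \cref{eq:redglincon}, and the dimension count from \cref{eq:exact_dims_ker_ad_Zplus}. For the upper bound, every generator in $\Gxykz{n}$ commutes with $Z^{+}$, so $\gxykz{n} \subseteq \glincon$ by \cref{lem:eigcommute}. Write $\tau_k(g) := \text{Tr}(\hat{F}^{k} g \hat{F}^{k})$ for the block traces; one checks that $\redglincon$ is exactly $\{g \in \glincon : \tau_k(g) = 0 \text{ for all } k\}$ (the blocks $k=0,n$ are forced to vanish since $F^{0},F^{n}$ are one-dimensional). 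Two facts then finish this direction: (i) commutators are block-traceless, because for block-diagonal $A,B$ one has $\hat{F}^{k}[A,B]\hat{F}^{k} = [\hat{F}^{k}A\hat{F}^{k},\hat{F}^{k}B\hat{F}^{k}]$, so $[\glincon,\glincon]\subseteq\redglincon$ and in particular $[\gxykz{n},\gxykz{n}]\subseteq\redglincon$; and (ii) each $iXY_{j,k'}$ has vanishing computational-basis diagonal, hence lies in $\redglincon$, while a short binomial identity (using $k\binom{n}{k} = n\binom{n-1}{k-1}$) shows $iZ_j - \tfrac{1}{n}Z^{+}\in\redglincon$ for every $j$. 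Since $\gxykz{n} = \Span{i\Gxykz{n}} + [\gxykz{n},\gxykz{n}]$, this gives $\gxykz{n} \subseteq \Span{Z^{+}} + \redglincon$, a direct sum because $Z^{+}$ has nonzero block traces. As a byproduct $\gxykz{n}\cap\Span{\hat{F}^{0},\dots,\hat{F}^{n}} = \Span{Z^{+}}$, so the center of $\gxykz{n}$ is $\Span{Z^{+}}$ and $\gxykz{n} = \Span{Z^{+}} \oplus \mathfrak{s}$ with $\mathfrak{s} := [\gxykz{n},\gxykz{n}]$ semisimple and $\mathfrak{s}\subseteq\redglincon = \bigoplus_{k=1}^{n-1}\glincon^{k}$.

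The lower bound reduces to showing $\mathfrak{s} = \redglincon$, i.e. that the semisimple part fills every simple ideal $\glincon^{k}\cong\su{\binom{n}{k}}$. Since $\mathfrak{s}$ is a semisimple subalgebra of the product $\bigoplus_{k=1}^{n-1}\glincon^{k}$, a Goursat-type argument for semisimple Lie algebras reduces this to: (a) each projection $\pi_k(\mathfrak{s})$ onto the $k$-th factor is surjective, and (b) $\mathfrak{s}$ does not diagonally identify any pair of factors. For (a), $\pi_k$ is a Lie homomorphism, so $\pi_k(\gxykz{n})$ is (up to an irrelevant central shift) the DLA on $F^{k}$ generated by the restrictions $\{XY_{j,k'}|_{F^{k}}\}$ and $\{Z_j|_{F^{k}}\}$. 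I would first show the \emph{associative} algebra these generate is all of $\mathrm{End}(F^{k})$: the $XY$ restrictions are the hopping operators of the connected ($1\le k\le n-1$) Johnson graph $J(n,k)$, and together with products of the diagonal $Z_j|_{F^{k}}$ one isolates a single matrix unit $E_{x,x'}$ along a Johnson edge and then propagates to all matrix units by connectivity. Then, invoking the structure theorem for compact subalgebras of $\u{F^{k}}$ (the theorem of \cref{subsec:lie_algebras}) together with the presence of both symmetric elements ($iXY|_{F^{k}}$-type) and antisymmetric ones ($[iZ,iXY]|_{F^{k}}\propto iYX|_{F^{k}}$-type), one excludes the $\mathfrak{so}$ and $\mathfrak{sp}$ possibilities, pinning $\pi_k(\mathfrak{s}) = \su{\binom{n}{k}}$. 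For (b), the only factors even abstractly isomorphic are $\glincon^{k}$ and $\glincon^{n-k}$; a diagonal identification of these would have to intertwine the $F^{k}$- and $F^{n-k}$-restrictions of \emph{all} generators by one fixed isomorphism. Conjugation by $X^{\otimes n}$ carries $F^{k}$ to $F^{n-k}$ and fixes each $XY_{j,k'}$ but sends $Z_j\mapsto -Z_j$; this sign twist is precisely the symmetry responsible for the $k\leftrightarrow n-k$ gluing that is present in the $Z$-free clique algebra $\gxyk{n}$ of \cref{conj:gxyk}, and its incompatibility with the $Z_j$ generators makes the two restrictions inequivalent under the required isomorphism, ruling out (b). Combining, $\mathfrak{s} = \redglincon$.

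The main obstacle is step (a): establishing that the restricted generator set fills the full $\su{\binom{n}{k}}$ on each Hamming-weight block is exactly the maximal-DLA statement the authors verify only numerically. The associative-algebra route is the most promising path but requires a careful check that enough independent diagonal operators are reachable inside the associative closure — equivalently that the restrictions of the Pauli-$Z$ strings to the slice $\abs{x}_1 = k$ still span all diagonal operators for $1\le k\le n-1$ — and a correspondingly careful elimination of the non-generic compact Lie-algebra cases. Step (b) is comparatively routine once (a) is in hand but still needs the sign-twist argument spelled out fully. Granting the decomposition, $\text{dim}(\gxykz{n}) = 1 + \text{dim}(\redglincon) = 1 + \left(\binom{2n}{n} - n - 1\right) = \binom{2n}{n} - n$ by \cref{eq:exact_dims_ker_ad_Zplus}, which is $4^{n-\Theta(\log n)}$ by \cref{eq:bound_dims_ker_ad_Zplus}; note that the $\Omega(3^{n})$ bound of \cref{thm:dlasummmary-big} already guarantees exponential size, so only the sharp constant relies on the full argument above.
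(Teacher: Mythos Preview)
The paper does not prove this statement: it is explicitly labeled a conjecture, supported only by (i) the lower bound $\dim(\gxykz{n}) \ge \dim(\gxyk{n}) = \Omega(3^n)$ via containment, (ii) the trivial upper bound $\gxykz{n} \le \glincon$, (iii) the observation that $Z^+$ lies in the center so that $\gxykz{n} = \Span{Z^+}\oplus[\gxykz{n},\gxykz{n}]$, and (iv) numerical verification of the block-wise projections for $n \le 9$ (\cref{tab:dlasummary-exp-numerics}). Your proposal therefore goes substantially further than anything the paper offers.

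Your upper bound $\gxykz{n} \subseteq \Span{Z^+} \oplus \redglincon$ is genuinely sharper than the paper's $\glincon$ bound, and the argument --- block-tracelessness of commutators, $iXY_{j,k'}\in\redglincon$ because it is off-diagonal, and $iZ_j - \tfrac{1}{n}Z^+\in\redglincon$ via the binomial identity --- is correct and clean. One small gap: you assert that $Z(\gxykz{n}) = \Span{Z^+}$ ``as a byproduct'' of the inclusion, but this also needs the observation that any central element must commute with every $Z_j$ (hence be diagonal) and with every $XY_{j,k'}$ (hence be constant on each connected Johnson slice $F^k$), which is what places it in $\Span{\hat F^0,\dots,\hat F^n}$ in the first place. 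Easy, but not literally a consequence of the containment alone.

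For the lower bound, your Goursat-type framework is the natural approach, and you correctly identify step~(a) --- surjectivity of $\pi_k$ onto each $\glincon^k$ --- as the crux; this is exactly the part the paper leaves to numerics. Two cautions. First, the passage from ``the associative envelope on $F^k$ is all of $\mathrm{End}(F^k)$'' to ``the Lie algebra is all of $\su{\binom{n}{k}}$'' is where the real difficulty sits: irreducibility alone allows $\so{}$ or $\mathfrak{sp}$ images, and your symmetric/antisymmetric heuristic refers to the computational-basis real form, whereas what must be excluded is the existence of \emph{any} invariant bilinear form on $F^k$, not just the obvious one. Second, step~(b) needs more than the $X^{\otimes n}$ observation: a diagonal identification of $\glincon^k$ with $\glincon^{n-k}$ corresponds to \emph{some} Lie isomorphism, and for $\su{m}$ with $m\ge 3$ every such isomorphism is conjugation by a unitary possibly composed with complex conjugation; you must rule out all of these, not just the bit-flip intertwiner. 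In short, your outline is plausible, honestly flagged as incomplete at the right places, and already exceeds the paper's own justification for the conjecture.
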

            
            \subsubsection{$XY$ Cycle with $R_z$ and Fully-conencted $R_{zz}$ Interactions}\label{subsubsec:xyring_zz}

                It is sufficient to show that each term in $ \Gxyk{n} $ is in $ \gxyczzz{n} $ to conclude $ \gxyk{n} \leq \gxyczzz{n} $ and therefore $\text{dim}\left(\gxyczzz{n}\right) = \Omega\left( 3^n \right)$. Since $ P_{j,k} \in \gxyczzz{n} $, we show how to construct $ XY_{j,k} $ from $ \{ P_{j,k} \} \cup \mathcal{G}_{Z,ZZ} $. Without loss of generality, assume $ |k-j| $ is odd. We have: 
                \begin{align}
                \com{ \frac{i}{2} Z_j Z_{j+1}, P_{j,k} } &= -\frac{1}{4}\Big( \com{ Z_{j} Z_{j+1} , X_{j} Z_{j+1} } X_{k}  + \com{ Z_{j} Z_{j+1} , Y_{j} Z_{j+1} } Y_{k} \Big)  \prod_{\ell=j+2}^{k-1} Z_{\ell} \\ 
                &= i \left( Y_{j} X_{k} - X_{j} Y_{k} \right) / 2 \prod_{\ell=j+2}^{k-1} Z_{\ell}.
                \end{align} 
                \noindent Then 
                \begin{align}
                \com{ -\frac{i}{2} Z_{j},  \com{ \frac{i}{2} Z_{j} Z_{j+1} , P_{j,k} } }  =  XY_{j,k} \prod_{\ell=j+2}^{k-1} Z_{l}. 
                \end{align} 
                By repeating this argument, we can remove each $ Z_{l} $ such that 
                \begin{align}
                \com{ -\frac{i}{2} Z_j \com{ \frac{i}{2} Z_{j} Z_{k-1} , \com{ \cdots , \com{ -\frac{i}{2} Z_j, \com{ \frac{i}{2} Z_{j}, Z_{j+1} , P_{j,k} } } \cdots } } } = XY_{j,k} . 
                \end{align} 
                
                In contrast to previous Lie algebras, $\gxyczzz{n}$ has a center that includes $ ZZ^+ = i \sum_{j<k} Z_{j} Z_{k} $. Clearly $ ZZ^+ $ commutes with operators in $ \mathcal{G}_{Z,ZZ}(n) $, $ ZZ^+ $ also commutes with $ XY_{j,k} $:
                \begin{align}
                \com{ ZZ^+, XY_{j,k}} &= -\sum_{l<m} \com{ Z_{l} Z_{m}, X_j X_k + Y_j Y_k } \\
                &= -\Bigg( \frac{1}{2}\sum_{l \neq k} \com{ Z_{j} Z_{l}, X_j X_k + Y_j Y_k } + \frac{1}{2}\sum_{l \neq k} \com{ Z_{l} Z_{j}, X_j X_k + Y_j Y_k } \\ 
                &\phantom{= } + \frac{1}{2}\sum_{l \neq j} \com{ Z_{k} Z_{l}, X_j X_k + Y_j Y_k } + \frac{1}{2}\sum_{l \neq j} [ Z_{l} Z_{k}, X_j X_k + Y_j Y_k ] \Bigg)\\ 
                &= -\Bigg( \frac{1}{2}\sum_{l \neq k} 2 \, i \, Y_{j} X_{k} Z_{l}  - 2 \, i \,  X_j Y_k Z_{l} + \frac{1}{2} \sum_{l\neq k } 2 \, i \, X_j Y_k Z_{l}  - 2 \, i \,  Y_j X_k Z_{l}  \\ 
                &\phantom{= } + \frac{1}{2} \sum_{l \neq j} 2 \, i \, X_{j} Y_{k} Z_{l} - 2 \, i \, Y_j X_k Z_{l} + \frac{1}{2} \sum_{l \neq j} 2 \, i \, X_{j} Y_{k} Z_{l}  - 2 \, i \,  Y_{j} X_{k} Z_{l} \Bigg) \\ 
                &= 0. 
                \end{align}

                Then as a corollary of \cref{conj:gxykz}:
                \begin{conjecture}\label{conj:gxyczzz}
                    $ \gxyczzz{n} = \text{span}\left(\{ i Z^+, i ZZ^+ \}\right) \oplus \redglincon \cong \u{1}^{\oplus 2} \oplus \bigoplus_{k=1}^{n-1} \su{\binom{n}{k}} $ and, due to \cref{eq:exact_dims_ker_ad_Zplus}, the dimension of the Lie algebra scales as $ \text{dim}\left( \gxyczzz{n} \right) = \binom{2 \, n}{n} - n + 1 = 4^{n - \Theta\left( \log(n) \right)} $.    
                \end{conjecture}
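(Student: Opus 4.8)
The plan is to sandwich $\gxyczzz{n}$ between $\redglincon$ together with two explicit central elements from below and $\glincon$ from above, and then pin down the ``central part'' exactly. First I would record the easy outer bound: every generator in $\Gxyczzz{n}=\Gxyc{n}\cup\Gz{n}\cup\Gzz{n}$ is a traceless Hermitian operator commuting with $Z^{+}=\sum_j Z_j$ (the XY-mixers preserve Hamming weight and Pauli-$Z$ strings are diagonal), so $i\,\Gxyczzz{n}\subseteq\glincon$ and hence $\gxyczzz{n}\subseteq\glincon$ by \cref{thm:lincon_decomp}. For the inner bound I would reuse the construction of \cref{subsubsec:xyring_zz}: starting from $P_{j,k}\in\gxyczzz{n}$ and repeatedly bracketing with $\tfrac{i}{2}Z_jZ_\ell$, then flipping $YX\leftrightarrow XY$ with a single $Z_j$ or $Z_k$, produces $XY_{j,k}$ for every $1\le j<k\le n$; thus $\Gxykz{n}=\Gxyk{n}\cup\Gz{n}\subseteq\gxyczzz{n}$, so $\gxykz{n}\subseteq\gxyczzz{n}$, and \cref{conj:gxykz} then gives $\redglincon\subseteq\gxyczzz{n}$. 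Since also $iZ^{+}=\sum_j iZ_j$ and $iZZ^{+}=\sum_{j<k}iZ_jZ_k$ lie in $\gxyczzz{n}$, we obtain $\text{span}\left(\{iZ^{+},iZZ^{+}\}\right)\oplus\redglincon\subseteq\gxyczzz{n}\subseteq\glincon$.

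The remaining task is to show that $\gxyczzz{n}$ meets the center $Z(\glincon)$ (which by \cref{thm:lincon_decomp} is $(n-1)$-dimensional and consists of diagonal operators constant on each Hamming-weight block) in exactly $\text{span}(iZ^{+},iZZ^{+})$. The key is the decomposition $\glincon=Z(\glincon)\oplus\redglincon$ as a direct sum of ideals, with $\redglincon=[\glincon,\glincon]$; the projection $\pi\colon\glincon\to Z(\glincon)$ along $\redglincon$ is then a Lie-algebra homomorphism, and because $Z(\glincon)$ is abelian it carries any Lie-generating set to a linear spanning set, so $\pi(\gxyczzz{n})=\text{span}\{\pi(ig):g\in\Gxyczzz{n}\}$. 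Now $\pi(iXY_{j,k})=0$ because $XY_{j,k}$ has no diagonal part; and by the $S_n$-symmetry of the weight blocks, $\pi(iZ_j)$ is independent of $j$ and $\pi(iZ_jZ_k)$ independent of the pair $(j,k)$, so $\pi(iZ_j)=\tfrac1n\,iZ^{+}$ and $\pi(iZ_jZ_k)=\binom{n}{2}^{-1}iZZ^{+}$ (using $\pi(iZ^{+})=iZ^{+}$ and $\pi(iZZ^{+})=iZZ^{+}$, as both are already block-constant). Hence $\pi(\gxyczzz{n})=\text{span}(iZ^{+},iZZ^{+})$, and since $\redglincon\subseteq\gxyczzz{n}$ forces $\gxyczzz{n}=\pi(\gxyczzz{n})\oplus\redglincon$, we conclude $\gxyczzz{n}=\text{span}(iZ^{+},iZZ^{+})\oplus\redglincon\cong\u{1}^{\oplus 2}\oplus\bigoplus_{k=1}^{n-1}\su{\binom{n}{k}}$; the dimension count $\binom{2n}{n}-n+1=4^{n-\Theta(\log n)}$ then follows from \cref{eq:exact_dims_ker_ad_Zplus}. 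A short check that $iZ^{+},iZZ^{+}$ are linearly independent and not in $\redglincon$ (their block-eigenvalue sequences $n-2k$ and $\binom{k}{2}+\binom{n-k}{2}-k(n-k)$ are independent and non-traceless on blocks for $n\ge 3$) finishes it.

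The main obstacle is the step that invokes \cref{conj:gxykz}, i.e. proving $\redglincon\subseteq\gxyczzz{n}$ — equivalently, that the XY($+Z$) clique generators restricted to each Hamming-weight block $F^{k}$ generate the \emph{full} $\su{\binom{n}{k}}$, not a proper subalgebra (the $R_Z$ decorations are precisely what removes the middle-block splitting seen for the pure XY clique). The natural route is an irreducibility argument: show the restricted generators act irreducibly on $F^{k}$ (the Johnson-type graph on weight-$k$ bit-strings whose edges are the allowed transpositions is connected for $1\le k\le n-1$, and the $Z_j$ terms break the remaining degeneracies), then upgrade irreducibility to ``all of $\su{\binom{n}{k}}$'' using the classification of irreducible compact matrix Lie algebras, ruling out the low-dimensional alternatives with the $\Omega(3^n)$ lower bound already in hand (applied block-by-block) or a handful of explicitly computed brackets. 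A more hands-on alternative is to exhibit, for each $k$, an explicit spanning family of $\su{\binom{n}{k}}$ assembled from the $P_{\sigma_{jk}}$ and $P_{\mu,\sigma_{jk}}$ operators of \cref{subsubsec:exp_basis_els} together with diagonal $Z$/$ZZ$ decorations. Everything after that inclusion — the outer bound and the center computation — is conjecture-free, and in particular shows that, compared with $\gxykz{n}$, the only genuinely new ingredient contributed by the $R_{ZZ}$ clique is the single central generator $iZZ^{+}$.
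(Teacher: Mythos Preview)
Your approach is essentially the paper's: both reduce the statement to \cref{conj:gxykz} via the construction in \cref{subsubsec:xyring_zz} showing $XY_{j,k}\in\gxyczzz{n}$ for all pairs (hence $\gxykz{n}\subseteq\gxyczzz{n}$), and both exhibit $iZZ^{+}$ as the additional central element beyond $iZ^{+}$. Where you go further is the projection argument through $\pi:\glincon\to Z(\glincon)$ along $\redglincon$: the paper states the conjecture ``as a corollary of \cref{conj:gxykz}'' and verifies by direct computation that $ZZ^{+}$ commutes with each $XY_{j,k}$, but does not argue that the center of $\gxyczzz{n}$ is \emph{at most} two-dimensional; your $S_n$-symmetry computation of $\pi$ on the generator set closes that gap cleanly and is itself conjecture-free. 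You correctly isolate the inclusion $\redglincon\subseteq\gxyczzz{n}$ (equivalently \cref{conj:gxykz}) as the only unproven step, which is precisely what the paper leaves open.
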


\section{Ising Embeddings for Linear and Quadratic Costs}\label{app:isingembed}

In this section, we show how to formulate $ H_{f} \in \text{span}\left(\mathcal{G}_{Z,ZZ}\right) $ for linear and quadratic programming, focusing on \texttt{Portfolio Optimization}, \texttt{Sparsest k-Subgraph} and \texttt{Graph Partitioning}.

First we consider the embedding of a linear cost $ \bm{\nu} $ such as $ \bm{\nu}^{T}  x  $. Let $ \ket{x} \in \{ \ket{0}, \ket{1} \}^{\otimes n} $ be the computational basis state for a binary string $ x \in \{0,1\}^{n} $. Let $ \sigma_{i}^{0} = \ketbra{0}{0}_{i} = \left( I + Z_{i} \right) / 2 $ and $ \sigma_{i}^{1} = \ketbra{1}{1}_{i} = \left( I - Z_{i} \right) / 2 $.  

\begin{lemma}\label{lem:embedvec}
Given a vector $ \bm{\nu} \in \R^{n} $, $ H_{\bm{\nu}} = \nu_{i} \sigma_{i}^{1} $ has the expectation value of $ \bm{\nu} $ over computational basis states: $ \bra{x} H_{\bm{\nu}} \ket{x} = \bm{\nu}^{T}  x $. 
\end{lemma}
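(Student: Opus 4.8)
The plan is to reduce the claim to the single-qubit action of the projector $\sigma^1 = \ketbra{1}{1}$ on computational basis states, then use linearity of the inner product. First I would recall from the notation section that $\sigma_i^1$ denotes the operator acting as $\ketbra{1}{1}$ on qubit $i$ and as identity elsewhere, and that for $x \in \{0,1\}^n$ the state $\ket{x} = \ket{x_1} \otimes \cdots \otimes \ket{x_n}$ is a tensor product of single-qubit basis states. The key elementary observation is that $\sigma^1 \ket{0} = 0$ and $\sigma^1 \ket{1} = \ket{1}$, which can be written uniformly as $\sigma^1 \ket{b} = b \ket{b}$ for $b \in \{0,1\}$ since $b$ is either $0$ or $1$.

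Next I would apply this qubitwise: because $\sigma_i^1$ acts as identity on every tensor factor except the $i$-th, one has $\sigma_i^1 \ket{x} = x_i \ket{x}$, i.e. $\ket{x}$ is an eigenvector of each $\sigma_i^1$ with eigenvalue $x_i$. Taking the inner product with $\bra{x}$ and using normalization $\braket{x}{x} = 1$ gives $\bra{x} \sigma_i^1 \ket{x} = x_i$. Then by linearity,
\begin{equation}
\bra{x} H_{\bm{\nu}} \ket{x} = \sum_{i=1}^n \nu_i \, \bra{x} \sigma_i^1 \ket{x} = \sum_{i=1}^n \nu_i \, x_i = \bm{\nu}^T x,
\end{equation}
which is the desired identity.

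There is essentially no obstacle here: the lemma is a routine bookkeeping statement whose only subtlety is making sure the eigenvalue identity $\sigma^1\ket{b} = b\ket{b}$ is stated carefully for $b \in \{0,1\}$, and that the tensor-product structure is invoked so that $\sigma_i^1$ only ``sees'' the $i$-th bit of $x$. If desired, I would also remark in passing that the same computation with $\sigma_i^0 = (I + Z_i)/2$ in place of $\sigma_i^1 = (I - Z_i)/2$ yields $\bra{x}\sigma_i^0\ket{x} = 1 - x_i$, which is the form used later when converting between the $\{0,1\}$ and $\pm 1$ conventions for the Ising embedding; this is not needed for the lemma itself but clarifies the connection to the explicit $H_f$ expressions over $\mathcal{G}_{Z,ZZ}$ quoted for the three optimization problems.
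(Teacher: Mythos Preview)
Your proof is correct and follows essentially the same approach as the paper: both reduce to the single-qubit identity $\bra{x}\sigma_i^1\ket{x} = x_i$ (the paper writes this as $|\braket{x_i}{1}|^2$) and then sum over $i$ by linearity. The extra remark about $\sigma_i^0$ is not in the paper's proof but is harmless.
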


\begin{proof}[(Proof of \cref{lem:embedvec})] 
\begin{align}
\bra{x} \sum_{i=1}^{n} \nu_{i} \sigma_{i}^{1}\ket{x} &= \sum_{i=1}^{n} \nu_{i}\bra{x} \sigma_{i}^{1} \ket{x} \\
&= \sum_{ij} \nu_{i} | \braket{x_{i}}{1} |^2  \\
&= \sum_{i} \nu_{i} x_{i} \\ 
&= \bm{\nu}^{T} x.
\end{align}
\end{proof}

Since $ \sigma_{i}^{1} = (I - Z_{i})/2 $, we write the embedding explicitly over the $ \mathcal{G}_{Z} \cup \{ I \} $ basis as: 
\begin{align}\label{eq:isinglin}
H_{\bm{\nu}} &= \frac{1}{2} \sum_{i=1}^{n} \nu_{i} \left( I - Z_{i} \right) \\ 
&= \frac{\sum_{i=1}^{n} \nu_{i}}{2} I - \frac{1}{2} \sum_{i=1}^{n} \nu_{i} Z_{i}.
\end{align}

Next we consider a quadratic cost $ \bm{A} $ as $ x^{T} \bm{A} x $ appears in quadratic cost functions such as Eq.~\ref{eq:po_iqp}.

\begin{lemma}\label{lem:embedmat}
Given a symmetric matrix $ \bm{A} \in \R^{n\times n} $, $ H_{\bm{A}} = \sum_{ij} \, A_{ij} \sigma_{i}^{1} \sigma_{j}^{1} $ has the expectation value of $ \bm{A} $ over computational basis states: $ \bra{x} H_{\bm{A}} \ket{x} = x^{T} \bm{A} x $. 
\end{lemma}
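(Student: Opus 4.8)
The plan is to mirror the proof of \cref{lem:embedvec}, exploiting that each $\sigma_i^1$ acts diagonally in the computational basis. First I would record the single-qubit fact $\sigma_i^1 \ket{x} = x_i \ket{x}$ for any $x \in \{0,1\}^n$, which follows immediately from $\sigma^1 = \ketbra{1}{1}$ and $|\braket{x_i}{1}|^2 = x_i$ when $x_i \in \{0,1\}$. Since $\sigma_i^1$ and $\sigma_j^1$ act on different tensor factors when $i \neq j$, they commute and $\sigma_i^1 \sigma_j^1 \ket{x} = x_i x_j \ket{x}$; when $i = j$ I would instead invoke idempotency, $\sigma_i^1 \sigma_i^1 = \sigma_i^1$, together with $x_i^2 = x_i$ for $x_i \in \{0,1\}$, so that the identity $\sigma_i^1 \sigma_j^1 \ket{x} = x_i x_j \ket{x}$ in fact holds for every pair $(i,j)$.

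Then I would expand by linearity and use $\braket{x}{x} = 1$:
\[
\bra{x} H_{\bm{A}} \ket{x} = \sum_{i,j=1}^n A_{ij} \bra{x} \sigma_i^1 \sigma_j^1 \ket{x} = \sum_{i,j=1}^n A_{ij}\, x_i x_j = x^T \bm{A}\, x,
\]
which is the claim. As a closing remark, and in parallel with the discussion following \cref{lem:embedvec}, I would substitute $\sigma_i^1 = (I - Z_i)/2$ and expand $(I - Z_i)(I - Z_j)/4$ to obtain the explicit representation of $H_{\bm{A}}$ over $\mathcal{G}_{Z,ZZ} \cup \{I\}$: a scalar identity term, single-body terms proportional to $Z_i$, and two-body terms proportional to $Z_i Z_j$. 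This is the form instantiated for \texttt{Portfolio Optimization}, \texttt{Graph Partitioning}, and \texttt{Sparsest k-Subgraph} earlier in the manuscript.

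There is no real obstacle here; the only point that needs a moment's care is the diagonal ($i = j$) contribution, where one must use $x_i^2 = x_i$ rather than treat $\sigma_i^1$ as squaring to something other than itself. The symmetry hypothesis on $\bm{A}$ is not needed for the identity itself (it only guarantees that $H_{\bm{A}}$ is Hermitian and lets one write the quadratic form symmetrically), so I would state the computation without invoking it and note the role of symmetry separately.
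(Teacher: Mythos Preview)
Your proof is correct and follows essentially the same route as the paper: expand by linearity and use $\bra{x}\sigma_i^1\sigma_j^1\ket{x} = x_i x_j$ to arrive at $\sum_{ij} A_{ij} x_i x_j = x^T \bm{A} x$. Your extra care with the diagonal case ($i=j$) and the remark on the role of symmetry are welcome clarifications but do not change the argument.
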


\begin{proof}[(Proof of \cref{lem:embedmat})] 
\begin{align}
\bra{x} \sum_{i=1}^{n} \sum_{j=1}^{n} A_{ij} \sigma_{i}^{1} \sigma_{j}^{1} \ket{x} &= \sum_{i=1}^{n} \sum_{j=1}^{n} A_{ij}\bra{x} \sigma_{i}^{1} \sigma_{j}^{1} \ket{x} \\
&= \sum_{ij} A_{ij}| \braket{x_{i}}{1} |^2 | \braket{x_{j}}{1} |^2  \\
&= \sum_{ij} A_{ij} x_{i} x_{j} \\ 
&= \sum_{i} x_{i} \left( \sum_{j=1}^{n} A_{ij} x_{j} \right) \\
&= \sum_{i} x_{i} \left( \bm{A}x \right)_{i}  \\ 
&= x^{T}  \bm{A}  x. 
\end{align}
\end{proof}

To write the embedding explicitly over the $ \mathcal{G}_{Z,ZZ} \cup \{ I \} $ basis:
\begin{align}\label{eq:isingquad}
H_{\bm{A}} &= \sum_{i=1}^{n} \sum_{j=1}^{n} A_{ij} \sigma_{i}^{1} \sigma_{j}^{1} \\ 
&= \sum_{i=1}^{n} \sum_{j=1}^{n} A_{ij} \left( I - Z_{i} \right) \left( I - Z_{j} \right) / 4 \\ 
&= \left( \sum_{i=1}^{n} \sum_{j=1}^{n} A_{ij} / 4 \right) I - \left( \sum_{i=1}^{n} \left( \sum_{j=1}^{n} A_{ij} / 4  \right) Z_{i} \right) \\
&\phantom{= } - \left( \sum_{j=1}^{n} \left( \sum_{i=1}^{n} A_{ij} / 4 \right) Z_{j} \right) + \sum_{i=1}^{n} \sum_{j=1}^{n} A_{ij} Z_{i} Z_{j} / 4  \\
&= \left( \sum_{i=1}^{n} \sum_{j=1}^{n} A_{ij} / 4 \right) I - \frac{1}{2}\sum_{i=1}^{n}\sum_{j=1}^n A_{ij} Z_{i} + \frac{1}{4}\sum_{i=1}^{n} \sum_{j=1}^{n} A_{ij} Z_{i} Z_{j} & (A_{ij}=A_{ji}) \\
&= \left( \frac{1}{2} \sum_{i < j} A_{ij} + \frac{1}{4} \sum_{i} A_{ii} \right) I - \frac{1}{2} \sum_{i=1}^{n} \left( \sum_{j=1}^{n} A_{ij} \right) Z_{i} + \frac{1}{2} \sum_{i < j }^{n} A_{ij} Z_{i} Z_{j}. & (A_{ij}=A_{ji})
\end{align}

Note that coefficients on the identity operator are constants for all wavefunctions. Then consider the reduced cost over single qubit operators: 
\begin{align}
\text{Proj}_{\Gz{n}} \left( H_{\bm{\nu}} + H_{\bm{A}} \right) = \frac{1}{2} \sum_{i=1}^{n} \left( \nu_{i} - \left( \sum_{j=1}^{n} A_{ij} \right) \right) Z_{i}. 
\end{align}

Note that $ \bm{\nu} = 0 $ for \texttt{Sparsest k-Subgraph} and $ \sum_{j=1}^{n} A_{ij} $ is the degree of node $ i $. For \texttt{Portfolio Optimization}, the projection is the expected return of an asset plus the risk profile of the specific asset with respect to all assets.

Consider the cost $ \sum_{ij} A_{ij} (1-x_{i})x_{j} = \sum_{ij} A_{ij} x_{i} x_{j} - A_{ij} x_{j} $ associated with minimizing a cut for a given adjacency matrix $ \bm{A} $. We consider the associated Hamiltonian:
\begin{align}
H_{\text{MinCut}} &= \sum_{i=1}^{n} \sum_{j=1}^{n} A_{ij} \sigma_{i}^{0} \sigma_{j}^{1} \\ 
&= \sum_{i=1}^{n} \sum_{j=1}^{n} A_{ij} (I + Z_{i})(I - Z_{j}) / 4 \\ 
&= \sum_{i=1}^{n} \sum_{j=1}^{n} A_{ij} (I + Z_{i} - Z_{j} - Z_{i} Z_{j}) / 4 \\ 
&= \sum_{i=1}^{n} \sum_{j=1}^{n} A_{ij} (I - Z_{i} Z_{j}) / 4 & (A_{ij} = A_{ji})\\ 
&= \frac{|E|}{2} I - \frac{1}{2} \sum_{i<j} Z_i Z_j.
\end{align}

Then $ H_{\text{MinCut}} $ is the cost Hamiltonian for \texttt{Graph Partitioning}. Notice that $ \text{Proj}_{\mathcal{G}_{Z}} \left( H_{\text{MinCut}} \right) = 0 $.

\section{Common Commutation Calculations}\label{sec:com_calcs}
    
    Here we show the nitty-gritty of all necessary commutation calculations involving the operators $\AB{AB}{j,k}$, $\ABopp{CD}{a,b}$, etc. Let $A,B,C,D \in \{X,Y\}$ and let $j < k < \ell$. Note that $\com{ZZ, \bigB{XX,XY,YX,YY}} = 0$.

    \begin{lemma}\label{lem:ABCD_jkl_coms}
        \begin{enumerate}
            \item[]
            \item $\com{\AB{AB}{j,k}, \AB{CD}{k,\ell}} = A_j Z_{[j+1:k-1]} [B,C]_k Z_{[k+1:\ell-1]} D_{\ell}$. If $B = C$, this is 0.
            \item $\com{\AB{AB}{j,\ell}, \AB{CD}{j,k}} = \com{A_jZ_k, C_jD_k}Z_{[k+1:\ell-1]}B_\ell$.
            \item $\com{\AB{AB}{k,\ell}, \AB{CD}{j,\ell}} = C_j Z_{[j+1:k-1]}\com{A_k  B_\ell, Z_k  D_\ell}$.
            \item $\com{\AB{AB}{jk}, \AB{CD}{jk}} = \com{A_jB_k,C_jD_k}$.
        \end{enumerate}
    \end{lemma}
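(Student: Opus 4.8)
The plan is to prove all four identities by a single elementary device. For the given indices $j<k<\ell$, partition the qubit register into the blocks $\{j\}$, $[j+1,k-1]$, $\{k\}$, $[k+1,\ell-1]$, $\{\ell\}$, write each operator of the form $\AB{AB}{a,b}=A_a Z_{[a+1:b-1]} B_b$ as a product of factors each supported on one block, and then collapse the commutator using only two facts: operators supported on disjoint qubit sets commute, and any string of Pauli-$Z$'s squares to the identity. Since the four cases are nearly identical I would present (1) and (4) in full and indicate how (2) and (3) follow.

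For (1), write $\AB{AB}{j,k} = L\,B_k$ with $L := A_j Z_{[j+1:k-1]}$ supported on $[j,k-1]$, and $\AB{CD}{k,\ell} = C_k\,R$ with $R := Z_{[k+1:\ell-1]} D_\ell$ supported on $[k+1,\ell]$. The factors $L$ and $R$ have disjoint supports, and each is disjoint from qubit $k$, so everything commutes past everything except $B_k$ and $C_k$, giving $\com{L B_k,\ C_k R} = L\,\com{B_k,C_k}\,R = A_j Z_{[j+1:k-1]}\com{B,C}_k Z_{[k+1:\ell-1]} D_\ell$, which vanishes when $B=C$ by skew-symmetry of the one-qubit bracket. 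For (4), both operands have the form $(A_j B_k)\,G = G\,(A_j B_k)$ with $G := Z_{[j+1:k-1]}$ (treating $A_j B_k$ as a single operator on $\{j,k\}$, which commutes with $G$), so $\com{(A_j B_k)G,\ (C_j D_k)G} = \com{A_j B_k,\ C_j D_k}\,G^2 = \com{A_j B_k,\ C_j D_k}$.

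Cases (2) and (3) are the same computation with one extra spectator block. In (2), writing $\AB{AB}{j,\ell} = (A_j Z_k)\,G\,S$ and $\AB{CD}{j,k} = (C_j D_k)\,G$ with $G := Z_{[j+1:k-1]}$ and $S := Z_{[k+1:\ell-1]} B_\ell$, the factor $S$ is inert (disjoint support, present in only one operand) and $G^2 = I$, so the commutator reduces to $\com{A_j Z_k,\ C_j D_k}\,Z_{[k+1:\ell-1]} B_\ell$. In (3), writing $\AB{CD}{j,\ell} = T\,V\,M$ and $\AB{AB}{k,\ell} = U\,M$ with $T := C_j Z_{[j+1:k-1]}$, $M := Z_{[k+1:\ell-1]}$, $U := A_k B_\ell$, $V := Z_k D_\ell$, the head $T$ commutes past everything and $M^2 = I$, leaving $C_j Z_{[j+1:k-1]}\,\com{A_k B_\ell,\ Z_k D_\ell}$.

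There is no genuine obstacle here; the only thing requiring care is the bookkeeping of which blocks of qubits are shared between the two operands and which are private to one of them, together with the systematic use of $Z_{[\,\cdot\,]}^2 = I$ (and the commutation of those $Z$-strings with the single-qubit endpoint factors) to cancel the shared and private $Z$-strings. Once the block decomposition is set up correctly, each identity is a one-line rearrangement of a product of pairwise-commuting factors.
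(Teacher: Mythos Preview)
Your proposal is correct and takes essentially the same approach as the paper's proof: both expand the commutator, use that Pauli operators supported on disjoint qubit sets commute, and cancel shared $Z$-strings via $Z^2=I$. The only cosmetic difference is that you name the blocks $L,R,G,S,T,U,V,M$ and reason abstractly about their supports, whereas the paper writes out the tensor products explicitly term by term; the underlying computation is identical.
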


    \begin{proof}
        \begin{align}
            \com{\AB{AB}{j,k}, \AB{CD}{k,\ell}} &= \AB{AB}{j,k} \AB{CD}{k,\ell} - \AB{CD}{k,\ell} \AB{AB}{j,k} \\
            &=A_j Z_{[j+1:k-1]} (BC)_k Z_{[k+1:\ell-1]} D_\ell - A_j Z_{[j+1:k-1]} (CB)_k Z_{[k+1:\ell-1]} D_\ell\\
            &=A_j Z_{[j+1:k-1]} [B,C]_k Z_{[k+1:\ell-1]} D_{\ell}.\\
            \com{\AB{AB}{j,\ell}, \AB{CD}{j,k}} &= \com{A_j Z_{[j+1:\ell - 1]} B_\ell,\ C_j Z_{[j + 1:k - 1]} D_k} \\
            &= \com{A_j Z_{[j+1:k-1]} Z_k,\ C_j Z_{[j + 1:k - 1]} D_k}Z_{[k+1:\ell-1]}B_\ell \\
            &= \com{A_j Z_k,\ C_jD_k}Z_{[k+1:\ell-1]}B_\ell.\\
            \com{\AB{AB}{k,\ell}, \AB{CD}{j,\ell}} &=\com{A_k Z_{[k+1:\ell-1]} B_\ell, C_j Z_{[j+1:\ell-1]} D_\ell} \\
            &=C_j Z_{[j+1:k-1]}\com{A_k Z_{[k+1:\ell-1]} B_\ell, Z_k Z_{[k+1:\ell-1]} D_\ell}\\
            &=C_j Z_{[j+1:k-1]}\com{A_k  B_\ell, Z_k  D_\ell}\\
            \com{\AB{AB}{jk}, \AB{CD}{jk}} &= \com{A_j Z_{[j+1:k-1]} B_k, C_j Z_{[j+1:k-1]} D_k}  =\com{A_j  B_k, C_j D_k}
        \end{align}
    \end{proof}
    
    \begin{lemma}\label{lem:AB_minus_CD_minus_coms}
        \begin{enumerate}
            \item[]
            \item $\com{\ABopp{AB}{jk}, \ABopp{CD}{k \ell}} = \com{A_jB_kZ_\ell,\  Z_jC_kD_\ell}\cdot Z_{[j+1:\ell-1]\setminus \{k\}}$
            \item $\com{\ABopp{AB}{jk}, \ABopp{CD}{j \ell}} = \com{A_j Z_\ell , C_jD_\ell} B_kZ_{[k+1:\ell-1]}$
            \item $\com{\ABopp{AB}{j\ell}, \ABopp{CD}{k \ell}} = \com{A_jB_\ell, Z_{j}D_\ell } Z_{[j+1:k-1]}C_k$.
            \item $\com{\ABopp{AB}{jk}, \ABopp{CD}{jk}} = \com{A_jB_k,C_jD_k}$
        \end{enumerate}
    \end{lemma}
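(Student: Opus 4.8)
The plan is to prove all four identities by the same elementary factoring argument used for \cref{lem:ABCD_jkl_coms}. Each operator $\ABopp{AB}{j,k} = Z_{[1:j-1]}A_jB_kZ_{[k+1:n]}$ is a tensor product of single-qubit operators drawn from $\{I,X,Y,Z\}$, and for any two such strings $P = \bigotimes_m p_m$, $Q = \bigotimes_m q_m$ one has $PQ - QP = \bigl(\bigotimes_{m\in S} p_m q_m\bigr)\bigl[\bigotimes_{m\notin S} p_m,\ \bigotimes_{m\notin S} q_m\bigr]$ for any set $S$ of qubit positions on which $p_m q_m = q_m p_m$ (which holds whenever $p_m = q_m$, one of them is $I$, or they form a commuting pair). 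So the entire task is to choose $S$ correctly in each case and read off the two factors; the residual commutator and the factored-out Pauli string act on disjoint qubits, so the order in which they are written is immaterial, which is why each statement places the leftover $Z$'s on whichever side is convenient.

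First I would note that $\ABopp{AB}{j,k}$ carries a $Z$ on every qubit outside $[j,k]$, identity on $[j+1:k-1]$, and the genuine Paulis $A,B$ at the endpoints $j,k$. For item~1 (with $j<k<\ell$) take $S = [1:j-1]\cup[j+1:k-1]\cup[k+1:\ell-1]\cup[\ell+1:n]$: on $[1:j-1]$ and $[\ell+1:n]$ both strings carry $Z$ (product $I$), on $[j+1:k-1]$ the pair is $(I,Z)$ and on $[k+1:\ell-1]$ it is $(Z,I)$ — all commuting — so $\bigotimes_{m\in S}p_m q_m = Z_{[j+1:\ell-1]\setminus\{k\}}$, and on the complement $\{j,k,\ell\}$ the strings restrict to $A_jB_kZ_\ell$ and $Z_jC_kD_\ell$, yielding $\com{A_jB_kZ_\ell,\ Z_jC_kD_\ell}\cdot Z_{[j+1:\ell-1]\setminus\{k\}}$. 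Item~2 follows the same way with $S$ everything except $\{j,\ell\}$ (position $k$ contributes $B_k$, the run $[k+1:\ell-1]$ contributes $Z$, all other runs cancel), and item~3 with $S$ everything except $\{j,\ell\}$ (position $k$ contributes $C_k$, the run $[j+1:k-1]$ contributes $Z$). Item~4, where both strings share the index pair $(j,k)$, is immediate: outside $\{j,k\}$ the two strings agree on every qubit, so the common part squares to $I$ and only $\com{A_jB_k,\ C_jD_k}$ survives.

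I do not expect a genuine obstacle here — the argument is pure index bookkeeping, tracking for each qubit whether it falls inside or outside the support intervals $[j,k]$, $[k,\ell]$, $[j,\ell]$. The one subtle point, and the only way the case analysis can go wrong, is that in items~1 and~2 the ``far'' endpoint $\ell$ lies outside the support interval of the shorter string, so that string carries $Z_\ell$ rather than $I_\ell$ there; consequently $\ell$ must be placed in the residual commutator together with $j$, not absorbed into $S$. Once that is kept straight, all four identities drop out by inspection, exactly paralleling \cref{lem:ABCD_jkl_coms}.
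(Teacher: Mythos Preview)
Your proposal is correct and essentially the same approach as the paper's: both arguments factor out the qubits on which the two Pauli strings commute (yielding the residual $Z$ and $B_k$/$C_k$ factors) and leave the non-commuting positions inside a short commutator. The paper does this by direct expansion of $PQ-QP$ in each of the four cases, while you state the factoring principle once and then read off $S$ and its complement; the content is identical.
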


    \begin{proof}
        \begin{align}
            \com{\ABopp{AB}{jk}, \ABopp{CD}{k \ell}} &= \com{Z_{[1:j-1]}A_jB_k Z_{[k+1:\ell]} Z_{[\ell+1:n]}, Z_{[1:j-1]}Z_{[j:k-1]}C_kD_\ell Z_{[\ell+1:n]}} \\
            &= \com{A_jB_k Z_\ell , Z_jC_kD_\ell }Z_{[j+1:\ell-1]\setminus k}.\\
            \com{\ABopp{AB}{jk}, \ABopp{CD}{j \ell}} &= \com{Z_{[1:j-1]}A_jB_k Z_{[k+1:n]}, Z_{[1:j-1]}C_jD_\ell Z_{[\ell+1:n]}}\\
            &= \com{A_j Z_\ell Z_{[\ell+1:n]}, C_jD_\ell Z_{[\ell+1:n]}} B_kZ_{[k+1:\ell-1]}\\
            &= \com{A_j Z_\ell , C_jD_\ell} B_kZ_{[k+1:\ell-1]}.\\
            \com{\ABopp{AB}{j\ell}, \ABopp{CD}{k \ell}} &= \com{Z_{[1:j-1]}A_jB_\ell Z_{[\ell+1:n]}, Z_{[1:k-1]}C_kD_\ell Z_{[\ell+1:n]}} \\
            &=\com{Z_{[1:j-1]}A_jB_\ell, Z_{[1:j]}D_\ell } Z_{[j+1:k-1]}C_k\\
            &=\com{A_jB_\ell, Z_{j}D_\ell } Z_{[j+1:k-1]}C_k.\\
            \com{\ABopp{AB}{jk}, \ABopp{CD}{jk}} &= \com{Z_{[1:j-1]}A_jB_k Z_{[k+1:n]}, Z_{[1:j-1]}C_jD_k Z_{[k+1:n]}} \\
            &=\com{A_jB_k,C_jD_k}.
        \end{align}
    \end{proof}

    \begin{lemma}\label{lem:AB_CD_minus_com}
        \begin{enumerate}
            \item[]
            \item $\com{\AB{AB}{jk}, \ABopp{CD}{jk}}  =\com{A_jB_{k}, C_jD_{k}} \cdot Z_{[n]\setminus \{j,k\}}$
            
            \item $\com{\AB{AB}{jk}, \ABopp{CD}{k\ell}} = \com{A_jB_k,Z_jC_k}D_\ell Z_{[1:j-1]}Z_{[\ell+1:n]}$
            
            \item $\com{\AB{AB}{jk}, \ABopp{CD}{j\ell}} = [A,C]_j Z_{[1:j-1]}Z_{[j+1:k-1]}B_kD_\ell Z_{[\ell+1:n]}$
            
            \item $\com{\AB{AB}{j\ell}, \ABopp{CD}{k\ell}} = \com{A_jZ_kB_\ell, Z_jC_kD_\ell} Z_{[1:j-1]}Z_{[k+1:\ell-1]}Z_{[\ell+1:n]}$
            
            \item $\com{\AB{AB}{k\ell}, \ABopp{CD}{jk}} = [A_kB_\ell,D_kZ_\ell] Z_{[1:j-1]} C_j  Z_{[\ell+1:n]}$
        \end{enumerate}
    \end{lemma}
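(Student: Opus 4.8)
The plan is to exploit that every operator of the form $\widehat{AB}_{j,k}$ and $\widehat{CD}^-_{j,k}$ is a single Pauli string — a tensor product of single-qubit operators drawn from $\{I,X,Y,Z\}$ — so that each of the five commutators is governed entirely by the qubit-wise Pauli algebra. First I would isolate the one structural fact that does all the work: if $U=\bigotimes_{i=1}^n U_i$ and $V=\bigotimes_{i=1}^n V_i$ are Pauli strings and $S\subseteq[n]$ is any set of qubits for which $U_iV_i=V_iU_i$ whenever $i\notin S$, then
\begin{equation}
\com{U,V}=\com{\,\bigotimes_{i\in S}U_i,\ \bigotimes_{i\in S}V_i\,}\otimes\bigotimes_{i\notin S}U_iV_i .
\end{equation}
This is immediate from $UV=\bigotimes_i U_iV_i$, $VU=\bigotimes_i V_iU_i$, and cancelling the now-commuting tails. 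The leverage is that on the complement of the ``active'' set the single-qubit factors of our two operators always lie in $\{I,Z\}$, which pairwise commute and whose products are again in $\{I,Z\}$; hence the bracket localizes to at most three qubits.

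With that identity in hand, each of the five cases becomes a bookkeeping exercise on supports. For a fixed configuration $j<k<\ell$ (or just $j<k$ in item~1) I would tabulate, qubit by qubit, the action of each operator: $\widehat{AB}_{a,b}$ carries $A$ at $a$, $B$ at $b$, $Z$ on $[a+1:b-1]$, and $I$ elsewhere, whereas $\widehat{CD}^-_{a,b}$ carries $C$ at $a$, $D$ at $b$, $Z$ on $[1:a-1]\cup[b+1:n]$, and $I$ on $[a+1:b-1]$. Choosing $S$ to be the set of qubits where both strings are non-identity, I would read off the tail $\bigotimes_{i\notin S}U_iV_i$ — a $Z$-string supported on the union of the relevant intervals, with $Z^2=I$ collapsing the overlaps, together with any lone $C_j$ or $D_\ell$ factor lying outside the support of the other operator — and the localized bracket. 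For instance, in item~4 the active set is $\{j,k,\ell\}$, the overlap interval $[j+1:k-1]$ contributes $Z\cdot Z=I$, and the surviving $Z$'s give $Z_{[1:j-1]}Z_{[k+1:\ell-1]}Z_{[\ell+1:n]}$, producing exactly $\com{A_jZ_kB_\ell,\ Z_jC_kD_\ell}\,Z_{[1:j-1]}Z_{[k+1:\ell-1]}Z_{[\ell+1:n]}$; the other four items come out the same way, differing only in which intervals persist. I would leave the localized brackets unevaluated, so that they can be resolved downstream with the elementary single- and two-qubit Pauli relations already collected in \cref{subsec:defs_paulis_XY}, exactly as was done for the preceding \cref{lem:ABCD_jkl_coms,lem:AB_minus_CD_minus_coms}.

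I do not anticipate a real obstacle: the content is purely combinatorial. The only place requiring care is the interval arithmetic when multiplying the $Z$-strings — for example recognizing that $Z_{[j+1:k-1]}$ against $Z_{[1:k-1]}$ leaves $Z$ on $[1:j]$, and then splitting off qubit $j$ (which belongs to $S$) from the genuine tail — together with keeping track that the stated localized bracket does not secretly vanish from an even count of single-qubit anticommutations in the cases that matter. Both are mechanical, so the write-up is really the display equation above applied five times plus careful index bookkeeping, best presented as one short localization lemma followed by five one-line verifications.
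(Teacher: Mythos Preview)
Your proposal is correct and is essentially the paper's own argument: the paper proves each of the five items by writing out the two products $UV$ and $VU$, commuting the $\{I,Z\}$ factors past each other, and collecting what remains, which is exactly your localization identity applied case by case. Your packaging via the single display equation is a slightly cleaner presentation of the same bookkeeping; the only wrinkle is that your stated choice of $S$ (``where both strings are non-identity'') should really be ``where at least one factor lies outside $\{I,Z\}$'', as you in fact use when handling the $Z$--$Z$ overlap on $[j{+}1:k{-}1]$ in item~4.
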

    
    \begin{proof}
        \begin{align}
            \com{\AB{AB}{jk}, \ABopp{CD}{jk}} &= \com{A_j Z_{[j+1:k-1]} B_k,\ Z_{[1:j-1]} C_j D_k Z_{[k+1:n]}} \\
            &= \begin{aligned}
                &A_j Z_{[j+1:k-1]} B_k \cdot Z_{[1:j-1]}C_j D_k Z_{[k+1:n]}\\
                &-Z_{[1:j-1]}C_j D_k Z_{[k+1:n]}\cdot A_j Z_{[j+1:k-1]} B_k
            \end{aligned}\\
            &= \begin{aligned}
                &Z_{[1:j-1]} (AC)_j Z_{[j+1:k-1]} (BD)_k Z_{[k+1:n]}\\
                &-Z_{[1:j-1]}(CA)_jZ_{[k+1:n]} (DB)_k Z_{[j+1:k-1]}
            \end{aligned}\\
            &= \com{A_jB_{k}, C_jD_{k}} \cdot Z_{[n]\setminus \{j,k\}}\\
            \com{\AB{AB}{jk}, \ABopp{CD}{k\ell}} &= \com{A_j Z_{[j+1:k-1]} B_k,\ Z_{[1:k-1]}C_k D_\ell Z_{[\ell+1:n]}}\\[3pt]
            &=\begin{aligned}
                &A_j Z_{[j+1:k-1]} B_k \cdot Z_{[1:k-1]}C_k D_\ell Z_{[\ell+1:n]} \\
                &- Z_{[1:k-1]}C_k D_\ell Z_{[\ell+1:n]}\cdot A_j Z_{[j+1:k-1]} B_k
            \end{aligned}\\[3pt]
            &=\begin{aligned}
                &Z_{[1:j-1]}(AZ)_j (ZZ)_{[j+1:k-1]} (BC)_k D_\ell Z_{[\ell+1:n]} \\
                &- Z_{[1:j-1]}(ZA)_j (ZZ)_{[j+1:k-1]} (CB)_k D_\ell Z_{[\ell+1:n]}
            \end{aligned}\\
            &=\com{A_jB_k,Z_jC_k}D_\ell Z_{[1:j-1]}Z_{[\ell+1:n]}\\
            \com{\AB{AB}{jk}, \ABopp{CD}{j\ell}} &= \com{A_j Z_{[j+1:k-1]} B_k,\ Z_{[1:j-1]}C_j D_\ell Z_{[\ell+1:n]}}\\[3pt]
            &=\begin{aligned}
                &A_j Z_{[j+1:k-1]} B_k\cdot Z_{[1:j-1]}C_j D_\ell Z_{[\ell+1:n]} \\
                &- Z_{[1:j-1]}C_j D_\ell Z_{[\ell+1:n]}\cdot A_j Z_{[j+1:k-1]} B_k
            \end{aligned}\\[3pt]
            &=\begin{aligned}
                &Z_{[1:j-1]}(AC)_j Z_{[j+1:k-1]}B_kD_\ell Z_{[\ell+1:n]} \\
                &-Z_{[1:j-1]}(CA)_j Z_{[j+1:k-1]}B_kD_\ell Z_{[\ell+1:n]}
            \end{aligned}\\
            &=[A,C]_j Z_{[1:j-1]}Z_{[j+1:k-1]}B_kD_\ell Z_{[\ell+1:n]}\\
            \com{\AB{AB}{j\ell}, \ABopp{CD}{k\ell}} &= \com{A_j Z_{[j+1:\ell-1]}B_\ell,\ Z_{[1:k-1]}C_k D_\ell Z_{[\ell+1:n]}}\\[3pt]
            &=\begin{aligned}
                &A_j Z_{[j+1:\ell-1]}B_\ell\cdot Z_{[1:k-1]}C_k D_\ell Z_{[\ell+1:n]}\\
                &-Z_{[1:k]}C_k D_\ell Z_{[\ell+1:n]}\cdot A_j Z_{[j+1:\ell-1]}B_\ell
            \end{aligned}\\[3pt]
            &=\begin{aligned}
                &Z_{[1:j-1]} (AZ)_j (ZZ)_{[j+1:k-1]} (ZC)_k Z_{[k+1:\ell-1]} (BD)_\ell Z_{[\ell+1:n]}\\
                &-Z_{[1:j-1]} (ZA)_j (ZZ)_{[j+1:k-1]} (CZ)_k Z_{[k+1:\ell-1]} (DB)_\ell Z_{[\ell+1:n]}
            \end{aligned}\\
            &=\com{A_jZ_kB_\ell, Z_jC_kD_\ell} Z_{[1:j-1]}Z_{[k+1:\ell-1]}Z_{[\ell+1:n]}\\
            \com{\AB{AB}{k\ell}, \ABopp{CD}{jk}} &= \com{A_k Z_{[k+1:\ell-1]}B_\ell,\ Z_{[1:j-1]}C_j D_k Z_{[k+1:n]}}\\[3pt]
            &=\begin{aligned}
                &A_k Z_{[k+1:\ell-1]}B_\ell\cdot Z_{[1:j-1]}C_j D_k Z_{[k+1:n]}\\
                &-Z_{[1:j-1]}C_j D_k Z_{[k+1:n]}\cdot A_k Z_{[k+1:\ell-1]}B_\ell
            \end{aligned}\\[3pt]
            &=\begin{aligned}
                &Z_{[1:j-1]} C_j (AD)_k (ZZ)_{[k+1:\ell-1]}(BZ)_\ell Z_{[\ell+1:n]}\\
                &-Z_{[1:j-1]} C_j (DA)_k (ZZ)_{[k+1:\ell-1]}(ZB)_\ell Z_{[\ell+1:n]}
            \end{aligned}\\
            &=[A_kB_\ell,D_kZ_\ell] Z_{[1:j-1]} C_j  Z_{[\ell+1:n]}
        \end{align}
    \end{proof}

    \begin{lemma}\label{lem:Zbar_AB}
        \begin{itemize}
            \item[]
            \item $\com{\Zbar{k}, \AB{AB}{k,\ell}} = Z_{[1:k-1]}A_k \com{Z,B}_\ell Z_{[\ell+1:n]}$
            \item $\com{\Zbar{\ell}, \AB{AB}{k,\ell}} = Z_{[1:k-1]}\com{Z,A}_k B_\ell Z_{[\ell+1:n]}$
            \item $\com{\Zbar{j}, \AB{AB}{k,\ell}} = 0$ for $j \neq k,\ell$.
        \end{itemize}
    \end{lemma}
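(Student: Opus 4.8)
All three identities follow from a direct, qubit-by-qubit computation, so the plan is simply to expand each operator as a Pauli string and multiply. Recall $\Zbar{m} = \prod_{p\neq m} Z_p$ carries $Z$ on every qubit except $m$ (where it carries $I$), while $\AB{AB}{k,\ell} = A_k Z_{[k+1:\ell-1]} B_\ell$ with $A,B\in\{X,Y\}$; both are tensor products of single-qubit operators, so products and commutators factorize qubit by qubit, and single-qubit factors on distinct qubits commute.

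First I would dispose of the ``bulk''. On each qubit $p\in[k+1:\ell-1]$ the string $\Zbar{m}$ carries $Z$ (or $I$ if $m=p$) and $\AB{AB}{k,\ell}$ carries $Z$, so the two commute there; on each qubit $p\notin[k:\ell]$ the factor from $\AB{AB}{k,\ell}$ is $I$, so again the strings commute there. Hence the entire commutator is determined by the two endpoint qubits $k$ and $\ell$, and the common factors $Z_{[1:k-1]}$ and $Z_{[\ell+1:n]}$ (from $\Zbar{m}$) together with $Z_{[k+1:\ell-1]}$ (which appears identically in both orderings) can be pulled out unchanged.

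Next, the endpoints, which is exactly the split into the three stated cases. For $m=k$: at qubit $k$, $\Zbar{k}$ carries $I$ and so contributes nothing, and at qubit $\ell$ it carries $Z$ against the $B$ of $\AB{AB}{k,\ell}$; multiplying in the two orders and subtracting leaves $\com{Z,B}_\ell$ at qubit $\ell$, yielding $\com{\Zbar{k},\AB{AB}{k,\ell}} = Z_{[1:k-1]} A_k \com{Z,B}_\ell Z_{[\ell+1:n]}$. The case $m=\ell$ is the mirror image: now $\Zbar{\ell}$ carries $Z$ at qubit $k$ (against $A$) and $I$ at qubit $\ell$, giving $\com{Z,A}_k$ at qubit $k$ and hence the second identity. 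For $m=j\notin\{k,\ell\}$: here $\Zbar{j}$ carries $Z$ at \emph{both} endpoints $k$ and $\ell$, so it anticommutes with $\AB{AB}{k,\ell}$ on exactly two qubits; an even number of anticommuting sites means the two Pauli strings commute, so the commutator is $0$.

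There is no genuine obstacle in this lemma — it is pure bookkeeping. The one spot that warrants a second's attention is that $j$ in the third case may sit strictly between $k$ and $\ell$ or strictly outside $[k:\ell]$; but in either situation $\Zbar{j}$ carries $Z$ or $I$ on that qubit and commutes with the matching factor of $\AB{AB}{k,\ell}$, so the position of $j$ relative to the interval is irrelevant and only membership in $\{k,\ell\}$ matters. If an explicit Pauli form is desired one can finish by substituting $\com{Z,X}=2iY$ and $\com{Z,Y}=-2iX$, but the commutator form written in the statement is precisely what is invoked in \cref{subsec:dla_xy_cycle}.
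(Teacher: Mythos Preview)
Your approach is correct and is the same direct Pauli-string expansion the paper uses. One small slip in the narration: you say ``$Z_{[k+1:\ell-1]}$ (which appears identically in both orderings) can be pulled out unchanged,'' but for $m\in\{k,\ell\}$ the operator $\Zbar{m}$ also carries $Z$ on those qubits, so the product there is $(ZZ)_{[k+1:\ell-1]}=I$ rather than $Z_{[k+1:\ell-1]}$ --- which is exactly why the final identities you state (correctly) have no $Z$ string on $[k+1:\ell-1]$.
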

    
    \begin{proof}
        \begin{align}
            \com{\Zbar{k}, \AB{AB}{k,\ell}} &= \Zbar{k}\AB{AB}{k,\ell} - \AB{AB}{k,\ell}\Zbar{k}\\
            &=\begin{aligned}
                &Z_{[1:k-1]}A_k (ZZ)_{[k+1:\ell-1]}(ZB)_\ell Z_{[\ell+1:n]} \\
                &-Z_{[1:k-1]}A_k (ZZ)_{[k+1:\ell-1]}(BZ)_\ell Z_{[\ell+1:n]}
            \end{aligned}\\
            &=Z_{[1:k-1]}A_k \com{Z,B}_\ell Z_{[\ell+1:n]}\\
            \com{\Zbar{\ell}, \AB{AB}{k,\ell}} &= \Zbar{\ell}\AB{AB}{k,\ell} - \AB{AB}{k,\ell}\Zbar{\ell}\\
            &=\begin{aligned}
                &Z_{[1:k-1]}(ZA)_k (ZZ)_{[k+1:\ell-1]}B_\ell Z_{[\ell+1:n]} \\
                &-Z_{[1:k-1]}(AZ)_k (ZZ)_{[k+1:\ell-1]}B_\ell Z_{[\ell+1:n]}
            \end{aligned}\\
            &=Z_{[1:k-1]}\com{Z,A}_k B_\ell Z_{[\ell+1:n]}\\
        \end{align}

        Assume that $j < k$.
        \begin{align}
            \com{\Zbar{j}, \AB{AB}{k,\ell}} &= \Zbar{j}\AB{AB}{k,\ell} - \AB{AB}{k,\ell}\Zbar{j}\\
            &=\begin{aligned}
                &Z_{[1:j-1]}I_j Z_{[j+1:k-1]}(ZA)_k (ZZ)_{[k+1:\ell-1]}(ZB)_\ell Z_{[\ell+1:n]} \\
                &-Z_{[1:j-1]}I_j Z_{[j+1:k-1]}(AZ)_k (ZZ)_{[k+1:\ell-1]}(BZ)_\ell Z_{[\ell+1:n]}
            \end{aligned}\\
            &=Z_{[1:j-1]}I_j Z_{[j+1:k-1]} \com{ZZ,AB}_{k,\ell} Z_{[\ell+1:n]} \\
            &=0
        \end{align}
        \noindent The other cases follow similarly.
    \end{proof}

    \begin{lemma}\label{lem:com_Z_AB}
        \begin{enumerate}
            \item[]
            \item $[Z_j, \AB{AB}{jk}] = [Z,A]_j Z_{[j+1:k-1]}B_k$
            \item $[Z_k, \AB{AB}{jk}] = A_j Z_{[j+1:k-1]}[Z,B]_k$
            \item $[Z_j, \ABopp{AB}{jk}] = Z_{[1:j-1]}[Z,A]_j B_kZ_{[k+1:n]}$
            \item $[Z_k, \ABopp{AB}{jk}] = Z_{[1:j-1]}A_j [Z,B]_kZ_{[k+1:n]}$
        \end{enumerate}
    \end{lemma}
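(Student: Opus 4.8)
The plan is to compute each of the four commutators directly, exploiting the locality of $Z_j$: as a single-qubit Pauli it acts as the identity on every tensor slot except the $j$-th, so multiplying a product of single-qubit operators by $Z_j$ changes only that one slot. This is precisely the bookkeeping already carried out in the proof of \cref{lem:Zbar_AB}, and the present statement is its simpler analogue for a single-site $Z$ in place of the all-but-one-site operator $\Zbar{k}$.

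For item (1), I would write $\AB{AB}{jk} = A_j Z_{[j+1:k-1]} B_k$ and observe that $Z_j$ commutes with $Z_{[j+1:k-1]}$ and with $B_k$ (disjoint supports), so $[Z_j, \AB{AB}{jk}] = [Z,A]_j\, Z_{[j+1:k-1]} B_k$. Item (2) is the same with the two endpoints swapped: $Z_k$ commutes with $A_j$ and with the intermediate string of $Z$'s, hence $[Z_k, \AB{AB}{jk}] = A_j Z_{[j+1:k-1]} [Z,B]_k$. For items (3) and (4) the operator $\ABopp{AB}{jk} = Z_{[1:j-1]} A_j B_k Z_{[k+1:n]}$ is supported on all $n$ qubits, but $Z_j$ (resp.\ $Z_k$) still meets only one nontrivial factor, $A_j$ (resp.\ $B_k$), and on every other slot it meets either $I$ or $Z$, both of which commute with it; the same one-slot reduction then yields $[Z_j, \ABopp{AB}{jk}] = Z_{[1:j-1]} [Z,A]_j B_k Z_{[k+1:n]}$ and $[Z_k, \ABopp{AB}{jk}] = Z_{[1:j-1]} A_j [Z,B]_k Z_{[k+1:n]}$.

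There is no real obstacle here; the only point needing care is confirming that the untouched factors genuinely commute with $Z_j$, which is immediate since they are all $I$ or $Z$. If a fully explicit derivation is wanted, the cleanest route is to expand $Z_j\,\AB{AB}{jk}$ and $\AB{AB}{jk}\,Z_j$ slot by slot as products of single-qubit operators and subtract, exactly as in the proof of \cref{lem:Zbar_AB}: the intermediate slots cancel and only the endpoint slot contributes a commutator. The lemma is recorded solely so that it can be invoked in the verifications of the $\su{n}$ commutation relations, e.g.\ at \cref{eq:su16_pqd}.
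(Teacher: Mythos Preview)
Your proposal is correct. The paper does not actually supply a proof for this lemma---it is stated without proof at the end of \cref{sec:com_calcs}---so your slot-by-slot locality argument, mirroring the proof of \cref{lem:Zbar_AB}, is exactly the intended (and only reasonable) justification.
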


\end{document}